\keywords{Type theory, Modal and temporal logics, Higher order logic}
\newcommand{\typd}[2] {
#1\hspace{-.75mm}:\hspace{-.75mm}#2}
\newcommand{\lbd}[3] {
\lambda{#1}\hspace{-.75mm}:\hspace{-.75mm}{#2}\hspace{.35mm}.\hspace{.35mm}{#3}
}
\newcommand{\Lbd}[3] {
\Lambda{#1}\hspace{-.75mm}:\hspace{-.75mm}{#2}\hspace{.35mm}.\hspace{.35mm}{#3}
}
\newcommand{\cbd}[3] {
[{#1}\hspace{-.75mm}:\hspace{-.75mm}{#2}]\hspace{.35mm}.\hspace{.35mm}{#3}
}
\newcommand{\param}{\theta}
\newcommand{\db}[1]{\llbracket#1\rrbracket}
\newcommand{\mlt}[1]{\boldsymbol\lambda_{#1}}
\newcommand{\mltup}{\boldsymbol\lambda_{\param}}
\newcommand{\mltpure}{\mlt{\omega}}
\newcommand{\lt}{\boldsymbol\lambda}
\newcommand{\clt}[1]{CL_{#1}}
\newcommand{\cltup}{CL_{\param}}
\newcommand{\cltpure}{CL}
\newcommand{\reduce}[1]{\rightarrow_{#1}}
\newcommand{\reduces}[2]{\rightarrow_{#1}^{#2}}
\newcommand{\reducesbw}[2]{\leftarrow_{#1}^{#2}}
\newcommand{\equal}[1]{=_{#1}}
\newcommand{\equals}[2]{=_{#1}^{#2}}
\newcommand{\transreduce}[1]{\twoheadrightarrow_{#1}}
\newcommand{\transreduces}[2]{\twoheadrightarrow_{#1}^{#2}}
\newcommand{\Reduces}[2]{\Rightarrow_{#1}^{#2}}
\newcommand{\ctI}[2]{{\mathsf I}_{#1}#2}
\newcommand{\ctK}[4]{{\mathsf K}_{#1, #2}#3#4}
\newcommand{\ctS}[6]{{\mathsf S}_{#1, #2, #3}#4#5#6}
\newcommand{\ctC}[6]{{\mathsf C}_{#1, #2, #3}#4#5#6}
\newcommand{\ctD}[6]{{\mathsf D}^{#1}_{#2, #3, #4}#5#6}
\newcommand{\ctW}[4]{{\mathsf W}_{#1, #2}#3#4}
\newcommand{\ctB}[6]{{\mathsf B}_{#1, #2, #3}#4#5#6}
\begin{document}

\title[Simply-typed constant-domain modal lambda calculus~I]{Simply-typed constant-domain modal lambda calculus~I: distanced beta reduction\texorpdfstring{\\}{} and combinatory logic}

\thanks{Thanks to Thomas Ede Zimmermann, Dana Scott, and the reviewers for comments and feedback.}

\author[S.~Walsh]{Sean Walsh\lmcsorcid{0009-0007-5460-778X}}

\address{}

\address{UCLA Department of Philosophy \\ 390 Portola Plaza \\ 300 Dodd Hall \\ Box 951451 \\ Los Angeles, CA 90095-1451}

\email{walsh@ucla.edu}  



\begin{abstract}
A system $\mltup$ is developed that combines modal logic and simply-typed lambda calculus, and that generalizes the system studied by Montague and Gallin. Whereas Montague and Gallin worked with Church's simple theory of types, the system $\mltup$ is developed in the typed base~theory most commonly used today, namely the simply-typed lambda calculus. Further, the system $\mltup$ is controlled by a parameter $\param$ which allows more options for state types and state variables than is present in Montague and Gallin. A main goal of the paper is to establish some basic metatheory of $\mltup$: (i) an Andrews-like characterization of its models in terms of combinatory logic is given, and this combinatory logic involves a $\mathsf{BCKW}$-like basis rather than an $\mathsf{SKI}$-like basis and (ii) semantic conservation and expressibility results relating $\mltup$ to the maximal system $\mltpure$ are proven. Similar results are proven for the relation between $\mltpure$ and~$\lt$, the corresponding ordinary simply-typed lambda calculus. This answers a question of Zimmermann in the semantics of the simply typed setting. In a companion paper this is extended to Church's simple theory of types. We further develop a partial correspondence between a pure combinatory logic centered on the $\mathsf{BCKW}$-like basis and the weak deductive system for $\mltpure$ wherein $\beta$-reduction is not allowed under a lambda abstract, and we use this to show partial deductive conservation between the maximal system $\mltpure$ and the intermediary systems $\mltup$.
\end{abstract}

\maketitle




\section{Introduction}\label{sec:intro}

Two of the great achievements of modern logic are modal logic and typed lambda calculus. At the advent of formal semantics in linguistics, Montague developed a system that integrated the two.\footnote{Montague's work \cite{Montague1974-hm} is discussed at length in standard semantics textbooks, such as \cite{Dowty1981-gt}, \cite{Gamut1991-gc}, and \cite{Chierchia2000-zn}. Montague's work was made well-known in part through the work of Partee; see \cite{Partee1997-ou} for some of the history. The theory is sometimes divided into the intensional theory of types and Montague grammar (e.g. \cite[Chapters 5-6]{Gamut1991-gc}). This paper focuses on the intensional theory of types, as did Chapters 1-2 of Gallin's book \cite{Gallin1975-tp}. In recent decades, textbook treatments of semantics focus foremost on Montague grammar in extensional contexts. That is the topic of \cite{Heim1998-bs}, with its anticipated sequel \cite{Von-Fintel2011-zs} being devoted to intensional matters.} However, by contemporary lights, Montague's theory is both too strong and too weak. It is too strong in that he worked only with Church's simple theory of types, replete with the resources of quantification and identity.\footnote{Church \cite{Church1940-tj} took quantification as primitive and defined identity; Henkin \cite{Henkin1963-km} did it the other way around, and Henkin's approach is in e.g. Andrews' book \cite[Chapter 5]{Andrews2013-na}.} But modern typed lambda calculi work with a weaker base system, and have many different extensions besides Church's simple theory of types.\footnote{See \cite{Barendregt2013-eb} for an authoritative modern treatment of the base system and its extension to intersection types and recursive types. See \cite{Barendregt1992-gl}, \cite{Nederpelt2014-yk} for lengthy treatises on dependently-typed lambda calculus, including Coquand and Huet's Calculus of Constructions \cite{Coquand1985-yo}, \cite{Coquand1988-ez}, which was an important predecessor to the Lean proof verification system (cf. \cite{Ebner2017-za}); a distinct but related branch of dependently-typed lambda calculus is Martin-L\"of Type Theory (cf. \cite{Martin-Lof1984-uj}). Simply-typed lambda calculus is the internal logic of Cartesian closed categories, and see \cite{Lambek1988-hb} for the generalization to topoi. See \cite{Winskel1994-ha}, \cite{Harper2016-th} for systematic contemporary treatments of the denotational and operational semantics for programming languages, developed initially by Scott, Strachey, and Plotkin (\cite{Scott1971-jz}, \cite{Scott1993-bw}, \cite{Plotkin1977-pm}, \cite{Plotkin2004-gi}).} Further, Montague's theory is too weak in that it does not have many of the features of modern modal logics, such as two-dimensionality and actuality operators and other devices for referring to many distinct states, and binding many distinct variables of state type, within one and the same expression.\footnote{For two-dimensional semantics and actuality operators, see \cite{Davies1980-ym}, \cite{Nimtz2017-qu}. For hybrid logics and multiple state variables, see \cite{Areces2006-ma}, \cite[Part I]{Cresswell1990-bx}. In recent joint work with K\"opping \cite{Kopping2020-xz}, Zimmermann developed an extension of Montague's original system, in the standard semantics, which contains two-dimensionality and multiple state types.} (Following common usage in modal logic, ``state'' is a term of art which, depending on application, covers worlds, times, machine-configurations, etc.) A chief aim of this paper is to remedy this deficit, and to begin the development of a thoroughly modern version of Montague's simply-typed modal lambda calculus. This should be of interest wherever modal logic and simply-typed lambda calculus and related systems are used, be in intensional semantics, in higher-order metaphysics, or in program verification.\footnote{For higher-order metaphysics, see \cite{Williamson2013-lw}, \cite{Fritz2024-wo}. Outside of the operational semantics, two other important paradigms of program verification are propositional modal logic (e.g. \cite{Clarke2018-ni}), and dependently-typed lambda calculus (e.g. \cite{Bertot2013-vc}).} 

Another goal of this paper and its companion is to answer one of the outstanding questions about the metatheory of Montague's original system.\footnote{The companion paper is \cite{Walsh2024-mltII}.} Zimmermann showed in 1989 that Montague's simply-typed modal lambda calculus was expressively rich in the \emph{standard semantics}, in that its ostensibly more limited vocabulary could express anything expressible in the usual simply-typed lambda calculus formed with an additional atomic type for states.\footnote{\cite{Zimmermann1989-gs}.} In other words, in the standard semantics, Zimmermann showed that by using simply-typed lambda calculus with an atomic type for possible worlds one cannot really say or assert anything above and beyond what one can say or assert in the object-language of Montague's modal logic itself, i.e. with ordinary statements of possibility and contingency.\footnote{\cite[p. 75]{Zimmermann1989-gs}. This result is discussed extensively by Zimmermann in \cite{Zimmermann2020-an}; it is formally stated as the second theorem on \cite[p. 31]{Zimmermann2020-an}. In work with K\"opping, Zimmermann extended his result to settings with more than one kind of state type, cf. discussion of their equation~(10) in \cite[pp. 171, 175-176]{Kopping2020-xz}.} But the standard semantics is both incomplete and highly set-theoretically entangled since its validity relation is not recursively enumerable; hence any result about it may well just be a result about the ambient set theory and need not match up with what one can express with primitive rules for the system. Zimmermann asked in 1989 whether his result would generalise to the \emph{Henkin semantics}.\footnote{\cite[\S{4.2} pp. 75-76]{Zimmermann1989-gs}. For other discussions in the Montagovian tradition friendly to the Henkin semantics, see \cite[\S{3.1.1} p. 316]{Partee1977-vr}, \cite[p. 98]{Janssen1983-rm}. The distinction between the Henkin semantics and the standard semantics comes up in all discussions of higher-order logic and related systems, see e.g. \cite{Shapiro1991-oe}, \cite{Button2018-ux}.} In this and the companion paper, I resolve Zimmermann's question, largely in the affirmative.\footnote{The caveat ``largely'' is due to the presence of description axioms and constant symbols for description operators. If both are included in the right way, the answer is affirmative. The general situation is more complicated. See \cite{Walsh2024-mltII} for more details.} 

The division between the two papers is as follows: in this present paper I focus on the simply-typed lambda calculus, and in the companion paper I extend the results to Church's simple theory of types (again, the key difference is that the latter includes identity and quantification).

The simply-typed modal lambda calculus is designated with $\mltup$, and there are as many of these systems as there are choices of atomic types and choices of signatures. The atomic \emph{state} types of $\mltup$ are not allowed to be the codomain of a functional type; and $\mltup$ is further controlled by the parameter~$\param$ which dictates how many variables the state types have (see \S\S\ref{subsec:typesandterms}-\ref{subsec:terms} for formal definition). Due to the lack of variables, the usual proofs of completeness for $\beta\eta$-equality using open term models are not in general available (cf. Example~\ref{exa:failureopenterm1}). One special case where it is available is when parameter $\param$ is set to countably infinitely many variables for each state type (the maximal setting), in which case $\mltup$ is written as $\mltpure$. The first part of the resolution to Zimmermann's question in the semantics of the simply-typed setting is the following pair of theorems (proven in \S\ref{sec:conservation}):

\begin{restatable}[Semantic conservation of $\mltpure$ over $\mltup$]{thm}{thmconservation}\label{thmconservation} 
Every model of $\mltup$ is also a model of $\mltpure$. Hence for terms $M,N$ of $\mltup$, one has $\mltup\models M=N$ iff $\mltpure\models M=N$.
\end{restatable}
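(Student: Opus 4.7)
The plan is to show that every $\mltup$-model can be promoted verbatim to a $\mltpure$-model, and to deduce the biconditional from this fact together with its trivial converse. The two systems share the same atomic types, the same constants, and the same term-formation rules; they differ only in that $\mltpure$ stipulates a countably infinite supply of variables at every state type, whereas $\mltup$ allows fewer. Consequently the type domains, constant interpretations, and function-space data of an arbitrary model $\mathcal{M}\models\mltup$ already constitute a candidate $\mltpure$-structure: what must be checked is only that (i) assignments into the additional state variables can be defined, and (ii) the Henkin closure conditions, which in $\mathcal{M}$ were required only for $\mltup$-lambda-abstractions, continue to hold for the larger supply of $\mltpure$-lambda-abstractions.

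Point (i) is immediate from the standing non-emptiness requirement on atomic-type domains, so any new state variable can be assigned to an arbitrary element of its state domain. The substantive work is at (ii), where the issue is the denotation of $\lambda y{:}s.\,N$ for a state variable $y$ available in $\mltpure$ but not in $\mltup$. I would observe that this denotation depends, under any environment $\rho$, only on the pointwise function $d\mapsto\db{N}_{\rho[y\mapsto d]}$ on the $s$-domain. When $\mltup$ provides a state-type variable $x$ of sort $s$ that is not free in $N$, a semantic substitution lemma identifies this function with the denotation of the $\mltup$-lambda $\lambda x{:}s.\,N[y:=x]$, which lies in the appropriate function space by the hypothesis that $\mathcal{M}\models\mltup$. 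The main obstacle is the residual case where $\upsilon$ provides too few $s$-variables for even this renaming to be possible; here I would argue directly that the target function is definable from $\mltup$-lambdas of the appropriate higher type, which is possible because $s$ is never a codomain of a functional type and so the computation in $N$ involving $y$ can be factored through abstractions and applications at non-state types, and then invoke the closure of the function space at $s\to A$ under such definable constructions. This is the step that will require the most care, since it depends on exactly how the Henkin closure is formulated in the paper.

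With the first part in hand, the biconditional follows by a standard two-directional model-theoretic argument. For the direction $\mltpure\models M=N\Rightarrow \mltup\models M=N$, any $\mltup$-model is promoted by the first part to a $\mltpure$-model in which $M=N$ holds by hypothesis, and the denotations of the $\mltup$-terms $M,N$ are the same in either view. For the direction $\mltup\models M=N\Rightarrow \mltpure\models M=N$, every $\mltpure$-model trivially restricts to a $\mltup$-model by restricting attention to $\mltup$-syntax, and $\mltup$-validity then yields the conclusion on any such model. The two notions of validity therefore coincide on $\mltup$-terms.
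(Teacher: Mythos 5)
Your setup is right as far as it goes: the frame, constants, and function spaces carry over unchanged, the biconditional at the end is a routine consequence of the model-class inclusions, and you have correctly located the crux in the Henkin closure condition for abstractions over state variables that $\upsilon$ does not supply. But at exactly that crux the proposal stops being a proof. Your "residual case" --- where $\upsilon$ provides too few variables of a state type $s$ for the renaming $\lbd{y}{s}{N}\mapsto\lbd{x}{s}{N[y:=x]}$ to go through --- is not a residual case; it is the entire content of the theorem (already for $\upsilon(s)=1$ and a body $N$ with a second free variable of type $s$, e.g.\ the Cardinal $\ctC{s}{s}{C}{}{}{}\equiv\lbd{v}{s\rightarrow s\rightarrow C}{\lbd{v_0}{s}{\lbd{v_1}{s}{vv_1v_0}}}$). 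Your proposed resolution, that ``the computation in $N$ involving $y$ can be factored through abstractions and applications at non-state types,'' is an assertion of what needs to be proved, not an argument: you must exhibit, for each such problematic abstraction, an actual term of $\mltup$ whose denotation in $\mathcal{M}$ is forced to equal the target function, and nothing in the proposal produces one. There is also a structural issue you elide: well-definedness must be established for all subterms of an arbitrary $\mltpure$-term simultaneously, so a one-abstraction-at-a-time renaming argument does not organize into a clean induction.

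The paper closes this gap with two ingredients you would need some version of. First, the combinatorial characterisation (Theorem~\ref{corcreatemodel}) reduces ``every $\mltpure$-term has a well-defined denotation'' to ``every $\mathsf{BCDKW}$-combinator instance of $\mltpure$ has a well-defined denotation,'' and inspection shows the only instance not already a term of $\mltup$ is the Cardinal $\ctC{B}{B}{C}{}{}{}$ with $B$ a state type and $\upsilon(B)=1$. Second, for that single case it exhibits the explicit closed $\mltup$-term
$\lbd{v}{B\rightarrow B\rightarrow C}{\lbd{v_0}{B}{\big(\lbd{V}{(B\rightarrow C)\rightarrow C}{(\lbd{v_0}{B}{V(vv_0)})}\big)\big(\lbd{U}{B\rightarrow C}{Uv_0}\big)}}$,
which encodes the second state argument at the higher type $(B\rightarrow C)\rightarrow C$ via $\lbd{U}{B\rightarrow C}{Uv_0}$; one cannot $\beta$-reduce this to the Cardinal inside $\mltup$ (the substitution is blocked and $\alpha$-conversion is unavailable), but its metatheoretically computed denotation coincides with that of the Cardinal, which is what forces the Cardinal's denotation into the frame. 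This continuation-style encoding is the idea your sketch gestures at but does not supply; without it, or an equivalent explicit construction, the proof does not go through.
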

\begin{restatable}[Semantic expressibility of $\mltpure$ in $\mltup$]{thm}{express}\label{thm:express}
Suppose that term $\typd{N}{A}$ of $\mltpure$ is such that its free variables and constants are those of~$\mltup$. Then there is a term $\typd{M}{A}$ of $\mltup$ with the same free variables and constants such that $\mltpure\models M=N$.
\end{restatable}
\noindent In these theorems, the semantic validity relation $\models M=N$ just means sameness of denotation of $M,N$ for all models of the relevant system and for all variable assignments. Since $\mltpure$ is maximal, these theorems also imply semantic conservation and expressibility results between nested intermediary systems of~$\mltup$.

The ordinary non-modal simply-typed lambda calculus is designated as $\lt$, and there are as many of these systems as there are choices of atomic types and choices of signatures. Each system of simply-typed modal lambda calculus $\mltpure$ is equipped with a choice of atomic types, and this then induces a system of $\lt$. The types of $\mltpure$ are a small subset of the types of $\lt$: for instance if $A$ is a state type, then $\mltpure$ does not have a type $A\rightarrow A$, but $\lt$ has this type (see \S\ref{sec:nonmodal} for formal definition). The second part of the resolution to Zimmermann's question in the simply-typed setting is the following pair of theorems (proven in \S\ref{sec:conserlt}):

\begin{restatable}[Deductive conservation of $\lt$ over $\mltpure$]{thm}{thmconservationlt}\label{thmconservationlt}
For types $A$ of $\mltpure$ and terms $\typd{M,N}{A}$ of $\mltpure$, one has $\mltpure\vdash_{\beta\eta} M=N$ iff $\lt\vdash_{\beta\eta} M=N$.
\end{restatable}

\begin{restatable}[Deductive expressibility of  $\lt$ in $\mltpure$]{thm}{expresslt}\label{thm:expresslt}
Suppose that $A$ is a type of $\mltpure$ and that term $\typd{N}{A}$ of $\lt$ is such that its free variables and constants have types in~$\mltpure$. Then there is a term $\typd{M}{A}$ of $\mltpure$ with the same free variables and constants such that $\lt\vdash_{\beta\eta} M=N$.
\end{restatable}
\noindent In these theorems, the deductive relation $\vdash_{\beta\eta} M=N$ just means $\beta\eta$-equality of $M,N$ in the relevant system. By the Completeness Theorem for $\lt$ and the Completeness Theorem for $\mltpure$  (cf. Theorem~\ref{thmcompletelambda}), we can also equivalently express conservation and expressibility in terms of model-theoretic validity. Putting these four theorems together, one also has a semantic conservation result for~$\lt$ over~$\mltup$ and a semantic expressibility result for~$\lt$ in~$\mltup$. Theorems~\ref{thmconservationlt}-\ref{thm:expresslt} are proven in \S\ref{sec:ty2}, and the proofs are comparatively short since one can make use of all the known tools of the ordinary simply-typed lambda calculus, like Church-Rosser and strong normalization. 

The proofs of Theorems~\ref{thmconservation}-\ref{thm:express} go through a characterization of Henkin models of $\mltup$ in terms of combinatory logic. This characterization, proven in \S\ref{sec:combinatorial-models}, generalizes Andrews' similar characterization for $\lt$. Like Andrews' result, this characterization gives ``the way out of Henkin mysterious conditions that all $\lambda$-terms must have a denotation'':\footnote{This apt praise for Andrews is from Dowek \cite[p. 255]{Dowek2009-mf}. Andrews' original result is \cite[Proposition 4, Theorem 1 pp. 390-391]{Andrews1972-xk}; see \cite[Proposition 3.1.19(iii) p. 101]{Barendregt2013-eb}.}
\begin{restatable}[Combinatorial characterisation of models]{thm}{corcreatemodel}\label{corcreatemodel}
If $\mathcal{M}$ is a frame, then $\mathcal{M}$ is a model of $\mltup$ iff the denotations of all the $\mathsf{BCDKW}$-combinatorial terms of $\mltup$ are well-defined.
\end{restatable}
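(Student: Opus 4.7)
The forward direction is immediate: by the definition of a model of $\mltup$, every well-typed term of $\mltup$ has a denotation in $\mathcal{M}$, so in particular each $\mathsf{BCDKW}$-combinatorial term does. All the content is in the converse. My plan is to follow the Andrews recipe, adapted to the modal setting with the distanced $\beta$-reduction and the $\mathsf{BCDKW}$-basis.

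The central device is a bracket-abstraction operator $[x{:}A].M$ which takes a term $M$ and a variable $\typd{x}{A}$ and returns a combinatorial term (one containing no $\lambda$) of the same type as $\lbd{x}{A}{M}$, built by case analysis on the shape of $M$ using $\mathsf{B}$, $\mathsf{C}$, $\mathsf{D}$, $\mathsf{K}$, $\mathsf{W}$ (and the appropriate identity combinators $\mathsf{I}_A$ and $\mathsf{I}^{\ast}_{A,B}$ already indexed by the system's macros). The usual clauses apply when $x$ is absent, equal to the head, or the term is an application with $x$ occurring in one or both subterms; the new clauses handle the modal constructors using $\mathsf{D}$. Extending this clause-by-clause to arbitrary terms, one then defines the combinatorial translation $M^{c}$ of an arbitrary term by pushing the abstractions outward and replacing each innermost $\lambda$ by a bracket.

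With this translation in hand, the proof splits into two lemmas. First, a syntactic lemma: $\mltup\vdash_{\beta\eta} M = M^{c}$, proven by induction on $M$, where the inductive cases for bracket abstraction amount to the standard combinator $\beta\eta$-equations ($\mathsf{K}xy = x$, $\mathsf{B}fgx = f(gx)$, etc.) together with their distanced-modal variants for $\mathsf{D}$. Second, a semantic lemma: in any frame in which the denotations of all $\mathsf{BCDKW}$-combinatorial terms are well-defined, the denotation of $M^{c}$ is also well-defined for every term $M$, and this denotation satisfies the clauses required of a Henkin model (in particular the $\beta$-clause for abstraction). Combining the two, one shows that every term of $\mltup$ has a well-defined denotation in $\mathcal{M}$, and that this assignment verifies the model conditions; hence $\mathcal{M}$ is a model of $\mltup$.

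The main obstacle I anticipate is at the bracket-abstraction definition itself, in the interaction between the state-type restriction (atomic state types are not codomains of arrow types) and the distanced $\beta$-reduction. In the ordinary simply-typed case one is free to form $\mathsf{S}$-like combinators of every type; here the state-type restriction forbids some of the naive clauses, which is precisely why a $\mathsf{BCDKW}$-basis with two identity combinators ($\mathsf{I}$ and $\mathsf{I}^{\ast}$) is used rather than $\mathsf{SKI}$. Getting each clause of the bracket abstraction to produce a combinator whose type is legal in $\mltup$, and matching the distanced reduction rules so that the syntactic lemma goes through, is the delicate bookkeeping step; everything else follows the Andrews pattern.
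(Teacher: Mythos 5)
Your overall strategy is the paper's: an Andrews-style bracket abstraction over the $\mathsf{BCDKW}$ basis (the paper's Theorem~\ref{thm:bigcombo} and Corollary~\ref{prop:keyandrews} are exactly your translation $M\mapsto M^{c}$), followed by a transfer of well-definedness of denotations from combinatorial terms to arbitrary terms. But there is a genuine gap in how you combine your two lemmas. Your syntactic lemma asserts only the \emph{equality} $\mltup\vdash_{\beta\eta} M = M^{c}$, and from that equality plus well-definedness of $\db{M^{c}}$ you cannot conclude that $\db{M}$ is well-defined: soundness of $\beta\eta$-equality presupposes that both sides already denote, which is what you are trying to prove, and you cannot repair this by passing to a common reduct because $\mltup$ fails Church--Rosser (Friedman--Warren) and because well-definedness is only known to propagate \emph{from redex to contractum}, not backwards. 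The paper is explicit that this theorem ``concerns reductions and not equalities'': what is actually proved is the oriented reduction $M^{c}\transreduces{\beta}{\upsilon} M$ (the combinatorial term reduces \emph{to} the lambda term, with distanced $\beta$-steps doing the work when a state-typed argument blocks $\alpha$-conversion), and then Proposition~\ref{prop:reducebetadelta} --- the strong form of soundness saying that well-definedness of the redex's denotation implies well-definedness of the contractum's --- pushes well-definedness from $M^{c}$ to $M$, subterms and abstraction clause included. Your second, ``semantic'' lemma gestures at verifying the Henkin clause for abstraction directly, but that is precisely the step that needs the oriented reduction; as stated it assumes what is to be shown.

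Two smaller points: the paper needs no $\eta$ here, only (distanced) $\beta$; and the state-typed argument position in the abstraction algorithm is handled by three combinators rather than by $\mathsf{D}$ alone --- $\mathsf{W}$ when the argument is the abstracted variable itself, $\mathsf{C}$ when it is a distinct variable, and $\mathsf{D}$ only when it is a constant --- with a single identity combinator $\mathsf{I}$ recovered from $\mathsf{S}$ and $\mathsf{K}$ (no separate $\mathsf{I}^{\ast}$ appears in the development).
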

\noindent The combinatory logic used to prove this result deploys a $\mathsf{BCKW}$-basis, to which we add a combinator $\mathsf{D}$, rather than an $\mathsf{SKI}$-basis. See Question~\ref{qu:whatabouts} for whether there is any deeper necessity in this.

A chief difficulty in proving the theorem is that one has to work with a more complicated notion of $\beta$-reduction. Let us denote by $\beta_0$ the usual beta reduction $\big(\lbd{v}{A}{L}\big)N\reduce{\beta_0} L[v:=N]$, subject to the usual constraints. Intuitively, it is an expression of how an input $N$ gets processed according to a rule $\lbd{v}{A}{L}$: namely one systematically replaces free instances of $v$ in $L$ by $N$ to form the term $L[v:=N]$. The more general version which seems necessary is \emph{a distanced version of beta reduction}
$\big( \lbd{\vec{x}}{\vec{B}}{\lbd{v}{A}{L}}\big) \vec{M} N \reduce{\beta} \big( \lbd{\vec{x}}{\vec{B}}{L[v:=N]} \big) \vec{M}$ (which is subjected to natural constraints, cf. Definition~\ref{defn:betareduction}). This is ``distanced'' in that the input $N$ is separated from the lambda abstract $\lbd{v}{A}{L}$ by the intermediary terms~$\vec{M}$. The usual beta reduction $\beta_0$ is just ``distance zero'' $\beta$-reduction.\footnote{While the reductions $\beta_0$ and $\beta$ are different, we do not know if the equalities are different (cf. Question~\ref{qu:betazerobetarelation} and Proposition~\ref{prop:oneway}).}

The main theorems of this paper can be used to produce a new rule which when added to~$\beta\eta$-reduction allows~$\mltup$ to be deductively complete. Namely, for terms $M,N:A$ of~$\mltup$, say that term~$M$ reduces to term~$N$ by \emph{the $\infty$-rule} iff~$M$ is $\beta\eta$-equal to~$N$ in~$\mltpure$ (cf. Definition~\ref{defn:inftyrule}). Then the Completeness Theorem~\ref{thmcompletelambda} for~$\mltpure$ together with the semantic conservation result (Theorem~\ref{thmconservation}) implies the following, which is proven in \S\ref{sec:infty}:
\begin{restatable}[Completeness of $\mltup$ with $\infty$-rule]{thm}{thmcompletemltup}\label{thmcompletemltup}
For terms $\typd{M,N}{A}$ of $\mltup$, one has $\mltup\vdash_{\beta\eta\infty} M=N$ iff $\mltup\models M=N$.

\end{restatable}

\noindent Further one can show that when~$\param$ has at least two variables for each state type, then~$\mltup$ has an open term model formed with respect to~$\beta\eta\infty$-equality (cf. \S~\ref{sec:opentermmodelsredeux}). The natural open question is whether~$\mltup$ has a Completeness Theorem for~$\beta\eta$-equality, or equivalently whether~$\mltpure$ is conservative over~$\mltup$ for~$\beta\eta$-equality (cf. Question~\ref{q:dedcompcon}). By the methods used to establish~Theorem~\ref{corcreatemodel}, these two questions are equivalent to the question of whether~$\mltup$ and $\mltpure$ can prove all of the same $\beta\eta$-equalities made up from applications, variables in $\mltup$, constants, and $\mathsf{BCDKW}$-combinators in $\mltup$.

We can establish partial conservation of $\mltpure$ over $\mltup$ for $\beta$-equality and $\beta\eta$-equality in a large class of cases. Recall that in the lambda calculus, reduction relations are traditionally taken to be preserved under lambda abstracts: that is, if $M$ reduces to $N$, then $\lbd{x}{A}{M}$ reduces to $\lbd{x}{A}{N}$; and similarly, equality relations are traditionally taken to be preserved under lambda abstracts (this is called the $\xi$-rule). In so-called weak lambda calculus,\footnote{See \c{C}a{\u{g}}man-Hindley \cite{Cagman1998-pi} for the untyped setting, and Sestini \cite{Sestini2019-cc} for the typed setting.} one restricts this rule,\footnote{See the outset of \S\ref{sec:purecombinatory} for examples.} and a $w$-prefix is used to indicate that weak reduction and equalities are at issue in the lambda calculus. The weak lambda calculus is of interest because evaluation in many programming languages does not happen under lambda abstracts,\footnote{See \cite{Sestoft2001-gf}, \cite{Biernacka2022-hr} for surveys and taxonomies of different evaluation strategies, many of which are weak. See \cite{Dal-Lago2008-pb}, \cite{Forster2020-sl} for studies of the space and time complexity of systems centered around weak reduction.} and because in philosophy and linguistics it is commonly thought that lambda abstraction over states produces intensional contexts that are not necessarily closed under entailment relations.\footnote{\cite[\S{6.4}]{Dowty1981-gt}, \cite[pp. 73ff]{Gamut1991-gc}, \cite[\S{2.4}]{Chierchia2000-zn}.} Using an intensional version of combinatory logic, we show in \S\ref{sec:pureclcompleteconserve} that:
\begin{restatable}[Conservation of weak $\beta$-equalities in $\mltpure$ over $\beta\eta$-equalities in $\mltup$]{thm}{thmconservemltpureovermltup}\label{thm:conservemltpureovermltup}
Suppose that $\typd{M,N}{A}$ are terms of $\mltup$ in a signature with no constants of state type.

If $\mltpure\vdash_{w\beta} M=N$ then $\mltup\vdash_{\beta\eta} M=N$.
\end{restatable}
\noindent Note that this result concerns weak $\beta$-equality in $\mltpure$, but ordinary $\beta\eta$-equality in $\mltup$. It is not obvious whether this result continues to hold when one adopts just $\beta$ in $\mltup$ and whether it continues to hold when one adopts weak $\beta$-equality on both sides; likewise it is not obvious whether it holds when constants of state type are included in the signature.\footnote{The obstacles are in Proposition~\ref{prop:verydiff}(\ref{prop:verydiff:2}) for constants of state type, and in Proposition~\ref{prop:littlehelper} for weakness and $\eta$.}

As this discussion suggests, combinatory logic is important for the study of the metatheory of the simply-typed modal lambda calculus. The basic idea of combinatory logic is that it axiomatizes core lambda terms and their behavior under beta reduction.\footnote{Combinatory logic was initially developed by Curry \cite{Curry1958-yi}, \cite{Curry1972-tn}. For a modern treatment, see \cite{Bimbo2011-me}. For a side-by-side development of combinatory logic and lambda calculus, see \cite[Chapters 1-9]{Hindley2008-vw}.} Given its close relation to the lambda calculus, today combinatory logic is most used in theoretical computer science, and is for instance the paradigmatic example of a term rewriting system.\footnote{\cite[p. 64]{Terese2003-wa}.} I hope this use of combinatory logic helps to make more of a case for its deployment in philosophy and linguistics.\footnote{Combinatory logic has one prominent advocate in linguistics, namely Steedman \cite{Steedman1996-us}, \cite{Baldridge2011-zz}, \cite{Steedman2018-pc}. According to Partee's history in \cite[\S{3}]{Partee2012-on}, it was also used in early still-unpublished papers of Terence Parsons (of which I have been unable to find copies). Finally, combinatory logic has a rich history in philosophy, and was the original site of Curry's formalism \cite{Curry1951-pq} and his paradox \cite{Curry1942-vm}, \cite[p. 832]{Seldin2009-kc}, \cite{sep-curry-paradox}.} Combinatory logic is centrally used in the proof of Theorem~\ref{corcreatemodel} and Theorem~\ref{thm:conservemltpureovermltup}. The latter goes through Church-Rosser for combinatory intensional logic (\S\ref{subsec:clcr}), which we establish using Takahashi's method of complete developments.\footnote{\cite{Takahashi1995-kp}.} By contrast, minor extensions of Friedman-Warren's examples show that $\mltup$ does not have the Church-Rosser property for $\mltup$ non-maximal.\footnote{\cite{Friedman1980-pp}. Cf. Example~\ref{ex:friedmanwarren} in \S\ref{subsec:cr}.}

Outside of Zimmermann, the prior work which this paper is most indebted to is that of Gallin's book, which was his dissertation started under Montague and finished under Scott.\footnote{\cite{Gallin1975-tp}.} Chapters 1-2 of Gallin's book were concerned with Montague's original theory, in the setting of Church's simple theory of types. This present paper can then be thought of as the pursuit of Gallin's project where Church's simple theory of types is replaced with the simply-typed lambda calculus and where more options are given for the number of state types and the variables allotted to them. 

As this dependence on Montague makes clear, a primary reason to be interested in $\mltup$ is that one takes its finite number of state variables, and its restricted means of referring to states, to be accurate of some parts of the languages which we use.\footnote{I.e. Bach and Partee write: ``Suppose you had a language that was just like a predicate calculus except that the number of distinct variables was fixed at some small number. Suppose further that most of the time the elements acting as binders of the variables (quantifiers, abstraction operators, and the like) were either invisible in the language or not syntactically distinct from the singular terms and variables themselves. What you would have would be something like English'' (\cite{Bach1980-tj}, \cite[122]{Bach2008-aq}).} From this perspective, a chief import of this paper is that semantically one can freely use the more familiar system $\lt$ in studying the ostensibly more limited system $\mltup$. Further, this is compatible with adopting a more instrumentalist attitude towards $\lt$, and this attitude would be comfortable for one who thought that modal discourse was primarily wrought with statements of possibility and contingency of $\mltup$ rather than with more elaborate modes of references to states available in~$\lt$. Finally, from the Montagovian perspective, the system $\mltup$ is interesting because it makes available the interpretation of terms of type $A\rightarrow B$ as procedures which take one from items of type $A$ to items of type $B$, in keeping with the traditional interpretation of the simply-typed lambda calculus as a way of representing algorithms which meet specifications.\footnote{This thus gives a new perspective on the old idea of intensions as procedures, due to Dummett \cite[93, cf. pp. 96, 102, 179~ff]{Dummett1981aa} and Tich\'y \cite{Tichy1969-kv}. But there may be more to procedures that one would want to express than can be expressed in $\mltup$. See in particular the discussion of comonads at the close of \S\ref{sec:automorphisms}.}

Of course, formal systems do not wear their interpretations on their sleeves. While $\mltup$ was primarily designed to be a system which integrates modal logic with the simply-typed lambda calculus, it can also be used to describe a system which integrates finite-variable logics with the simply-typed lambda calculus. Finite-variable logics have been extensively studied in the context of first-order predicate logic due to their coupling of modest expressive power with desirable computational properties, at least when compared with full first-order predicate logic.\footnote{\cite{Pratt-Hartmann2003-tx}, \cite{Pratt-Hartmann2023-mn}.} On the interpretation of $\mltup$ on which it is a type-theoretic generalization of finite-variable logics rather than of modal logics, the terms of state types of $\mltup$ refer not to worlds but rather to individuals.\footnote{See Example~\ref{eqn:higherorderlogicatomics} and Example~\ref{ex:finvar}.} However, in contrast to the setting of first-order predicate logic, the main theorems of this paper point to an expressive parity between the finite-variable systems and their infinite-variable counterparts, as far as their usual semantics goes.

As a last word of introduction, let me say something to the reader coming from modal logic. One may be disappointed in the following pages by the lack of accessibility relations, the lack of options for Barcan, the lack of bisimulations, and the apparent lack of necessity operators. As for accessibility relations $R$, they can be added since one can type them as $\typd{R}{A\rightarrow A\rightarrow B}$, where $A$ is state type and $B$ is a type for truth-values. The Montagovian tradition put them in the metatheory,\footnote{E.g. \cite[p. 158]{Dowty1981-gt}.} since they restricted to a single variable for each state type and hence $Ruv$ would be ill-formed; but part of the goal in this paper is to extend Montague's system to settings with more variables for the state types. As for Barcan, this is just for lack of space, and another sequel to this paper focuses on simply-typed variable-domain modal lambda calculus. As regards bisimulations, these are a device for showing inexpressibility, and the resolution of Zimmermann's question (including Zimmermann's own resolution in the standard semantics) shows that simply-typed modal lambda calculus is maximally expressive.\footnote{Cf. \cite[fn 40 p. 34]{Zimmermann2020-an}.} Finally, regarding necessity operators: these are given by the assertion that ``$\lbd{u}{A}{P}$ is the function which always outputs true,'' where $A$ is a state type and $\typd{P}{B}$ has a type $B$ reserved for truth-values.\footnote{Cf. \cite[p. 16]{Gallin1975-tp}.} In general, the lack of bound state variables in $\Box P$ is simulated in $\mltup$ by making the set of state variables very small; this simulation is not new and is of course just the idea behind the standard translation of modal propositional logic into non-modal first-order predicate logic.

This paper is organized as follows. In \S\ref{sec:systems} the system $\mltup$ is formally defined, including its types~(\S\ref{subsec:typesandterms}), terms~(\S\ref{subsec:terms}), its relation to $\lt$~(\S\ref{sec:nonmodal}), the conventions on reduction notions~(\S\ref{subsec:conventionsredu}), its alpha conversion notion defined in terms of permutations~(\S\ref{sec:alpha}), its distinctive distanced beta version~(\S\ref{subsec:beta}), and it finally it is noted that $\mltup$ has strong normalization but does not have Church-Rosser~(\S\ref{subsec:cr}). In \S\ref{sec:soundness}, the Henkin semantics is described~(\S\ref{sec:semantics}), it is shown that the Soundness Theorem~\ref{thm:soundness} holds for these semantics and $\beta\eta$-equality~(\S\ref{subsec:soundness}), it is shown that the corresponding Completeness Theorem~\ref{thmcompletelambda} holds for $\mltpure$ since it has an open term model but it is noted in Example~\ref{exa:failureopenterm1} that the minimal system of $\mltup$ does not have an open term model~(\S\ref{sec:completeness-open-term}), and the section closes by showing how to use automorphisms to establish some fundamental inexpressibility results for $\mltup$~(\S\ref{sec:automorphisms}). In \S\ref{sec:combo} combinatory terms in $\mltup$ are introduced~(\S\ref{sec:typedcomboterms}), the expanded $\mathsf{BCDKW}$-combinatorial terms of $\mltup$ are defined and studied~(\S\ref{sec:combo:bcdkw}), and Theorem~\ref{corcreatemodel} is proven~(\S\ref{sec:combinatorial-models}), from which Theorems~\ref{thmconservation}, \ref{thm:express} are established~(\S\ref{sec:conservation}), and finally Theorem~\ref{thmcompletemltup} is proven~(\S\ref{sec:infty}).  In \S\ref{sec:ty2}, Theorems~\ref{thmconservationlt}, \ref{thm:expresslt} are proven~(\S\ref{sec:conserlt}), and open terms models for non-minimal systems equipped with the $\infty$-rule are shown to exist ~(\S\ref{sec:opentermmodelsredeux}). In \S\ref{sec:purecombinatory}  a pure intensional version of combinatory logic is developed and it is used to prove Theorem~\ref{thm:conservemltpureovermltup}. 

Finally, before beginning, we note for ease of reference that the numbered open questions are Question~\ref{qu:betazerobetarelation}, Question~\ref{qu:whatabouts}, Question~\ref{q:dedcompcon}, and Question~\ref{qu:howaboutpotentialcounter}.


\section{The systems of simply-typed modal lambda calculi}\label{sec:systems}

The systems of simply-typed modal lambda calculi are formed by altering the usual simply-typed lambda calculus in two ways: by a restriction on type formation and by a restriction on the number of variables. We begin with types.

\subsection{Types}\label{subsec:typesandterms}

\begin{defi}[Types]\label{defn:type}
The \emph{atomic types} are made up of two disjoint sets, the \emph{state types}, which may be empty, and the \emph{basic entity types}, which must be non-empty.

The \emph{regular types} are defined as follows:
\begin{enumerate}[leftmargin=*]
    \item \label{type:definition:1} Each basic entity type $A$ is a regular type.
    \item \label{type:definition:2}  If $A$ is a regular type and $B$ is a regular type, then $(A\rightarrow B)$ is a regular type.
    \item \label{type:definition:3}  If $A$ is a state type and $B$ is a regular type, then $(A\rightarrow B)$ is regular type. 
\end{enumerate}
A \emph{type} is a state type or a regular type.  
\end{defi}

We associate arrows to the right in the usual way, so that $A\rightarrow B\rightarrow C$ is $A\rightarrow (B\rightarrow C)$. And we drop outermost parentheses.

Any choice of state types and basic entity types gives a choice of atomic types. A traditional choice is the following:

\begin{exa}[Montagovian atomics]\label{eqn:montagovianatomics}
Montague took as his atomics a single state type $S$ for worlds, and two basic entity types $E,T$, where $E$ is for individuals and $T$ is for truth-values. In the Montagovian tradition, $S\rightarrow T$ is the type of propositions, and $S\rightarrow E$ is the type of intensions of individuals.\footnote{In semantics in linguistics, one writes the type $S\rightarrow T$ as $st$, and the type $S\rightarrow E$ as $se$.} But $T\rightarrow S$ and $E\rightarrow S$ are not types since $S$ is a state type. Montague worked in Church's simple theory of types, where the truth-values in $T$ are made to be just $0$ and $1$. In weaker systems, one might also take $T$ to be a non-atomic type $A\rightarrow A\rightarrow A$ where $A$ is a regular type, which is a common way of representing Booleans (e.g. \cite[p. 39]{Barendregt2013-eb}). In temporal extensions of Montague's system, one would add another state type for times. In two-dimensional extensions of Montague's system, one would add another state type for epistemically possible worlds.
\end{exa}

For another example, consider:
\begin{exa}[Higher-order logic]\label{eqn:higherorderlogicatomics}
In contrast to the previous example, suppose that the atomic types are the type $E$, reserved for individuals, but which is now taken to be of \emph{state type}, and the type $T$, reserved for truth-values, which is still taken to be a basic entity type. Then the types are in one-one correspondence with the types of relational type theory, which is also known as higher-order logic.\footnote{\cite[73]{Orey1959-zk}, \cite[68]{Gallin1975-tp}. For a notional variant of Definition~\ref{defn:type} specialized to Example~\ref{eqn:higherorderlogicatomics}, consider: ``Our type system will be [\ldots] defined to be the smallest which includes the letter `$E$' (the `type of individuals') and `$T$' (the `type of propositions'), and is such that whenever $A$ and $B$ are in it and $B$ is distinct from $E$, [then] $A\rightarrow B$ (the type of operations that makes type-$B$ things out of type-$A$ things) is in it'' (\cite[112]{Fritz2024-wo}; choice of variables changed to match ours). The type $T$ is described as propositions rather than truth-values because higher-order metaphysics often models it as a Boolean algebra; this is so that the modal operators can then be viewed as operators on this algebra.}
For example, the type of binary relations between individuals is given by the type $E\rightarrow E\rightarrow T$. And the type of binary relations between unary relations and individuals is given by $(E\rightarrow T)\rightarrow E\rightarrow T$.\footnote{That is, we implicitly ``undo'' the currying and think of the type $(E\rightarrow T)\rightarrow E\rightarrow T$ as the type $((E\rightarrow T)\times E)\rightarrow T$.} 
\end{exa}

\begin{proofdetail}

Regarding the previous example, here is a proof that the types from that example are in one-one correspondence with the types of relational type theory or higher-order logic.

Formally, recall that the types of higher-order logic are given by the following:
\begin{enumerate}[label=(\alph*), ref = \alph*, align=left, leftmargin=*]
    \item\label{eqn:HOL1} $E$ is a type of higher-order logic.
    \item\label{eqn:HOL2} If $n\geq 0$ and $A_0, \ldots, A_{n-1}$ are types of higher-order logic, then $(A_0, \ldots, A_{n-1})$ is a type of higher-order logic.
\end{enumerate}
In (\ref{eqn:HOL2}), when $n=0$ there is understood to be exactly one length zero string, namely $\emptyset$, which is hence a type of higher-order logic.

One then defines a map $A\mapsto A^{\dag}$ from the types of higher-order logic into the types formed from state type $E$ and basic entity type $T$ according to Definition~\ref{defn:type} as follows:
\begin{enumerate}[label=(\roman*), ref = \roman*, align=left, leftmargin=*]
    \item\label{eqn:dag1} One sets $E^{\dag}$ equal to $E$
    \item\label{eqn:dag2} One sets $\emptyset^{\dag}$ equal to $T$.
    \item\label{eqn:dag3} If $n\geq 1$ and $A_0, \ldots, A_{n-1}$ are types of higher-order logic, then one sets $(A_0, \ldots, A_{n-1})^{\dag}$ equal to $A_0^{\dag}\rightarrow \cdots \rightarrow A_{n-1}^{\dag}\rightarrow T$
\end{enumerate}

This map is clearly injective, and we claim it is surjective. By (\ref{eqn:dag1}), it suffices to show that every regular type in the sense of Definition~\ref{defn:type} is in its range. By (\ref{eqn:dag2}), we have that $T$ is in its range. Suppose that $A\rightarrow B$ is a regular type, so that $B$ is a regular type, and by induction hypothesis $B$ is in the range, say $B$ is $D^{\dag}$ for some type $D$ in the sense of (\ref{eqn:HOL1})-(\ref{eqn:HOL2}). If (\ref{eqn:HOL1}) were used and $D$ was $E$, then by (\ref{eqn:dag1}) we would have that $B$ is $E$, and then $B$ would not be regular. Hence (\ref{eqn:HOL2}) must be used. Then $D$ would be $(A_1, \ldots, A_n)$, and then by (\ref{eqn:dag3}) one would have that $B$ would be $A_0^{\dag}\rightarrow \cdots \rightarrow A_{n-1}^{\dag}\rightarrow T$. There are then two cases to consider.
\begin{itemize}[leftmargin=*]
    \item If $A$ is a state type, then $A$ is $E$, and hence by (\ref{eqn:dag1}) one has that $A\rightarrow B$ is $E^{\dag}\rightarrow A_0^{\dag}\rightarrow \cdots \rightarrow A_{n-1}^{\dag}\rightarrow T$, which is equal by (\ref{eqn:dag3}) to $(E, A_0, \ldots, A_n)^{\dag}$.
    \item If $A$ is a regular type, then by induction hypothesis $A$ is $C^{\dag}$ for some type $C$ in the sense of (\ref{eqn:HOL1})-(\ref{eqn:HOL2}). Then $A\rightarrow B$ is $C^{\dag}\rightarrow A_0^{\dag}\rightarrow \cdots \rightarrow A_{n-1}^{\dag}\rightarrow T$, which is equal by (\ref{eqn:dag3}) to $(C, A_0, \ldots, A_n)^{\dag}$. 
\end{itemize}

\end{proofdetail}

\subsection{Terms}\label{subsec:terms}

The restrictions on variables are enforced by the following:

\begin{defi}[Parameter, which controls variables of state types]\label{defn:parameter}
The parameter $\param$ is a function which sends each state type to an element of $\{1,2,\ldots, \omega\}$ (where, recall $\omega$ is the least infinite cardinal).

We extend to $\param$ to all types by setting $\param(A)=\omega$ for all regular types~$A$.

We define \emph{the set of variables of type $A$} to be $\{\typd{v_i}{A} : 0\leq i<\param({A})\}$.
\end{defi}
Of course, we quickly move to writing $u,v,w,\ldots$ etc. instead of the more formal $v_0, v_1, v_2, \ldots$ etc. But when doing so we must be careful not to exceed the number set by the parameter.

There is a natural partial order on parameters given by $\param \leq \param^{\prime}$ iff for all state types $A$ one has $\param(A)\leq \param^{\prime}(A)$. Hence, if $\param \leq \param^{\prime}$, then all of the variables of $\param$ are also variables of $\param^{\prime}$.

Relative to these restrictions on types and variables, we define the terms in the usual way, where we assume that a collection of typed constants, called a \emph{signature}, has been specified in advance:
\begin{defi}[Terms]\label{defn:term}
Let $\param$ be a parameter and let $\mathcal{D}$ be a signature. Then the terms $\typd{M}{A}$ of $\mltup$ is signature~$\mathcal{D}$ are defined as follows:
\begin{enumerate}[leftmargin=*]
    \item \label{defn:term:1} \emph{Variables}: the variables $\typd{v_i}{A}$ for $i<\param(A)$ are terms of $\mltup$.
    \item \label{defn:term:2} \emph{Constants}: the constants $\typd{c}{A}$ from $\mathcal{D}$ are terms of $\mltup$.
    \item \label{defn:term:3} \emph{Applications}: if $\typd{M}{A\rightarrow B}$ and $\typd{N}{A}$ are terms of $\mltup$ then the application $\typd{(MN)}{B}$ is a term of $\mltup$.
    \item \label{defn:term:4} \emph{Lambda abstracts}: if $C$ is a regular type and $\typd{L}{C}$ is a term of $\mltup$ and further $0\leq i<\param(A)$, then the lambda abstract $\typd{(\lbd{v_i}{A}{L})}{\;A\rightarrow C}$ is a term of $\mltup$.
\end{enumerate}
\end{defi}

Formally, the signature $\mathcal{D}$ ought to be displayed in the definition of $\mltup$ since terms of $\mltup$ depend on the signature $\mathcal{D}$. But in this paper we are not switching often between different signatures, and so we omit it.

We write $\mlt{\kappa}$ for $\mltup$ where $\param(A)=\kappa$ for all state types $A$. Montague and Zimmermann studied $\mlt{1}$.\footnote{The system $\mlt{1}$ is Zimmermann's IL$^{\ast}$ from \cite[p. 67]{Zimmermann1989-gs} when the constants are restricted to be of type $A\rightarrow B$, where $A$ is a state type. See \cite[\S{4.1} p. 75]{Zimmermann1989-gs}, \cite[pp. 17-19]{Zimmermann2020-an} for discussion of the relation of this to Montague's choice of object-language. As Zimmermann says (\cite[p. 67]{Zimmermann1989-gs}, \cite[p. 19]{Zimmermann2020-an}), the idea is to find in type theory the image of Montague's system under the standard translation (he calls it Gallin's translation after \cite[pp. 61 ff]{Gallin1975-tp}).} The maximal system is $\mltpure$, which has countably many variables for each state type.

For application, we associate to the left in the usual way, so that $PQR$ is $(PQ)R$. And we drop outermost parentheses.

For nested lambda abstraction, we use vector notation and abbreviate the term $\lbd{v_1}{A_1}{\cdots\lbd{v_n}{A_n}{M}}$ by $\lbd{\vec{v}}{\vec{A}}{M}$; and we refer to $n$ as the \emph{length} of $\vec{v}$. Further, if $\typd{M}{C}$ then we abbreviate the type $A_1\rightarrow \cdots \rightarrow A_n\rightarrow C$ of the nested lambda abstract $\lbd{\vec{v}}{\vec{A}}{M}$ as $\vec{A}\rightarrow C$.\footnote{Occasionally, when $A_1, \ldots, A_n$ are all identical, we write $\lbd{\vec{v}}{A}{M}$ instead of $\lbd{\vec{v}}{\vec{A}}{M}$; and we refer to the type of this term as $A^n\rightarrow C$ where $\typd{M}{C}$. Since we use this notation sparingly, we remind the reader when we use it.\label{fn:vecsame}} We similarly use vector notation in writing $\vec{N}$ for $N_1 \cdots N_n$ when these are of the appropriate type; and we similarly write $\big(\lbd{\vec{v}}{\vec{A}}{M}\big) \vec{N}$ for $\big(\lbd{v_1}{A_1}{\cdots\lbd{v_n}{A_n}{M}}\big) N_1 \cdots N_n$ when $\typd{N_1}{A_1}, \ldots, \typd{N_n}{A_n}$.

Officially, everything is Church-typed rather than Curry-typed, but we minimize the display of types on the terms to maintain readability. However, we always display the type on the bound variables since the restrictions on the number of variables is so central to the system.

Here is a simple but important proposition:

\begin{prop}[Terms of state type]\label{prop:state-free}
The only terms of $\mltup$ of state type are the variables and constants.
\end{prop}
\begin{proof}
Suppose $B$ is a state type. A term of type $B$ cannot be an application $MN$ since then we would have $\typd{M}{A\rightarrow B}$ and $\typd{N}{A}$, but $A\rightarrow B$ is not a type since $B$ is a state type. Also, a term of type $B$ cannot be a lambda abstract since lambda abstracts always have functional type and $B$ is a state type and so atomic. Hence, the only remaining options for terms are constants and variables.
\end{proof}
\noindent This proposition would fail if one tried to restrict the variables without introducing restrictions on types, since e.g. one could introduce terms of state type by mapping from a functional type into the state type.

To illustrate the usefulness of constants, consider:
\begin{exa}[Actuality operators: named worlds vs. diagonals]\label{exa:namedworlds}
The simplest version of the actuality operator is $\lbd{p}{A\rightarrow B}{pc}$, which has type $(A\rightarrow B)\rightarrow B$, where $A$ is a state type, $B$ is a type reserved for truth-values, and $\typd{c}{A}$ is a constant. If $c$ is the actual world, then the actuality operator just takes a proposition $p$ and evaluates it at the actual world.

There is also a distinct actuality operator $\lbd{p}{A\rightarrow A\rightarrow B}{\lbd{v}{A}{pvv}}$. If $B$ is again a type reserved for truth-values, then this term intuitively takes the proposition $p$, which takes two state arguments, and when given a single state argument $v$ returns the diagonal $pvv$. As a lambda term, this actuality operator is the Warbler of combinatory logic (Definition~\ref{defn:typedcombo}, cf. Definition~\ref{defn:typedcombo2} for pure combinatory logic). In modal logic itself, this actuality operator is widespread in two-dimensional logics (cf. \cite{Davies1980-ym}, \cite{Nimtz2017-qu}).
\end{exa}

\subsection{Non-modal simply-typed lambda calculus}\label{sec:nonmodal}

We introduce some notation, prefigured in \S\ref{sec:intro}, for the ordinary non-modal simply-typed lambda calculus:
\begin{defi}[Ordinary simply-typed lambda calculus]
The ordinary simply-typed lambda calculus $\lt$ is the simply-typed modal lambda calculus $\mltpure$ without any state types.
\end{defi}
If there are no state types, then the definition of type and term (Definitions~\ref{defn:type}, \ref{defn:term}) just results in the ordinary simply-typed lambda calculus (\cite[Part I]{Barendregt2013-eb}, \cite[Chapters 10, 12]{Hindley2008-vw}). As usual, there are as many systems of $\lt$ as there are choices of atomic basic entity types. In the proofs of Theorems~\ref{thmconservationlt}-\ref{thm:expresslt} in \S\ref{sec:ty2}, we adopt the following convention:
\begin{defi}[Convention on a pair of $\lt$ and $\mltpure$]\label{defn:nonmodal}
When discussing the relation between a specific pair $\lt$ and $\mltpure$ (e.g. as regards conservativity or expressibility), we assume that the basic entity types of $\lt$ are the union of the state types and basic entity types of $\mltpure$.
\end{defi}
We do not need this convention until \S\ref{sec:ty2}, and remind the reader of it there. But we illustrate with two examples.

\begin{exa}[Montagovian atomics revisited]\label{ex:montre}
Recall from Example~\ref{eqn:montagovianatomics} that Montague's atomics were state type $S$, for worlds, and basic entity types $E,T$, for individuals and truth-values, respectively. While Montague himself worked in $\mlt{1}$, we could consider working in $\mltpure$ as well.

Following the convention in Definition~\ref{defn:nonmodal}, the associated $\lt$ has no state types but has basic entity types $S,E,T$. It is just the ordinary simply-typed lambda calculus with the three atomic types $S,E,T$. 

For instance, $\lt$ has types $T\rightarrow S$ and $E\rightarrow S$, but $\mltpure$ does not.

Further, $\lt$ has terms of type $\mltpure$ which are not terms of $\mltpure$, such as $Uv$, where $\typd{U}{E\rightarrow S}$ and $\typd{v}{E}$ are variables. 

\end{exa}

\begin{exa}[Higher-order logic revisited]\label{ex:finvar}
Recall from Example~\ref{eqn:higherorderlogicatomics} that higher-order logic (also known as relational type theory) has state type $E$ for individuals and basic entity type $T$ for truth-values (or propositions).

Relative to this choice of atomics, the systems $\mltup$ with $\param(E)<\omega$ are higher-order versions of the extensively studied finite-variable first-order logics.\footnote{\cite{Pratt-Hartmann2003-tx}, \cite{Pratt-Hartmann2023-mn}.} But in $\mltup$ the finite variable restrictions are imposed only on the state types, which in this example are reserved for the individuals. And $\mltpure$ is just a notational variant of higher-order logic.

Following the convention in Definition~\ref{defn:nonmodal}, the associated $\lt$ has no state types but has basic entity types $E,T$. It is just the ordinary simply-typed lambda calculus with the two atomic types $E,T$. That is, in this example, $\lt$ is the functional type theory associated to a traditional relational type theory $\mltpure$.

For instance, $\lt$ has types $T\rightarrow E$ and $E\rightarrow E$, but $\mltpure$ does not.

Further, $\lt$ has terms of type $\mltpure$ which are not terms of $\mltpure$, such as $Uv$, where $\typd{U}{E\rightarrow E}$ and $\typd{v}{E}$ are variables. 

\end{exa}

\subsection{Conventions on reduction notions}\label{subsec:conventionsredu}

For a family of binary relations $R_A$ of terms of $\mltup$ of type $A$, we define $\reduces{R_A}{\param}$ to be its compatible closure, i.e. the smallest binary relation on terms of type $A$ which includes $R_A$ which is \emph{compatible}: 
\begin{enumerate}[label=(\roman*), ref=\roman*]
    \item\label{eqn:compatible:1} It is closed under well-formed application on both sides.
    \item\label{eqn:compatible:2} It is closed under lambda abstraction.
\end{enumerate}

Regarding (\ref{eqn:compatible:1}), this means: if $P\reduces{R_A}{\param}Q$ then $MP\reduces{R_B}{\param}MQ$ for all terms $\typd{M}{A\rightarrow B}$ of $\mltup$; and likewise if $M\reduces{R_{A\rightarrow B}}{\param}N$ then $MP\reduces{R_B}{\param}NP$ for all terms $\typd{P}{A}$ of $\mltup$. Note that ``closed under well-formed application on both sides'' does \emph{not} mean: if $P\reduces{R_A}{\param}Q$ and $M\reduces{R_{A\rightarrow B}}{\param}N$, then $MP\reduces{R_{A\rightarrow B}}{\param}NQ$. This would be a parallel reduction notion (cf. \S\ref{subsec:clcr}), whereas the intended idea is a single $R$-reduction happening somewhere inside the term.

Regarding (\ref{eqn:compatible:2}), this means: if $P\reduces{R_A}{\param}Q$ then $\lbd{x}{A}{P} \reduces{R_A}{\param} \lbd{x}{A}{Q}$ for every variable $\typd{x}{A}$ of $\mltup$.\footnote{In the case of equalities, this closure condition is traditionally known as the $\xi$-rule; in the case of reductions, it usually is not given a specific name (cf. \cite[pp. 23, 50]{Barendregt1981-ai}, \cite[pp. 13, 30]{Hankin2004-lz}).\label{fn:xirule}}

We define the family $\transreduces{R_A}{\param}$ as the reflexive transitive closure of the family $\reduces{R_A}{\param}$, and we define the family $\equals{R_A}{\param}$ as the smallest equivalence relation containing $\reduces{R_A}{\param}$. In all this, we are just following the standard treatment in \cite[p. 50]{Barendregt1981-ai}, adapted to the typed context. By the same argument as \cite[p. 52 Lemma 3.1.6]{Barendregt1981-ai} one has that both the family  $\transreduces{R_A}{\param}$ and the family $\equals{R_A}{\param}$ are compatible.

Similarly, for two families of binary relations $R_A,S_A$ of terms of $\mltup$ of type $A$, we define~$\reduces{R_A S_A}{\param}$ to be  $\reduces{R_A\cup S_A}{\param}$. And likewise for three families etc.

When the type $A$ is clear from context, we just drop it from the subscripts of the definitions in the previous paragraph; since it is almost always clear from context, we almost always drop it. Further, in what follows, to slightly compress discussion, we often use $\reduce{R}$ to introduce a binary relation $R$ directly, allowing ourselves to skip the extra step of first declaring $R$ and then its compatible closure.

If $R_A$ is any family of binary relations on terms of $\mltup$ of type $A$, then for terms $\typd{M,N}{A}$ of $\mltup$, we define $\mltup\vdash_R M=N$ iff $M\equals{R_A}{\param}N$. Note that there is no identity in the object language of $\mltup$, and rather this is a meta-theoretically defined notion.

Finally, systems which keep (\ref{eqn:compatible:1}) but which drop (\ref{eqn:compatible:2}) are known as weak systems, and we designate the associated reducibility relation with a `$w$' prefix, and we write the associated reduction relation as $\reduces{wR}{\param}$. We do not use weak systems until \S\ref{sec:purecombinatory}, and we remind the reader of the convention when we come to it.

\subsection{Alpha conversion}\label{sec:alpha}

It is standard in lambda calculus to identify $\alpha$-equivalent terms, that is, terms which are the same up to renaming of bound variables, and to view oneself as formally working with equivalence classes of $\alpha$-equivalent terms (\cite[p. 26, pp. 577 ff]{Barendregt1981-ai}, \cite[p. 277 ff]{Hindley2008-vw}). We proceed similarly with $\mltup$.

But different choices of $\param$ result in different equivalence classes. For, if $\param\leq \param^{\prime}$, then the $\alpha$-equivalence classes of $\mltup$ are finer than the $\alpha$-equivalence classes of $\mlt{\param^{\prime}}$. Here is an example where the equivalence classes are maximally fine:
\begin{exa}[Example of fineness of $\alpha$-equivalence classes]
In $\mlt{1}$, if $A$ is a state type and $B$ is a basic entity type and $\typd{v}{A}$ and $\typd{u}{B}$ are variables, then $\lbd{v}{A}{u}$ is the only term in its $\alpha$-equivalence class. This is because formally $\typd{v}{A}$ is $\typd{v_0}{A}$ and there are no other variables of state type $A$ in $\mlt{1}$.
\end{exa}

In many treatments of the lambda calculus, it is useful to present a step-by-step reduction notion which slowly rewrites a term into an $\alpha$-equivalent (\cite[p. 26]{Barendregt1981-ai}, \cite[p. 278]{Hindley2008-vw}). But in $\mlt{2}$, implementing this usual procedure would require moving to $\mlt{n}$ for $n>2$, since if one is in $\mlt{2}$ and working with state type $A$, if one needs to change $\lbd{v}{A}{\lbd{u}{A}{cuv}}$ into $\lbd{u}{A}{\lbd{v}{A}{cvu}}$, then one will have to appeal to other variables to do the transition inductively. Rather than set up this procedure, in the few places where we need a formal definition of $\alpha$-equivalence (cf. Proposition~\ref{prop:commpermtran}), we define $\alpha$-equivalence in terms of permutations. While not common, this approach has been independently developed by Gabbay and Pitts.\footnote{\cite{Gabbay2002-lw}, \cite{Pitts2006-ak}, \cite[Chapter 4]{Pitts2013-bb}.}

\begin{defi}[$\alpha$-equivalence]\label{defn:alphaeq}
Let parameter $\param$ be fixed.

Suppose that $\pi$ is a type-preserving permutation of the variables of $\mltup$. Then we extend to a type-preserving permutation from terms $M$ of $\mltup$ to terms $M^{\pi}$ of $\mltup$ by further setting $c^{\pi}$ to be $c$ for constants $c$; by setting $(M_0 M)^{\pi}$ to be $M_0^{\pi} M_1^{\pi}$; and by setting $(\lbd{v}{A}{M})^{\pi}$ to be $\lbd{u}{A}{M^{\pi}}$, where $\pi(\typd{v}{A})=\typd{u}{A}$.

We define $\alpha_{A}$ to be the binary relation on ordered pairs of terms $\mltup$ of type $A$ given by: the ordered pair $(M,N)$ stands in the $\alpha_A$ relation iff $N$ is $M^{\pi}$ for some type-preserving permutation $\pi$ of the variables of $\mltup$ which is the identity on the free variables of $M$. Then, we define $\equals{\alpha_A}{\param}$ to be the smallest equivalence relation containing the compatible closure~$\reduces{\alpha_A}{\param}$ of~$\alpha_A$ (cf. \S\ref{subsec:conventionsredu}).

Finally, we say that two terms $\typd{M,N}{A}$ in $\mltup$ are \emph{$\alpha$-equivalent} if $M\equals{\alpha_A}{\param} N$.

\end{defi}

Here are two brief illustrations:
\begin{exa}
In  $\mlt{2}$, suppose $A$ is a state type and $B$ is a basic entity type and $\typd{d}{A\rightarrow B}$ is a constant and $\typd{c}{(A\rightarrow B)\rightarrow (A\rightarrow B) \rightarrow B}$ is a constant and $\typd{x,y}{A}$ are distinct variables. Using the permutation which transposes $x,y$ we obtain that $\lbd{y}{A}{dy}$ is $\alpha$-equivalent to $\lbd{x}{A}{dx}$. Then the closure of $\alpha$-equivalence under application gives that $c(\lbd{x}{A}{dx})(\lbd{y}{A}{dy})$ is $\alpha$-equivalent to $c(\lbd{x}{A}{dx})(\lbd{x}{A}{dx})$.
\end{exa}

\begin{exa}
In  $\mlt{2}$, suppose $A$ is a state type and $B$ is a basic entity type and $\typd{d}{A\rightarrow A \rightarrow B}$ is a constant and $\typd{x,y}{A}$ are distinct variables. Then $\lbd{x}{A}{dxy}$ is the only formula in its $\alpha$-equivalence class. For if a permutation of $\typd{x,y}{A}$ is the identity on the free variables of $\lbd{x}{A}{dxy}$ or its subterm $dxy$, then it is the identity permutation.
\end{exa}

\subsection{Beta and eta reduction}\label{subsec:beta}

The following definition is the most important definition in the paper, and developing this specific generalization of $\beta$-reduction proved instrumental to proving the main results of this paper. But the definition of this generalization is admittedly a bit baroque at first glance. For more motivation, we would point the reader forward towards the subsequent Remark~\ref{rem:transposition} for a helpful heuristic. Likewise, we would point the reader towards \S\S\ref{sec:typedcomboterms}-\ref{sec:combinatorial-models} which culminates in the proof of Theorem~\ref{corcreatemodel}, where this generalization seems the almost inevitable way to extend Andrews' proof of the combinatorial characterization of models to $\mltup$.\footnote{While our notion came about in trying to extend Andrews' proof, it bears some resemblance to Accattoli's distant B-reduction \cite[\S{4.1} p. 13]{Accattoli2019-sf}. This similarity merits further investigation, but one can say the following: while both allow ``abstraction and the argument to interact at a distance'' (\cite[\S{4.1} p. 13]{Accattoli2019-sf}), our distanced $\beta$-reduction takes place in the simply-typed non-weak setting with no explicit substitutions, and Accattoli's takes place in the untyped weak setting with explicit substitutions. That said, in \S\ref{sec:purecombinatory}, we too move to the weak setting.}

\begin{defi}[Definition of $\beta$- and $\eta$-reduction in $\mltup$]\label{defn:betareduction}
We say $\big( \lbd{\vec{x}}{\vec{B}}{\lbd{v}{A}{L}}\big) \vec{M} N \reduces{\beta}{\param} \big( \lbd{\vec{x}}{\vec{B}}{L[v:=N]} \big) \vec{M}$ if each of the following conditions holds
\begin{enumerate}[leftmargin=*]
    \item\label{defn:betareduction:1} $\typd{N}{A}$ is free for $\typd{v}{A}$ in $\typd{L}{C}$;
    \item\label{defn:betareduction:3} the variables in $\typd{\vec{x}}{\vec{B}}$ are not free in $\typd{N}{A}$;
    \item \label{defn:betareduction:2} the variables in $\typd{\vec{x}}{\vec{B}}, \typd{v}{A}$ are pairwise distinct.
\end{enumerate}

The \emph{distance} of the $\beta$-reduction is the length of vector $\typd{\vec{x}}{\vec{B}}$.

We use $\beta_0$ for distance zero $\beta$-reduction.

We say that an instance of $\beta$-reduction is \emph{regular} if
\begin{enumerate}[leftmargin=*]\setcounter{enumi}{3}
    \item\label{defn:betareduction:5} the only variable, if any, of state type in the tuple $\typd{\vec{x}}{\vec{B}}\equiv \typd{x_1}{B_1}, \ldots, \typd{x_n}{B_n}$ is the first one $\typd{x_1}{B_1}$.
\end{enumerate}

We use $\beta_r$ for regular $\beta$-reduction.

Finally, we define $\lbd{x}{A}{Mx} \reduces{\eta}{\param} M$ when $\typd{x}{A}$ not free in $\typd{M}{A\rightarrow B}$.
\end{defi}

Now we turn to various notational remarks about $\beta$:

\begin{rem}[Notation related to $\beta$]
The notion of ``free for'' in (\ref{defn:betareduction:1}) is standard: a term $\typd{N}{A}$ is \emph{free for} $\typd{v}{A}$ in $L$ if all free occurrences of $\typd{v}{A}$ in $L$ do not occur in a subterm $P$ of a subterm $\lbd{u}{C}{P}$ of $L$ where $\typd{u}{C}$ is free in $N$. 

The vector $\vec{M}$ in $\beta$-reduction has type $\typd{\vec{M}}{\vec{B}}$, that is, the same type as that of the variables  $\typd{\vec{x}}{\vec{B}}$.

Note that we are using $L[x:=N]$ for substitution, instead of $[N/x]L$. We prefer $L[x:=N]$ since it matches nicely with the familiar notation for variable assignments (cf. \S\ref{sec:semantics}).

We adopt the convention of taking note of the distance in a $\beta$-reduction in the accompanying text whenever it is non-zero. This helps one to easily see where the more distinctive instances of our generalised $\beta$-reduction are being used in the proofs: in particular, one can just search the document for the word ``distance.'' Unless a result concerns what can be done with $\beta_0$ reduction alone, we tend not to mark $\beta_0$ explicitly but just refer to the instance as $\beta$; and similarly for $\beta_r$.
\end{rem}

We take note of the following elementary proposition, which follows directly from the definitions.
\begin{prop}
$P\reduces{\beta_0}{\param} Q$ implies  $P\reduces{\beta_r}{\param} Q$, which in turn implies  $P\reduces{\beta}{\param} Q$. 
\end{prop}

One can render distanced $\beta$-reduction more familiar by defining a notion of transposition reduction:\footnote{Thanks to an anonymous referee for emphasizing this.}
\begin{defi}[Transposition reduction $\tau$]\label{defn:tau}
Suppose that variables $\typd{\vec{x}}{\vec{B}},\typd{\vec{v}}{\vec{A}}$ are pairwise disjoint. Then one defines the \emph{transposition reduction} $\tau$ by $\big( \lbd{\vec{x}}{\vec{B}}{\lbd{\vec{v}}{\vec{A}}{L}}\big) \vec{M} \vec{N} \reduces{\tau}{\param}\big( \lbd{\vec{v}}{\vec{A}}{\lbd{\vec{x}}{\vec{B}}{L}}\big)  \vec{N} \vec{M}$.
\end{defi}

\begin{rem}[$\beta$-reduction as the composition of transposition and $\beta_0$-reduction]\label{rem:transposition}
 If the conditions (\ref{defn:betareduction:1})-(\ref{defn:betareduction:2}) of Definition~\ref{defn:betareduction} are satisfied, then one has that $\beta$ is the composition $\beta_0\circ \tau$, in the sense of the following commutative diagram:
\begin{equation}\label{eqn:taureduction}
\xymatrix{
\big( \lbd{\vec{x}}{\vec{B}}{\lbd{v}{A}{L}}\big) \vec{M} N \ar[r]_{\tau} \ar@{-->}@/^2pc/[rr]_{\beta} & \big( \lbd{v}{A}{\lbd{\vec{x}}{\vec{B}}{L}}\big)  N \vec{M} \ar[r]_{\beta_0} & \big( \lbd{\vec{x}}{\vec{B}}{L[v:=N]} \big) \vec{M}
}
\end{equation}

\noindent Expressed in words, the idea is: to do a $\beta$-reduction, first check that the conditions (\ref{defn:betareduction:1})-(\ref{defn:betareduction:2}) of Definition~\ref{defn:betareduction} are satisfied, and second transpose, and third do a $\beta_0$-reduction. 

Taking note of this composition is certainly a useful heuristic, which no doubt leaves one better positioned to work, at the outset, with $\beta$-reduction of non-zero distance than by trying to primitively memorize  conditions (\ref{defn:betareduction:1})-(\ref{defn:betareduction:2}) of Definition~\ref{defn:betareduction}. For instance, on the basis of the composition, one sees quickly that conditions (\ref{defn:betareduction:1})-(\ref{defn:betareduction:2}) of Definition~\ref{defn:betareduction} ensure that $\typd{N}{A}$ is free for $\typd{v}{A}$ in $\lbd{\vec{x}}{\vec{B}}{L}$, which permits the $\beta_0$-reduction in~(\ref{eqn:taureduction}).

While the composition is a useful conceptualization of $\reduces{\beta}{\param}$, it would not be appropriate to work in $\reduces{\beta_0 \tau}{\param}$ i.e. the compatible closure of the binary relation which is the union of $\beta_0$-reduction and $\tau$-reduction. For, $\tau$-reduction and hence $\beta_0\tau$-reduction is trivially not strongly normalizing since one has loops 
\begin{equation*}
(\lbd{x}{B}{\lbd{v}{A}{M}})PQ\reduces{\tau}{\param} (\lbd{v}{A}{\lbd{x}{B}{M}})QP \reduces{\tau}{\param} (\lbd{x}{B}{\lbd{v}{A}{M}})PQ\reduces{\tau}{\param}\cdots
\end{equation*}
By contrast, one can show that $\beta$-reduction is strongly normalizing in $\mltup$ (cf. \S\ref{subsec:cr}).
\end{rem}

Here is an example of a $\beta$-reduction of distance $1$:
\begin{exa}[\emph{De re} vs. \emph{de dicto} example of $\beta$-reduction]
Consider the proposition that the baliff thinks the defendant is a chef. The proposition has two traditional readings:

\noindent \emph{De re}: The proposition$_{u}$ that the baliff$_u$ thinks$_v$ that the defendant$_u$ is a chef$_v$.

\hspace{5mm} $\lbd{u}{S}{\big( bu\big(\lbd{v}{S}{cv(du)}\big)}$

\noindent \emph{De dicto}: The proposition$_u$ that the baliff$_u$ thinks$_v$ that the defendant$_v$ is a chef$_v$.

\hspace{5mm} $\lbd{u}{S}{\big( bu\big(\lbd{v}{S}{cv(dv)}\big)}$

\noindent These are terms of $\mlt{2}$ of type $S\rightarrow T$, where the types $S,E,T$ are as in Example~\ref{eqn:montagovianatomics}. Further, suppose that the constants have the types $\typd{b}{S\rightarrow (S\rightarrow T)\rightarrow T}, \typd{c}{S\rightarrow E\rightarrow T}, \typd{d}{S\rightarrow E}$.

In the Montagovian tradition, one obtains the \emph{de re} reading from the \emph{de dicto} reading by replacing defendent$_v$ (i.e. $dv$) with a fresh variable (say $\typd{x}{E}$), lambda abstracting over that variable, and then applying the resulting lambda abstract to the value of the defendent$_u$ (i.e. $du$).\footnote{\cite[pp. 206-207]{Dowty1981-gt}, \cite[p. 184]{Gamut1991-gc}. I am using the \emph{de re} vs. \emph{de dicto} example to illustrate distanced $\beta$-reduction. See \cite{Keshet2010-lz} for a recent discussion of the empirical adequacy of the Montagovian perspective on the \emph{de re} and \emph{de dicto}.} This results in  $\big( \lbd{x}{E}{\lbd{u}{S}{bu\big(\lbd{v}{S}{cvx} \big)}}\big) (du)$. While this term has the right type, namely $S\rightarrow T$, it has a free state variable $\typd{u}{S}$, whereas both the \emph{de re} and \emph{de dicto} reading are closed; further, it will not $\beta$-reduce to the \emph{de re} reading since $du$ is not free for $x$ in $\lbd{u}{S}{bu\big(\lbd{v}{S}{cvx} \big)}$. To get the requisite generality and closed term, one should additionally apply the state variable $\typd{u}{S}$ to get a term of type $T$ and then lambda abstract over $\typd{u}{S}$ one more time. If one does so, then one can derive the \emph{de re} reading, where the first step is a $\beta$-reduction of distance~1 happening under the $\lambda$-abstract $\lbd{u}{S}{}\ldots$:
\begin{align}
   \lbd{u}{S}{\bigg( \big( \lbd{x}{E}{\lbd{u}{S}{bu\big(\lbd{v}{S}{cvx} \big)}}\big) (du) u\bigg)}\reduces{\beta}{2}\;\; & \lbd{u}{S}{\bigg( \big( \lbd{x}{E}{bu\big(\lbd{v}{S}{cvx} \big)}\big) (du) \bigg)} \notag \\
 \reduces{\beta}{2}\;\;   &\lbd{u}{S}{\bigg( bu\big(\lbd{v}{S}{cv(du)} \big)  \bigg)} \notag 
\end{align}
Note that it is \emph{not} possible to use a $\beta$-reduction of distance zero on the first line since $du$ is not free for $\typd{x}{E}$ in $\lbd{u}{S}{bu\big(\lbd{v}{S}{cvx} \big)}$. (Finally, note again that the 2 superscript on $\reduces{\beta}{2}$ indicates that we are working in $\mlt{2}$).
\end{exa}

While this example shows how $\beta$-\emph{reductions} of distance $>0$ are different than $\beta$-reductions of distance zero, the following proposition shows that these differences are not present when we restrict to regular $\beta$-\emph{equality}:
\begin{prop}\label{prop:oneway}
Suppose that $\typd{M,N}{A}$ are terms of $\mltup$. 

If $M\transreduces{\beta_r}{\param} N$ then $M\equals{\beta_0}{\param} N$. Hence: $M\equals{\beta_r}{\param} N$ iff  $M\equals{\beta_0}{\param} N$.
\end{prop}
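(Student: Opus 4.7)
The ``Hence'' clause follows from the first assertion together with the fact that $\beta_0$-reduction is a special (distance-zero) case of $\beta_r$-reduction: that gives $\equals{\beta_0}{\upsilon}\subseteq\equals{\beta_r}{\upsilon}$ after taking equivalence closures, and the first assertion applied stepwise to a $\beta_r$-chain witnessing $M\equals{\beta_r}{\upsilon}N$ supplies the reverse inclusion. So the work lies in the first assertion. My plan is to induct on the length of a $\beta_r$-chain; transitivity of $\equals{\beta_0}{\upsilon}$ reduces this to a single step $M\reduces{\beta_r}{\upsilon}N$, and the compatibility of $\equals{\beta_0}{\upsilon}$ (closure under abstraction and well-formed application) further reduces it to the top-level case
\[
M=\big(\lbd{\vec{x}}{\vec{B}}{\lbd{v}{A}{L}}\big)\vec{M}_0 N_0,\qquad N=\big(\lbd{\vec{x}}{\vec{B}}{L[v:=N_0]}\big)\vec{M}_0
\]
satisfying the regular distanced-$\beta$ conditions (\ref{defn:betareduction:1})--(\ref{defn:betareduction:5}). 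The case $n=|\vec{x}|=0$ is immediate, as $\beta_r$ there coincides with $\beta_0$.

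For $n\geq 1$, I first unfold the substitution on the right: by condition (\ref{defn:betareduction:1}) the inner step $(\lbd{v}{A}{L})N_0\reduces{\beta_0}{\upsilon}L[v:=N_0]$ is available, so compatibility gives
\[
N\;\equals{\beta_0}{\upsilon}\;\big(\lbd{\vec{x}}{\vec{B}}{(\lbd{v}{A}{L})N_0}\big)\vec{M}_0.
\]
It therefore remains to prove the ``$N_0$-lifting'' identity
\[
\big(\lbd{\vec{x}}{\vec{B}}{\lbd{v}{A}{L}}\big)\vec{M}_0 N_0\;\equals{\beta_0}{\upsilon}\;\big(\lbd{\vec{x}}{\vec{B}}{(\lbd{v}{A}{L})N_0}\big)\vec{M}_0,
\]
that $N_0$ may be pulled inside the $\vec{x}$-abstraction. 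Condition (\ref{defn:betareduction:3}) (that $\vec{x}$ is not free in $N_0$) makes this semantically transparent, but the syntactic derivation must navigate the restricted state-type variable pool.

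For the lifting identity my plan is a detour through a fresh regular-type abstraction that exposes a closed subterm with maximally rich $\alpha$-class. Let $\vec{y}$ (of types $\vec{D}$) list the free variables of $\lbd{\vec{x}}{\vec{B}}{\lbd{v}{A}{L}}$; then $\lbd{\vec{y}}{\vec{D}}{\lbd{\vec{x}}{\vec{B}}{\lbd{v}{A}{L}}}$ is closed, and by Definition~\ref{defn:alphaeq} every type-preserving permutation of variables witnesses an $\alpha$-equivalence on it. The type $\vec{D}\rightarrow\vec{B}\rightarrow A\rightarrow C$ is regular, so a fresh variable $\typd{F}{\vec{D}\rightarrow\vec{B}\rightarrow A\rightarrow C}$ is always available; by backward $\beta_0$-expansions I rewrite the lifting identity's left-hand side as $(\lbd{F}{\vec{D}\rightarrow\vec{B}\rightarrow A\rightarrow C}{F\vec{y}\vec{M}_0 N_0})\big(\lbd{\vec{y}}{\vec{D}}{\lbd{\vec{x}}{\vec{B}}{\lbd{v}{A}{L}}}\big)$ and its right-hand side analogously. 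I then exploit the closed subterm's $\alpha$-freedom to permute its bound $x_i$ and $v$ so that the subsequent $\beta_0$-substitutions of $\vec{M}_0$ and $N_0$ no longer collide with state-type binders. Regularity (\ref{defn:betareduction:5}) is essential: it forces $x_2,\ldots,x_n$ to be of regular type (hence freely renamable), leaving only $x_1$ and $v$ potentially state-typed, and these can be permuted with one another inside the closed $\vec{y}$-abstraction; Proposition~\ref{prop:state-free} additionally constrains any $M_i$ or $N_0$ of state type to be a variable or constant, containing the possible conflicts. The main obstacle is verifying that this permute-then-reduce strategy converges on the two detours to a common term under the $\upsilon$-restricted regime; this is exactly where the regularity hypothesis earns its keep, since without it multiple state-typed $x_i$'s would obstruct the rearrangement and the identity would plausibly fail outright.
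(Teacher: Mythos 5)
Your reduction of the proposition to a single top-level step and then to the ``lifting identity''
\[
\big(\lbd{\vec{x}}{\vec{B}}{\lbd{v}{A}{L}}\big)\vec{M}_0 N_0\;\equals{\beta_0}{\upsilon}\;\big(\lbd{\vec{x}}{\vec{B}}{(\lbd{v}{A}{L})N_0}\big)\vec{M}_0
\]
is correct, and your preliminary unfolding of the substitution on the right matches the final steps of the paper's chain~(\ref{eqn:betabetzerofact}). The gap is in the proof of the lifting identity itself. The mechanism you describe --- close the abstract off over its free variables $\vec{y}$, expand through a fresh $F$, then use the $\alpha$-freedom of the resulting closed term to ``permute its bound $x_i$ and $v$'' out of harm's way --- cannot work. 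First, $\alpha$-equivalence in $\mltup$ is generated by \emph{type-preserving} permutations of the \emph{available} variables (Definition~\ref{defn:alphaeq}): if $v$ or $x_1$ has state type $A$ with $\upsilon(A)=1$, the only permutation on type $A$ is the identity, so closing the term off buys nothing --- the binder cannot be moved off a free variable of $\vec{M}_0$ of that type, and ``permuting $x_1$ with $v$'' is in general not even type-preserving (they typically have different types), nor would a swap achieve freshness when it is. Second, and more seriously, the binders that actually block $[x_i:=M_i]$ are usually not among $x_1,\ldots,x_n,v$ at all but are state-typed binders \emph{inside} $L$ (this is exactly Example~\ref{ex:thelbock}); your proposal never mentions these, and no permutation can move them either. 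Any route that ends by substituting $M_i$ into $L$ is therefore dead on arrival, and you yourself flag the convergence of your two detours as unverified --- that unverified step is where the entire difficulty of the proposition lives.

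The paper's proof succeeds precisely by never letting $\vec{M}$ enter $L$: after renaming only the \emph{regular} binders $x_2,\ldots,x_n$ away from the free variables of $\vec{M}$ (this is the sole place regularity is spent --- not on $x_1$ or $v$), it backward-expands the left-hand side to the common ancestor $\big(\lbd{\vec{x}}{\vec{B}}{\big(\lbd{\vec{x}}{\vec{B}}{\lbd{v}{A}{L}}\big)\vec{x}\,N}\big)\vec{M}$, in which the original abstract sits intact as a closed-over subterm binding its own $\vec{x}$ and $v$. Substituting $M_i$ for the outer $x_i$ then only touches the argument occurrences $\vec{x}$, which lie under no binder except the freshly renamed regular ones, so no capture is possible and no renaming of $x_1$, $v$, or anything in $L$ is ever required; reducing the inner redexes of the same ancestor yields the right-hand side. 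Your $F$/$\vec{y}$ detour, if the right-hand side is expanded in the analogous way, in fact collapses back to this same intermediate term, so the $F$-layer is redundant --- the essential idea your proposal is missing is the insulation of $L$ via the self-application $\big(\lbd{\vec{x}}{\vec{B}}{\lbd{v}{A}{L}}\big)\vec{x}$, not any permutation of $x_1$ and $v$.
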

\begin{proof}
Suppose that $\big( \lbd{\vec{x}}{\vec{B}}{\lbd{v}{A}{L}}\big) \vec{M} N \reduces{\beta_r}{\param} \big( \lbd{\vec{x}}{\vec{B}}{L[v:=N]} \big) \vec{M}$. We must show $
\big( \lbd{\vec{x}}{\vec{B}}{\lbd{v}{A}{L}}\big) \vec{M} N \equals{\beta_0}{\param} \big( \lbd{\vec{x}}{B}{L[v:=N]} \big) \vec{M}$. Let $\typd{\vec{x}}{\vec{B}}\equiv \typd{x_1}{B_1}, \ldots, \typd{x_{\ell}}{B_{\ell}}$. By Definition~\ref{defn:betareduction}~(\ref{defn:betareduction:5}), we have that $B_i$ is regular for $1< i\leq \ell$. Hence by $\alpha$-conversion we may assume that $\typd{x_i}{B_i}$ for $1< i\leq \ell$ does not appear free in $\typd{\vec{M}}{\vec{B}}$. Then we have:
\begin{align}
 \big(\lbd{\vec{x}}{\vec{B}}{\lbd{v}{A}{L}}\big) \vec{M} N & \twoheadleftarrow_{\beta_0}^{\param} \bigg(\lbd{\vec{x}}{\vec{B}}{\bigg(\big(\lbd{\vec{x}}{\vec{B}}{\lbd{v}{A}{L}}\big) \vec{x} N }\bigg)\bigg) \vec{M}\label{eqn:betabetzerofact}  \\
 & \transreduces{\beta_0}{\param} \big(\lbd{\vec{x}}{\vec{B}}{\big(({\lbd{v}{A}{L}}) N }\big)\big) \vec{M}\notag   \reduces{\beta_0}{\param} \big( \lbd{\vec{x}}{B}{L[v:=N]} \big) \vec{M}\notag 
\end{align}

In the first step, we do as many $\beta_0$-reductions as the length $\ell$ of $\typd{\vec{x}}{\vec{B}}$. This has the stated effect because Definition~\ref{defn:betareduction}~(\ref{defn:betareduction:3}) says that $\typd{\vec{x}}{\vec{B}}$ is not free in $\typd{N}{A}$. Further, for each $1\leq i\leq \ell$ one has that $\typd{x_i}{B_i}$ does not occur free in $\typd{M_1}{B_1}, \ldots, \typd{M_{i-1}}{B_{i-1}}$. For $i=1$ this is vacuously true, and for $1<i\leq \ell$ this is by our previous $\alpha$-conversions. Hence, for each $1\leq i\leq \ell$, $\typd{M_i}{B_i}$ is free for $\typd{x_i}{B_i}$ in the term $\big(\lbd{x_{i+1}}{B_{i+1}}{\cdots \lbd{x_{\ell}}{B_{\ell}}{(\lbd{\vec{x}}{\vec{B}}{\lbd{v}{A}{L})}}}\big) M_1 \cdots M_{i-1} x_i \cdots x_{\ell} N$. 

Finally, regarding the first step in (\ref{eqn:betabetzerofact}), note that by
Definition~\ref{defn:betareduction}(\ref{defn:betareduction:2}), we have the pairwise distinctness of the variables in the vector $\typd{\vec{x}}{\vec{B}}$. This ensures that the iterated $\beta$-reductions have the effect displayed in the first line. For, this condition prohibits $\typd{\vec{x}}{\vec{B}}$ e.g. being $\typd{x_1}{B_1}, \typd{x_1}{B_1}$; if we had this then the left-hand side of (\ref{eqn:betabetzerofact}) would read $\big(\lbd{x_1}{B_1}{\lbd{x_1}{B_1}{\lbd{v}{A}{L}}}\big) M_2 M_2 N$ instead of $\big(\lbd{x_1}{B_1}{\lbd{x_1}{B_1}{\lbd{v}{A}{L}}}\big) M_1 M_2 N$. 

 The second $\beta$-reduction in (\ref{eqn:betabetzerofact}) follows from the variable $x_i$ being free for itself, and from the pairwise distinctness of the variables in the vector $\typd{\vec{x}}{\vec{B}}$. The third $\beta$-reduction follows from Definition~\ref{defn:betareduction}(\ref{defn:betareduction:1}), namely, $\typd{N}{A}$ being free for $\typd{v}{A}$ in $L$.
\end{proof}

We add however that we do not know the answer to the following question:
\begin{qu}\label{qu:betazerobetarelation}
For $\typd{M,N}{A}$ are terms of $\mltup$, is it the case that $M\equals{\beta}{\param} N$ iff $M\equals{\beta_0}{\param} N$? What if $\eta$ is added?
\end{qu}

The previous proposition and examples concern what one can do with $\beta,\beta_0,\beta_r$-reductions in $\mltup$. Here is a simple example which shows how the limited number of variables in $\mltup$ can prevent even a $\beta_0$-reduction:
\begin{exa}[An example of when lack of variables prevents $\beta_0$-reduction]\label{ex:thelbock}
Suppose $B$ is a state type, $C$ is a regular type, and $\typd{v_0}{B}$ and $\typd{v}{B\rightarrow B\rightarrow C}$ are variables. If $0<j<\param(B)$ then one has the following $\beta_0$-reductions:
\begin{align}
& {\bigg( \lbd{V}{(B\rightarrow C)\rightarrow C}{\lbd{v_j}{B}{V(vv_j)}}\bigg) \big( \lbd{U}{B\rightarrow C}{Uv_0} \big) } \notag\\
\reduces{\beta_0}{\param}\;\; & \lbd{v_j}{B}{\big( \lbd{U}{B\rightarrow C}{Uv_0} \big) (vv_j)} \reduces{\beta_0}{\param} \lbd{v_j}{B}{vv_jv_0}  \label{eqn:thelbock}
\end{align}
However if $\param(B)=1$ and $j=0$, then one does not have the first step of this $\beta$-reduction, since $\lbd{U}{B\rightarrow C}{Uv_0}$ is not free for $\typd{V}{(B\rightarrow C)\rightarrow C}$ in the term $\lbd{v_0}{B}{(V(vv_0))}$. Further, one cannot use $\alpha$-conversion to change $\lbd{v_0}{B}{(V(vv_0))}$ since $\param(B)=1$. 
\end{exa}

For sake of future reference, we extend the following example ever so slightly by putting it under lambda abstracts. The naturalness of this example will emerge in \S\ref{sec:typedcomboterms} when we recognize its initial term as the Cardinal $\ctC{B}{B}{C}{}{}{}$ of combinatory logic. Despite its humble appearance, it plays a leading role in the proof of Theorem~\ref{thmconservation}, proven in \S\ref{sec:conservation}.

\begin{exa}[A naturally occurring example of a $\beta_0$-reduction under lambda abstracts]\label{ex:thelbockxi}
Suppose $B$ is a state type, $C$ is a regular type. If $0<j<\param(B)$ then one can put the previous example under two $\lambda$-abstracts as follows:
\begin{align}
& \lbd{v}{B\rightarrow B\rightarrow C}{\lbd{v_0}{B}{{\bigg( \lbd{V}{(B\rightarrow C)\rightarrow C}{\lbd{v_j}{B}{V(vv_j)}}\bigg) \big( \lbd{U}{B\rightarrow C}{Uv_0} \big) }}} \notag \\
\reduces{\beta_0}{\param}\;\; & \lbd{v}{B\rightarrow B\rightarrow C}{\lbd{v_0}{B}{\lbd{v_j}{B}{\big( \lbd{U}{B\rightarrow C}{Uv_0} \big) (vv_j)}}}\notag \\
\reduces{\beta_0}{\param}\;\; &  \lbd{v}{B\rightarrow B\rightarrow C}{\lbd{v_0}{B}{\lbd{v_j}{B}{vv_jv_0}}} \label{eqn:thelbockxi} 
\end{align}
As with the previous example, this is unavailable if $\param(B)=1$.
\end{exa}

Since the reduction is happening under lambda abstracts, this reduction is unavailable in the weak setting (cf. end of \S\ref{subsec:conventionsredu} and outset of \S\ref{sec:purecombinatory}).

\subsection{Church-Rosser and Strong Normalization}\label{subsec:cr}

Recall that $R$ satisfies \emph{Church-Rosser} if whenever $M\transreduces{R}{} N_0$ and $M\transreduces{R}{} N_1$, then there is $L$ such that $N_0\transreduces{R}{} L$ and $N_1\transreduces{R}{} L$. Further, $\reduces{R}{}$ is \emph{strongly normalizing} if for each $M$ there is no infinite sequence $\{M_i: i\geq 0\}$ with $M_0\equiv M$ and $M_i\reduces{R}{} M_{i+1}$ for all $i\geq 0$. An $R$-\emph{normal form} is an $N$ such that there is no $L$ with $N\reduces{R}{} L$. Hence, strong normalization implies that for every $M$ there is normal form $N$ such that $M\transreduces{R}{} N$.

Since $\beta_0$ and $\eta$ reductions preserve $\mltpure$, one has that Church-Rosser for $\beta_0$-reduction and $\beta_0\eta$-reduction in the ordinary simply typed lambda calculus $\lt$ implies that $\beta_0$ and $\beta_0\eta_0$ in $\mltpure$ satisfy Church-Rosser.\footnote{\cite[Proposition 1.1.9 p. 7]{Barendregt2013-eb}.} However, Church-Rosser for $\beta$-reduction and $\beta\eta$-reduction fails for $\mltup$ with $\param$ non-maximal. This was noted for  $\mlt{1}$ by Friedman and Warren,\footnote{\cite[p. 323]{Friedman1980-pp}.} and their example generalizes.

\begin{exa}[Church-Rosser fails for $\mltup$ with $\param$ non-maximal]\label{ex:friedmanwarren}
We work in $\mltup$ where for some state type $A$, one has that $\param(A)=n$ for some natural number $n\geq 1$. Let $B$ be a regular type. Suppose that $\typd{u_1}{B\rightarrow A\rightarrow B}, \typd{u_2}{A\rightarrow B}, \typd{u_3}{(A\rightarrow B)\rightarrow (A\rightarrow B)\rightarrow B}$ are variables. Further, suppose $\typd{u_4}{A\rightarrow \cdots \rightarrow A\rightarrow B\rightarrow B}$ is a variable, where there are $n-1$ many $A$'s (and in the case $n=1$, there are no $A$'s and $\typd{u_4}{B\rightarrow B}$). Consider the following $\lambda$-term of type $B$:
\begin{equation}
P\equiv \bigg[\lbd{z_1}{B}{\bigg(\big[\lbd{z_2}{B}{\big(u_3(\lbd{v_0}{A}{u_4 v_1 \cdots v_{n-1} z_2})(u_1 z_1)\big)}\big]}z_1 \bigg)\bigg](u_2 v_0)
\end{equation}

By contracting $P$'s outer redex (the one starting with $\lbd{z_1}{B}{\ldots}$) we get:
\begin{equation}\label{eqn:friedmanwarren:Q1}
P\reduces{\beta}{\param} Q_1\equiv \big[\lbd{z_2}{B}{\big(u_3(\lbd{v_0}{A}{u_4 v_1 \cdots v_{n-1} z_2})(u_1 (u_2 v_0))\big)}\big](u_2 v_0)
\end{equation}
Then $Q_1$ cannot be $\beta$-reduced, since $\typd{u_2v_0}{B}$ is not free for $\typd{z_2}{B}$ in the term $u_3(\lbd{v_0}{A}{u_4 v_1 \cdots v_{n-1} z_2})(u_1 (u_2 v_0)$; and this term is the only term in its $\alpha$-equivalence class since we are in $\mltup$ and the other $n-1$ many variables of type $A$ appear freely under the scope of the $\lbd{v_0}{A}{\ldots}$. Then $Q_1$ is a $\beta$ normal form in $\mltup$. Further, $Q_1$ is not of the right form to be $\eta$-reduced. Hence $Q_1$ is in $\beta\eta$ normal form in $\mltup$. 

By contacting $P$'s inner redex (the one starting with $\lbd{z_2}{B}{\ldots}$) we get:
\begin{equation}\label{eqn:friedmanwarren:Q2}
P\reduces{\beta}{\param} Q_2 \equiv \bigg[\lbd{z_1}{B}{\bigg(\big(u_3(\lbd{v_0}{A}{u_4 v_1 \cdots v_{n-1} z_1})(u_1 z_1)\big)}\bigg) \bigg](u_2 v_0)
\end{equation}
Then $Q_2$ cannot be $\beta$-reduced, since $\typd{u_2v_0}{B}$ is not free for $\typd{z_1}{B}$ in the term $\big(u_3(\lbd{v_0}{A}{u_4 v_1 \cdots v_{n-1} z_1})(u_1 z_1)\big)$; and this term is the only term in its $\alpha$-equivalence class since we are in $\mltup$ and the other $n-1$ many variables of type $A$ appear freely under the scope of the $\lbd{v_0}{A}{\ldots}$. Then $Q_2$ is a $\beta$-normal form in $\mltup$. Further, $Q_2$ is not of the right form to be $\eta$-reduced. Hence $Q_2$ is in $\beta\eta$ normal form in $\mltup$. 
\end{exa}

Since $\beta_0$ and $\eta$ reductions preserve $\mltpure$, one further has $\beta_0$-reduction and $\beta_0\eta$-reduction in the ordinary simply typed lambda calculus $\lt$ being strongly normalizing entails that $\beta_0$ and $\beta_0\eta$ reduction in $\mltpure$ is strongly normalizing.\footnote{\cite[p. 64]{Barendregt2013-eb}. See \S\ref{sec:nonmodal} for the relation between $\mltup$ and $\lt$.} Using this one can show:
\begin{prop}\label{prop:SNholdsmltup}
In $\mltup$, both $\beta$ and $\beta\eta$ reduction satisfies strong normalization.
\end{prop}
\noindent Before giving the proof, it is worth noting that Example~\ref{ex:friedmanwarren} shows that $\beta$- and $\beta\eta$-normal forms need not be unique in $\mltup$ for $\param$ non-maximal.
\begin{proof}
Suppose that one had an infinite sequence $\{P_i: i\geq 0\}$ in $\mltup$ such that $P_i\reduces{\beta\eta}{\param} P_{i+1}$ for all $i\geq 0$. We show that this violates strong normalization in $\mltpure$. For the rest of the proof, we view $P_i$ as terms in $\mltpure$. 

By $\alpha$-conversion if necessary, one has that the two sides of a $\beta$-reduction are such that they $\beta_0$-transitively reduce to a common term. For, if $\big( \lbd{\vec{x}}{\vec{B}}{\lbd{v}{A}{L}}\big) \vec{M} N \reduces{\beta}{\param} \big( \lbd{\vec{x}}{\vec{B}}{L[v:=N]} \big) \vec{M}$, then one has that both $\big( \lbd{\vec{x}}{\vec{B}}{\lbd{v}{A}{L}}\big) \vec{M} N \transreduces{\beta_0}{\omega} L[\vec{x}:=\vec{M}, v:=N]$ and $\big( \lbd{\vec{x}}{\vec{B}}{L[v:=N]} \big) \vec{M} \transreduces{\beta_0}{\omega} L[\vec{x}:=\vec{M}, v:=N]$. 

Applying this to $P_0, P_1, P_2$, we have terms $Q_0, Q_1$ in $\mltpure$ satisfying the following in $\mltpure$:
\begin{equation*}
\xymatrix{
P_0  \ar@{->>}[dr]_{\beta_0\eta}  \ar[rr]_{\beta\eta} & & \ar@{->>}[dl]^{\beta_0\eta}  P_1  \ar@{->>}[dr]_{\beta_0\eta} \ar[rr]_{\beta\eta} & & P_2\ar@{->>}[dl]^{\beta_0\eta} \\
& Q_0 & & Q_1 & 
}
\end{equation*}
By Church-Rosser for $\beta_0\eta$ in $\mltpure$, one can extend this chart downwards one level. Continuing in this way, one produces an infinite descending $\beta_0\eta$-reduction sequence in~$\mltpure$.
\end{proof}


\section{Soundness, and some completeness and open term models}\label{sec:soundness}

\subsection{Semantics}\label{sec:semantics}

As usual, the semantics is defined in terms of frames and variable assignments; and using these one can give the inductive definition of denotation. 

\begin{defi}[Frame]\label{defn:frame}
A \emph{frame} $\mathcal{M}$ of $\mltup$ is a sequence of non-empty sets $\mathcal{M}(A)$ for each type $A$ of $\mltup$ such that for all types $A\rightarrow B$ of $\mltup$ one has that $\mathcal{M}(A\rightarrow B)$ is a subset of $\{F:\mathcal{M}(A)\rightarrow \mathcal{M}(B)\}$.

A frame is \emph{standard} if  $\mathcal{M}(A\rightarrow B)=\{F:\mathcal{M}(A)\rightarrow \mathcal{M}(B)\}$.

A \emph{decorated frame} is a frame $\mathcal{M}$ together with an assignment of each constant $\typd{c}{C}$ in the signature to an element $c_{\mathcal{M}}$ in $\mathcal{M}(C)$.
\end{defi}

Variable assignments are defined in the usual way, but they only have to assign the variables dictated by the parameter $\param$.

We use $\rho$ for variable assignments, and we use $\rho[v:=x]$ for the $v$-variant of $\rho$ which assigns $v$ to element $x$.

\begin{defi}[Model and denotation]\label{defn:model}
A \emph{model} $\mathcal{M}$ of $\mltup$ is a decorated frame of $\mltup$ such that for any variable assignment $\rho$ and any term $\typd{M}{A}$ of $\mltup$ one has that the inductively defined denotation $\db{M}_{\mathcal{M},\rho}$ is an element of $\mathcal{M}(A)$:
\begin{enumerate}[leftmargin=*]
    \item \label{defn:model:1} $\db{v_i}_{\mathcal{M},\rho} =\rho(\typd{v_i}{A})$ for variables $\typd{v_i}{A}$ of $\mltup$
    \item \label{defn:model:2} $\db{c}_{\mathcal{M},\rho}=c_{\mathcal{M}}$ for constants $c$ in the signature
    \item \label{defn:model:3} $\db{MN}_{\mathcal{M},\rho}=\db{M}_{\mathcal{M},\rho}\db{N}_{\mathcal{M},\rho}$ for terms $\typd{M}{B\rightarrow C}$ and $\typd{N}{B}$ of $\mltup$
    \item \label{defn:model:4} $\db{\lbd{v_i}{A}{L}}_{\mathcal{M},\rho}=\Lbd{x}{\mathcal{M}(A)}{\db{L}_{\mathcal{M},\rho[v_i:=x]}}$ for terms $\typd{L}{C}$ of $\mltup$ of regular type $C$
\end{enumerate}
\end{defi}

On the right-hand side of (\ref{defn:model:4}), the expression $\Lbd{x}{\mathcal{M}(A)}{\db{M}_{\mathcal{M},\rho[v_i:=x]}}$ means the metatheoretically defined function $F:\mathcal{M}(A)\rightarrow \mathcal{M}(B)$ given by $F(x)=\db{M}_{\mathcal{M},\rho[v_i:=x]}$ for $x$ in $\mathcal{M}(A)$. That is, we are using $\Lambda$ (capital lambda) for the metatheoretically defined lambda abstraction.

A model is \emph{standard} if the underlying frame is standard. A synonym for model is \emph{Henkin model} or sometimes \emph{generalized model}.

The only way in which a decorated frame can fail to be a model is if the metatheoretically defined function in (\ref{defn:model:4}) fails to be an element of $\mathcal{M}(A\rightarrow B)$, since this may be a small subset of the set of functions $\{F:\mathcal{M}(A)\rightarrow \mathcal{M}(B)\}$. Since we often have to argue by induction on complexity of term that a decorated frame is a model, we introduce the following definition: if $\mathcal{M}$ is a decorated frame of $\mltup$ and $\typd{M}{A}$ is a term of $\mltup$, then \emph{the denotation of} $\typd{M}{A}$ \emph{is well-defined} in $\mathcal{M}$ if for all subterms $\typd{N}{B}$ of $\typd{M}{A}$ one has that $\db{N}_{\mathcal{M},\rho}$ as defined in (\ref{defn:model:1})-(\ref{defn:model:4}) are in $\mathcal{M}(B)$, for all variable assignments~$\rho$.

We write $\mathcal{M}\models M=N$ if $\db{M}_{\mathcal{M},\rho}=\db{N}_{\mathcal{M},\rho}$ for all variable assignments $\rho$. We write $\mltup\models M=N$ if for all models $\mathcal{M}$ of $\mltup$ one has that $\mathcal{M}\models M=N$. (Note that there is no identity in the object language of $\mltup$, and rather this is a meta-theoretically defined notion).

Here is an elementary but useful proposition, whose proof we omit:
\begin{prop}[The semantic effect of nested lambda abstracts]\label{prop:helpervec}
For any model $\mathcal{M}$ and variable assignment $\rho$ relative to $\mathcal{M}$ and vector of terms $\typd{\vec{M}}{\vec{A}}$ of length $n$ and vector of variables $\typd{\vec{v}}{\vec{A}}$ of length $n$, define a sequence of variable assignments $\rho_0, \rho_1, \ldots, \rho_n$ by $\rho_0=\rho$ and $\rho_{i+1} = \rho_i[v_i:=\db{M_i}_{\mathcal{M},\rho}]$ for $0\leq i<n$. Then for all terms $\typd{L}{C}$, one has $\db{\big( \lbd{\vec{v}}{\vec{A}}{L}\big)\vec{M}}_{\mathcal{M},\rho}=\db{L}_{\mathcal{M},\rho_n}$.
\end{prop}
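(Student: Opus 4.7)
The plan is to proceed by induction on $n$, the length of the vector $\typd{\vec{x}}{\vec{A}}$. The base case $n=0$ is immediate: the term $\big(\lbd{\vec{x}}{\vec{A}}{L}\big)\vec{M}$ is literally just $L$, while $\rho_0=\rho$, so both sides equal $\db{L}_{\mathcal{M},\rho}$.

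For the inductive step, I would peel off the final lambda and the final argument. Writing $L' \equiv \lbd{x_n}{A_n}{L}$, the term $\big(\lbd{\vec{x}}{\vec{A}}{L}\big)\vec{M}$ is the same as $\big(\lbd{x_1}{A_1}{\cdots\lbd{x_{n-1}}{A_{n-1}}{L'}}\big) M_1 \cdots M_{n-1} M_n$. By clause~(\ref{defn:model:3}) of Definition~\ref{defn:model} and then the inductive hypothesis applied to $L'$ together with the shorter vector $M_1,\ldots,M_{n-1}$, this denotes
$$\db{L'}_{\mathcal{M},\rho_{n-1}}\, \db{M_n}_{\mathcal{M},\rho}=\big(\Lbd{y}{\mathcal{M}(A_n)}{\db{L}_{\mathcal{M},\rho_{n-1}[x_n:=y]}}\big)\big(\db{M_n}_{\mathcal{M},\rho}\big),$$
where the second equality uses clause~(\ref{defn:model:4}). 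Unfolding the metatheoretic $\Lambda$-abstraction then yields $\db{L}_{\mathcal{M},\rho_{n-1}[x_n:=\db{M_n}_{\mathcal{M},\rho}]}$, which is precisely $\db{L}_{\mathcal{M},\rho_n}$ by the recursive definition of the sequence of assignments.

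I do not anticipate a genuine obstacle. The one subtlety worth flagging is that each $\db{M_i}_{\mathcal{M},\rho}$ is evaluated in the original $\rho$ rather than in any updated $\rho_j$; but this matches clause~(\ref{defn:model:3}) exactly, since application feeds the argument into the head using the same assignment, and it also matches the way the recursion on $\rho_i$ is stated in the proposition. For the same reason no $\alpha$-conversion of the bound variables $\typd{\vec{x}}{\vec{A}}$ is required, and no hypothesis about their pairwise distinctness or about freeness of the $\typd{M_i}{A_i}$ for the $\typd{x_i}{A_i}$ has to be invoked: we are merely unfolding the semantic clauses and never performing any syntactic substitution.
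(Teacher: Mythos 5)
Your proof is correct and follows essentially the same route as the paper's: induction on $n$, peeling off the final argument via the semantics of application, applying the inductive hypothesis to $\lbd{x_n}{A_n}{L}$ with the shorter vector, and then unfolding the semantics of lambda abstraction. Your closing observation that the argument is purely semantic and needs no distinctness or freeness hypotheses is also accurate.
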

\begin{proofdetail}
\begin{proof}
The proof is by induction on $n$. For $n=1$ we have $\db{\big(\lbd{v_0}{A}{L}\big)M_0}_{\mathcal{M},\rho}=\db{\lbd{v_0}{A}{L}}_{\mathcal{M},\rho} \db{M_0}_{\mathcal{M},\rho} = \db{L}_{\mathcal{M},\rho_1}$. In this, the first identity follows from the semantics for application, and the second identity follows from the semantics for lambda abstraction and the definition of $\rho_1 = \rho [v_0:=\db{M}_{\mathcal{M},\rho}]$. Suppose it holds for $n$; we show it holds for $n+1$:
\begin{align*}
& \db{\big(\lbd{\vec{v}}{\vec{A}}{\lbd{v_{n}}{A_{n}}{L}}\big) \vec{M} M_{n}}_{\mathcal{M},\rho} =\db{\big(\lbd{\vec{v}}{\vec{A}}{\lbd{v_{n}}{A_{n}}{L}}\big) \vec{M}}_{\mathcal{M},\rho} \db{M_{n}}_{\mathcal{M},\rho} \\
& = \db{\lbd{v_{n}}{A}{L}}_{\mathcal{M},\rho_n}  \db{M_{n}}_{\mathcal{M},\rho} = \db{L}_{\mathcal{M},\rho_{n+1}}
\end{align*}
In this, the first identity follows from semantics for application; the second identity follows from induction hypothesis; and the third identity follows from the semantics for lambda abstraction and the definition $\rho_{n+1}=\rho_n[v_{n+1}:=\db{M_{n+1}}_{\mathcal{M},\rho}]$.
\end{proof}
\end{proofdetail}

\subsection{The validity of the reductions and soundness}\label{subsec:soundness}

In this subsection we prove that $\beta\eta$-reductions are valid on $\mltup$-structures, in the strong form that the well-definedness of the denotation of the redex implies the well-definedness of the denotation of the contractum. 

\begin{prop}\label{prop:prebeta1}
Suppose that $\mathcal{M}$ is a decorated frame of $\mltup$. 
\begin{enumerate}[leftmargin=*]
\item\label{prop:prebeta1:1} For all terms $\typd{L}{B}$ of $\mltup$ and all variables $\typd{v}{A}$ of $\mltup$ and all terms $\typd{N}{A}$ of $\mltup$, if the denotations of $\typd{L}{B},\typd{N}{A}$ are well-defined in $\mathcal{M}$ and $\typd{N}{A}$ is free for $\typd{v}{A}$ in $\typd{L}{B}$, then the denotation of $L[v:=N]$ is well-defined and for all variable assignments $\rho$ one has $\db{L[v:=N]}_{\mathcal{M},\rho} = \db{L}_{\mathcal{M},\rho[v:=\db{N}_{\mathcal{M},\rho}]}$.
\item \label{prop:prebeta2} For all terms $\big( \lbd{\vec{x}}{\vec{B}}{\lbd{v}{A}{L}}\big) \vec{M} N$ of $\mltup$ whose denotation is well-defined in $\mathcal{M}$, if $\typd{N}{A}$ is free for $\typd{v}{A}$ in $\typd{L}{C}$ and if $\typd{\vec{x}}{\vec{B}}$ is not free in $\typd{N}{A}$, then the denotation of $\big(\lbd{\vec{x}}{B}{L[v:=N]} \big) \vec{M}$ is well-defined in $\mathcal{M}$ and $\db{\big( \lbd{\vec{x}}{\vec{B}}{\lbd{v}{A}{L}}\big) \vec{M} N}_{\mathcal{M},\rho}=\db{\lbd{\vec{x}}{B}{L[v:=N]} \big) \vec{M}}_{\mathcal{M},\rho}$ for all variable assignments $\rho$.
\end{enumerate}
\end{prop}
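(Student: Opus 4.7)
The plan is to prove the two clauses in sequence: (1) is the familiar substitution lemma, suitably phrased so that well-definedness is part of the inductive statement, and (2) then falls out by combining (1) with Proposition~\ref{prop:helpervec} and a standard coincidence lemma.

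For part (1), I would proceed by induction on the structure of $\typd{L}{B}$. The variable and constant cases are immediate by clauses (\ref{defn:model:1})--(\ref{defn:model:2}) of Definition~\ref{defn:model}: when $L\equiv v$ we recover $\db{N}_{\mathcal{M},\rho}$ on both sides, when $L$ is a different variable or a constant the updated assignment does not alter the denotation. For applications, substitution distributes and the semantics of application composes the two inductive hypotheses, preserving well-definedness. The pivotal case is $L\equiv \lbd{u}{C}{P}$. If $u\equiv v$ then the substitution is vacuous and the $v$-update on $\rho$ is immediately overwritten inside the meta-lambda, so both sides agree. If $u\not\equiv v$, the ``free for'' hypothesis guarantees that $\typd{u}{C}$ does not appear free in $\typd{N}{A}$, so for every $x\in\mathcal{M}(C)$ the induction hypothesis applied at the assignment $\rho[u:=x]$ together with the exchange $\rho[u:=x][v:=y]=\rho[v:=y][u:=x]$ and the coincidence $\db{N}_{\mathcal{M},\rho[u:=x]}=\db{N}_{\mathcal{M},\rho}$ yields the needed equation; well-definedness of $\db{L[v:=N]}_{\mathcal{M},\rho}$ is extracted from the equation together with the assumption that $\db{L}_{\mathcal{M},\rho[v:=\db{N}_{\mathcal{M},\rho}]}$ is already in $\mathcal{M}(B)$.

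For part (2), I would first isolate the aforementioned coincidence lemma as a separate induction: if $\rho,\rho'$ agree on the free variables of $\typd{N}{A}$ and $\db{N}_{\mathcal{M},\rho}$ is well-defined, then $\db{N}_{\mathcal{M},\rho}=\db{N}_{\mathcal{M},\rho'}$ and well-definedness transfers. Then let $\rho_0,\ldots,\rho_n$ be the nested assignments produced from $\rho$ and $\typd{\vec{M}}{\vec{B}}$ as in Proposition~\ref{prop:helpervec}. Unfolding the redex by two applications of Proposition~\ref{prop:helpervec} and the semantics of application gives $\db{(\lbd{\vec{x}}{\vec{B}}{\lbd{v}{A}{L}})\vec{M}\,N}_{\mathcal{M},\rho} = \db{L}_{\mathcal{M},\rho_n[v:=\db{N}_{\mathcal{M},\rho}]}$. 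Unfolding the contractum gives $\db{(\lbd{\vec{x}}{\vec{B}}{L[v:=N]})\vec{M}}_{\mathcal{M},\rho} = \db{L[v:=N]}_{\mathcal{M},\rho_n}$, which by part~(1) equals $\db{L}_{\mathcal{M},\rho_n[v:=\db{N}_{\mathcal{M},\rho_n}]}$. Since $\rho_n$ differs from $\rho$ only on the variables in $\typd{\vec{x}}{\vec{B}}$, and Definition~\ref{defn:betareduction}(\ref{defn:betareduction:3}) guarantees these are not free in $\typd{N}{A}$, the coincidence lemma equates the two inner assignments, matching the computations.

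The expected main obstacle is not the algebra but the tracking of well-definedness, since $\mathcal{M}(A\rightarrow B)$ need not contain every set-theoretic function from $\mathcal{M}(A)$ to $\mathcal{M}(B)$. I cannot simply form meta-lambdas and assert they lie in the frame. The remedy is to match each meta-lambda appearing in the contractum with one already implicit in the evaluation of the well-defined redex: the outer $\lambda\vec{x}$-abstractions in the contractum compute pointwise the same meta-functions already witnessed at the corresponding stage of evaluating $(\lbd{\vec{x}}{\vec{B}}{\lbd{v}{A}{L}})\vec{M}$, and at the innermost stage the value $\db{L[v:=N]}_{\mathcal{M},\rho_n}$ coincides, by the above, with $\db{L}_{\mathcal{M},\rho_n[v:=\db{N}_{\mathcal{M},\rho}]}$, which by hypothesis lies in $\mathcal{M}(C)$. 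Propagating this upward through the nested $\lambda\vec{x}$-abstractions certifies each as an element of the appropriate $\mathcal{M}(B_i\rightarrow\cdots)$-set, yielding the well-definedness of the contractum and completing the argument.
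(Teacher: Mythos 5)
Your proposal is correct and follows essentially the same route as the paper: part (1) is the standard substitution lemma proved by induction on $L$ (which the paper simply cites as ``the usual inductive proof for $\lt$''), and part (2) is exactly the paper's chain of equalities combining Proposition~\ref{prop:helpervec}, the semantics of application and $\lambda$-abstraction, the coincidence of $\rho$ and $\rho_n$ on the free variables of $N$, and part (1). Your extra attention to propagating well-definedness through the nested meta-lambdas is a point the paper leaves implicit, but it is not a different argument.
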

\begin{proof}
The usual inductive proof of (\ref{prop:prebeta1:1}) works for $\mltup$ (cf. \cite[Lemma 3.1.13 p. 98]{Barendregt2013-eb}). For (\ref{prop:prebeta2}), suppose that the vector $\typd{\vec{x}}{\vec{B}}$ has length $n$ and is $\typd{x_1}{B_1}, \ldots, \typd{x_n}{B_n}$. Let $\rho$ be a variable assignment relative to $\mathcal{M}$. Define variable assignment $\rho_0 = \rho$ and $\rho_{i+1} = \rho_i[x_i:=\db{M_i}_{\mathcal{M},\rho}] $ for $0\leq i<n$. Then one has the following:
\begin{align*}
&  \hspace{5mm}\db{\big( \lbd{\vec{x}}{\vec{B}}{\lbd{v}{A}{L}}\big)\; \vec{M}\; N}_{\mathcal{M}, \rho} & & \\
& = \db{\big( \lbd{\vec{x}}{\vec{B}}{\lbd{v}{A}{L}}\big)\; \vec{M}}_{\mathcal{M},\rho} \;\db{N}_{\mathcal{M}, \rho} & \mbox{by semantics of app.} & \\
& = \db{\lbd{v}{A}{L}}_{\mathcal{M}, \rho_n} \;\db{N}_{\mathcal{M},\rho} & \mbox{by Proposition~\ref{prop:helpervec}}&  \\
& = \db{\lbd{v}{A}{L}}_{\mathcal{M}, \rho_n}\; \db{N}_{\mathcal{M},\rho_n} & \mbox{since $\typd{\vec{x}}{\vec{B}}$ not free in $\typd{N}{A}$} \\
& = \db{L}_{\mathcal{M}, \rho_n[v:= \db{N}_{\mathcal{M},\rho_n}]} & \mbox{by semantics of $\lambda$-abs}& \\
& = \db{L[v:=N]}_{\mathcal{M},\rho_n} & \mbox{by (\ref{prop:prebeta1:1})} \\
& =\db{\big( \lbd{\vec{x}}{\vec{B}}{L[v:=N]}\big)\; \vec{M}}_{\mathcal{M}, \rho}  & \mbox{by Proposition~\ref{prop:helpervec}} & \qedhere
\end{align*}
\end{proof}

\begin{proofdetail}

In more detail, for (\ref{prop:prebeta1:1}), we argue by induction on the complexity of $\typd{L}{B}$ as follows.

For the variables, there are two cases:

Suppose $L$ is $\typd{v}{A}$. Then $L[v:=N]$ is $N$, and hence the denotation of $L[v:=N]$ is well-defined. Further, one has that both $\db{L[v:=N]}_{\mathcal{M},\rho}, \db{L}_{\mathcal{M},\rho[v:=\db{N}_{\mathcal{M},\rho}]}$ are equal to $\db{N}_{\mathcal{M},\rho}$.

Suppose $L$ is $\typd{u}{B}$, where this is a distinct variable from $\typd{v}{A}$. Then $L[v:=N]$ is $u$, and hence the denotation of $L[v:=N]$ is well-defined.  Further, one has that both $\db{L[v:=N]}_{\mathcal{M},\rho}, \db{L}_{\mathcal{M},\rho[v:=\db{N}_{\mathcal{M},\rho}]}$ are equal to $\rho(u)$.

Suppose that $L$ is a constant $c$. Then $L[v:=N]$ is $c$ as well, and hence the denotation of $L[v:=N]$ is well-defined.  Further, one has that both $\db{L[v:=N]}_{\mathcal{M},\rho}, \db{L}_{\mathcal{M},\rho[v:=\db{N}_{\mathcal{M},\rho}]}$ are equal to $c_{\mathcal{M}}$.

Suppose that $L$ is the application $PQ$. Then $L[v:=N]$ is $(P[v:=N])(Q[v:=N])$, and $N$ is free for $v$ in both $P,Q$.  By induction hypothesis, one has that the denotation of both $P[v:=N]$ and $Q[v:=N]$ are well-defined and $\db{P[v:=N]}_{\mathcal{M},\rho} = \db{P}_{\mathcal{M},\rho[v:=\db{N}_{\mathcal{M},\rho}]}$ and $\db{Q[v:=N]}_{\mathcal{M},\rho} = \db{Q}_{\mathcal{M},\rho[v:=\db{N}_{\mathcal{M},\rho}]}$. Then the denotation of $L[v:=N]$ is well-defined, namely $\db{L[v:=N]}_{\mathcal{M},\rho} =\db{P[v:=N]}_{\mathcal{M},\rho}(\db{Q[v:=N]}_{\mathcal{M},\rho})$. By induction hypothesis, this is $\db{P}_{\mathcal{M},\rho[v:=\db{N}_{\mathcal{M},\rho}]}(\db{Q}_{\mathcal{M},\rho[v:=\db{N}_{\mathcal{M},\rho}]}) = \db{L}_{\mathcal{M},\rho[v:=\db{N}_{\mathcal{M},\rho}]}$.

Suppose that $L$ is the lambda abstract $\lbd{v}{A}{Q}$. Then $v$ does not appear free in $L$, and so $L[v:=N]$ is just $L$, and so the denotation of $L[v:=N]$ is well-defined. Further, one has that both $\db{L[v:=N]}_{\mathcal{M},\rho}, \db{L}_{\mathcal{M},\rho[v:=\db{N}_{\mathcal{M},\rho}]}$ are equal to $\db{L}_{\mathcal{M},\rho}$. 

 Suppose that $L$ is the lambda abstract $\lbd{u}{C}{Q}$, where $\typd{Q}{D}$, and where $\typd{u}{C}$ is distinct from $\typd{v}{A}$. Then $L[v:=N]$ is $\lbd{u}{C}{(Q[v:=N])}$, and $N$ is free for $v$ in $Q$. By induction hypothesis, the denotation of $Q[v:=N]$ is well-defined and $\db{Q[v:=N]}_{\mathcal{M},\rho} = \db{Q}_{\mathcal{M},\rho[v:=\db{N}_{\mathcal{M},\rho}]}$ for all variable assignments $\rho$. Let $\rho$ be a variable assignment, and let variable assignment $\sigma$  be $\rho[v:=\db{N}_{\mathcal{M},\rho}]$. Since the denotation of $L$ is well-defined, one has that $\db{L}_{\mathcal{M},\sigma}$ is an element of $\mathcal{M}(C\rightarrow D)$. It suffices to show that $\db{L[v:=N]}_{\mathcal{M},\rho} = \db{L}_{\mathcal{M},\sigma}$. We are done if $\typd{v}{A}$ does not appear free in $L$, since in this case $L[v:=N]$ is just $L$ and $\db{L}_{\mathcal{M},\sigma} = \db{L}_{\mathcal{M},\rho}$. Hence, suppose that $\typd{v}{A}$ does appear free in $L$, and hence free in $Q$.
Since $N$ is free for $v$ in $L$, which is $\lbd{u}{C}{Q}$, one has that $\typd{u}{C}$ is not free in $N$. Then for all $z$ in $\mathcal{M}(C)$ we have:
\begin{align*}
& \hspace{5mm} \db{L[v:=N]}_{\mathcal{M},\rho}(z) & \\
& = \db{\lbd{u}{C}{(Q[v:=N])}}_{\mathcal{M},\rho}(z) & \mbox{by defn. of $L[v:=N]$} \\
& = \db{Q[v:=N]}_{\mathcal{M},\rho[u:=z]}  & \mbox{by semantics of $\lambda$-abs.} \\
& = \db{Q}_{\mathcal{M},\rho[u:=z][v:=\db{N}_{\mathcal{M},\rho[u:=z]}]}  & \mbox{by IH applied to $Q$ and  $\rho[u:=z]$}\\
& = \db{Q}_{\mathcal{M},\rho[u:=z] [v:=\db{N}_{\mathcal{M},\rho}]} & \mbox{by $\typd{u}{C}$ not free in $N$} \\
& = \db{Q}_{\mathcal{M},\rho[v:=\db{N}_{\mathcal{M},\rho}][u:=z] } & \mbox{by $\typd{u}{C},\typd{v}{A}$ distinct} \\
& = \db{Q}_{\mathcal{M},\sigma[u:=z]}  & \mbox{by defn. of $\sigma$} \\
& = \db{\lbd{u}{C}{Q}}_{\mathcal{M},\sigma}(z) &\mbox{by semantics of $\lambda$-abs.} \\
& = \db{L}_{\mathcal{M},\sigma}(z) & \mbox{by defn of $L$} 
\end{align*}

\end{proofdetail}

\begin{prop}\label{prop:reducebetadelta}
Suppose that $\mathcal{M}$ is a decorated frame of $\mltup$.  For all terms $M,N$ of $\mltup$, if $N\transreduce{\beta\tau\eta} M$ and the denotation of $N$ is well-defined in $\mathcal{M}$, then the denotation of $M$ is well-defined in $\mathcal{M}$, and for all variable assignments $\rho$, one has that $\db{N}_{\mathcal{M},\rho}=\db{M}_{\mathcal{M},\rho}$.
\end{prop}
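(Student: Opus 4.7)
The plan is to prove the proposition by induction on the length of the reduction sequence $N \twoheadrightarrow_{\beta\eta} M$, reducing the problem to the one-step case. So the first move is to show the single-step statement: if $P \reduce{\beta\eta} Q$ and $\db{P}$ is well-defined in $\mathcal{M}$, then $\db{Q}$ is well-defined and agrees with $\db{P}$ under every $\rho$. Since $\reduce{\beta\eta}$ is the compatible closure of the union of $\beta$ and $\eta$, I would prove this by induction on the compatible-closure derivation, splitting into (a) a top-level $\beta$-contraction, (b) a top-level $\eta$-contraction, and (c)--(e) the three congruence cases for lambda abstraction and left/right application.

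For case (a), Proposition~\ref{prop:prebeta1}(\ref{prop:prebeta2}) does exactly the work needed, producing both well-definedness of the contractum and the identity of denotations (the side conditions from Definition~\ref{defn:betareduction}(\ref{defn:betareduction:1}), (\ref{defn:betareduction:3}) are exactly what that proposition requires). For case (b), with $\lbd{x}{A}{Mx} \reduce{\eta}{\upsilon} M$ and $x$ not free in $M$, well-definedness of $\db{M}$ comes for free since $M$ is a subterm of the redex. Then I would compute, for any $a \in \mathcal{M}(A)$,
\[
\db{\lbd{x}{A}{Mx}}_{\mathcal{M},\rho}(a) = \db{Mx}_{\mathcal{M},\rho[x:=a]} = \db{M}_{\mathcal{M},\rho[x:=a]}(a) = \db{M}_{\mathcal{M},\rho}(a),
\]
where the last equality uses that $x$ is not free in $M$ (a routine coincidence lemma, provable by induction on $M$ just as in the non-modal case). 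Since the function $\db{\lbd{x}{A}{Mx}}_{\mathcal{M},\rho}$ lies in $\mathcal{M}(A\rightarrow B)$ by assumption and is pointwise equal to $\db{M}_{\mathcal{M},\rho}$, the latter is in $\mathcal{M}(A\rightarrow B)$, establishing well-definedness.

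For the congruence cases, everything propagates transparently. If $P \reduce{\beta\eta} Q$ with both denotations well-defined and equal, and the redex is $\lbd{v}{A}{P}$ with well-defined denotation, then $P$ is a subterm so the induction hypothesis applies, and
\[
\db{\lbd{v}{A}{Q}}_{\mathcal{M},\rho} = \Lbd{x}{\mathcal{M}(A)}{\db{Q}_{\mathcal{M},\rho[v:=x]}} = \Lbd{x}{\mathcal{M}(A)}{\db{P}_{\mathcal{M},\rho[v:=x]}} = \db{\lbd{v}{A}{P}}_{\mathcal{M},\rho},
\]
which is in $\mathcal{M}(A \rightarrow C)$ by assumption, so the contractum's denotation is well-defined. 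The two application cases are the same idea using the semantic clause for application. Finally, to pass from one step to $\twoheadrightarrow_{\beta\eta}$, I induct on the length of the reduction sequence $N = N_0 \reduce{\beta\eta} N_1 \reduce{\beta\eta} \cdots \reduce{\beta\eta} N_k = M$: the one-step case gives well-definedness of $\db{N_{i+1}}$ and $\db{N_{i+1}} = \db{N_i}$, so both well-definedness and equality of denotations are preserved all the way to $M$.

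There is no real obstacle here; the only point requiring care is to keep track of well-definedness in addition to the denotational identity, since the frame need not be standard. This is handled uniformly by noting that if a new denotation provably equals an already well-defined one (which is in $\mathcal{M}(C)$ for the appropriate $C$), then it too lies in $\mathcal{M}(C)$. The genuine content of the proposition lives in Proposition~\ref{prop:prebeta1}, and the present proposition is just its structural extension along the compatible and transitive closures.
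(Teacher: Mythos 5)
Your proposal is correct and follows the same route as the paper: the paper's proof simply cites Proposition~\ref{prop:prebeta1}(\ref{prop:prebeta2}) for the top-level $\beta$ case, calls the $\eta$ case standard, and declares the congruence and transitivity steps trivial — which is exactly the structure you spell out in detail. Your added care about tracking well-definedness through the congruence cases (a pointwise-equal function inherits membership in $\mathcal{M}(A\rightarrow B)$) is the right observation and is implicit in the paper's one-line argument.
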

\begin{proof}
The base case for $\beta$ is Proposition~\ref{prop:prebeta1}(\ref{prop:prebeta2}); the base case for $\tau$ (cf. Definition~\ref{defn:tau}) is trivial; the base case for $\eta$ is standard. The inductive steps are trivial.
\end{proof}

We will use the previous propositions in subsequent sections. For the moment, we note the following direct consequence:
\begin{thm}[Soundness Theorem for $\mltup$]\label{thm:soundness}
If $M,N$ are terms of $\mltup$ with $\mltup\vdash _{\beta\eta\tau} M=N$ then $\mltup\models M=N$.
\end{thm}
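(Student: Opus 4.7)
The plan is to treat this as a direct corollary of Proposition~\ref{prop:reducebetadelta}, which already packages the semantic content of a single $\beta\eta$-reduction step, together with the fact that in a genuine model (as opposed to an arbitrary decorated frame) every subterm automatically has a well-defined denotation.

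First, I would fix an arbitrary model $\mathcal{M}$ of $\mltup$ and an arbitrary variable assignment $\rho$; the goal reduces to $\db{M}_{\mathcal{M},\rho}=\db{N}_{\mathcal{M},\rho}$. By Definition~\ref{defn:model}, every term of $\mltup$ has a well-defined denotation in $\mathcal{M}$, so the well-definedness hypothesis of Proposition~\ref{prop:reducebetadelta} is automatic for any term that appears along the way.

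Next, I would unpack the hypothesis $\mltup\vdash_{\beta\eta} M=N$ using the definition from \S\ref{subsec:conventionsredu}: it says $M\equals{\beta\eta}{\upsilon}N$, i.e.\ $M$ and $N$ are connected by the smallest equivalence relation containing $\reduces{\beta\eta}{\upsilon}$. So there is a finite zig-zag of terms $M\equiv P_0,P_1,\ldots,P_k\equiv N$ such that for each $i<k$ either $P_i\reduces{\beta\eta}{\upsilon}P_{i+1}$ or $P_{i+1}\reduces{\beta\eta}{\upsilon}P_i$. Then I would induct on $k$: for each consecutive pair, both terms have well-defined denotations in $\mathcal{M}$, and Proposition~\ref{prop:reducebetadelta} (applied in whichever direction the step goes) yields $\db{P_i}_{\mathcal{M},\rho}=\db{P_{i+1}}_{\mathcal{M},\rho}$. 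Chaining these equalities gives $\db{M}_{\mathcal{M},\rho}=\db{N}_{\mathcal{M},\rho}$, and since $\mathcal{M}$ and $\rho$ were arbitrary we conclude $\mltup\models M=N$.

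There is essentially no obstacle here, as the authorial remark calling this a ``direct consequence'' of the preceding propositions signals. All the substantive work has already been done in Proposition~\ref{prop:prebeta1}(\ref{prop:prebeta2}), where the distanced $\beta$-contraction is validated, and in the trivial inductive steps of Proposition~\ref{prop:reducebetadelta} that propagate the equality through the compatible closure. The only subtlety worth a sentence of explicit attention in the write-up is that being in a model (not just a decorated frame) is what lets us invoke Proposition~\ref{prop:reducebetadelta} freely at each step of the zig-zag, since well-definedness of denotations of intermediate terms is guaranteed rather than something we have to carry along as an inductive hypothesis.
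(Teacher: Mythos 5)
Your proposal is correct and matches the paper's intent exactly: the paper gives no separate proof, presenting the theorem as a ``direct consequence'' of Proposition~\ref{prop:reducebetadelta}, and your zig-zag argument together with the observation that well-definedness of denotations is automatic in a model is precisely the routine unpacking that justifies that claim.
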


\subsection{Some completeness via open term models}\label{sec:completeness-open-term}

Now we turn to completeness for~$\mltpure$, the maximal theory, which we can prove directly using traditional arguments. We discuss, as we proceed, where the traditional arguments break down for non-maximal parameters.

The following modifies the traditional construction to $\mltup$:\footnote{\cite[Definition 3.2.9 p. 109]{Barendregt2013-eb}.}
\begin{defi}\label{defn:openterm}
The \emph{open term applicative structure} $\mathcal{O}_{\param}$ for $\mltup$ in a signature is defined so that $\mathcal{O}_{\param}(A)$ is the set of equivalence classes $[M]$ of the set of terms $\typd{M}{A}$ of $\mltup$ under the equivalence relation of $\equals{\beta\eta}{\param}$. The application operation is given pointwise $[M][N]=[MN]$. The interpretation of constants $\typd{c}{A}$ is given by $c_{\mathcal{O}_{\param}}=[c]$.

In the case of $\mltpure$, we refer to the open term applicative structure as $\mathcal{O}_{\omega}$, and in the case of $\mlt{n}$, we refer to the open term applicative structure as $\mathcal{O}_n$.
\end{defi}

\noindent In this subsection, for the sake of simplicity we discuss $\mathcal{O}_n$ for $n\geq 1$, rather than treating separately the cases of $\mathcal{O}_{\param}$ for general parameters $\param$.

In the case of $\mathcal{O}_{\omega}$, we can turn the open term applicative structure into a frame in the sense of Definition~\ref{defn:frame} by noting that the elements $[M]$ of  $\mathcal{O}_{\omega}(A\rightarrow B)$ are in one-one correspondence with a subclass of functions $F: \mathcal{O}_{\omega}(A)\rightarrow \mathcal{O}_{\omega}(B)$:\footnote{This is the argument of \cite[ Proposition 3.2.10(i)]{Barendregt2013-eb} adapted to $\mltpure$.}
\begin{prop}\label{prop:opentermstructureprop}
Every element $[M]$ of $\mathcal{O}_{\omega}(A\rightarrow B)$ determines a function in $\{F: \mathcal{O}_{\omega}(A)\rightarrow \mathcal{O}_{\omega}(B)\}$ by setting $F_{[M]}[N]=[MN]$. Moreover, the map $[M]\mapsto F_{[M]}$ is injective. Hence the open term applicative structure $\mathcal{O}_{\omega}$ determines a frame.
\end{prop}
\begin{proof}
The function is well-defined since if $M,M^{\prime}$ and $N,N^{\prime}$ are $\beta\eta$-equivalent and of the appropriate type, then so are $MN, M^{\prime}N^{\prime}$. To see that the function is injective, suppose $\typd{M}{A\rightarrow B},\typd{N}{A\rightarrow B}$ with $[M][L]=[N][L]$ for all terms $\typd{L}{A}$ of $\mltpure$. Then we can choose a variable $\typd{x}{A}$ which does not occur freely in $M,N$, and from $Mx=_{\beta\eta} Nx$ we can infer that $\lbd{x}{A}{Mx}=_{\beta\eta} \lbd{x}{A}{Nx}$, and then by $\eta$ to $M\equal{\beta\eta} N$ and then to $[M]=[N]$.
\end{proof}

However, the previous proposition is not true for $\mathcal{O}_1$, and hence the open term applicative structure~$\mathcal{O}_1$ does not determine a frame:
\begin{exa}\label{exa:failureopenterm1} 
In $\mlt{1}$ suppose that $A$ is a state type and $B$ a regular type and $\typd{U}{A\rightarrow A \rightarrow B}$ is a variable. Let $M$ be $\lbd{v_0}{A}{Uv_0v_0}$ and let $N$ be $Uv_0$.\footnote{Note that $M$ is just the result of applying the Warbler to variable $U$ (cf. Definition~\ref{defn:typedcombo}).} Then $Mv_0\equals{\beta\eta}{1} Uv_0v_0$, while $Nv_0$ is just $Uv_0v_0$. Hence $Mv_0\equals{\beta\eta}{1} Nv_0$ but it is not the case that $M\equals{\beta\eta}{1} N$, as one can see by considering standard models.
\end{exa}

\noindent It is interesting to ask why one cannot easily produce similar examples for $n>1$. The results in \S\ref{sec:opentermmodelsredeux} are partially explanatory of this.

For $\mltpure$ one can show that the frame determined by $\mathcal{O}_{\omega}$ is a model, using the traditional argument.\footnote{ \cite[Proposition 3.2.10 p. 110(ii)]{Barendregt2013-eb}.}

\begin{prop}\label{prop:opentermmodel}
The frame determined by the open term applicative structure $\mathcal{O}_{\omega}$ is a model. 
\end{prop}

\begin{proof}
We show that by induction on a term  $\typd{M}{C}$ of $\mltpure$ with free variables from the vector $\typd{\vec{u}}{\vec{A}}$ of length $\ell$ that if $\rho$ is a variable assignment with $\rho(\typd{u_i}{A_i})=[P_i]$ for each $ i< \ell$ and if $P_i$ is free for $u_i$ in $M$ for each $i< \ell$, then $\db{M}_{\mathcal{O}_{\omega}, \rho} = [M[\vec{u}:=\vec{P}]]$.

Suppose $\typd{M}{C}$ is a variable $\typd{u_1}{A_1}$. Then $\db{M}_{\mathcal{O}_{\omega},\rho}=\rho(\typd{u_1}{A_1})=[P_1] = [M[u_1:=P_1]]$. 

Suppose $\typd{M}{C}$ is a constant $\typd{c}{C}$. Then $\db{M}_{\mathcal{O}_{\omega},\rho}=c_{\mathcal{O}_{\omega}}=[c]=[M]$.

Suppose it holds for $\typd{M}{A\rightarrow B}$ and $\typd{N}{A}$. Then one has $\db{MN}_{\mathcal{O}_{\omega},\rho}=\db{M}_{\mathcal{O}_{\omega}, \rho}\db{N}_{\mathcal{O}_{\omega}, \rho} = [M[\vec{u}:=\vec{P}]] [N[\vec{u}:=\vec{P}]]$, which is equal to $[(MN)[\vec{u}:=\vec{P}]]$.

Suppose it holds for $\typd{M}{C}$. We show it holds for $\lbd{v}{A}{M}$. Since $\lbd{v}{A}{M}$ has free variables $\typd{\vec{u}}{\vec{A}}$, these are distinct from $\typd{v}{A}$. We want to show that $\db{\lbd{v}{A}{M}}_{\mathcal{O}_{\omega}, \rho} = [(\lbd{v}{A}{M})[\vec{u}:=\vec{P}]
]$. Since $\typd{\vec{u}}{\vec{A}}$ is distinct from $\typd{v}{A}$, we have that $(\lbd{v}{A}{M})[\vec{u}:=\vec{P}]$ is $\lbd{v}{A}{M[\vec{u}:=\vec{P}]}$, and hence we want to show that  $\db{\lbd{v}{A}{M}}_{\mathcal{O}_{\omega}, \rho} = [\lbd{v}{A}{M}[\vec{u}:=\vec{P}]
]$. Since $\mathcal{O}_{\omega}$ satisfies extensionality, it suffices to work with elements. Hence, suppose that $\typd{Q}{A}$ is a term of $\mltpure$.  We must show the identity $\db{\lbd{v}{A}{M}}_{\mathcal{O}_{\omega}, \rho} [Q] = [\lbd{v}{A}{M[\vec{u}:=\vec{P}]}][Q]$. 

Since we are working in $\mltpure$, by working with an $\alpha$-equivalent of $M[\vec{u}:=\vec{P}]$ if need be, we may assume that $Q$ is free for $v$ in $M[\vec{u}:=\vec{P}]$. Since $\typd{\vec{u}}{\vec{A}}$ is distinct from $\typd{v}{A}$, we have that $Q$ is free for $v$ in $M$ as well. 

Again, we must show that $\db{\lbd{v}{A}{M}}_{\mathcal{O}_{\omega}, \rho} [Q] = [\lbd{v}{A}{M[\vec{u}:=\vec{P}]}][Q]$. By the semantics for lambda abstracts we can rewrite the left-hand side as $\db{M}_{\mathcal{O}_{\omega}, \rho[v:=[Q]]}$, and by $Q$ being free for $v$ in $M[\vec{u}:=\vec{P}]$ and $\beta$-conversion 
we can rewrite the right-hand side as $[M[\vec{u}:=\vec{P}][v:=Q]]$. Since $P_i$ is free for $u_i$ in $\lbd{v}{A}{M}$, we have that $v$ is not free in $P_i$, for each $i< \ell$. By this and the fact that $\typd{\vec{u}}{\vec{A}}$ is distinct from $\typd{v}{A}$, we can further simplify the right-hand side so that the identity reads $\db{M}_{\mathcal{O}_{\omega}, \rho[v:=[Q]]} = [M[\vec{u}:=\vec{P}, v:=Q]]$. But this identity then follows from the induction hypothesis for $M$ and $\rho[v:=[Q]]$.
\end{proof}

Note that the proof of Proposition~\ref{prop:opentermmodel} breaks down for non-maximal parameters in the second-to-last paragraph.

In the proof of Proposition~\ref{prop:opentermmodel}, we have included in the induction hypothesis the qualification ``$P_i$ is free for $u_i$ in $M$ for each $i< \ell$.'' This is often not included in the proof for the ordinary simply typed lambda calculus because there one can assume a convention to the effect that $\alpha$-conversion has been applied to avoid variable capture.\footnote{See the appeal to the variable convention in the proof of \cite[Proposition 3.2.10(ii) p. 110]{Barendregt2013-eb}.} Since we cannot institute this convention for non-maximal parameters, out of consistency we have included the qualification for the maximal parameter. The below example shows what can happen if one fails to include the qualification in a setting, such as that of this paper, where the aforementioned convention is not in force:

\begin{exa}
Suppose that $A,B$ are types with $B$ regular. Suppose that $\typd{x,y}{A}$ are distinct variables, and that $\typd{U}{A\rightarrow A\rightarrow B}$ is a variable. Let $\typd{M}{A\rightarrow B}$ be $\lbd{x}{A}{Uxy}$. Let $\rho$ be a variable assignment relative to $\mathcal{O}_{\omega}$ with $\rho(y)=[x]$ and $\rho(U)=[U]$. Then $\db{M}_{\mathcal{O}_{\omega}, \rho}=\Lambda\, z:\mathcal{O}_{\omega}(A) . \, [U]z[x]$. This has the same input-output behavior as $[\lbd{y}{A}{Uyx}]$ and hence they are identical. But $[M[y:=x]]$ is $[\lbd{x}{A} Uxx]$, which is not identical to $[\lbd{y}{A}{Uyx}]$.
\end{exa}

Finally, for $\mltpure$, we can directly show using the traditional argument:\footnote{\cite[Theorem 3.2.12]{Barendregt2013-eb}.}
\begin{restatable}[Completeness Theorem for $\mltpure$]{thm}{thmcompletelambda}\label{thmcompletelambda}
For terms $M,N$ of $\mltpure$, we have $\mltpure\vdash_{\beta\eta} M=N$ iff $\mltpure\models M=N$.
\end{restatable}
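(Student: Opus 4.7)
The plan is to prove the two directions separately. The forward direction, that $\mltpure\vdash_{\beta\eta} M=N$ implies $\mltpure\models M=N$, is just a direct application of the Soundness Theorem~\ref{thm:soundness}, which has already been established. So the only real work lies in the converse.

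For the converse, the strategy is the standard term-model argument, exactly as in the ordinary simply-typed lambda calculus, now leveraging Proposition~\ref{prop:opentermmodel}. Suppose $\mltpure\models M=N$. In particular, this must hold in the open term model $\mathcal{O}$. Let $\typd{\vec{u}}{\vec{A}}$ enumerate the finitely many free variables occurring in $M$ or $N$, and define the variable assignment $\rho$ on $\mathcal{O}$ by setting $\rho(\typd{u_i}{A_i}) = [u_i]$, and letting $\rho$ act arbitrarily on other variables. Applying Proposition~\ref{prop:opentermmodel} with $P_i := u_i$, we compute $\db{M}_{\mathcal{O},\rho} = [M[\vec{u} := \vec{u}]] = [M]$, and similarly $\db{N}_{\mathcal{O},\rho} = [N]$. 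From $\mathcal{O}\models M=N$ we then get $[M] = [N]$, which by definition of the equivalence classes in $\mathcal{O}$ means $M \equals{\beta\eta}{\omega} N$, i.e.\ $\mltpure\vdash_{\beta\eta} M=N$.

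The only subtlety worth highlighting is the availability of the assignment $\rho(\typd{u_i}{A_i}) = [u_i]$: because we are in the maximal system $\mltpure$, state types have countably many variables available and so the variables $\typd{u_i}{A_i}$ genuinely exist as elements of the appropriate sets $\mathcal{O}(A_i)$. In a non-maximal $\mltup$, this construction of an open term model would be blocked both at the level of turning $\mathcal{O}$ into a frame (as already noted in the excerpt, since one cannot always produce a fresh variable of a state type) and at the level of defining $\rho$. This is why Theorem~\ref{thmcompletemltup} for arbitrary $\upsilon$ cannot be proven directly by the present argument, and must be routed later through the $\beta\eta$-conservation Theorem~\ref{thm:conservemltpureovermltup}. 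For the present theorem about $\mltpure$ itself, however, no obstacle arises: soundness plus Proposition~\ref{prop:opentermmodel} close the argument immediately.
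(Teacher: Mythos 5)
Your proof is correct and is essentially identical to the paper's own argument: soundness handles the forward direction, and the backward direction uses the open term model with the assignment $\rho(\typd{u_i}{A_i})=[u_i]$ together with Proposition~\ref{prop:opentermmodel} to conclude $[M]=[N]$. Your closing remarks about why the argument is confined to the maximal system $\mltpure$ also accurately reflect the paper's discussion.
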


\begin{proof}
Given Soundness (Theorem~\ref{thm:soundness}), only the backwards direction needs argument. Suppose that $\mltpure\models M=N$. Since the frame determined by $\mathcal{O}_{\omega}$ is a model, one has that $\db{M}_{\mathcal{O}_{\omega},\rho}=\db{N}_{\mathcal{O}_{\omega},\rho}$ for all variable assignments $\rho$. Enumerate the free variables of $M,N$ in a vector $\typd{\vec{u}}{\vec{A}}$ of length $\ell$. Let $\rho$ be the variable assignment which assigns $\rho(\typd{u_i}{A_i})=[u_i]$ for each $i<\ell$. Then by the result proven in Proposition~\ref{prop:opentermmodel} one has $[M]=[M[\vec{u}:=\vec{u}]]=\db{M}_{\mathcal{O}_{\omega},\rho}=\db{N}_{\mathcal{O}_{\omega},\rho}=[N[\vec{u}:=\vec{u}]]=[N]$. Then $M, N$ are $\beta\eta$-equivalent.
\end{proof}

\subsection{Automorphisms and inexpressibility}\label{sec:automorphisms}

The main theorems of this paper concern the expressive power of $\mltup$ \emph{vis-\`a-vis} $\mltpure$, and in turn of $\mltpure$ \emph{vis-\`a-vis} $\lt$ (cf. \S\ref{sec:nonmodal} for the relation between $\mltup$ and $\lt$). However, there are also some evident inexpressibility results pertaining of $\mltup$ and $\lt$ which are simple and important to state. As in many other areas of logic, a basic tool to show inexpressibility is automorphisms. After defining them for $\mltup$ and noting that automorphisms preserve denotations, we show that the actuality operator from Example~\ref{exa:namedworlds}, which uses constants, cannot be expressed without constants. Then we show that the comultiplication operators from some elementary comonads have type in $\mltup$ and are expressible in $\lt$ with constants of $\lt$ but are not expressible in $\mltup$, which does not have these constants.

Given a frame $\mathcal{M}$ of $\mltup$, an \emph{internal permutation} $\pi$ of $\mathcal{M}$ is given by a family of permutations $\pi_{A}: \mathcal{M}(A)\rightarrow \mathcal{M}(A)$ of the atomic types $A$ such that (i) if $A$ is a basic entity type then $\pi_A$ is in  $\mathcal{M}(A\rightarrow A)$, and such that (ii) if $A\rightarrow B$ is a regular type, then the permutation $\pi_{A\rightarrow B}:\mathcal{M}((A\rightarrow B)\rightarrow (A\rightarrow B))$ defined by $\pi_{A\rightarrow B}(f) = \pi_{B} \circ f \circ \pi_{A}^{-1}$ is an element of $\mathcal{M}((A\rightarrow B)\rightarrow (A\rightarrow B))$. If $\mathcal{M}$ is in addition a decorated frame, then an \emph{internal automorphism} $\pi$ of $\mathcal{M}$ is an internal permutation such that $\pi_C(c_{\mathcal{M}})=c_{\mathcal{M}}$ for all constants $\typd{c}{C}$ of the signature. Even when $\mathcal{M}$ is a model, the requirement of membership of $\pi_{A\rightarrow B}$ in $\mathcal{M}((A\rightarrow B)\rightarrow (A\rightarrow B))$ is non-trivial when $A$ is a state type since $\pi_A$ is not an element of any $\mathcal{M}(C)$ for any type $C$ of $\mltup$, due to $A\rightarrow A$ not being a type of $\mltup$.

The following proposition is the version of the familiar ``automorphisms preserve truth-values'' result which is available in this setting; its proof is standard and so we omit it:

\begin{prop}\label{prop:auto:preserve}
If $\pi$ is an internal automorphism of a model $\mathcal{M}$ of $\mltup$, then for all types~$A$ of $\mltup$ and all terms $\typd{M}{A}$ and all variable assignments $\rho$ relative to $\mathcal{M}$, one has that $\pi_{A}(\db{M}_{\mathcal{M},\rho}) = \db{M}_{\mathcal{M},\pi\circ \rho}$.
\end{prop}
\noindent In this, the variable assignment $\pi\circ \rho$ is defined by $(\pi\circ \rho)(\typd{v}{A}) = \pi_A(\rho(\typd{v}{A}))$. 

\begin{proofdetail}
\begin{proof}
The proof is by induction on complexity:
\begin{itemize}
\item For variable $\typd{v}{A}$, one has $\pi_{A}(\db{v}_{\mathcal{M},\rho}) = \pi_{A}(\rho(\typd{v}{A})) = (\pi\circ \rho)(\typd{v}{A}) =  \db{v}_{\mathcal{M},\pi\circ \rho}$.
\item For constant $\typd{c}{C}$ one has $\pi_{C}(\db{c}_{\mathcal{M},\rho}) = \pi_C(c_{\mathcal{M}}) =c_{\mathcal{M}} = \db{c}_{\mathcal{M},\pi\circ \rho}$.
\item For application $\typd{MN}{A\rightarrow B}$ where $\typd{M}{A\rightarrow B}$ and $\typd{N}{A}$, we have $\pi_{B}(\db{MN}_{\mathcal{M},\rho}) = \pi_{B}(\db{M}_{\mathcal{M},\rho}(\db{N}_{\mathcal{M},\rho})) = \pi_{B}(\db{M}_{\mathcal{M},\rho}(\pi_{A}^{-1}(\pi_{A}(\db{N}_{\mathcal{M},\rho})))) = (\pi_{A\rightarrow B}(\db{M}_{\mathcal{M},\rho})) (\pi_{A} (\db{N}_{\mathcal{M},\rho})) =  (\db{M}_{\mathcal{M}, \pi\circ \rho})( \db{N}_{\mathcal{M}, \pi\circ \rho}) =  \db{MN}_{\mathcal{M},\pi\circ \rho}$, where the penultimate identity follows from the induction hypothesis.
\item For lambda abstraction $\typd{\lbd{v}{A}{N}}{A\rightarrow B}$ where $\typd{v}{A}$ is a variable and $\typd{N}{B}$, suppose that $x$ is in $\mathcal{M}(A)$. Then we have $(\pi_{A\rightarrow B}(\db{\lbd{v}{A}{N}}_{\mathcal{M},\rho}))(x) = \pi_{B} (\db{\lbd{v}{A}{N}}_{\mathcal{M},\rho}(\pi_{A}^{-1}(x))) = \pi_{B}(\db{N}_{\mathcal{M},\rho[v:=\pi_{A}^{-1}(x)]}) =  \db{N}_{\mathcal{M},\pi\circ (\rho[v:=\pi_{A}^{-1}(x)])} = \db{N}_{\mathcal{M},(\pi\circ \rho)[v:=x])} = \db{\lbd{v}{A}{N}}_{\mathcal{M},\pi\circ \rho}(x)$, where the antepenultimate identity follows from the induction hypothesis, and where the penultimate identity follows from the fact that variable assignments $\pi\circ (\rho[v:=\pi_{A}^{-1}(x)])$ and $(\pi\circ \rho)[v:=x]$ both assign $v$ to $x$ and are both equal to $\pi\circ \rho$ on all other variables.
\end{itemize}
\end{proof}
\end{proofdetail}

As a simple application, recall from Example~\ref{exa:namedworlds} the actuality operator $\lbd{f}{A\rightarrow B}{fc}$ where $\typd{c}{A}$ is a constant of state type $A$. The traditional discussion in propositional and predicate modal logics is whether this actuality operator is an expressive enrichment (cf. \cite{Hazen1978-ic}, \cite{Hazen2013-ob}). The question in $\mltup$ is then whether constant symbols of state type are an expressive enrichment. The following simple application of Proposition~\ref{prop:auto:preserve} shows that they are:
\begin{prop}[Non-expressibility of actuality operator in the empty signature]\label{exa:actualityisnot}
Suppose that $\mathcal{M}$ is a standard model in the empty signature. Suppose $A$ is a state type such that $\mathcal{M}(A)$ has at least two elements. Suppose that $B$ is a regular type formed from basic entity types and arrow such that $\mathcal{M}(B)$ has at least two elements. 

For each $w$ in $\mathcal{M}(A)$ let $F_w$ in $\mathcal{M}((A\rightarrow B)\rightarrow B)$ be defined by $F_w(f)=f(w)$. Then for each $w$ in $\mathcal{M}(A)$ there is no closed expression $\typd{M}{A\rightarrow B}$ of $\mltup$ in the empty signature such that $\db{M}_{\mathcal{M}}=F_w$. 
\end{prop}
\begin{proof}
To see this, suppose for reductio that $\db{M}_{\mathcal{M}}=F_w$. Since $\mathcal{M}(A)$ has at least two elements, choose $v$ in $\mathcal{M}(A)$ which is distinct from $w$. Let $\pi_A$ be a permutation which transposes $w,v$. Since $\mathcal{M}$ is standard, extend $\pi$ to an internal automorphism of $\mathcal{M}$ by setting $\pi_C$ to be the identity for any basic entity type $C$ and by setting, for any functional type $C\rightarrow D$ and any $f$ in $\mathcal{M}(C\rightarrow D)$, the identity $\pi_{C\rightarrow D}(f) = \pi_{D} \circ f \circ \pi_{C}^{-1}$. Note two things:
\begin{enumerate}[leftmargin=*]
    \item\label{exa:standout:1} For any functional type $C\rightarrow D$ of $\mltup$ and any $g$ in $\mathcal{M}(C\rightarrow D)$ one has  $\pi_{C\rightarrow D}^{-1}(g) = \pi_{D}^{-1} \circ g \circ \pi_{C}$.
    \item\label{exa:standout:2} For a regular type $C$ of $\mltup$ formed from basic entity types and arrow, one has $\pi_C$ is the identity.
\end{enumerate}

For any $G$ in $\mathcal{M}((A\rightarrow B)\rightarrow B)$ we have that $\pi_{(A\rightarrow B)\rightarrow B} G=\pi_B \circ G\circ \pi_{A\rightarrow B}^{-1}=G\circ \pi_{A\rightarrow B}^{-1}$. Further, for any $f$ in $\mathcal{M}(A\rightarrow B)$ one has that 
\begin{equation}
(\pi_{(A\rightarrow B)\rightarrow B} G)f = G(\pi_{A\rightarrow B}^{-1}(f)) = G(\pi_B^{-1}\circ f\circ \pi_A) = G(f\circ \pi_A)
\end{equation}
By the reductio hypothesis and Proposition~\ref{prop:auto:preserve} one has that 
\begin{equation}
\pi_{(A\rightarrow B)\rightarrow B} \db{M}_{\mathcal{M}}=\pi_{(A\rightarrow B)\rightarrow B}\db{M}_{\mathcal{M},\rho}=\db{M}_{\mathcal{M},\pi\circ \rho}=\db{M}_{\mathcal{M}}=F_w
\end{equation}
where the first and second-to-last identities follow since $\typd{M}{(A\rightarrow B)\rightarrow B}$ is closed. Then for all $f$ in $\mathcal{M}(A\rightarrow B)$ one has $f(w)=F_w(f)=(\pi_{(A\rightarrow B)\rightarrow B} \db{M}_{\mathcal{M}})f = \db{M}_{\mathcal{M}}(f\circ \pi_A) = F_w(f\circ \pi_A) = (f\circ \pi_A)(w) = f(\pi_A(w))=f(v)$.

But since $\mathcal{M}$ is standard and $\mathcal{M}(B)$ has at least two elements, we can choose $f:A\rightarrow B$ such that $f(w)\neq f(v)$, a contradiction.
\end{proof}

The previous example shows that the ability of $\mltup$ to have constants of state type is an expressive enrichment. But a state type $A$, viewed not as a type but as a first-order structure, may have natural options not only for constant symbols $\typd{d}{A}$ but also for constant symbols $c:A^k\rightarrow A$ for $k\geq 1$,\footnote{We use the notation from footnote~\ref{fn:vecsame}.} which are naturally interpreted as functions from the $k$-th Cartesian power of $A$ to $A$. But of course, if $A$ is a state type then $A^k\rightarrow A$ is a type of $\lt$ but not of $\mltup$ (cf. \S\ref{sec:nonmodal} for the relation between $\mltup$ and $\lt$). While $c:A^k\rightarrow A$ may not be a constant of $\mltup$, it can be used to form natural terms of $\lt$ which have type of $\mltup$, and the question is whether these are expressible in $\mltup$. The following proposition shows that they need not be. (Incidentally, this proposition shows the need for the inclusion of the qualification ``\ldots its free variables and constants are those of~$\mltup$'' in the statement of Theorem~\ref{thm:express}).

\begin{prop}[Non-expressibility of the comultiplication map in the empty signature]\label{prop:comonadinexpressible}
Suppose that $A$ is a state type and $B$ a basic entity type of $\mltup$. Suppose that $\typd{c}{A\rightarrow A\rightarrow A}$ is a constant of $\lt$. Let $\typd{P}{(A\rightarrow B)\rightarrow (A\rightarrow A\rightarrow B)}$ be the closed term $\lbd{f}{A\rightarrow B}{\lbd{x}{A}{\lbd{y}{A}{f(cxy)}}}$ of $\lt$. Then there is no term $\typd{Q}{(A\rightarrow B)\rightarrow (A\rightarrow A\rightarrow B)}$ of $\mltup$ in the empty signature such that $\lt\models P=Q$.
\end{prop}

\begin{proof}
Suppose not, and let $\typd{Q}{(A\rightarrow B)\rightarrow (A\rightarrow A\rightarrow B)}$ be such a term of $\mltup$ in the empty signature. Consider the standard model $\mathcal{N}$ of $\lt$ with $\mathcal{N}(C)$ equal to the natural numbers~$\mathbb{N}$ for each atomic type $C$, and with $c_{\mathcal{N}}:\mathbb{N}\rightarrow \mathbb{N} \rightarrow \mathbb{N}$ being given by the curried version of a binary function $\oplus:\mathbb{N}\times \mathbb{N}\rightarrow \mathbb{N}$ subject to the constraint that there are $w,v\geq 0 $ and a permutation $\pi$ of $\mathbb{N}$ such that $\pi^{-1}(w\oplus v)\neq \pi^{-1}(w)\oplus \pi^{-1}(v)$. By restricting to the types and constants of $\mltup$, this model induces a standard model $\mathcal{M}$ of $\mltup$. By our reductio hypothesis, we have $\db{P}_{\mathcal{N},\rho}=\db{Q}_{\mathcal{N},\rho}$ for any variable assignment $\rho$ relative to $\mathcal{N}$. Since $\mathcal{M}$ agrees with $\mathcal{N}$ on terms of type $\mltup$ in the empty signature, we have  $\db{P}_{\mathcal{N},\rho}=\db{Q}_{\mathcal{M},\rho_0}$ for any for any variable assignment $\rho$ relative to $\mathcal{N}$ with $\rho_0$ denoting its restriction to $\mathcal{M}$. The proof then proceeds similarly to the previous proposition, using the aforementioned constraint on $\oplus$.
\begin{proofdetail}

In more detail, let $\pi:\mathbb{N}\rightarrow \mathbb{N}$ be any permutation of the natural numbers. This induces an internal automorphism $\pi$ of $\mathcal{M}$ by setting $\pi_A=\pi$, and by setting $\pi_C$ to be the identity for any other atomic type $C$, and by setting $\pi_{C\rightarrow D}(f) = \pi_{D} \circ f \circ \pi_{C}^{-1}$ for any functional type $C\rightarrow D$ and any $f$ in $\mathcal{M}(C\rightarrow D)$. Note that (\ref{exa:standout:1})-(\ref{exa:standout:2}) from the proof of the previous proposition still hold.

Let $\pi:\mathbb{N}\rightarrow \mathbb{N}$ be any permutation of the natural numbers and which satisfies $w\oplus v\neq \pi(\pi^{-1}(w)\oplus \pi^{-1}(v))$ for some $w,v\geq 0$. As in the previous paragraph this induces an internal automorphism of $\mathcal{M}$.

For any $G$ in $\mathcal{M}((A\rightarrow B)\rightarrow A\rightarrow A\rightarrow B)$ and any $f$ in $\mathcal{M}(A\rightarrow B)$ one has that
\begin{align*}
& (\pi_{(A\rightarrow B)\rightarrow A\rightarrow A\rightarrow B} G) f  = (\pi_{A\rightarrow A\rightarrow B} \circ G)(\pi^{-1}_{A\rightarrow B} f) = (\pi_{A\rightarrow A\rightarrow B} \circ G)(\pi^{-1}_B \circ f \circ \pi_A) \\
& = \pi_{A\rightarrow A\rightarrow B}(G(f\circ \pi_A)) = \pi_{A\rightarrow B} \circ G(f\circ \pi_A) \circ \pi_A^{-1}
\end{align*}
For any $w$ in $\mathcal{M}(A)$ one then has that:
\begin{align*}
&(\pi_{(A\rightarrow B)\rightarrow A\rightarrow A\rightarrow B} G) f w = \pi_{A\rightarrow B}( G(f\circ \pi_A) \pi_A^{-1}(w)) \\
&=\pi_B \circ \pi_{A\rightarrow B}( G(f\circ \pi_A) \pi_A^{-1}(w)) \circ \pi_A^{-1} = \pi_{A\rightarrow B}( G(f\circ \pi_A) \pi_A^{-1}(w)) \circ \pi_A^{-1}
\end{align*}
Then for any $v$ in $\mathcal{M}(A)$ one then has that:
\begin{equation*}
(\pi_{(A\rightarrow B)\rightarrow A\rightarrow A\rightarrow B} G) f w v = G (f\circ \pi_A) \pi_A^{-1}(w) \pi_A^{-1}(v)
\end{equation*}

Apply this to the element $\db{Q}_{\mathcal{M},\rho}$ of $\mathcal{M}((A\rightarrow B)\rightarrow A\rightarrow A\rightarrow B)$ and the identity map $f$ on natural numbers, and let $w,v\geq 0$ be the natural numbers mentioned in the definition of $\pi$. Then we have the following, where $\rho^{\prime}$ (resp. $\rho^{\prime\prime}$) is any variable assignment relative to~$\mathcal{N}$ which agrees with variable assignment $\pi\circ \rho$ (resp. $\rho$) relative to $\mathcal{M}$ on variables of~$\mltup$:
\begin{align*}
& w\oplus v = f(w\oplus v)=\db{P}_{\mathcal{N},\rho^{\prime}} f\,w\,v = \db{Q}_{\mathcal{M},\pi\circ \rho}\, f\,w\,v = \pi_{(A\rightarrow B)\rightarrow A\rightarrow A\rightarrow B} (\db{Q}_{\mathcal{M}, \rho}) \, f\,w\,v 
\\ & =  \db{Q}_{\mathcal{M}, \rho} \,(f\circ \pi_A)\, \pi_A^{-1}(w)\, \pi_A^{-1}(v) = \db{P}_{\mathcal{N}, \rho^{\prime\prime}}\, (f\circ \pi_A)\, \pi_A^{-1}(w)\, \pi_A^{-1}(v) \\
& = \pi(\pi^{-1}(w)\oplus \pi^{-1}(v))
\end{align*}

\end{proofdetail}
\end{proof}

The example in Proposition~\ref{prop:comonadinexpressible} is not \emph{ad hoc}, but is natural in the study of comonads. Monads and its dual comonads are important in contemporary theoretical computer science: whereas monads were deployed by Moggi to model impure effects in a purely functional language,\footnote{\cite{Moggi1989-ph}, \cite{Moggi1991-ej}, \cite{Benton2002-dn}.} comonads have been used, since Brookes and Geva, as a ``semantic model in which sensible comparisons can be made between programs with the same extensional behavior.''\footnote{\cite[2]{Brookes1992-bm}.} 

In Brookes and Geva's original example, the type $A$ is interpreted as the natural numbers with an infinite element on top, $\mathbb{N}\cup \{\omega\}$, and the constant $\typd{c}{A\rightarrow A\rightarrow A}$ is interpreted as minimum; then, on its intended interpretation, the term $P$ from Proposition~\ref{prop:comonadinexpressible} is the comultiplication map which sends $f$ to $\lbd{x}{A}{\lbd{y}{A}{\min(x,y)}}$.\footnote{\cite[8]{Brookes1992-bm}.} This comultiplication map ``shows how a computation may itself be computed'' by breaking $(f(0), f(1), \ldots)$ into its initial segments which repeat the last element.\footnote{\cite[3]{Brookes1992-bm}. Another natural comonad is the stream comonad (cf. \cite[p. 160, Example 5.3.2]{Perrone2019-js}).}

Just as the ability of $\mltup$ to express the actuality operator should be seen as a virtue of the system, so the inability of $\mltup$ to express the comultiplication maps of natural comonads should be seen as a deficit. When $A$ is state type and $B$ is a regular type, one might try to remedy this deficit by admitting terms like  $\lbd{x}{A}{\lbd{y}{A}{f(cxy)}}$ into $\mltup$ while still banning  $\lbd{x}{A}{\lbd{y}{A}{cxy}}$ or its $\eta$-equivalent $c$. But this revision would complicate severely the metatheory developed in this paper since there would now be more complicated terms of state type like $cxy$. Indeed, on this revision, Proposition~\ref{prop:state-free} would no longer hold, which is used extensively throughout the paper, in particular in the proof of Theorem~\ref{prop:combinatoryeffect} of the next section, on which much else in this paper depends. 


\section{Combinatory logic and conservation and expressibility}\label{sec:combo}

\subsection{Typed combinator terms and their reductions}\label{sec:typedcomboterms}

The following definition provides a small list of typed combinators terms $\mathsf{X}_{A_1, \ldots, A_n}$ terms, where its type is a function of the types $A_1, \ldots, A_n$. For each combinator term, we give
\begin{itemize}[leftmargin=*]
\item the traditional choice of letter $\mathsf{X}$ along with the Smullyan mnemonic (cf. \cite{Smullyan2000-fo}), 
\item its defining term, 
\item its type built up out of $A_1, \ldots, A_n$,
\item an intuitive gloss, using informal descriptions of input-output behaviour of functions, as well as informal functional notation such as $(x,y)\mapsto x(y)$ and $(x,i)\mapsto x_i$.
\item an identification of the conditions on the types $A_1, \ldots, A_n$, and the variables of these types, required in order for this to be a term of $\mltup$.
\end{itemize}

\begin{defi}[Typed combinator terms of $\mltup$]\label{defn:typedcombo}
An \emph{Identity Bird term} $\ctI{A}{}$ \emph{of} $\mltup$ is a term of 
of the following form and type:
\begin{equation*}
\lbd{x}{A}{x} \; : \; A\rightarrow A
\end{equation*}
It is required that $A$ has regular type. Intuitively $\ctI{A}{}$ is the identity function on type $A$.

A \emph{Kestrel term} $\ctK{A}{B}{}{}$ \emph{of} $\mltup$  is a term of the following form and type:
\begin{equation*}
\lbd{x}{A}{\lbd{y}{B}{x}} \; : \; A  \rightarrow B\rightarrow   A
\end{equation*}
It is required that $A$ has regular type, and that $\typd{x}{A}, \typd{y}{B}$ are distinct variables. Intuitively given a value in $A$, Kestrel $\ctK{A}{B}{}{}$ returns the constant function from $B$ to $A$ with that value.

A \emph{Cardinal term} $\ctC{A}{B}{C}{}{}{}$ \emph{of} $\mltup$ is a term of the following form and type:
\begin{equation*}
\lbd{x}{A \rightarrow B\rightarrow C}{\lbd{y}{B}{\lbd{z}{A}{xzy}}} \;:\; (A\rightarrow B\rightarrow   C) \rightarrow B\rightarrow  A\rightarrow  C
\end{equation*}
It is required that $C$ has regular type, and that $\typd{y}{B}, \typd{z}{A}$ are distinct variables. Intuitively Cardinal $\ctC{A}{B}{C}{}{}{}$ takes a function $x$ of two arguments and returns the function of two arguments which permutes the two inputs. I.e. it maps function $x$ to the function $(y,z)\mapsto x(z,y)$.

A \emph{Dardinal term} $\ctD{c}{A}{B}{C}{}{}$ \emph{of} $\mltup$  is a term of the following form and type:
\begin{equation*}
\lbd{x}{A \rightarrow B\rightarrow C}{\lbd{z}{A}{xzc}} \; : \; (A\rightarrow B\rightarrow   C) \rightarrow  A\rightarrow  C
\end{equation*}
It is required that $C$ has regular type and that $B$ has state type and that $\typd{c}{B}$ is a constant.  \emph{Dardinal} is short for \emph{decorated Cardinal}. The number of Dardinals varies with the signature, and intuitively $\ctD{c}{A}{B}{C}{}{}$ takes a function $x$ of two arguments and returns the function of one argument which slots this value into the first spot and $c$ into the second spot. I.e. it maps function $x$ to the function $z\mapsto x(z,c)$.

A \emph{Starling term} $\ctS{A}{B}{C}{}{}{}$ \emph{of} $\mltup$  is a term of the following form and type:
\begin{equation*}
\lbd{x}{C\rightarrow  A\rightarrow  B}{\lbd{y}{C\rightarrow   A}{\lbd{z}{C}{xz(yz)}}} \;:\; (C\rightarrow   A\rightarrow  B) \rightarrow (C\rightarrow A)\rightarrow C\rightarrow B
\end{equation*}
It is required that $A,B$ are regular types. Like Cardinal, Starling $\ctS{A}{B}{C}{}{}{}$ permutes some of the order of the inputs, but it also is a basic example of a combinator which duplicates an input. As for its intended behaviour, as an argument of $x,y$, it is just an ``indexed''  version of functional application $z\mapsto x_z(y_z)$.

A \emph{Warbler term} $\ctW{A}{B}{}{} $ \emph{of} $\mltup$  is a term of the following form and type:
\begin{equation*}
\lbd{x}{A \rightarrow A \rightarrow B}{\lbd{y}{A}{xyy}} \; : \; (A\rightarrow  A\rightarrow  B) \rightarrow A\rightarrow  B
\end{equation*}
It is required that $B$ is of regular type. Intuitively, Warbler $\ctW{A}{B}{}{} $ takes a curried function $x$ defined on $A\times A$ and returns the diagonal function on $A$ given by $y\mapsto x(y,y)$.

A \emph{Bluebird term} $\ctB{A}{B}{C}{}{}{}$ \emph{of} $\mltup$  is a term of the following form and type: 
\begin{equation*}
\lbd{x}{B\rightarrow C}{\lbd{y}{A\rightarrow B}{\lbd{z}{A}{x(yz)}}} : (B\rightarrow C) \rightarrow (A \rightarrow B) \rightarrow A\rightarrow  C
\end{equation*}
It is required that $B,C$ are regular types; and if $A$ is regular then it is required that $\typd{x}{B\rightarrow C}, \typd{z}{A}$ are distinct variables and that $\typd{x}{B\rightarrow C}, \typd{y}{A\rightarrow B}$ are distinct variables. The Bluebird $\ctB{A}{B}{C}{}{}{}$ returns the composition of two functions $x,y$ whose domains and codomains match appropriately. 
 
\end{defi}

Each of these terms is closed. Further, each of these terms has pairwise distinct bound variables: this follows from the stipulated distinctness in the above definition together with the distinctness of type due to some types being functions of others. This is important to take note of because distinctness of variables is a part of $\beta$-reduction (cf. Definition~\ref{defn:betareduction}(\ref{defn:betareduction:2})). However, we note the following:

\begin{rem}[Limited availability of Cardinal]\label{rmk:oncardinalandstate}
The only combinator in Definition~\ref{defn:typedcombo} which can have more than one bound variable of state type is Cardinal. This happens when both $A,B$ in Cardinal are of state type. When both $A,B$ are identical state types, the requirement that  $\typd{y}{B}, \typd{z}{A}$ are distinct variables implies that the associated Cardinal term is simply not available in $\mlt{1}$.

Later in \S\ref{sec:purecombinatory}, when we develop pure intensional combinatory logic, we will have to pay attention to this limited availability of Cardinal. For this purpose, we note now that in Cardinal, the stated requirement that $\typd{y}{B}, \typd{z}{A}$ are distinct variables implies: either $A,B$ are distinct types, or that $A,B$ are identical types and $\param(A)=\param(B)>1$.
\end{rem}

While obvious from definition, we note the following about the other combinators:
\begin{rem}[Availability of all typed combinator terms besides Cardinal]\label{rmk:onalphandcombinators}
Besides Cardinal, all typed combinator terms are $\alpha$-equivalent to a term of $\mlt{1}$. This is because, by inspection, each contains at most one bound variable of a given state type.
\end{rem}

The following theorem and subsequent two propositions say that the combinator terms have their expected behavior in $\mltup$. The proof of these results also contain many useful examples of $\beta$-reductions of distance 1 and 2. For later purposes, we take note that all of these reductions are indeed regular $\beta$-reductions (cf. Definition~\ref{defn:betareduction}~(\ref{defn:betareduction:5})).\footnote{While we are not pursuing weakness until \S\ref{sec:purecombinatory}, it is worth noting that the reductions in this theorem and the subsequent two propositions are weak.}

\begin{thm}[Combinatory behaviour in $\mltup$]\label{prop:combinatoryeffect}
Suppose that the combinators on the below left are terms of $\mltup$. For each item, suppose that the terms $P,Q,R$ are terms of $\mltup$ of the appropriate type to make the below applications well-formed. Then one has the $\beta$-reductions to the terms on the below right, and indeed all of these reductions are regular:
\begin{align*}
& \ctI{A}{P}\transreduces{\beta}{\param} P  \hspace{7mm} \ctK{A}{B}{P}{Q}\transreduces{\beta}{\param} P & \hspace{1mm}\ctC{A}{B}{C}{P}{Q}{R}\transreduces{\beta}{\param} PRQ & \hspace{5mm}\ctD{c}{A}{B}{C}{P}{R}\transreduces{\beta}{\param} PRc\\
& \hspace{3mm}\ctS{A}{B}{C}{P}{Q}{R}\transreduces{\beta}{\param} PR(QR) & \hspace{5mm} \ctW{A}{B}{P}{Q}\transreduces{\beta}{\param} PQQ  &\hspace{5mm}  \ctB{A}{B}{C}{P}{Q}{R}\transreduces{\beta}{\param} P(QR) 
\end{align*}
\end{thm}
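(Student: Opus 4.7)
The plan is, for each of the seven combinators, to exhibit an explicit sequence of (distanced) $\beta$-reductions that carries the applied combinator to the target term, and for each single step to verify the three freeness/distinctness conditions of Definition~\ref{defn:betareduction} together with the regularity clause~(\ref{defn:betareduction:5}). The two workhorse tools will be $\alpha$-conversion of outermost $\lambda$'s (available whenever the bound variable has regular type, since then $\upsilon=\omega$) and Proposition~\ref{prop:state-free} (which forces any state-typed term to be a variable or a constant).

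\textbf{The routine cases.} For $\ctI{A}{}$ applied to $P$ a single distance-$0$ step suffices. For $\ctK{A}{B}{}{}$, $\ctD{c}{A}{B}{C}{}{}$ and $\ctW{A}{B}{}{}$ I use a two-step sequence: first a distance-$1$ step that substitutes the second argument into the body, then a distance-$0$ step on the first argument. Since each prefix $\vec{x}$ has length at most one, regularity is immediate. The ``free for'' and ``$\vec{x}$ not free in $N$'' conditions are discharged by $\alpha$-converting the outermost $\lambda$-bound variable $x$, which by each combinator's type restriction is always of regular type and so admits infinitely many fresh replacements. For $\ctS{A}{B}{C}{}{}{}$ and $\ctB{A}{B}{C}{}{}{}$ I use the three-step distance-$2$/$1$/$0$ sequence. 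The distance-$2$ step has prefix $(x,y)$, but in both combinators $x$ and $y$ have function (hence regular) type by construction, so the prefix carries no state-typed variable and regularity~(\ref{defn:betareduction:5}) holds; $\alpha$-conversion then handles the remaining side-conditions as before.

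\textbf{The main obstacle: Cardinal.} For $\ctC{A}{B}{C}{}{}{}$ the second bound variable $y$ has raw type $B$, which may itself be a state type, and then the prefix $(x,y)$ of the natural distance-$2$ step has $x$ regular but $y$ of state type, violating~(\ref{defn:betareduction:5}). I therefore split on whether $B$ is state. If $B$ is regular, the distance-$2$/$1$/$0$ sequence used for Starling and Bluebird adapts verbatim. If $B$ is a state type, I instead take a distance-$1$ step first, substituting $Q$ for $y$ in $\lbd{z}{A}{xzy}$; the prefix $(x)$ has length $1$, so regularity is trivial. The delicate point is the ``free for'' condition, which reduces to requiring $z$ not free in $Q$. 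Proposition~\ref{prop:state-free} rescues this: since $B$ is state, $Q$ is a variable or a constant of type $B$; when $A\neq B$, then $z$ (of type $A$) cannot appear free in $Q$ for type reasons, and when $A=B$, Cardinal's type restriction $\upsilon(A)=\upsilon(B)>1$ gives room to $\alpha$-convert $\lbd{z}{A}{\cdots}$ to a fresh variable distinct from $Q$. I then follow with a distance-$1$ step substituting $R$ for $z$ and a distance-$0$ step substituting $P$ for $x$; both again have prefixes of length at most one, so regularity holds.

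\textbf{Why distanced.} The proof illustrates why the distanced formulation of $\beta$-reduction is indispensable here: for Cardinal with $B$ state and $\upsilon(B)$ small, one cannot instead attack the outermost $\lambda x$ with a classical $\beta_0$-step, because this would require $\alpha$-converting $\lbd{y}{B}{\cdots}$ away from any free occurrences of $y$ in $P$, which fails when $\upsilon(B)=1$. The distanced step circumvents this by absorbing the argument into the body without disturbing the scarce state-type binder. Bookkeeping the distances carefully and noting that all reductions produced are in fact regular yields the full conclusion.
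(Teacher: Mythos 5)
Your overall strategy is the same as the paper's: explicit distanced reductions, a case split on which constituent types are state types, Proposition~\ref{prop:state-free} to force state-typed arguments to be atoms, and $\alpha$-conversion of regular-typed binders to discharge the side conditions. All of the cases except one check out (your $2/1/0$ ordering for Starling and Bluebird differs harmlessly from the paper's $2/0/0$).

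There is, however, a genuine gap in exactly the subcase you flag as delicate. For Cardinal with $A,B$ both state types and $A$ identical to $B$, the definition only guarantees $\upsilon(A)=\upsilon(B)>1$, so there may be precisely \emph{two} variables of type $A$, and these must already be the distinct bound variables $\typd{y}{B}$ and $\typd{z}{A}$. If the argument $\typd{Q}{B}$ happens to be the variable $\typd{z}{A}$ itself, your proposed repair --- ``$\alpha$-convert $\lbd{z}{A}{\cdots}$ to a fresh variable distinct from $Q$'' --- is impossible: the replacement must also differ from $\typd{y}{B}$ (else the free occurrence of $y$ in $xzy$ is captured), and with $\upsilon(A)=2$ no third variable exists. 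So the first distance-$1$ step fails its ``free for'' condition and cannot be rescued as you describe; falling back on the distance-$2$ step substituting $R$ for $z$ is also blocked, since its prefix $(x,y)$ would contain the state-typed $y$ in second position, violating regularity clause~(\ref{defn:betareduction:5}). The paper's fix is different and does work: $\alpha$-convert the Cardinal term by \emph{swapping} the two bound variables $\typd{y}{B}$ and $\typd{z}{A}$ (a permutation fixing the free variable $x$), arranging that whenever $Q$ is a variable it coincides with the \emph{second} bound variable; the substitution of $Q$ for that variable is then trivially capture-free because the remaining binder is the other, distinct variable. With that one repair your argument goes through.
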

\begin{proof}
For  Identity Bird $\ctI{A}{}$, we have $(\lbd{x}{A}{x})P\reduces{\beta}{\param} P$ since the term $\typd{x}{A}$ has no lambda abstracts.

Now we turn to the remaining cases. In this part of the proof:
\begin{itemize}[leftmargin=*]
\item We restrict attention to the case where at least one of the constituent types $A,B,C$ (or just $A,B$ in the case of Kestrel and Warbler) is a state type. For, if all of the constituent types are regular, then by inspection of the definitions, all of the bound variables in the combinator terms are regular and we can change them by $\alpha$-conversion to avoid variable capture. 
\item We repeatedly use Proposition~\ref{prop:state-free}, which says that terms of state type in $\mltup$ are either constants or variables of that very state type. 
\item We often ensure that the ``free for'' condition of $\beta$-reduction (Definition~\ref{defn:betareduction}(\ref{defn:betareduction:1})) is met by doing $\alpha$-conversion on the bound variables of the combinatory terms to ensure that they are disjoint from the free variables of the inputs. We refer to this simply as ``disjointness.''
\item As we proceed, we note that the reductions meet the regularity condition Definition~\ref{defn:betareduction}(\ref{defn:betareduction:5}), but do not say more than this since it follows clearly from the displayed instances and the case assumptions. Further, so as not to clutter the proof, we just mark regularity in the text and write the simpler $\beta$ instead of $\beta_r$. Finally, since all $\beta$-reductions of distance $\leq 1$ are trivially regular, we only need to explicitly take note of regularity for instances of distance $\geq 2$ (as a matter of fact, all $\beta$-reductions in this proof have distance $\leq 2$).
\end{itemize}

For Kestrel $\ctK{A}{B}{}{}$, suppose $A$ is a regular type, $B$ is a state type, and $\typd{P}{A}$ and $\typd{Q}{B}$ are terms of $\mltup$. We can use $\alpha$-conversion on the bound variable of type $A$ in Kestrel so that it does not appear free in $Q$. Further, $\typd{Q}{B}$ is free for Kestrel's second variable $\typd{y}{B}$ in the term $\typd{x}{A}$, since the latter has no lambda abstracts. These two points get us the following, where the first is a $\beta$-reduction of distance~$1$:
\begin{equation}
\big(\lbd{x}{A}{\lbd{y}{B}{x}}\big) P Q \reduces{\beta}{\param} (\lbd{x}{A}{x[y:=Q]})P \equiv (\lbd{x}{A}{x})P \reduces{\beta}{\param} P
\end{equation}
The second $\beta$-reduction follows just as in Identity Bird.

For Cardinal $\ctC{A}{B}{C}{}{}{}$, suppose that $C$ is a regular type and suppose we have the terms $\typd{P}{A\rightarrow B\rightarrow C}$ and $\typd{Q}{B}$ and $\typd{R}{A}$ of $\mltup$. Suppose the corresponding three bound variables of Cardinal $\ctC{A}{B}{C}{}{}{}$  are $\typd{x}{A\rightarrow B\rightarrow C}$ and $\typd{y}{B}$ and $\typd{z}{A}$. We respectively refer to these in the following discussion as the \emph{first}, \emph{second}, and \emph{third bound variables} of Cardinal (and we adopt similar conventions for the subsequent combinator terms). 

There are three cases to consider. 

If $A$ is a regular type and $B$ is a state type, then since the first and third bound variables are of regular type, we may change them by $\alpha$-conversion so that they do not appear free in $P$ or $R$; and they do not appear free in $\typd{Q}{B}$ since $B$ is of state type. Then $\typd{Q}{B}$ is free for $\typd{y}{B}$ in $\lbd{z}{A}{xzy}$. Further, the first bound variable does not appear free in $\typd{Q}{B}$ since $B$ is of state type. This gives us the first step in the following, which is a $\beta$-reduction of distance~$1$:
\begin{align}
  (\lbd{x}{A\rightarrow B\rightarrow C}{\lbd{y}{B}{\lbd{z}{A}{xzy}}})PQR  &    \reduces{\beta}{\param} \;  (\lbd{x}{A\rightarrow B\rightarrow C}{\lbd{z}{A}{xzQ}})PR \notag \\
  \reduces{\beta}{\param} \;   (\lbd{z}{A}{PzQ})R  & \reduces{\beta}{\param} \; PRQ 
\end{align}
The second $\beta$-reduction follows by disjointness: the third bound variable does not appear free in $P$ by the previous $\alpha$-conversion; and $\typd{Q}{B}$ does not contain any bound variables since $B$ is of state type . The third $\beta$-reduction follows since the displayed free occurrence of $\typd{z}{A}$ is the only free occurrence in $PzQ$, since by previous $\alpha$-conversion it does not appear free in $P$, and since $\typd{Q}{B}$ is of state type $B$.

If $A$ is a state type and $B$ is a regular type, then since the first and the second bound variables are of regular type, we may change them by $\alpha$-conversion so that they do not appear free in $P$ or $Q$; and they do not occur free in $\typd{R}{A}$ since $A$ is of state type. Since $xzy$ contains no lambda abstracts, one has that $\typd{R}{A}$ is free for $\typd{z}{A}$ in $xzy$. This gives us the first step in the following, which is a regular $\beta$-reduction of distance~$2$:
\begin{align}
 (\lbd{x}{A\rightarrow B\rightarrow C}{\lbd{y}{B}{\lbd{z}{A}{xzy}}}) PQR & \reduces{\beta}{\param}\; (\lbd{x}{A\rightarrow B\rightarrow C}{\lbd{y}{B}{xRy}}) PQ \notag \\
\reduces{\beta}{\param}\; (\lbd{y}{B}{PRy}) Q & \reduces{\beta}{\param}\; PRQ
\end{align}
The second $\beta$-reduction follows by disjointness: the second bound variable does not appear free in $P$ by the previous $\alpha$-conversion; and $\typd{R}{A}$ does not contain any bound variables since $A$ is of state type. The third $\beta$-reduction follows since the displayed free occurrence of $\typd{y}{B}$ is the only free occurrence in $PRy$, since by previous $\alpha$-conversion it does not appear free in $P$, and since $\typd{R}{A}$ is of state type.

If $A,B$ are both state types, then since the first bound variable is of regular type it does not appear free in $\typd{Q}{B}$ or $\typd{R}{A}$ since these are of state type. Since the last two bound variables $\typd{y}{B}, \typd{z}{A}$ of Cardinal are distinct by definition (cf. Definition~\ref{defn:typedcombo}) by $\alpha$-conversion we can assume that if $\typd{Q}{B}$ is a variable then it is the second bound variable $\typd{y}{B}$. This implies that $\typd{Q}{B}$ is free for $\typd{y}{B}$ in $\lbd{z}{A}{xzy}$. Then we have the following, where the first is a $\beta$-reduction of distance 1:
\begin{align}
(\lbd{x}{A\rightarrow B\rightarrow C}{\lbd{y}{B}{\lbd{z}{A}{xzy}}}) PQR  & \reduces{\beta}{\param}\; (\lbd{x}{A\rightarrow B\rightarrow C}{\lbd{z}{A}{xzQ}}) PR \notag \\
 \reduces{\beta}{\param}\; (\lbd{x}{A\rightarrow B\rightarrow C}{xRQ}) P & \reduces{\beta}{\param}\; PRQ
\end{align}
The second $\beta$-reduction is of distance~1 and follows because of the fact that $\typd{R}{A}$ is free for $\typd{z}{A}$ in $xzQ$ since the term $xzQ$ has no lambda abstracts in it since $\typd{Q}{A}$ is of state type. The third $\beta$-reduction follows since the displayed free occurrence of $\typd{x}{A\rightarrow B\rightarrow C}$ is the only free occurrence in $xRQ$, due to $\typd{Q}{B}$ and $\typd{R}{A}$ being of state type.

For Dardinal $\ctD{c}{A}{B}{C}{}{}$, suppose that $C$ is a regular type and $A,B$ are state types with $\typd{c}{B}$ a constant. Suppose we have the terms $\typd{P}{A\rightarrow B\rightarrow C}$ and $\typd{R}{A}$. Then $\typd{R}{A}$ is free for $\typd{z}{A}$ in $xzc$ since this term has no lambda abstracts. And the first bound variable $\typd{x}{A\rightarrow B\rightarrow C}$ does not appear free in $\typd{R}{A}$ since $A$ is of state type. Then we have the following, where the first $\beta$-reduction is of distance~1:
\begin{equation}
\big( \lbd{x}{A\rightarrow B\rightarrow C}{\lbd{z}{A}{xzc}}\big) P R \reduces{\beta}{\param} (\lbd{x}{A\rightarrow B\rightarrow C}{ x R c})P 
 \reduces{\beta}{\param} PRc
\end{equation}
The second $\beta$-reduction happens since the displayed free occurrence of $\typd{x}{A\rightarrow B\rightarrow C}$ in $xRc$ is the only occurrence since $\typd{R}{A}$ is of state type.

For Starling $\ctS{A}{B}{C}{}{}{}$, suppose that $A,B$ are regular types and $C$ is a state type. Starling's first two bound variables are of regular type, and by $\alpha$-conversion we may assume that they do not appear free in $P,Q$; and they do not appear free in $\typd{R}{C}$ since it is of state type. Since the term $xz(yz)$ contains no lambda abstracts, one has that $\typd{R}{C}$ is free for $\typd{z}{C}$ in $xz(yz)$. Then we have the following, where the first application of $\beta$ is regular of distance~2:
\begin{align}
& (\lbd{x}{C\rightarrow A\rightarrow B}{\lbd{y}{C\rightarrow A}{\lbd{z}{C}{xz(yz)}}})PQR \notag \\ & \reduces{\beta}{\param} \; (\lbd{x}{C\rightarrow A\rightarrow B}{\lbd{y}{C\rightarrow A}{xR(yR)}})PQ \notag \\
& \reduces{\beta}{\param} \; (\lbd{y}{C\rightarrow A}{PR(yR)})Q \reduces{\beta}{\param} \; PR(QR) 
\end{align}
The second $\beta$-reduction follows by disjointness: the second bound variable does not appear free in $P$ by the previous $\alpha$-conversion; and $\typd{R}{C}$ does not contain any bound variables since $C$ is of state type. The third $\beta$-reduction follows since the displayed free occurrence of $\typd{y}{C\rightarrow A}$ is the only free occurrence in $PR(yR)$, since by previous $\alpha$-conversion it does not appear free in $P$, and it does not appear free in $\typd{R}{C}$ since this is of state type $C$.

For Warbler $\ctW{A}{B}{}{}$, suppose $B$ is a regular type, and $A$ is a state type, and $\typd{P}{A\rightarrow A\rightarrow B}$ and $\typd{Q}{A}$ are terms. Since Warbler's first bound variable is of regular type, by $\alpha$-conversion we may assume that it does not appear free in $P$; and it does not appear free in $\typd{Q}{A}$ since this is of state type. Since the term $xyy$ does not contain any lambda abstracts, one has that $\typd{Q}{A}$ is free for $\typd{y}{A}$ in $xyy$. Then we have the following, where the first instance of $\beta$-reduction is of distance~$1$:
\begin{equation}
(\lbd{x}{A\rightarrow A\rightarrow B}{\lbd{y}{A}{xyy}})PQ  \reduces{\beta}{\param} \; (\lbd{x}{A\rightarrow A\rightarrow B}{xQQ})P\reduces{\beta}{\param} \;  PQQ
\end{equation}
The last $\beta$-reduction follows since the displayed instance of $\typd{x}{A\rightarrow A\rightarrow B}$ is the only free instance in $xQQ$ since $\typd{Q}{A}$ is of state type. 

For Bluebird $\ctB{A}{B}{C}{}{}{}$, suppose that $B,C$ are regular types, and $A$ is a state type. Since Bluebird's first two bound variables are of regular type, by $\alpha$-conversion we may assume that they do not appear free in $P,Q$; and they do not appear free in $\typd{R}{A}$ since it is of state type. Further, $\typd{R}{A}$ is free for $\typd{z}{A}$ in $x(yz)$ since the term $x(yz)$ contains no lambda abstracts. Then we have the following, where the first instance of $\beta$ is regular of distance~2:
\begin{align}
 (\lbd{x}{B\rightarrow C}{\lbd{y}{A\rightarrow B}{\lbd{z}{A}{x(yz)}}})PQR & \reduces{\beta}{\param} \; (\lbd{x}{B\rightarrow C}{\lbd{y}{A\rightarrow B}{x(yR)}})PQ \notag \\
\reduces{\beta}{\param} \; (\lbd{y}{A\rightarrow B}{P(yR)})Q & \reduces{\beta}{\param} \;  P(QR)
\end{align}
The second $\beta$-reduction follows by disjointness: the second bound variable does not appear free in $P$ by the previous $\alpha$-conversion; and $\typd{R}{A}$ does not contain any bound variables since $A$ is of state type. The third $\beta$-reduction follows since the displayed free occurrence of $\typd{y}{B}$ is the only free occurrence in $P(yR)$, since by previous $\alpha$-conversion it does not appear free in $P$, and it does not appear free in $\typd{R}{A}$ since this is of state type $A$.
\end{proof}

The following proposition is more elementary:
\begin{prop}\label{prop:combinatoryeffecteasy}
Suppose that the combinators on the below left are terms of $\mltup$. For each item, suppose that the terms $P,Q$ are terms of $\mltup$ of the appropriate regular type to make the below applications well-formed. Further, suppose that the \emph{only} free variables of $P,Q$ are themselves of regular type. Then one has the $\beta$ reductions to the terms on the below right, and indeed these are $\beta_0$-reductions:
\begin{align*}
& \ctK{A}{B}{P}\reduces{\beta}{\param} \lbd{y}{B}{P} & \hspace{-10mm} \ctC{A}{B}{C}{P}{}{}\reduces{\beta}{\param} \lbd{y}{B}{\lbd{z}{A}{Pzy}}  & \hspace{5mm} \ctD{c}{A}{B}{C}{P}{}\reduces{\beta}{\param} \lbd{z}{A}{Pzc}  \\ 
& \ctS{A}{B}{C}{P}{Q}{}\transreduces{\beta}{\param} \lbd{z}{C}{Pz(Qz)} & \ctW{A}{B}{P}{}\reduces{\beta}{\param} \lbd{y}{A}{Pyy} & \hspace{5mm} \ctB{A}{B}{C}{P}{Q}{}\transreduces{\beta}{\param} \lbd{z}{A}{P(Qz)} 
\end{align*}
\end{prop}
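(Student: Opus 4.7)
The plan is to handle each of the six items by a direct cascade of $\beta_0$-reductions, verifying at each step only condition~(\ref{defn:betareduction:1}) of Definition~\ref{defn:betareduction}, since at distance zero conditions (\ref{defn:betareduction:3}) and (\ref{defn:betareduction:2}) are vacuous. Thus the only obligation is that the term being substituted is free for the relevant variable, i.e., no free variable of the substituted term is captured by a lambda inside the body.

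For a given combinator with arguments $P$ (and possibly $Q$), the bound variables of the combinator fall into two groups. Any bound variable of regular type can be renamed by $\alpha$-conversion to a fresh variable that is not free in $P$ or in $Q$, since regular types always have $\omega$-many variables available in $\mltup$. Any bound variable of state type cannot in general be $\alpha$-renamed when $\upsilon$ is small, but the hypothesis that \emph{the only free variables of $P$ and $Q$ are of regular type} guarantees that any state-type bound variable of the combinator is automatically not free in $P$ or in $Q$. So in both groups the ``free for'' condition can be secured cheaply.

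First I would dispatch Kestral, Cardinal, Dardinal, and Warbler, each of which is a single $\beta_0$-step after the preliminary $\alpha$-conversion: contract the outermost application, substituting $P$ for the first bound variable. In each of these the remaining lambdas in the body have already been made disjoint from $P$ by the two observations above, so the step is valid as a $\beta_0$-reduction. Then for Starling and Bluebird I would perform two $\beta_0$-steps in succession, first substituting $P$ for the first bound variable, then $Q$ for the second; the claimed reduction is a two-step $\transreduces{\beta_0}{\upsilon}$ rather than a single step.

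The main (mild) obstacle lies in the two-step Starling and Bluebird cases: after the first reduction the intermediate term contains $P$ sitting inside the remaining lambda, so when one substitutes $Q$ one needs the remaining bound variable(s) to be disjoint from the free variables of $Q$. Here the hypothesis plays its decisive role: although $P$ could in principle have any free variables (of regular type), only $Q$'s free variables matter for this second substitution, and once again the state-type bound variables are safe by hypothesis while the regular-type ones were renamed at the outset. Once this bookkeeping is laid out uniformly, each line of the proposition reduces to routine verification of the ``free for'' condition, which is what certifies that the reductions are genuinely $\beta_0$ and not merely $\beta$ of higher distance.
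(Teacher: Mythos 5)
Your proposal is correct and takes essentially the same approach as the paper: $\alpha$-convert the regular-type bound variables away from the free variables of $P$ (and $Q$), observe that the state-type bound variables are automatically disjoint from those free variables by the hypothesis that $P,Q$ have only regular-type free variables, and then contract by disjointness. The paper only writes out the Cardinal case and declares the rest similar, whereas you additionally flag the two-step bookkeeping for Starling and Bluebird, but the underlying argument is identical.
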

\begin{proof}
We give the argument for Cardinal, since the argument for the other combinators is similar. For the second and third bound variables of Cardinal, if they are of regular type then we may use $\alpha$-conversion to convert them to variables which do not occur free in $P$; while if they are of state type then by hypothesis they do not occur free in $P$. Hence, after this $\alpha$-conversion, the second and third bound variables of the Cardinal term do not occur free in $P$, and so we can use disjointness to $\beta$-reduce and indeed $\beta_0$-reduce.
\end{proof}

Lastly, for later (cf. Remark~\ref{rmk:cardinal2dardingal}), we need to take note of the following reduction of Cardinal to Dardinal:
\begin{prop}\label{prop:thefirstcardinaltodardinal}
Suppose that the below displayed Cardinal is a term of $\mltup$. Suppose that the below terms $P,c$ are terms of $\mltup$ of the appropriate type to make the applications well-formed. Then one has the following regular reduction:
\begin{equation*}
\ctC{A}{B}{C}{P}{c}{}\reduces{\beta}{\param}  \ctD{c}{A}{B}{C}{P}
\end{equation*}
\end{prop}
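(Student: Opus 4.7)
The plan is to obtain the conclusion in a single step of distanced $\beta$-reduction of distance~$1$, using the fact that $c$ is a constant to trivially satisfy the side conditions of Definition~\ref{defn:betareduction}. Unfolding the notation, the left-hand side is
\[
\ctC{A}{B}{C}{P}{c}{} \;\equiv\; \big(\lbd{x}{A\rightarrow B\rightarrow C}{\lbd{y}{B}{\lbd{z}{A}{xzy}}}\big) P\, c,
\]
so I would cast this in the schema of Definition~\ref{defn:betareduction} by taking $\vec{x} \equiv x$ of type $A\rightarrow B\rightarrow C$, $v \equiv y$ of type $B$, $L \equiv \lbd{z}{A}{xzy}$, $\vec{M} \equiv P$, and $N \equiv c$. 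The target of one $\beta$-step is then $\big(\lbd{x}{A\rightarrow B\rightarrow C}{L[y:=c]}\big)P = \big(\lbd{x}{A\rightarrow B\rightarrow C}{\lbd{z}{A}{xzc}}\big)P$, which is exactly $\ctD{c}{A}{B}{C}{P}{}$.

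The second step is to verify the three conditions of Definition~\ref{defn:betareduction}. Since $c$ is a constant it has no free variables at all, so (\ref{defn:betareduction:1}) ``$c$ free for $y$ in $\lbd{z}{A}{xzy}$'' holds vacuously, and (\ref{defn:betareduction:3}) ``$x$ not free in $c$'' also holds vacuously. For (\ref{defn:betareduction:2}), the variables $x$ and $y$ are distinct (their types $A\rightarrow B\rightarrow C$ and $B$ differ, as the former is functional while the latter, being state, is atomic). Finally, regularity (condition (\ref{defn:betareduction:5})) concerns only state-type variables within $\vec{x}$; here $\vec{x}$ consists solely of $x$, which has regular type, so the regularity condition is satisfied vacuously.

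The only step that requires any thought is keeping straight which variable plays which role in the schema, since the reduction effectively ``absorbs'' the argument $c$ into the body of Cardinal while leaving the outer abstraction over $x$ untouched; this is exactly what distanced $\beta$-reduction of positive distance is for, and is precisely why ordinary $\beta_{0}$-reduction would not suffice at this point (one cannot reduce under the $\lambda x$). I do not anticipate any real obstacle, as this is perhaps the cleanest illustration in the paper of a genuine distance-$1$ $\beta$-reduction that does no work beyond a single substitution.
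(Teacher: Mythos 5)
Your proposal is correct and matches the paper's proof essentially verbatim: both perform the single distance-$1$ $\beta$-reduction substituting $c$ for $y$ under the outer $\lambda x$, and both verify the side conditions by noting that $c$ is closed (so ``free for'' and condition (\ref{defn:betareduction:3}) hold vacuously) and that $x,y$ are distinct because their types differ. Your additional observation that distance-$1$ reductions are automatically regular is also consistent with the paper's remarks.
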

\begin{proof}
We simply use a $\beta$-reduction of distance 1:
\begin{equation*}
\big(\lbd{x}{A\rightarrow B\rightarrow C}{\lbd{y}{B}{\lbd{z}{A}{xzy}}}\big) Pc \reduces{\beta}{\param} (\lbd{x}{A\rightarrow B\rightarrow C}{\lbd{z}{A}{xzc}})P
\end{equation*}
For, one has that $\typd{c}{B}$ is free for $\typd{y}{B}$ in $\lbd{z}{A}{xzy}$ since $\typd{c}{B}$ is closed. And no variables appear free in a constant.
\end{proof}

\subsection{The BCDKW-combinatorial terms}\label{sec:combo:bcdkw}

\begin{defi}
Let $\mathcal{X}$ a collection of typed combinators in a signature. 

The \emph{$\mathcal{X}$-combinatorial terms of} $\mltup$ in that signature are the smallest collection of terms in $\mltup$ which is closed under application and which contains the constants of the signature, the variables specified by the parameter $\param$, and all instances of combinators $\mathsf{X}_{A_1, \ldots, A_n}$ in $\mathcal{X}$ which are terms of $\mltup$.

The \emph{expanded $\mathcal{X}$-combinatorial terms of} $\mltup$ in that signature is the collection of terms $N$ of $\mltup$ such that there is a $\mathcal{X}$-combinatorial term $M$ of $\mltup$ with the same free variables as $N$ satisfying $M\transreduces{\beta}{\param} N$.
\end{defi}

We will be mostly concerned in what follows with $\mathcal{X}$ being $\{\mathsf{B}, \mathsf{C}, \mathsf{D}, \mathsf{K}, \mathsf{W}\}$, which we abbreviate as $\mathsf{BCDKW}$. In the untyped setting, one can take $\mathsf{BCKW}$ or $\mathsf{SK}$ as primitive (cf. \cite[Lemma 1.3.9 p. 17]{Bimbo2011-me}), and indeed historically Sch\"onfinkel did the latter and Curry initially did the former (\cite[\S{2.2}]{Seldin2009-kc}). We opt to use a $\mathsf{BCKW}$ basis rather than a $\mathsf{SK}$ basis, although see Question~\ref{qu:whatabouts} and Remark~\ref{rmk:whatabouts} below.

We first note that Starling and identity are $\mathsf{BCDKW}$ combinatorial.

\begin{prop}[Recovery of Starling]\label{prop:wegotstarlingback}
Suppose that $A,B,C$ are types and $A,B$ are regular types. Then $\ctS{A}{B}{C}{}{}{}$ is an expanded $\mathsf{BCDKW}$-combinatorial term of $\mltup$.
\end{prop}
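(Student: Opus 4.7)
The plan is to exhibit a closed $\mathsf{BCDKW}$-combinatorial term $M$ with $M \transreduces{\beta}{\upsilon} \ctS{A}{B}{C}{}{}{}$, by lifting the classical untyped combinatory identity $\mathsf{S} = \mathsf{B}(\mathsf{B}(\mathsf{B}\mathsf{W})\mathsf{C})(\mathsf{B}\mathsf{B})$ to $\mltup$. Concretely, I would take
\[
M \;\equiv\; \mathsf{B}^{(1)}\bigl(\mathsf{B}^{(2)}(\mathsf{B}^{(3)}\,\ctW{C}{B}{}{})\,\mathsf{C}^{(0)}\bigr)(\mathsf{B}^{(4)}\,\mathsf{B}^{(5)}),
\]
where the five Bluebirds and the single Cardinal carry type annotations to be determined by propagating constraints outward-in from $\ctW{C}{B}{}{}$ (which is itself well-defined because $B$ is regular).

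The unification has a unique solution. The Cardinal is $\ctC{C}{C\to A}{C\to B}{}{}{}$, whose output $C\to B$ is regular and whose first two type parameters $C$ and $C\to A$ are distinct (an atomic type versus a function type), so the side-conditions of Definition~\ref{defn:typedcombo} are met. The innermost pair is $\mathsf{B}^{(4)} = \ctB{C}{A\to B}{(C\to A)\to C\to B}{}{}{}$ and $\mathsf{B}^{(5)} = \ctB{C}{A}{B}{}{}{}$, and the three outer Bluebirds are determined analogously. In every case the $B'$ and $C'$ type parameters of each Bluebird land in regular types, because $A$ and $B$ are regular by hypothesis and regularity propagates along codomain chains. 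A direct computation then confirms that $M$ has the same type as $\ctS{A}{B}{C}{}{}{}$, namely $(C\to A\to B)\to(C\to A)\to C\to B$.

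With the types fixed, the reduction $M \transreduces{\beta}{\upsilon} \ctS{A}{B}{C}{}{}{}$ follows the classical untyped computation $\mathsf{B}(\mathsf{B}(\mathsf{B}\mathsf{W})\mathsf{C})(\mathsf{B}\mathsf{B}) \transreduces{\beta} \lambda xyz.\,xz(yz)$. Most steps are ordinary $\beta_0$-reductions obtained by expanding the combinators to their $\lambda$-definitions, several of which can in fact be compressed using Proposition~\ref{prop:combinatoryeffecteasy}. The one subtle point appears when unpacking $\mathsf{B}^{(5)}$: the variable $b$ destined to become Starling's innermost state-typed bound variable must be substituted past $\mathsf{B}^{(5)}$'s internal $\lbd{z_5}{C}{\cdot}$, and when $\upsilon(C) = 1$ no $\alpha$-renaming of $z_5$ is available. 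This is precisely the situation of Example~\ref{ex:thelbock}. The remedy is exactly what distanced $\beta$-reduction was designed for: a single $\beta$-reduction of distance two substitutes $b$ directly for $z_5$ across the intermediary abstractions over $\typd{x_5}{A\to B}$ and $\typd{y_5}{C\to A}$, whose bound variables have types distinct from $C$, so that all three side-conditions of Definition~\ref{defn:betareduction} are satisfied.

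The main obstacle is the type bookkeeping --- instantiating five Bluebirds with three parameters each and verifying regularity at every step --- together with the care required at the single distance-two $\beta$-step, which is the whole reason distanced $\beta$-reduction is needed here rather than $\beta_0$ alone. No use of the Dardinal~$\mathsf{D}$ is needed for this proof.
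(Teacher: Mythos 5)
Your proposal is correct and follows essentially the same route as the paper: the same lifting of the classical identity $\mathsf{S}=\mathsf{B}(\mathsf{B}(\mathsf{B}\mathsf{W})\mathsf{C})(\mathsf{B}\mathsf{B})$, the same type instantiations (your $\ctC{C}{C\to A}{C\to B}{}{}{}$, $\ctB{C}{A\to B}{(C\to A)\to C\to B}{}{}{}$, $\ctB{C}{A}{B}{}{}{}$, and $\ctW{C}{B}{}{}$ match the paper's table exactly), the same verification that regularity of $A,B$ discharges all side-conditions, and the same observation that the state-typed argument of type $C$ must be pushed past intermediary abstractions by a distanced $\beta$-reduction (the paper delegates these steps to Theorem~\ref{prop:combinatoryeffect}, where the distance-2 regular reductions for Cardinal and Bluebird with state-typed third argument occur in more than just the one spot you single out, but this is a matter of bookkeeping rather than substance).
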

\begin{proof}
One can check that:
\begin{equation}\label{eqn:wegotstarlingback}
\ctB{A_1}{B_1}{C_1}{}{}{} (\ctB{A_2}{B_2}{C_2}{}{}{} (\ctB{A_3}{B_3}{C_3}{}{}{} \ctW{A_4}{B_4}{}{}) \ctC{A_5}{B_5}{C_5}{}{}{}) (\ctB{A_6}{B_6}{C_6}{}{}{} \ctB{A_7}{B_7}{C_7}{}{}{}) \transreduces{\beta}{\param} \ctS{A}{B}{C}{}{}{}
\end{equation}
where the types of $A_i, B_i, C_i$ are calculated in terms of $A,B,C$ as follows:
\begin{center}
\begin{tabular}{|l|l|l|l|}
\hline
$i$ & $A_i$ & $B_i$ & $C_i$  \\ \hline
 $1$ & $C\rightarrow A\rightarrow B$ & $C\rightarrow (C\rightarrow A)\rightarrow C\rightarrow B$ & $(C\rightarrow A)\rightarrow C\rightarrow B$  \\ \hline
 $2$ & $C\rightarrow (C\rightarrow A)\rightarrow C\rightarrow B$ & $(C\rightarrow A)\rightarrow C\rightarrow C\rightarrow B$ &  $(C\rightarrow A)\rightarrow C\rightarrow B$  \\ \hline
 $3$ & $C\rightarrow A$ & $C\rightarrow C\rightarrow B$ & $C\rightarrow B$  \\ \hline
 $4$ & $C$ & $B$ & n/a  \\ \hline
 $5$ & $C$ & $C\rightarrow A$ & $C\rightarrow B$  \\ \hline
 $6$ & $C$ & $A\rightarrow B$ & $(C\rightarrow A)\rightarrow C\rightarrow B$  \\ \hline
 $7$ & $C$ & $A$ & $B$  \\ \hline
\end{tabular}
\end{center}
After checking that these types are in $\mltup$ and that the applications in (\ref{eqn:wegotstarlingback}) are well-formed, the proof of (\ref{eqn:wegotstarlingback}) follows the usual untyped reduction (cf. \cite[p. 155]{Curry1958-yi}), using Theorem~\ref{prop:combinatoryeffect} and Proposition~\ref{prop:combinatoryeffecteasy}.
\begin{proofdetail}
In more detail, note that since $A,B,C$ are types and $A,B$ are regular types, we have that:
\begin{itemize}[leftmargin=*]
    \item $\ctB{A_i}{B_i}{C_i}{}{}{}$ for $i\in \{1,2,3,6,7\}$ is a term of $\mltup$ because $B_i,C_i$ are regular types by inspection of the table.
    \item $\ctW{A_4}{B_4}{}{}$ is a term of $\mltup$ because $B_4$ is a regular type by inspection of the table.
    \item $\ctC{A_5}{B_5}{C_5}{}{}{}$ is a term of $\mltup$ because $C_5$ is a regular type and because $A_5, B_5$ are either distinct types or are both regular types. In particular, when $C$ is a state type $A_5$ is a state type and $B_5$ is a regular type, while when $C$ is a regular type both $A_5, B_5$ are regular types.
\end{itemize}

Before proving the claim, we first verify that 
\begin{equation}\label{eqn:s:calc:whatis}
\ctB{A_1}{B_1}{C_1}{}{}{} (\ctB{A_2}{B_2}{C_2}{}{}{} (\ctB{A_3}{B_3}{C_3}{}{}{} \ctW{A_4}{B_4}{}{}) \ctC{A_5}{B_5}{C_5}{}{}{}) (\ctB{A_6}{B_6}{C_6}{}{}{} \ctB{A_7}{B_7}{C_7}{}{}{})
\end{equation}
is a term of $\mltup$ of the same type as $\ctS{A}{B}{C}{}{}{}$.
\begin{itemize}[leftmargin=*]
\item The Bluebird $\ctB{A_6}{B_6}{C_6}{}{}{}$ takes inputs of type $B_6\rightarrow C_6$, which by inspection of the table is
\begin{equation}\label{eqn:s:masdfasd}
(A\rightarrow B)\rightarrow (C\rightarrow A)\rightarrow C\rightarrow B 
\end{equation}
The Bluebird $\ctB{A_7}{B_7}{C_7}{}{}{}$ has type $(B_7\rightarrow C_7)\rightarrow (A_7\rightarrow B_7)\rightarrow A_7\rightarrow C_7$, which by inspection of the table is 
\begin{equation*}
(A\rightarrow B)\rightarrow (C\rightarrow A)\rightarrow C\rightarrow B
\end{equation*}
Since this agrees with (\ref{eqn:s:masdfasd}), we have that the application $\ctB{A_6}{B_6}{C_6}{}{}{} \ctB{A_7}{B_7}{C_7}{}{}{}$ is a term of $\mltup$. Its type is the output type of the first Bluebird  $\ctB{A_6}{B_6}{C_6}{}{}{}$, which is $(A_6\rightarrow B_6)\rightarrow A_6\rightarrow C_6$, which by inspection of the table is
\begin{equation}\label{eqn:s:calc:calc:calc}
(C\rightarrow A\rightarrow B)\rightarrow C\rightarrow (C\rightarrow A)\rightarrow C\rightarrow B
\end{equation}
\item The Bluebird $\ctB{A_3}{B_3}{C_3}{}{}{}$ has input type $B_3\rightarrow C_3$, which by inspection of the table is 
\begin{equation}\label{eqn:s:calc:sodifficult}
\big(C\rightarrow C\rightarrow B\big)\rightarrow C\rightarrow B
\end{equation}
The Warbler $\ctW{A_4}{B_4}{}{}$ has type $(A_4\rightarrow A_4\rightarrow B_4)\rightarrow A_4\rightarrow B_4$, which by inspection of the table is 
\begin{equation*}
(C\rightarrow C\rightarrow B)\rightarrow C\rightarrow B
\end{equation*}
Since this agrees with the earlier result in (\ref{eqn:s:calc:sodifficult}), the application $\ctB{A_3}{B_3}{C_3}{}{}{} \ctW{A_4}{B_4}{}{}$ is a term of $\mltup$. Its type is the output type of $\ctB{A_3}{B_3}{C_3}{}{}{}$, which is $(A_3\rightarrow B_3)\rightarrow A_3\rightarrow C_3$. By inspection of the table, this is
\begin{equation}\label{eqn:s:calc:bw}
((C\rightarrow A)\rightarrow C\rightarrow C\rightarrow B)\rightarrow (C\rightarrow A)\rightarrow C\rightarrow B
\end{equation}
\item The Bluebird $\ctB{A_2}{B_2}{C_2}{}{}{}$ has input type $B_2\rightarrow C_2$, which by inspection of the table is 
\begin{equation*}
\big((C\rightarrow A)\rightarrow C\rightarrow C\rightarrow B \big)\rightarrow (C\rightarrow A)\rightarrow C\rightarrow B
\end{equation*}
Since this agrees with (\ref{eqn:s:calc:bw}), we have that the application $\ctB{A_2}{B_2}{C_2}{}{}{} (\ctB{A_3}{B_3}{C_3}{}{}{} \ctW{A_4}{B_4}{}{})$ is a term of $\mltup$. Its type is the output type of $\ctB{A_2}{B_2}{C_2}{}{}{}$, which is $(A_2\rightarrow B_2)\rightarrow A_2\rightarrow C_2$.
\item The input type of $\ctB{A_2}{B_2}{C_2}{}{}{} (\ctB{A_3}{B_3}{C_3}{}{}{} \ctW{A_4}{B_4}{}{})$ is then $A_2\rightarrow B_2$, which by inspection of the table is
\begin{equation}\label{eqn:s:inputb(bw)}
\big( C\rightarrow (C\rightarrow A)\rightarrow C\rightarrow B\big) \rightarrow (C\rightarrow A)\rightarrow C\rightarrow C\rightarrow B
\end{equation}
\item The type of $\ctC{A_5}{B_5}{C_5}{}{}{}$ is $(A_5\rightarrow B_5\rightarrow C_5)\rightarrow B_5\rightarrow A_5\rightarrow C_5$, which by inspection of the table is
\begin{equation*}
(C\rightarrow (C\rightarrow A)\rightarrow C\rightarrow B)\rightarrow (C\rightarrow A)\rightarrow C\rightarrow C\rightarrow B
\end{equation*}
Since this agrees with~(\ref{eqn:s:inputb(bw)}), the application $\ctB{A_2}{B_2}{C_2}{}{}{} (\ctB{A_3}{B_3}{C_3}{}{}{} \ctW{A_4}{B_4}{}{}) \ctC{A_5}{B_5}{C_5}{}{}{}$ is a term of $\mltup$. Its type the output type of $\ctB{A_2}{B_2}{C_2}{}{}{} (\ctB{A_3}{B_3}{C_3}{}{}{} \ctW{A_4}{B_4}{}{})$, which is $A_2\rightarrow C_2$. By inspection of the table this is
\begin{equation}\label{eqn:s:manymany}
\big(C\rightarrow (C\rightarrow A)\rightarrow C\rightarrow B \big)\rightarrow (C\rightarrow A)\rightarrow C\rightarrow B
\end{equation}
\item The input type of $\ctB{A_1}{B_1}{C_1}{}{}{}$ is $B_1\rightarrow C_1$, which by inspection of the table is:
\begin{equation*}
\big(C\rightarrow (C\rightarrow A)\rightarrow C\rightarrow B \big)\rightarrow (C\rightarrow A)\rightarrow C\rightarrow B
\end{equation*}
Since this agrees with (\ref{eqn:s:manymany}), we have that the application
\begin{equation*}
\ctB{A_1}{B_1}{C_1}{}{}{} \big(\ctB{A_2}{B_2}{C_2}{}{}{} (\ctB{A_3}{B_3}{C_3}{}{}{} \ctW{A_4}{B_4}{}{}) \ctC{A_5}{B_5}{C_5}{}{}{}\big)
\end{equation*}
is a term of $\mltup$. Its type is the output type of $\ctB{A_1}{B_1}{C_1}{}{}{}$, and hence we have
\begin{equation}\label{eqn:s:calc:whatispre:type}
\ctB{A_1}{B_1}{C_1}{}{}{} \big(\ctB{A_2}{B_2}{C_2}{}{}{} (\ctB{A_3}{B_3}{C_3}{}{}{} \ctW{A_4}{B_4}{}{}) \ctC{A_5}{B_5}{C_5}{}{}{}\big) : (A_1\rightarrow B_1)\rightarrow A_1\rightarrow C_1
\end{equation}
Hence its \emph{input type} is $A_1\rightarrow B_1$, which by inspection of the table is:
\begin{equation*}
\big( C\rightarrow A\rightarrow B\big) \rightarrow C\rightarrow (C\rightarrow A)\rightarrow C\rightarrow B
\end{equation*}
Since this agrees with (\ref{eqn:s:calc:calc:calc}), we have that the application in (\ref{eqn:s:calc:whatis}) is a term of $\mltup$. From (\ref{eqn:s:calc:whatispre:type}) we have that the term in  (\ref{eqn:s:calc:whatis}) has type $A_1\rightarrow C_1$, which by inspection of the table is:
\begin{equation*}
\big( C\rightarrow A\rightarrow B\big) \rightarrow (C\rightarrow A)\rightarrow C\rightarrow B
\end{equation*}
which is exactly the type of $\ctS{A}{B}{C}{}{}{}$.

\end{itemize}

Now we turn to showing the claim. We start by applying Proposition~\ref{prop:combinatoryeffecteasy}, which we can do since the terms are closed:
\begin{align}
& \notag \ctB{A_1}{B_1}{C_1}{}{}{} \big(\ctB{A_2}{B_2}{C_2}{}{}{} (\ctB{A_3}{B_3}{C_3}{}{}{} \ctW{A_4}{B_4}{}{}) \ctC{A_5}{B_5}{C_5}{}{}{}\big) \big(\ctB{A_6}{B_6}{C_6}{}{}{} \ctB{A_7}{B_7}{C_7}{}{}{}\big) \\ 
\label{eqn:s:step1}\transreduces{\beta}{\param}\;\; & \lbd{z_1}{A_1}{\ctB{A_2}{B_2}{C_2}{}{}{} (\ctB{A_3}{B_3}{C_3}{}{}{} \ctW{A_4}{B_4}{}{}) \ctC{A_5}{B_5}{C_5}{}{}{} \big(\ctB{A_6}{B_6}{C_6}{}{}{} \ctB{A_7}{B_7}{C_7}{}{}{} \;z_1\big) }
\end{align}
By inspection of the table, the bound variable $\typd{z_1}{A_1}$ is of the same type as the first bound variable of $\ctS{A}{B}{C}{}{}{}$ and this is of regular type.

Then, under this bound variable, we apply Theorem~\ref{prop:combinatoryeffect} and then Proposition~\ref{prop:combinatoryeffecteasy} which we can do since $\typd{z_1}{A_1}$ is of regular type:
\begin{align}
 & \ctB{A_2}{B_2}{C_2}{}{}{} (\ctB{A_3}{B_3}{C_3}{}{}{} \ctW{A_4}{B_4}{}{}) \ctC{A_5}{B_5}{C_5}{}{}{} \big(\ctB{A_6}{B_6}{C_6}{}{}{} \ctB{A_7}{B_7}{C_7}{}{}{} \;z_1\big) \notag  \\
\transreduces{\beta}{\param} \;\;&  \ctB{A_3}{B_3}{C_3}{}{}{} \ctW{A_4}{B_4}{}{} \bigg( \ctC{A_5}{B_5}{C_5}{}{}{} \big(\ctB{A_6}{B_6}{C_6}{}{}{} \ctB{A_7}{B_7}{C_7}{}{}{} \;z_1\big)\bigg) \notag\\
\transreduces{\beta}{\param}\;\;\; & \lbd{z_3}{A_3}{\ctW{A_4}{B_4}{}{} \bigg( \ctC{A_5}{B_5}{C_5}{}{}{} \big(\ctB{A_6}{B_6}{C_6}{}{}{} \ctB{A_7}{B_7}{C_7}{}{}{} \;z_1\big) \;z_3 \bigg)}\notag
\end{align}
By inspection of the table, the bound variable $\typd{z_3}{A_3}$ is of the same type as the second bound variable of $\ctS{A}{B}{C}{}{}{}$ and this is of regular type.

Under this bound variable we apply Proposition~\ref{prop:combinatoryeffecteasy}, which we can do since $\typd{z_1}{A_1}$ and $\typd{z_3}{A_3}$ are of regular type:
\begin{align*}
& \ctW{A_4}{B_4}{}{} \bigg( \ctC{A_5}{B_5}{C_5}{}{}{} \big(\ctB{A_6}{B_6}{C_6}{}{}{} \ctB{A_7}{B_7}{C_7}{}{}{} \;z_1\big) \;z_3 \bigg) \\
 \reduces{\beta}{\param} \;\;&  \lbd{z_4}{A_4}{\ctC{A_5}{B_5}{C_5}{}{}{} \big(\ctB{A_6}{B_6}{C_6}{}{}{} \ctB{A_7}{B_7}{C_7}{}{}{} \;z_1\big) \; z_3 \; z_4 \; z_4} 
 \end{align*}
 By inspection of the table, the bound variable $\typd{z_4}{A_4}$ is of the same type as the third bound variable of $\ctS{A}{B}{C}{}{}{}$.

Finally, under this bound variable we repeatedly apply Theorem~\ref{prop:combinatoryeffect}:
\begin{align*}
 &  \ctC{A_5}{B_5}{C_5}{}{}{} \big(\ctB{A_6}{B_6}{C_6}{}{}{} \ctB{A_7}{B_7}{C_7}{}{}{} \;z_1\big) \; z_3 \; z_4 \; z_4 \\
 \transreduces{\beta}{\param} \;\;& \ctB{A_6}{B_6}{C_6}{}{}{} \ctB{A_7}{B_7}{C_7}{}{}{} \;z_1 \; z_4 \; z_3 \; z_4\\
\transreduces{\beta}{\param}\;\;& \ctB{A_7}{B_7}{C_7}{}{}{} \;(z_1 \; z_4)  \; z_3 \; z_4\\
\transreduces{\beta}{\param}\; \; &  z_1 \; z_4  \; (z_3 \; z_4)
\end{align*}
\end{proofdetail}
\end{proof}

\begin{prop}[Recovery of identity]\label{prop:wegotidentitybirdback}
If $B$ is a regular type then $\ctI{B}{}$ is an expanded $\mathsf{BCDKW}$-combinatorial term of $\mltup$. In particular, there is a $\mathsf{BCDKW}$-combinatorial term $N$ witnessing this which satisfies $N\transreduces{\beta}{\param}  \ctI{B}{}$.
\end{prop}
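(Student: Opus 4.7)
The plan is to use the classical $\mathsf{BCKW}$-encoding of the identity, namely $\mathsf{I} = \mathsf{W}\mathsf{K}$, with types chosen so everything is well-formed in $\mltup$. Concretely, I would take
\begin{equation*}
N \;:=\; \ctW{B}{B}{\ctK{B}{B}{}{}}{}\,.
\end{equation*}

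First I would verify that $N$ is a $\mathsf{BCDKW}$-combinatorial term of $\mltup$ and has the right type. The Warbler $\ctW{B}{B}{}{}$ has type $(B\rightarrow B\rightarrow B)\rightarrow B\rightarrow B$ and is a term of $\mltup$ because the side condition on Warbler (the output type being regular) is met by hypothesis; its bound variables $\typd{x}{B\rightarrow B\rightarrow B}$ and $\typd{y}{B}$ are both of regular type and hence available. The Kestral $\ctK{B}{B}{}{}$ has type $B\rightarrow B\rightarrow B$ and is a term of $\mltup$ because its side condition requires only that the first type parameter be regular; since $\upsilon(B)=\omega$ we may pick distinct bound variables $\typd{x}{B}, \typd{y}{B}$. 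The types then line up so that the application $\ctW{B}{B}{\ctK{B}{B}{}{}}{}$ is a well-formed term of $\mltup$ of type $B\rightarrow B$, which matches the type of $\ctI{B}{}$.

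Next I would perform the reduction. Since $\ctK{B}{B}{}{}$ is closed, its only free variables are vacuously of regular type, so Proposition~\ref{prop:combinatoryeffecteasy} applies to Warbler and gives
\begin{equation*}
N \;\reduces{\beta}{\upsilon}\; \lbd{y}{B}{\ctK{B}{B}{y}{y}}\,.
\end{equation*}
Working underneath this binder, Theorem~\ref{prop:combinatoryeffect} for Kestral yields $\ctK{B}{B}{y}{y} \transreduces{\beta}{\upsilon} y$, and by closure of $\reduces{\beta}{\upsilon}$ under $\lambda$-abstraction (our convention on compatible closure) we then obtain
\begin{equation*}
\lbd{y}{B}{\ctK{B}{B}{y}{y}} \;\transreduces{\beta}{\upsilon}\; \lbd{y}{B}{y} \;\equiv\; \ctI{B}{}\,.
\end{equation*}
Chaining these gives $N \transreduces{\beta}{\upsilon} \ctI{B}{}$ with $N$ a $\mathsf{BCDKW}$-combinatorial term; since $N$ is closed and $\ctI{B}{}$ is closed, they trivially share the same (empty) set of free variables, which witnesses the expanded combinatorial status.

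No significant obstacle arises, since the regularity of $B$ is exactly what the side conditions of $\mathsf{W}_{B,B}$ and $\mathsf{K}_{B,B}$ demand, and the requisite $\alpha$-conversions needed to make the inner reduction $\ctK{B}{B}{y}{y}\twoheadrightarrow_\beta y$ go through are already absorbed into Theorem~\ref{prop:combinatoryeffect}.
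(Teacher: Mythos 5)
Your proof is correct, but it takes a different route from the paper. The paper's proof is a one-liner: it takes the classical $\mathsf{SKK}$ encoding, $\ctS{B \rightarrow B}{B}{B}{}{}{} \ctK{B}{B \rightarrow B}{}{} \ctK{B}{B}{}{}$, relying on the Recovery of Starling (Proposition~\ref{prop:wegotstarlingback}) to justify that the Starling occurring there is itself expanded $\mathsf{BCDKW}$-combinatorial. You instead use the $\mathsf{BCKW}$-native encoding $\mathsf{I}=\mathsf{WK}$, i.e. $N=\ctW{B}{B}{\ctK{B}{B}{}{}}{}$. Your type-checking is right: regularity of $B$ discharges the side conditions on both $\ctW{B}{B}{}{}$ and $\ctK{B}{B}{}{}$ (including the distinctness of the Kestral's two bound variables, since $\upsilon(B)=\omega$), and the reduction $N\reduces{\beta}{\upsilon}\lbd{y}{B}{\ctK{B}{B}{y}{y}}\transreduces{\beta}{\upsilon}\lbd{y}{B}{y}$ goes through via Proposition~\ref{prop:combinatoryeffecteasy}, Theorem~\ref{prop:combinatoryeffect}, and closure of the compatible closure under $\lambda$-abstraction. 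What your route buys is that the witness $N$ is \emph{literally} a $\mathsf{BCDKW}$-combinatorial term (an application of two primitive combinators of the basis), whereas the paper's $\mathsf{SKK}$ witness, read strictly, contains a Starling that must first be replaced by its $\mathsf{B}(\mathsf{B}(\mathsf{B}\mathsf{W})\mathsf{C})(\mathsf{B}\mathsf{B})$ expansion to satisfy the ``In particular'' clause; your argument is therefore shorter and avoids invoking Proposition~\ref{prop:wegotstarlingback} altogether. The paper's choice has the minor advantage of matching the parallel development in pure combinatory logic (Proposition~\ref{prop:clrecoveryidentity}), where the same $\mathsf{SKK}$ term is reused, but nothing in the present proposition requires that.
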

\begin{proof}
One can check that $\ctS{B \rightarrow B}{B}{B}{}{}{} \ctK{B}{B \rightarrow B}{}{} \ctK{B}{B}{}{}$ works.
\begin{proofdetail}
In more detail, one has that $\ctS{B \rightarrow B}{B}{B}{}{}{} \ctK{B}{B \rightarrow B}{}{} \ctK{B}{B}{}{}$ is a term of $\mltup$, since $\ctK{B}{B \rightarrow B}{}{}$ has type $B\rightarrow (B\rightarrow B)\rightarrow B$ which is also the type of the first bound variable of $\ctS{B \rightarrow B}{B}{B}{}{}{}$, and since $\ctK{B}{B}{}{}$ has type $B\rightarrow B\rightarrow B$ which is also which is also the type of the second bound variable of $\ctS{B \rightarrow B}{B}{B}{}{}{}$. Then one uses Proposition~\ref{prop:combinatoryeffecteasy} and Theorem~\ref{prop:combinatoryeffect} to obtain: 
\begin{equation*}
\ctS{B \rightarrow B}{B}{B}{}{}{} \ctK{B}{B \rightarrow B}{}{} \ctK{B}{B}{}{} \transreduces{\beta}{\param} \lbd{z}{B}{\ctK{B}{B \rightarrow B}{}{} \, z \; (\ctK{B}{B}{}{} \, z)} \transreduces{\beta}{\param} \lbd{z}{B}{z}
\end{equation*}
\end{proofdetail}
\end{proof}

It is not obvious, from the definitions, whether the following is true:

\begin{qu}[Relation between $\mathsf{BCDKW}$ and $\mathsf{SK}$]\label{qu:whatabouts}
Are the $\mathsf{BCDKW}$ combinators expanded $\mathsf{SK}$ combinatorial in $\mltup$?
\end{qu}

However, we can show that one traditional definition of Cardinal in terms of Starling and Kestrel fails:

\begin{rem}[On recovering Cardinal from Starling and Kestrel]\label{rmk:whatabouts}
To show that $\ctC{A}{A}{B}{}{}{}$ is expanded $\mathsf{SK}$-combinatorial in $\mlt{n}$ for $n>1$ when $A$ is a state type and $B$ is a regular type, one might try to follow the untyped reduction (\cite[pp. 158-159]{Curry1958-yi}), and seek to find types $A_i, B_i, C_i$ of $\mlt{n}$ such that
\begin{equation}\label{eqn:rmk:whatabouts}
\ctS{A_1}{B_1}{C_1}{\big(\ctB{A_2}{B_2}{C_2}{}{}{}\,\ctB{A_3}{B_3}{C_3}{}{}{}\,\ctS{A_4}{B_4}{C_4}{}{}{}\big)}{\big(\ctK{A_5}{B_5}{}{}\ctK{A_6}{B_6}{}{}\big)}{} \transreduces{\beta}{n} \ctC{A}{A}{B}{}{}{}
\end{equation}

But we will argue that this is not possible. For reductio, suppose not, and suppose that we indeed had (\ref{eqn:rmk:whatabouts}). Then the type of the terms in (\ref{eqn:rmk:whatabouts}) is both $C_1\rightarrow B_1$ and $(A\rightarrow A\rightarrow B)\rightarrow A\rightarrow A\rightarrow B$. Hence in particular $C_1$ is a regular type. 

But then consider the following reductions:
\begin{align*}
& \ctS{A_1}{B_1}{C_1}{\big(\ctB{A_2}{B_2}{C_2}{}{}{}\,\ctB{A_3}{B_3}{C_3}{}{}{}\,\ctS{A_4}{B_4}{C_4}{}{}{}\big)}{\big(\ctK{A_5}{B_5}{}{}\ctK{A_6}{B_6}{}{}\big)}{} & \\
\transreduces{\beta}{n} \; & \lbd{x}{C_1}{\big(\ctB{A_2}{B_2}{C_2}{}{}{}\,\ctB{A_3}{B_3}{C_3}{}{}{}\,\ctS{A_4}{B_4}{C_4}{}{}{}\big)x \, (\ctK{A_5}{B_5}{}{}\,\ctK{A_6}{B_6}{}{}\,x)} & \mbox{by Prop.~\ref{prop:combinatoryeffecteasy}}\\
\transreduces{\beta}{n} \; & \lbd{x}{C_1}{\big(\ctB{A_2}{B_2}{C_2}{}{}{}\,\ctB{A_3}{B_3}{C_3}{}{}{}\,\ctS{A_4}{B_4}{C_4}{}{}{}\big)x \; \ctK{A_6}{B_6}{}{}} & \mbox{by Thm.~\ref{prop:combinatoryeffect}} \\
\transreduces{\beta}{n} \; & \lbd{x}{C_1}{\big( \lbd{y}{A_2}{\ctB{A_3}{B_3}{C_3}{}{}{}  (\ctS{A_4}{B_4}{C_4}{}{}{} y)}\big)x \; \ctK{A_6}{B_6}{}{}}& \mbox{by Prop.~\ref{prop:combinatoryeffecteasy}}  \\
\reduces{\beta}{n} \; & \lbd{x}{C_1}{ \ctB{A_3}{B_3}{C_3}{}{}{}  (\ctS{A_4}{B_4}{C_4}{}{}{}\, x)  \; \ctK{A_6}{B_6}{}{}} \\
\transreduces{\beta}{n} \; & \lbd{x}{C_1}{ \lbd{y}{A_3}{\ctS{A_4}{B_4}{C_4}{}{}{}\, x \, (\ctK{A_6}{B_6}{}{}}\, y) }   \;  & \mbox{by Prop.~\ref{prop:combinatoryeffecteasy}}
\end{align*}
Note that $\ctK{A_6}{B_6}{}{}\, y$ being well-formed implies that $y$ has type $A_6$, which is required to be regular (cf. Kestrel in Definition~\ref{defn:typedcombo}), and together with the earlier lambda abstract $\lbd{y}{A_3}{\ldots}$ implies that $A_3, A_6$ are identical and are regular. Then we can continue:
\begin{align*}
 \transreduces{\beta}{n} \;& \lbd{x}{C_1}{ \lbd{y}{A_3}{\lbd{z}{C_4}{ xz (\ctK{A_6}{B_6}{}{}}\, y\, z)}}  & \mbox{by Prop.~\ref{prop:combinatoryeffecteasy}} \\
  \transreduces{\beta}{n} \;& \lbd{x}{C_1}{ \lbd{y}{A_3}{\lbd{z}{C_4}{ xzy}}} & \mbox{by Thm.~\ref{prop:combinatoryeffect}.}
\end{align*}
If this is to be the Cardinal $\ctC{A}{A}{B}{}{}{}$, then one has $C_1$ is $A\rightarrow A\rightarrow B$ and $A_3, C_4$ are $A$. But then $A_3$ is a state type, contradicting our earlier conclusion that $A_3$ is a regular type. 

\end{rem}

In the previous remark, our redutio hypothesis involved Bluebirds, and unlike Cardinal these are indeed recoverable from Starling and Kesterl:

\begin{prop}[Recovering Bluebird from Starling and Kestrel]\label{prop:recoverblubirdinsk}
Suppose that $A,B,C$ are types and $B,C$ are regular types. Then $\ctB{A}{B}{C}{}{}{}$ is 
an expanded $\mathsf{SK}$-combinatorial term of $\mltup$.
\end{prop}
\begin{proof}
Following the untyped reduction (cf. \cite[158]{Curry1958-yi}), one can check that:
\begin{equation}\label{eqn:recoverblubirdinsk}
\ctS{A_1}{B_1}{C_1}{}{}{} \big(\ctK{A_2}{B_2}{}{} \, \ctS{A_3}{B_3}{C_3}{}{}{}  \big) \, \ctK{A_4}{B_4}{}{} \transreduces{\beta}{\param} \ctB{A}{B}{C}{}{}{}
\end{equation}
where the types of $A_i, B_i, C_i$ are calculated in terms of $A,B,C$ as follows:\\[5pt]
\begin{minipage}[b][5\baselineskip][t]{0.95\textwidth}
\centering
\begin{tabular}{|l|l|l|l|}
\hline
$i$ & $A_i$ & $B_i$ & $C_i$  \\ \hline
 $1$ & $A\rightarrow B\rightarrow C$ & $(A\rightarrow B)\rightarrow A\rightarrow C$ & $B\rightarrow C$  \\ \hline
 $2$ & $(A\rightarrow B\rightarrow C)\rightarrow (A\rightarrow B)\rightarrow A\rightarrow C$ & $B\rightarrow C$ &  n/a  \\ \hline 
 $3$ & $B$ & $C$ & $A$  \\ \hline
 $4$ & $B\rightarrow C$ & $A$ & n/a  \\ \hline
\end{tabular}
\end{minipage}
\qedhere

\begin{proofdetail}
In more detail, note that since $A,B,C$ are types and $B,C$ are regular types, we have that:
\begin{itemize}[leftmargin=*]
    \item $\ctS{A_i}{B_i}{C_i}{}{}{}$ for $i=1,3$ is a term of $\mltup$ because $A_i,B_i$ are regular types by inspection of the table.
    \item $\ctK{A_i}{B_i}{}{}$ for $i=2,4$ is a term of $\mltup$ because $A_i$ is a regular type by inspection of the table.
\end{itemize}

Before proving the claim, we first verify that $\ctS{A_1}{B_1}{C_1}{}{}{} \big(\ctK{A_2}{B_2}{}{} \, \ctS{A_3}{B_3}{C_3}{}{}{}  \big) \, \ctK{A_4}{B_4}{}{}$ is a term of $\mltup$ of the same type as $\ctB{A}{B}{C}{}{}{}$:
\begin{itemize}
    \item The Kestrel $\ctK{A_2}{B_2}{}{}$ takes input type $A_2$, which by the table is $(A\rightarrow B\rightarrow C)\rightarrow (A\rightarrow B)\rightarrow A\rightarrow C$. The Starling $\ctS{A_3}{B_3}{C_3}{}{}{}$ has type $(C_3\rightarrow A_3\rightarrow B_3)\rightarrow (C_3\rightarrow A_3)\rightarrow C_3\rightarrow B_3$, which by inspection of the table is also $(A\rightarrow B\rightarrow C)\rightarrow (A\rightarrow B)\rightarrow A\rightarrow C$.
    \item The term $\ctK{A_2}{B_2}{}{} \, \ctS{A_3}{B_3}{C_3}{}{}{}$ has type the same as the output type of $\ctK{A_2}{B_2}{}{}$, which is $B_2\rightarrow A_2$. By inspection of the table this is $(B\rightarrow C)\rightarrow (A\rightarrow B\rightarrow C)\rightarrow (A\rightarrow B)\rightarrow A\rightarrow C$. The input type of $\ctS{A_1}{B_1}{C_1}{}{}{}$ is $C_1\rightarrow A_1\rightarrow B_1$, which by inspection of the table is also $(B\rightarrow C)\rightarrow (A\rightarrow B\rightarrow C)\rightarrow (A\rightarrow B)\rightarrow A\rightarrow C$.
    \item The term $\ctS{A_1}{B_1}{C_1}{}{}{} \big(\ctK{A_2}{B_2}{}{} \, \ctS{A_3}{B_3}{C_3}{}{}{}  \big) $ has type which is the same as the output type of $\ctS{A_1}{B_1}{C_1}{}{}{}$, namely $(C_1\rightarrow A_1)\rightarrow C_1\rightarrow B_1$. Note that $C_1\rightarrow A_1$ is $(B\rightarrow C)\rightarrow A\rightarrow B\rightarrow C$, and $C_1\rightarrow B_1$ is $(B\rightarrow C)\rightarrow (A\rightarrow B)\rightarrow A\rightarrow C$.
\begin{itemize}
    \item But $\ctK{A_4}{B_4}{}{} $ has type $A_4\rightarrow B_4\rightarrow A_4$, which is $(B\rightarrow C)\rightarrow A\rightarrow B\rightarrow C$, which is the same as our earlier result for $C_1\rightarrow A_1$.
    \item And $\ctB{A}{B}{C}{}{}{}$ has type $(B\rightarrow C)\rightarrow (A\rightarrow B)\rightarrow A\rightarrow C$, which is the same as our earlier result for $C_1\rightarrow B_1$.
\end{itemize}
\end{itemize}

Finally, we argue for (\ref{eqn:recoverblubirdinsk}) as follows:
\begin{align*}
& \ctS{A_1}{B_1}{C_1}{}{}{} \big(\ctK{A_2}{B_2}{}{} \, \ctS{A_3}{B_3}{C_3}{}{}{}  \big) \, \ctK{A_4}{B_4}{}{} & \\
\transreduces{\beta}{\param}\; & \lbd{x}{C_1}{\ctK{A_2}{B_2}{}{} \, \ctS{A_3}{B_3}{C_3}{}{}{} \, x \; ( \ctK{A_4}{B_4}{}{} \, x )} & \mbox{by Prop.~\ref{prop:combinatoryeffecteasy}.} \\
\transreduces{\beta}{\param}\; & \lbd{x}{C_1}{\ctS{A_3}{B_3}{C_3}{}{}{} \; ( \ctK{A_4}{B_4}{}{} \, x )} & \mbox{by Thm.~\ref{prop:combinatoryeffect}.} \\
\transreduces{\beta}{\param}\; & \lbd{x}{C_1}{\lbd{y}{C_3\rightarrow A_3}{\lbd{z}{C_3}{\ctK{A_4}{B_4}{}{} \, xz \, (yz) }}  } &  \\
\transreduces{\beta}{\param}\; & \lbd{x}{C_1}{\lbd{y}{C_3\rightarrow A_3}{\lbd{z}{C_3}{x (yz) }} } & \mbox{by Thm.~\ref{prop:combinatoryeffect}.} \\
\equiv \;\; & \lbd{x}{B\rightarrow C}{\lbd{y}{A\rightarrow B}{\lbd{z}{A}{x (yz) }} } &  
\end{align*}
In this, the fourth line follows from the third by two instances of $\beta_0$-reduction, which we can do since the only free variable of $\ctK{A_4}{B_4}{}{} \, x$ is the displayed $\typd{x}{C_1}$, which is $\typd{x}{B\rightarrow C}$ of regular type.
\end{proofdetail}
\end{proof}

\subsection{Combinatorial characterisation of models}\label{sec:combinatorial-models}

The following is the key result needed to establish Theorem~\ref{corcreatemodel}. As mentioned in \S\ref{sec:intro}, this generalizes work of Andrews for the ordinary simply-typed lambda calculus (\cite[Lemma 1 p. 388]{Andrews1972-xk}). Whereas Andrews was able to use the Starling combinator at the induction step, in $\mltup$ we use Starling as well as Warbler, Cardinal, and Dardinal.\footnote{While we are not pursuing weakness until \S\ref{sec:purecombinatory}, we note in passing that the proof of this theorem contains many examples of non-weak $\beta$-reductions.}

\begin{thm}\label{thm:bigcombo}
Suppose $A,B$ are types and $B$ is a regular type and $\typd{v}{A}$ is a variable of $\mltup$ and $\typd{M}{B}$ is a $\mathsf{BCDKW}$-combinatorial term of $\mltup$. Then $\lbd{v}{A}{M}$ is an expanded $\mathsf{BCDKW}$-combinatorial term of $\mltup$.
\end{thm}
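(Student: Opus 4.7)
The proof is by induction on the structure of $M$ as a $\mathsf{BCDKW}$-combinatorial term, i.e.\ on the number of application nodes. At each stage we carve into cases according to whether $v\in FV(M)$, and in the inductive step further according to which side of the application $v$ appears on. The guiding idea is the classical lambda-abstraction algorithm: $[v]v=\mathsf I$, $[v]P=\mathsf K P$ when $v\notin FV(P)$, $[v](PQ)=\mathsf C([v]P)Q$ when $v\in FV(P)\setminus FV(Q)$, $[v](PQ)=\mathsf BP([v]Q)$ when $v\in FV(Q)\setminus FV(P)$, and $[v](PQ)$ handled by Starling (or $\mathsf{WBC}$) when $v$ is free in both. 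What makes the present setting nontrivial is that each of these constructions must yield an actual term of $\mltup$, and the type restrictions can force us to detour through $\mathsf D$ or $\mathsf W$.

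\textbf{Base cases.} If $M$ is atomic (a variable, constant, or one of the combinator terms), either $M=v$, forcing $A=B$ regular, in which case we take $N$ to be the $\mathsf{BCDKW}$-combinatorial witness for $\ctI{B}{}$ supplied by Proposition~\ref{prop:wegotidentitybirdback}; or $v\notin FV(M)$, in which case $N:=\ctK{B}{A}{M}{}$ is $\mathsf{BCDKW}$-combinatorial and $\beta$-reduces in a single step (of distance $1$) to $\lbd{v}{A}{M}$ after choosing Kestrel's second bound variable to be $v$ itself (the distinctness and ``free for'' conditions are trivial because $v\notin FV(M)$ and $v$ differs in type from Kestrel's first bound variable since $B$ is regular).

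\textbf{Inductive step $M=PQ$.} Here $\typd{P}{A'\to B}$, $\typd{Q}{A'}$, and both $P$ and $Q$ are $\mathsf{BCDKW}$-combinatorial. Since $B$ is regular, so is $A'\to B$, and hence the IH applies to $P$; the IH applies to $Q$ only when $A'$ is regular, but when $A'$ is a state type Proposition~\ref{prop:state-free} forces $Q$ to be a variable or constant, which is all we need. If $v\notin FV(PQ)$, apply Kestrel as above. If $v\in FV(P)\setminus FV(Q)$, let $P^*$ be the combinatorial witness for $\lbd{v}{A}{P}$ given by IH and take $N:=\ctC{A}{A'}{B}{P^*}{Q}{}$ when this Cardinal is a term of $\mltup$; when Cardinal is unavailable (i.e.\ $A=A'$ state with $\upsilon(A)=1$), the variable $v$ is the only variable of type $A$, so $Q$ must be a constant $c$, and we take $N:=\ctD{c}{A}{A'}{B}{P^*}{}$. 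If $v\in FV(Q)\setminus FV(P)$, use Bluebird $N:=\ctB{A}{A'}{B}{P}{Q^*}{}$ when $A'$ is regular; when $A'$ is a state type, necessarily $Q=v$ and $A=A'$, and $\ctW{A}{B}{(\ctK{A\to B}{A}{P}{})}{}$ reduces to $\lbd{v}{A}{Pv}$ as required. If $v\in FV(P)\cap FV(Q)$, use the $\mathsf{BCDKW}$-combinatorial recovery of $\ctS{A'}{B}{A}{}{}{}$ (Proposition~\ref{prop:wegotstarlingback}) applied to $P^*$ and $Q^*$ when $A'$ is regular; when $A'$ is a state type, again $Q=v$ and $A=A'$, and we take $N:=\ctW{A}{B}{P^*}{}$, which reduces via $P^*\transreduces{\beta}{\upsilon}\lbd{v}{A}{P}$ and the $\mathsf W$-reduction of Theorem~\ref{prop:combinatoryeffect} to $\lbd{v}{A}{Pv}$.

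\textbf{Main obstacle.} The delicate part is not the overall combinatorial recipe but the verification that the chosen $N$ is both a genuine term of $\mltup$ and that the displayed chain of $\beta$-reductions is legal under the state-type-aware variable restrictions of Definition~\ref{defn:betareduction}. The whole point of having $\mathsf D$ and $\mathsf W$ in the basis is that precisely these two combinators rescue the Cardinal and Starling steps of the abstraction algorithm when the argument type $A'$ is a state type (Cardinal becomes unavailable exactly when $A=A'$ is a state type with $\upsilon(A)=1$, and Starling is simply not a term of $\mltup$ when $A'$ is a state type). Checking that the free-variable sets match, that each intermediate redex satisfies the ``free for'' and distinctness clauses, and that we have enough variables to perform the necessary $\alpha$-conversions when types are regular, all proceed in the same style as the proof of Theorem~\ref{prop:combinatoryeffect}; the state-type cases in particular are handled cleanly because Proposition~\ref{prop:state-free} leaves $Q$ as a variable or constant, eliminating any need to substitute complex terms into state-typed positions.
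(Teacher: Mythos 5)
Your proof is correct, but it takes a genuinely different route from the paper's. In the application step $M=PQ$ with $\typd{Q}{C}$, the paper splits only on whether the argument type $C$ is regular or a state type: when $C$ is regular it always routes through Starling applied to the two inductive witnesses (regardless of where $v$ occurs), and when $C$ is a state type it sub-splits on whether $Q$ is $v$, another variable, or a constant, using Warbler, Cardinal, or Dardinal respectively. You instead run the classical occurrence-sensitive abstraction algorithm, splitting on whether $v$ is free in $P$ only, in $Q$ only, in both, or in neither, reserving Starling for the ``both'' case and letting Kestrel, Cardinal/Dardinal, and Bluebird cover the other three; your state-type fallbacks are sound, since Proposition~\ref{prop:state-free} pins $Q$ down to a variable or constant exactly when the type-based obstructions arise, and your observation that Cardinal fails to be a term precisely when $A=A'$ is a state type with $\upsilon(A)=1$ --- forcing $Q$ to be a constant and Dardinal to apply --- is the right one. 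The paper's decomposition buys a shorter case analysis that never inspects where $v$ occurs; yours buys the leaner combinatorial terms typical of the fine-grained algorithm (avoiding the long $\mathsf{BCDKW}$ expansion of Starling whenever $v$ is free on only one side) and lets Bluebird do real work rather than appearing only inside the Starling expansion. Two small points. First, the Kestrel step $\ctK{B}{A}{M}{}\reduces{\beta}{\upsilon}\lbd{v}{A}{M}$ is a reduction of distance $0$, not $1$ (one substitutes $M$ for Kestrel's first bound variable under the binder $\lbd{v}{A}{}$, which is legal because $v$ is not free in $M$). Second, in your Cardinal case the second argument $Q$ may be an arbitrary combinatorial term rather than a variable or constant as in the paper's corresponding subcase, so reducing $\ctC{A}{A'}{B}{P^{\ast}}{Q}{}$ to $\lbd{v}{A}{P^{\ast}vQ}$ genuinely requires a distance-$1$ step inserting $Q$ past $P^{\ast}$; this goes through because $v$ is not free in $Q$ in that branch, but it deserves explicit mention since it is exactly the kind of step Definition~\ref{defn:betareduction} was introduced to license.
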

\begin{proof}
We show by induction on complexity of the $\mathsf{BCDKW}$-combinatorial term $\typd{M}{B}$ with regular type $B$ that for every type $A$ and variable $\typd{v}{A}$ there is a $\mathsf{BCDKW}$-combinatorial term $\typd{N}{A\rightarrow B}$ with the same free variables as $\lbd{v}{A}{M}$ such that $N\transreduces{\beta}{\param} \lbd{v}{A}{M}$.

As a first case, if $\typd{M}{B}$ is the variable $\typd{v}{A}$, then let $N$ be~$\ctI{A}{}$, so that we are done by Proposition~\ref{prop:wegotidentitybirdback}.

As a second case, suppose that $\typd{M}{B}$ is a variable $\typd{u}{B}$ distinct from $\typd{v}{A}$. Then $\ctK{B}{A}{M}$ is $\mathsf{BCDKW}$-combinatorial with the same free variables as $\lbd{v}{A}{M}$, namely $\typd{u}{B}$.   Further we have $\ctK{B}{A}{M}{} \reduces{\beta}{\param} \lbd{v}{A}{M}$ by Proposition~\ref{prop:combinatoryeffecteasy}.

As a third case, suppose that $\typd{M}{B}$ is a constant $\typd{c}{B}$. Then $\ctK{B}{A}{M}{}$ is $\mathsf{BCDKW}$-combinatorial with the same free variables as $\lbd{v}{A}{M}$, namely no free variables. Further we have $\ctK{B}{A}{M}{} \reduces{\beta}{\param} \lbd{v}{A}{M}$ by Proposition~\ref{prop:combinatoryeffecteasy}.

As a fourth case, suppose that $\typd{M}{B}$ is an instance of $\ctB{A^{\prime}}{B^{\prime}}{C^{\prime}}{}{}{}$, $\ctC{A^{\prime}}{B^{\prime}}{C^{\prime}}{}{}{}$, $\ctD{c}{A^{\prime}}{B^{\prime}}{C^{\prime}}{}{}{}$, $\ctK{A^{\prime}}{B^{\prime}}{}{}$, $\ctW{A^{\prime}}{B^{\prime}}{}{}$. These are closed terms, and are handled exactly as the previous case. 

As a fifth case, suppose that $\typd{M}{B}$ is $\typd{M_0 M_1}{B}$ where $\typd{M_0}{C\rightarrow B}$ and $\typd{M_1}{C}$.

There are several subcases to consider.

As a first subcase, suppose that $C$ is a regular type. By induction hypothesis, there are $\mathsf{BCDKW}$-combinatorial terms $\typd{N_0}{A \rightarrow C\rightarrow B}, \typd{N_1}{A\rightarrow C}$ such that $N_0\transreduces{\beta}{\param} \lbd{v}{A}{M_0}$ and $N_1\transreduces{\beta}{\param} \lbd{v}{A}{M_1}$, with the two terms $N_i$ and $\lbd{v}{A}{M_i}$ having the same free variables for $i=0,1$. Note that the first bound variable of $\ctS{C}{B}{A}{}{}{}$ has type $A \rightarrow C\rightarrow B$, which is the same type as $N_0$. And note that the second bound variable of $\ctS{C}{B}{A}{}{}{}$ has type $A\rightarrow C$, which is the type of $N_1$. Then $\typd{\ctS{C}{B}{A} N_0 N_1}{A\rightarrow B}$ is an expanded $\mathsf{BCDKW}$-combinatorial term by induction hypothesis and Proposition~\ref{prop:wegotstarlingback}; and one has $\ctS{C}{B}{A} N_0 N_1 \transreduces{\beta}{\param} \ctS{C}{B}{A} (\lbd{v}{A}{M_0})(\lbd{v}{A}{M_1})$. The third bound variable of $\ctS{C}{B}{A}{}{}{}$ is $\typd{z}{A}$, which we can switch to $\typd{v}{A}$ by $\alpha$-conversion. Further, the first two bound variables of $\ctS{C}{B}{A}{}{}{}$ are of regular type and so we may switch these so that they too do not appear free in $\lbd{v}{A}{M_0}, \lbd{v}{A}{M_1}$. Then we  continue to reduce as follows:
\begin{align*}
 \ctS{C}{B}{A} (\lbd{v}{A}{M_0})(\lbd{v}{A}{M_1}) & \reduces{\beta}{\param} \big(\lbd{y}{A\rightarrow C}{\lbd{v}{A}{(\lbd{v}{A}{M_0})v (yv)}}\big)(\lbd{v}{A}{M_1})  \\
& \reduces{\beta}{\param} \lbd{v}{A}{ (\lbd{v}{A}{M_0})v((\lbd{v}{A}{M_1})v)} \\
&  \transreduces{\beta}{\param} \lbd{v}{A}{M_0 M_1}
\end{align*}
The first $\beta$-reduction follows by disjointness: by previous $\alpha$-conversion the second bound variable $\typd{y}{A\rightarrow C}$ of Starling  does not appear free in $\lbd{v}{A}{M_0}$; and the third bound variable $\typd{v}{A}$ of Starling is not free in $\lbd{v}{A}{M_0}$. The second $\beta$-reduction follows since the displayed free occurrence of $\typd{y}{A\rightarrow C}$ in $(\lbd{v}{A}{M_0})v (yv)$ is its only free occurrence, since by previous $\alpha$-conversion this second bound variable does not appear free in  $\lbd{v}{A}{M_0}$. The two applications of $\beta$ in the last line simply follow from a variable being free for itself.

As a second subcase, suppose that $C$ is a state type. Since $C$ is a state type, the Starling $\ctS{C}{B}{A}{}{}{}$ is no longer a term of $\mltup$, and hence we need to use other combinators. By induction hypothesis, there is $\mathsf{BCDKW}$-combinatorial term $\typd{N_0}{A\rightarrow C\rightarrow B}$ such that $N_0\transreduces{\beta}{\param} \lbd{v}{A}{M_0}$ and such that $N_0$ and $\lbd{v}{A}{M_0}$ have the same free variables. Then $N_0 v\transreduces{\beta}{\param} M_0$. By Proposition~\ref{prop:state-free} the term $\typd{M_1}{C}$ is a variable or a constant of state type $C$. There are three further subcases to consider. 
\begin{itemize}[leftmargin=*]
\item First suppose that $\typd{M_1}{C}$ is $\typd{v}{A}$, which implies in particular that $A,C$ are identical state types. By $\alpha$-conversion if necessary change the second bound variable of $\ctW{A}{B}{}$ to $\typd{v}{A}$. This Warbler term is a term of $\mltup$ since $B$ is regular by hypothesis of the theorem. Since $\typd{v}{A}$ is not free in $N_0$ we have by disjointness that 
\begin{equation*}
\ctW{A}{B}{N_0}{} \reduces{\beta}{\param} \lbd{v}{A}{N_0 v v} \transreduces{\beta}{\param} \lbd{v}{A}{M_0 v} 
\end{equation*} 
The last $\beta$-reduction follows from the aforementioned consequence of the induction hypothesis, namely $N_0 v\transreduces{\beta}{\param} M_0$.
\item Second suppose that $\typd{M_1}{C}$ is a variable but not $\typd{v}{A}$. Suppose in particular that $\typd{M_1}{C}$ is the variable $\typd{y}{C}$ distinct from $\typd{v}{A}$. By this distinctness and $B$ being regular by the hypothesis of the theorem, the following is a Cardinal term of $\mltup$ (cf. distinctness in the definition of Cardinal term of $\mltup$ in Definition~\ref{defn:typedcombo}): 
\begin{equation}\label{eqn:instnaceofcinmmiddle}
\ctC{A}{C}{B}{}{}{} \equiv \lbd{x}{A\rightarrow C\rightarrow B}{\lbd{y}{C}{\lbd{v}{A}{xv y}}}
\end{equation}
Since $\typd{M_1}{C}$ is a variable of state type $C$ which is distinct from $\typd{v}{A}$, it follows that the first and third bound variables of $\ctC{A}{C}{B}{}{}{}$ are not free in $\typd{M_1}{C}$. Further $\typd{M_1}{C}$ is free for $\typd{y}{C}$ in $\lbd{v}{A}{xvy}$. Then we have the first step in the following, which is a $\beta$-reduction of distance~1:
\begin{align}
& \ctC{A}{C}{B}{N_0}{M_1}{}\reduces{\beta}{\param} \big(\lbd{x}{A\rightarrow C\rightarrow B}{\lbd{v}{A}{x v M_1}}\big) N_0 \notag \\
& \reduces{\beta}{\param} \lbd{v}{A}{N_0 v M_1}\transreduces{\beta}{\param} \lbd{v}{A}{M_0 M_1}
\end{align}
The second $\beta$-reduction follows since $\typd{v}{A}$ is not free in $N_0$, and since the displayed free occurrence of $\typd{x}{A\rightarrow C\rightarrow B}$ is the only one since $\typd{M_1}{C}$ is of state type. The third $\beta$-reduction follows from the aforementioned consequence of the induction hypothesis, namely $N_0 v\transreduces{\beta}{\param} M_0$.
\item Third suppose that $\typd{M_1}{C}$ is a constant $\typd{c}{C}$. Then the following is a Dardinal term of $\mltup$:
\begin{equation*}
\ctD{c}{A}{C}{B} \equiv \lbd{x}{A\rightarrow C\rightarrow B}{\lbd{v}{A}{xvc}}
\end{equation*}
Since $\typd{v}{A}$ is not free in $N_0$, we have that $N_0$ is free for $x$ in $\lbd{v}{A}{xvc}$, and so we have the first $\beta$-reduction in the following:
\begin{equation*}
\ctD{c}{A}{C}{B} N_0 \reduces{\beta}{\param} \lbd{v}{A}{N_0 vc} \transreduces{\beta}{\param} \lbd{v}{A}{M_0 c}
\end{equation*}
The second $\beta$-reduction follows from the aforementioned consequence of the induction hypothesis, namely $N_0 v\transreduces{\beta}{\param} M_0$. \qedhere
\end{itemize}
\end{proof}

\begin{cor}\label{prop:keyandrews}
Every term of $\mltup$ is expanded $\mathsf{BCDKW}$-combinatorial in $\mltup$.
\end{cor}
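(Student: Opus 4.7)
The plan is to prove the statement by straightforward induction on the structure of a term $\typd{M}{B}$ of $\mltup$, with Theorem~\ref{thm:bigcombo} absorbing all of the real work at the lambda abstraction step. The base cases are immediate: every variable (at least those specified by the parameter $\upsilon$) and every constant is already a $\mathsf{BCDKW}$-combinatorial term, and so witnesses itself as expanded combinatorial with the required matching of free variables automatic.

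For the application case $\typd{M_0 M_1}{B}$ with $\typd{M_0}{A\rightarrow B}$ and $\typd{M_1}{A}$, the induction hypothesis supplies $\mathsf{BCDKW}$-combinatorial $N_0, N_1$ sharing free variables with $M_0, M_1$ respectively and satisfying $N_i \transreduces{\beta}{\upsilon} M_i$. Since the combinatorial terms are closed under well-formed application, $N_0 N_1$ is itself combinatorial, has the correct free variables, and reduces to $M_0 M_1$ by the application-compatibility clauses of \S\ref{subsec:conventionsredu}. For the lambda abstract case $\typd{\lbd{v}{A}{L}}{A\rightarrow C}$, clause~(\ref{defn:term:4}) of Definition~\ref{defn:term} guarantees that the type $C$ of $L$ is regular. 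The induction hypothesis then supplies a combinatorial $\typd{N}{C}$ sharing free variables with $L$ and with $N \transreduces{\beta}{\upsilon} L$; Theorem~\ref{thm:bigcombo}, applied to this $N$ and the variable $\typd{v}{A}$, states that $\lbd{v}{A}{N}$ is expanded combinatorial, meaning there is a combinatorial term $N''$ sharing free variables with $\lbd{v}{A}{N}$ and with $N'' \transreduces{\beta}{\upsilon} \lbd{v}{A}{N}$. Closure of $\transreduces{\beta}{\upsilon}$ under lambda abstraction chains this to $N'' \transreduces{\beta}{\upsilon} \lbd{v}{A}{N} \transreduces{\beta}{\upsilon} \lbd{v}{A}{L}$, and since $N$ and $L$ share their free variables, so do $\lbd{v}{A}{N}$ and $\lbd{v}{A}{L}$; hence $N''$ witnesses $\lbd{v}{A}{L}$ as expanded $\mathsf{BCDKW}$-combinatorial.

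I do not foresee any genuine obstacle in this induction: all of the delicate typing and state-type casework has been carried out inside the proof of Theorem~\ref{thm:bigcombo}, and what is left over here is only routine bookkeeping of the ``same free variables'' condition and of the composition of reductions at each constructor. Alpha-conversion concerns should not arise, since the witnesses produced by Theorem~\ref{thm:bigcombo} already come with the correct free-variable matching built in.
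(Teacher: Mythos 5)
Your proof is correct and follows essentially the same route as the paper's (a three-line induction on term complexity, with variables/constants as their own witnesses, application handled by closure under application, and the lambda-abstract step delegated to Theorem~\ref{thm:bigcombo}). You merely spell out the one detail the paper leaves implicit—that Theorem~\ref{thm:bigcombo} applies to the genuinely combinatorial witness $N$ supplied by the induction hypothesis rather than to $L$ itself, so the two reduction chains must be composed and the free-variable matching tracked through both steps—and you do so correctly.
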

\begin{proof}
This is by induction on complexity of term. A variable or constant is itself $\mathsf{BCDKW}$-combinatorial. The inductive step for application is trivial. The inductive step for lambda abstraction follows from the previous theorem.
\end{proof}

\corcreatemodel*
\begin{proof}
The necessity is obvious. For sufficiency, suppose that $\typd{M}{A}$ is a term. We must show that its denotation is well-defined in $\mathcal{M}$. By the previous Corollary, there is $\mathsf{BCDKW}$-combinatorial term $\typd{N}{A}$ such that $N\transreduces{\beta}{\param} M$. By the hypothesis, the denotation of $N$ is well-defined in $\mathcal{M}$. By Proposition~\ref{prop:reducebetadelta}, the denotation of $M$ is well-defined in $\mathcal{M}$. 
\end{proof}

\subsection{Semantic conservation and expressibility}\label{sec:conservation}

\thmconservation*
\begin{proof}
Let $\mathcal{M}$ be a model of $\mltup$. By Theorem~\ref{corcreatemodel}, it suffices to ensure that the denotations of all instances of $\mathsf{B}, \mathsf{C}, \mathsf{D}, \mathsf{K}, \mathsf{W}$ which are terms of $\mltpure$ are well-defined in $\mathcal{M}$. By inspection of $\mathsf{B}, \mathsf{C}, \mathsf{D}, \mathsf{K}, \mathsf{W}$ in Definition~\ref{defn:typedcombo} the only instance we need to check is the following instance of Cardinal where $C$ is a regular type and $B$ is a state type and $j>0$:
\begin{equation}\label{eqn:niceM1}
\ctC{B}{B}{C}{}{}{}\equiv  \lbd{v}{B \rightarrow B\rightarrow   C}{\lbd{v_0}{B}{\lbd{v_j}{B}{vv_jv_0}}} 
\end{equation}
If $\param(B)>1$, then we are done since we can $\alpha$-convert $\typd{v_j}{B}$ to $\typd{v_1}{B}$. Henceforth assume $\param(B)=1$. Without loss of generality, we can consider ourselves to be in the case where $\mathcal{M}$ is a model of $\mlt{1}$.

Consider the following closed term $M$ of $\mlt{1}$:
\begin{align}
 M \hspace{2mm} \equiv \hspace{2mm} & \lambda v: B\rightarrow B\rightarrow C . \;\;\lambda v_0: B \label{eqn:niceM2} \\
& \hspace{10mm} {\bigg( \lbd{V}{(B\rightarrow C)\rightarrow C}{\lbd{v_0}{B}{V(vv_0)}}\bigg) \big( \lbd{U}{B\rightarrow C}{Uv_0} \big) } \notag
\end{align}
Since $\mathcal{M}$ is a model of $\mlt{1}$, one has that the denotation of $M$ is well-defined in $\mathcal{M}$.

As in the discussion of Example~\ref{ex:thelbock}, one cannot do $\beta$-reduction in $\mlt{1}$ directly on $M$ to get $\ctC{B}{B}{C}{}{}{}$, since the term $\lbd{U}{B\rightarrow C}{Uv_0}$ is not free for $V$ in $\lbd{v_0}{B}{(V(vv_0))}$; and one cannot do $\alpha$-conversion to convert $\lbd{v_0}{B}{(V(vv_0))}$ to $\lbd{v_j}{B}{(V(vv_j))}$ since $v_0$ is the only variable of type $B$ when $\param(B)=1$.

However, the semantics for lambda abstraction are given by lambda abstraction out in the metatheory (cf. discussion subsequent to Definition~\ref{defn:model}). In particular, the denotation $\db{M}_{\mathcal{M}}$ in $\mathcal{M}$ of the closed term $M$ of $\mlt{1}$ is equal to the following:
\begin{align*}
&\Lbd{v}{\mathcal{M}(B\rightarrow B\rightarrow C)}{\Lbd{v_0}{\mathcal{M}(B)}{}} \\
& \hspace{5mm} \bigg( \Lbd{V}{(\mathcal{M}((B\rightarrow C)\rightarrow C)}{\Lbd{v_0}{\mathcal{M}(B)}{V(vv_0)}}\bigg) \big( \Lbd{U}{\mathcal{M}(B\rightarrow C)}{Uv_0} \big) 
\end{align*}
Since one can do $\alpha$-conversion out in the metatheory, one can $\alpha$-convert the meta-term $\Lbd{v_0}{\mathcal{M}(B)}{(V(vv_0))}$ to $\Lbd{v_j}{\mathcal{M}(B)}{(V(vv_j))}$ for some $j>0$; and then by $\beta$-reduction out in the metatheory, following the reduction given in Example~\ref{ex:thelbockxi}, one can get the denotation $\db{\ctC{B}{B}{C}{}{}{}}_{\mathcal{M}}$ of the closed term $\ctC{B}{B}{C}{}{}{}$ of $\mltpure$.
\end{proof}

We can similarly establish:

\express*
\begin{proof}
We fix a signature which contains exactly the constants of $\typd{N}{A}$. By Corollary~\ref{prop:keyandrews} applied to $\mltpure$, one has that $\typd{N}{A}$ is expanded $\mathsf{BCDKW}$-combinatorial in $\mltpure$, and hence there is $\mathsf{BCDKW}$-combinatorial term $\typd{L}{A}$ in $\mltpure$ with the same free variables as $\typd{N}{A}$ such that $L\transreduces{\beta}{\omega} N$. Let $\mathcal{M}$ be a model of $\mltpure$. By Proposition~\ref{prop:reducebetadelta}, one has that $\db{L}_{\mathcal{M}, \rho}=\db{N}_{\mathcal{M}, \rho}$ for all variable assignments $\rho$ relative to $\mathcal{M}$.
 
 By the same argument as the previous theorem, there is term $\typd{M}{A}$ of $\mltup$ with the same free variables as $\typd{L}{A}$ such that $\db{M}_{\mathcal{M}, \rho}=\db{L}_{\mathcal{M}, \rho}$ for all variable assignments $\rho$ relative to~$\mathcal{M}$. Further, this choice of $\typd{M}{A}$ does not depend on $\mathcal{M}$: one simply uniformly replaces (\ref{eqn:niceM1}) in $\typd{L}{A}$ with (\ref{eqn:niceM2}) to form $\typd{M}{A}$. By the previous paragraph, we then have that $\db{M}_{\mathcal{M}, \rho}=\db{L}_{\mathcal{M}, \rho}=\db{N}_{\mathcal{M}, \rho}$ for all variable assignments $\rho$ relative to $\mathcal{M}$.

 Since the free variables of $\typd{M}{A}, \typd{N}{A}$ are the same and are in $\mltup$, and since we have fixed a signature throughout, we are done.
\end{proof}

\subsection{Deductive completeness and conservation}\label{sec:infty}

In the previous section, we discussed semantic notions like semantic conservation and semantic expressibility. We turn now to deductive matters. 

As described in \S\ref{sec:intro}, we define the $\infty$-rule in $\mltup$ as follows:
\begin{defi}[$\infty$-rule]\label{defn:inftyrule}
Suppose that $M,N:A$ are terms of $\mltup$. Then one defines the reduction relation $\infty$, called the \emph{$\infty$-rule}, by $M\reduces{\infty}{\param} N$ iff $M\equals{\beta\eta}{\omega} N$. 
\end{defi}

\noindent Since $\beta\eta$-equality is compatible (cf. \S\ref{subsec:conventionsredu}), the $\infty$-rule is itself compatible. One then has:

\begin{thm}[Deductive conservation of $\mltpure$ over $\mltup$ with $\infty$-rule]\label{thmconservemltupinfty}
For terms $\typd{M,N}{A}$ of $\mltup$, one has $\mltpure\vdash_{\beta\eta} M=N$ iff $\mltup\vdash_{\beta\eta\infty} M=N$.

\end{thm}
\begin{proof}
Suppose $\typd{M,N}{A}$ are terms of $\mltup$.

First suppose $\mltpure\vdash_{\beta\eta} M=N$. Then $M\equals{\beta\eta}{\omega} N$. By definition, $M\reduces{\infty}{\param} N$. Then $M\reduces{\beta\eta\infty}{\param} N$.

Second suppose that  $\mltup\vdash_{\beta\eta\infty} M=N$. Let $M_0, \ldots, M_n$ be terms of $\mltup$ such that $M$ is $M_0$ and $N$ is $M_n$ for some $n\geq 1$ such that for each $i<n$ one has that $M_i\reduces{\gamma}{\theta} M_{i+1}$ or $M_{i+1}\reduces{\gamma}{\theta} M_i$ for some $\gamma$ in $\{\beta,\eta,\infty\}$. In the case of $\gamma$ in $\{\beta,\eta\}$ we continue to have $M_i\reduces{\gamma}{\omega} M_{i+1}$ or $M_{i+1}\reduces{\gamma}{\omega} M_i$ and in the case of $\gamma=\infty$ we have $M_i\equals{\beta\eta}{\omega} M_{i+1}$ by definition of the $\infty$-rule and the aforementioned fact that it is compatible. Hence the chain is also a chain of $\beta\eta$ equalities in $\mltpure$, and so we have $\mltpure\vdash_{\beta\eta} M=N$.
\end{proof}

Further one has:

\thmcompletemltup*

\begin{proof}
 Suppose $\typd{M,N}{A}$ are terms of $\mltup$. By Theorem~\ref{thmconservemltupinfty}, one has that $\mltup\vdash_{\beta\eta\infty} M=N$ iff $\mltpure\vdash_{\beta\eta} M=N$, which by the Completeness Theorem for $\mltpure$ (Theorem~\ref{thmcompletelambda}) happens iff $\mltpure\models M=N$, which by the semantic conservation result (Theorem~\ref{thmconservation}) happens iff $\mltup\models M=N$.
\end{proof}

\noindent Note further that deductive completeness, together with the semantic expressibility result (Theorem~\ref{thm:express}) likewise entails a deductive expressibility result of $\mltpure$ in $\mltup$ equipped with the $\infty$-rule.

The natural question is the following:
\begin{qu}[Deductive conservation and completeness with just $\beta\eta$-rules]\label{q:dedcompcon}
Suppose that $\typd{M,N}{A}$ are terms $\mltup$. Then do the following hold?

\begin{enumerate}[label=(\alph*), ref=\alph*]
    \item\label{q:dedcompcon:1} $\mltup\vdash_{\beta\eta} M=N$ iff $\mltpure\vdash_{\beta\eta} M=N$.
    \item\label{q:dedcompcon:2}  $\mltup\vdash_{\beta\eta} M=N$ iff $\mltup\models M=N$.
\end{enumerate}

\end{qu}

We make a couple of remarks on these questions. First, it is not hard to see that the semantic conservation result (Theorem~\ref{thmconservation}) and Completeness Theorem for $\mltpure$ (Theorem~\ref{thmcompletelambda}) entails that (\ref{q:dedcompcon:1}) happens iff (\ref{q:dedcompcon:2}) happens. Second, by Theorem~\ref{thm:bigcombo}, any failure of (\ref{q:dedcompcon:1}) can be expressed in terms of ``missing'' equalities between expanded $\mathsf{BCDKW}$ combinatorial terms. This then recommends a study of pure $\mathsf{BCDKW}$ combinatorial logic, which we make a start on in~\S\ref{sec:purecombinatory}. Third, failure of (\ref{q:dedcompcon:2}) might recommend either (i)~finding the an extension of the deductive system (perhaps by finding the aforementioned missing equalities), or (ii)~expanding the semantics beyond the ordinary set theoretic ones considered in this paper.

Fourth and finally, our work in \S\ref{sec:conservation} suggests a potential route to a negative resolution to Question~\ref{q:dedcompcon}.

\begin{exa}\label{exa:howaboutpotentialcounter}

Let $B$ be a state type, and $C$ a regular type. Recall the term $M$ of $\mlt{1}$ from equation~(\ref{eqn:niceM2}) of the proof of Theorem~\ref{thmconservation}. Let $\kappa>1$, i.e. $\kappa=n$ for some $n>1$ or $\kappa=\omega$. As in Example~\ref{ex:thelbockxi} one has that $M\transreduces{\beta_0}{\kappa}  \ctC{B}{B}{C}{}{}{}$, using $\beta_0$-reductions under a lambda abstract. Hence $\mlt{\kappa}\vdash_{\beta\eta} M= \ctC{B}{B}{C}{}{}{}$. Suppose $\typd{v}{B\rightarrow B\rightarrow C}$ is a variable of functional type. Then Cardinal $\ctC{B}{B}{C}{}{}{}$ has the behavior in $\mlt{\kappa}$ of taking a function $v$ of two inputs and returning the function $(v_0, v_1)\mapsto vv_1 v_0$, i.e. permuting the order of the inputs (cf. Definition~\ref{defn:typedcombo} and Theorem~\ref{prop:combinatoryeffect}). Since permuting the order twice has the effect of returning the original function, one can check that $\mlt{\kappa}\vdash_{\beta\eta} \ctC{B}{B}{C}{}{}{} (\ctC{B}{B}{C}{}{}{} v) =v$. Then one also has $\mlt{\kappa}\vdash_{\beta\eta} M (M v) =v$. By the Soundness Theorem~\ref{thm:soundness}, one also has $\mlt{\kappa}\models M (M v) =v$. Since $M$ is a term of $\mlt{1}$, by Theorem~\ref{thmconservation} we have $\mlt{1}\models M (M v) =v$.
\end{exa}

We can then ask:
\begin{qu}\label{qu:howaboutpotentialcounter}
With term $M$ of $\mlt{1}$ as in the previous example (also recorded in equation~(\ref{eqn:niceM2})), is it the case that $\mlt{1}\vdash_{\beta\eta} M (M v) =v$?
\end{qu}

\noindent If this negative resolution went through, then one could also consider analogous examples for $\mlt{n}$ for $n>1$ which involved generalizations of Cardinal which took a function of $(n+1)$-inputs of a given state type and cyclically permuted the order of the inputs. The difficulty in completing these negative resolutions is that the usual methods for establishing non-provability are unavailable: (1)~we cannot use the usual set theoretic models and the Soundness Theorem since we already know that $\mlt{1}\models M (M v) =v$, and (2)~Church-Rosser is in general unavailable in $\mlt{n}$ (cf. Example~\ref{ex:friedmanwarren}) and hence one cannot convert a $\beta\eta$-equality into $\beta\eta$-reductions to a common third term. In \S\ref{sec:purecombinatory} we develop pure combinatory logic and prove the deductive conservation result of Theorem~\ref{thm:conservemltpureovermltup}, which allows us to transfer \emph{weak} $\beta$-equalities in $\mltpure$ to $\beta\eta$-equalities in $\mlt{1}$. But the natural reduction $M\transreduces{\beta_0}{\kappa}  \ctC{B}{B}{C}{}{}{}$ for $\kappa>1$ is not a weak $\beta_0$-reduction, as we noted in Example~\ref{ex:thelbockxi}. And the natural proof of $\mlt{\kappa}\vdash_{\beta\eta} \ctC{B}{B}{C}{}{}{} (\ctC{B}{B}{C}{}{}{} v) =v$ uses $\eta$. Hence Theorem~\ref{thm:conservemltpureovermltup} cannot obviously be used to answer Question~\ref{qu:howaboutpotentialcounter} in the affirmative.


\section{The simply-typed lambda calculus: modal and non-modal}\label{sec:ty2}

In this section, we return to the relation between the simply-typed modal lambda calculus $\mltpure$ and the ordinary simply-typed lambda calculus $\lt$. Per the discussion in \S\ref{sec:nonmodal} (cf. Definition~\ref{defn:nonmodal}), we assume in this section that the basic entity types of $\lt$ are the union of the state types and basic entity types of $\mltpure$. As a reminder, if $A$ is a state type of $\mltpure$, then $A\rightarrow A$ is a type of $\lt$ but not of $\mltpure$ (cf. Examples~\ref{ex:montre}-\ref{ex:finvar} for concrete examples). That is, the types of $\mltpure$ are a proper subset of the types of $\lt$, and the same is true of the terms.

\subsection{Conservation and expressibility}\label{sec:conserlt}

We first prove:

\thmconservationlt*

\begin{proof}
Suppose that $A$ is a type of $\mltpure$ and suppose that $\typd{M,N}{A}$ are terms of $\mltpure$ such that $\lt\vdash_{\beta\eta} M=N$. By Church-Rosser for $\lt$, we have that there is a term $\typd{L}{A}$ of $\lt$ such that $M\twoheadrightarrow_{\beta\eta} L$ and $N\twoheadrightarrow_{\beta\eta} L$. But since $\beta\eta$ reduction preserves being a term of $\mltpure$, we have that $\typd{L}{A}$ is also a term of $\mltpure$, along with all the other terms in the two $\beta\eta$-chains. Then we have $\mltpure\vdash_{\beta\eta} M=L$ and $\mltpure\vdash_{\beta\eta} N=L$ and so $\mltpure\vdash_{\beta\eta} M=N$. 
\end{proof}

A more complicated argument shows:

\expresslt*

\begin{proof}
We show by induction on length of $\beta\eta$-reduced term $\typd{N}{A}$ of $\lt$ that if the free variables and constants of $\typd{N}{A}$ have types in  $\mltpure$, then there is a term $\typd{M}{A}$ of $\mltpure$ with the same free variables and constants such that $\lt\vdash_{\beta\eta} M=N$. The result then follows by using normalization for $\lt$ to pass to a $\beta\eta$-normal form.

For length $\ell=1$, we have that $\typd{N}{A}$ is a variable or a constant, and so these must be variables of $\mltpure$ or constants of $\mltpure$; and so we can take $\typd{M}{A}$ to be $\typd{N}{A}$.

Suppose that the result holds for length $<\ell$; we show it holds for length $\leq \ell$. 

First suppose that $\typd{N}{A}$ is an application. Since each term contains at most finitely many applications, for some $k\geq 1$ there are terms 
\begin{equation*}
\typd{N_0}{A_1\rightarrow\cdots \rightarrow A_k \rightarrow A},\hspace{10mm} \typd{N_1}{A_1}, \ldots \typd{N_k}{A_k}
\end{equation*}
such that $N$ is $N_0 N_1 \cdots N_k$ and $N_0$ is not an application. Since $N$ is $\beta\eta$-reduced, $N_0$ is not a lambda abstract. Hence, $N_0$ is a constant or a variable. But by hypothesis this constant or variable has type in $\mltpure$. Then $A_1\rightarrow\cdots \rightarrow A_k \rightarrow A$ is a type of $\mltpure$. Hence $A_1, \ldots, A_k$ are types of $\mltpure$. Further, since $\typd{N_1}{A_1}, \ldots \typd{N_k}{A_k}$ are subterms of $\typd{N}{A}$, we also have that their free variables and constants must be variables of $\mltpure$ or constants of $\mltpure$, and further these subterms must themselves be $\beta\eta$-reduced. Hence, by induction hypothesis applied $k$-many times to $\typd{N_1}{A_1}, \ldots \typd{N_k}{A_k}$, these are respectively expressed by $\typd{M_1}{M_1}, \ldots \typd{M_k}{A_k}$, and then $\typd{M}{A}$ is expressed by $N_0 M_1 \cdots M_k$.

Second suppose that $\typd{N}{A}$ is a lambda abstract $\lbd{v}{C}{N_0}$, where $\typd{N_0}{D}$, so that $A$ is $C\rightarrow D$. Since $A$ is a type of $\mltpure$, we have that $C,D$ are types of $\mltpure$. Hence $\typd{v}{C}$ has type in $\mltpure$. Then all the free variables $\typd{N_0}{D}$ have types in $\mltpure$. Further, since the constants of $\typd{N}{A}$ and $\typd{N_0}{D}$ are the same, we have that the constants of $\typd{N_0}{D}$ are in $\mltpure$. Finally, since $\typd{N}{A}$ is in $\beta\eta$-normal form, we have that $\typd{N_0}{D}$ is in $\beta\eta$-normal form. Hence, by induction hypothesis, we have that $\typd{N_0}{D}$ is expressed by $\typd{M_0}{D}$, and so $\typd{M}{A}$ is expressed by $\lbd{v}{A}{M_0}$.
\end{proof}

In the immediate sequel to this paper (\cite{Walsh2024-mltII}), we extend these results to Church's simple theory of types. The above proof breaks down in that setting since there are object-language identities in the simple theory of types, and one can have such an identity $N_0=_A N_1$ of type $T$ (the type of truth-values), while $A$ is not a type of $\mltpure$ but only a type of $\lt$.

\subsection{Application: open term models revisited}\label{sec:opentermmodelsredeux}

In this section, we revisit the discussion of open term models from \S\ref{sec:completeness-open-term}. For, there it was shown that the open term applicative structure $\mathcal{O}_1$ of $\mlt{1}$ does not determine a frame, while the open term applicative structure of $\mathcal{O}_{\omega}$ of $\mltpure$ does determine a frame and indeed determines a model. Using the theorems proved thus far, we round out the discussion by showing that for $n>1$ the open term applicative structure $\mathcal{O}_n^{\infty}$ of $\mlt{n}$ with the addition of the $\infty$-rule (cf. Definition~\ref{defn:inftyrule}) determines both a frame and a model.

Before doing that, we begin with the following preliminary result purely about $\lt$.\footnote{In the statement, we use the notation mentioned in footnote~\ref{fn:vecsame}.} 
\begin{prop}\label{prop:longnormalform}
Suppose that $n>1$ and that $A,B$ are distinct types of $\lt$ with $A$ atomic. Suppose that $B$ is not of the form $C_1\rightarrow \cdots \rightarrow C_{\ell}\rightarrow A$ for any types $C_1, \ldots, C_{\ell}$. Suppose that $\typd{M,N}{A^{n+1}\rightarrow B}$ are terms of $\lt$ that have no free variables of type $A$, and that $\typd{v_0, \ldots, v_{n-1}}{A}$ are distinct variables which we abbreviate as $\typd{\vec{v}}{A}$, such that for each $i<n$ one has that $\lt\vdash_{\beta\eta} M\vec{v}v_i =N\vec{v}v_i$. Then $\lt\vdash_{\beta\eta} M=N$.
\end{prop}
\noindent Before beginning the proof, we note that the proposition and its proof emerge rather naturally by considering the impediments to generalizing Example~\ref{exa:failureopenterm1} to $n>1$.
\begin{proof}
Let $B$ be $C_1\rightarrow \cdots \rightarrow C_{\ell}\rightarrow D$, where $D$ is atomic, and $D$ is by hypothesis distinct from $A$. Note we may have $\ell=0$ when $B$ is an atomic distinct from $A$. We fix variables $\typd{y_1}{C_1}, \ldots, \typd{y_n}{C_{\ell}}$ which we abbreviate as $\typd{\vec{y}}{\vec{C}}$, which we suppose to be disjoint from $\typd{\vec{v}}{A}$. We show by induction on $m\geq 1$ that if $M,N$ as in the statement of the proposition have respective long normal forms $\lbd{\vec{x}}{A}{\lbd{\vec{y}}{\vec{C}}{UM_1\cdots M_k}}$ and $\lbd{\vec{x}}{A}{\lbd{\vec{y}}{\vec{C}}{VN_1\cdots N_{k^{\prime}}}}$ where $\typd{\vec{x}}{A}$ is a vector $\typd{x_0, \ldots x_n}{A}$ of distinct variables of length $n+1$ which is disjoint from both $\typd{\vec{v}}{A}$ and $\typd{\vec{y}}{\vec{C}}$, and finally where $M_i,N_i$ is in long normal form, and where each $M_i,N_i$ has length $\leq m$, then $M\equals{\beta\eta}{} N$.\footnote{See \cite[p. 32]{Barendregt2013-eb} and \cite[p. 79]{Sorensen2006-ey} for information on long normal forms.} In this, $U,V$ are variables or constants. Note that $U,V$ are not of type $A$: for if $U$ was of type $A$, then since $A$ is atomic we would have that $k=0$ and then $B$ would have type $C_1\rightarrow \cdots \rightarrow C_{\ell}\rightarrow A$, contrary to hypothesis. Further, the hypotheses imply that for each $i<n$ one has:
\begin{align*}
& (UM_1\cdots M_k)[x_0:=v_0, \ldots, x_{n-1}=v_{n-1}, x_n:=v_i]  \\ & \hspace{20mm} \equals{\beta\eta}{}  (VN_1\cdots N_{k^{\prime}})[x_0:=v_0, \ldots, x_{n-1}=v_{n-1}, x_n:=v_i] 
\end{align*}
By Church-Rosser, for each $i<n$, these will $\beta\eta$-reduce to a common term, and in these reductions the heads $U,V$ will not change and all of the $\beta\eta$ reductions will happen internal to $M_j[x_0:=v_0, \ldots, x_{n-1}=v_{n-1}, x_n:=v_i]$ and $N_j[x_0:=v_0, \ldots, x_{n-1}=v_{n-1}, x_n:=v_i]$. This implies that $U,V$ are identical and that $k^{\prime}=k$ and that for each $1\leq j\leq k$ and each $i<n$ one has:
\begin{equation}
M_j[x_0:=v_0, \ldots, x_{n-1}=v_{n-1}, x_n:=v_i] \equals{\beta\eta}{} N_j[x_0:=v_0, \ldots, x_{n-1}=v_{n-1}, x_n:=v_i] \label{eqn:longnf1}
\end{equation}

First, we consider the base case of $m=1$. In this case, one has that each $M_j,N_j$ is a constant or a variable. If it is a constant or a variable of a different type than $A$ or from variables of type $A$ distinct from those in the vector~$\typd{\vec{x}}{A}$, then we can conclude from (\ref{eqn:longnf1}) that $M_j,N_j$ are identical. If it is a variable of type $A$ from the vector~$\typd{\vec{x}}{A}$, then we argue that $M_j,N_j$ are the same variable from the vector $\typd{\vec{x}}{A}$. Suppose that $M_j$ is $x_a$ and $N_j$ is $x_b$, where $a,b\leq n$. Then we argue that $a=b$ by considering the three possible cases: 

Case 1: Suppose $a,b<n$ are distinct. Then (\ref{eqn:longnf1}) for any $i<n$ says that $v_a\equals{\beta\eta}{} v_b$, a contradiction.

Case 2: Suppose $a<n$ and $b=n$. Since $n>1$ and $a<n$, we can find $i<n$ with $i\neq a$. Then (\ref{eqn:longnf1}) for this $i<n$ says that $v_a\equals{\beta\eta}{} v_i$, a contradiction.

Case 3: Suppose $a=n$ and $b<n$. This follows as in Case 2, but with the role of $a,b$ reversed.

Second, we consider the induction step $m>1$. It suffices to show that $M_j\equals{\beta\eta}{} N_j$ for all $1\leq j\leq k$. For the rest of the proof, fix $1\leq j\leq k$. Since $M_j,N_j$ are in long normal form, they can be written respectively as $\lbd{\vec{w}}{\vec{D}}{X\vec{P}}$ and $\lbd{\vec{w}}{\vec{D}}{Y\vec{Q}}$, with each $P_i$ and each $Q_i$ having length $<m$. If we define $P:=\lbd{\vec{x}}{A}{M_j}$ and $Q:=\lbd{\vec{x}}{A}{N_j}$, then these are also in long normal form. Further, (\ref{eqn:longnf1}) implies that $P\vec{v}v_i\equals{\beta\eta}{} Q\vec{v}v_i$ for all $i<n$. Thus by induction hypothesis applied to $P,Q$ we are done.
\end{proof}

One could not do the proof of the previous proposition in non-maximal $\mltup$ since the proof uses Church-Rosser (cf. Example~\ref{ex:friedmanwarren}). Further, if one tried to extend the proof of the previous proposition to the case of $n=1$ in $\lt$, then the proof breaks down in Cases~2-3. Further, the conclusion of the previous proposition is in fact false in the case of $n=1$ and~$\lt$, as one can see by the below example, which is closely related to Example~\ref{exa:failureopenterm1}:

\begin{exa}\label{exa:failureopenterm2} 
Let $\typd{U}{A^2\rightarrow B}$ be a variable. Let $M$ be the term $\lbd{v_0}{A}{\lbd{v_1}{A}{Uv_0v_0}}$ of $\lt$ and let $N$ be $U$. Then one has $Mv_0v_0\equals{\beta\eta}{} Nv_0v_0$ but one does not have $M\equals{\beta\eta}{} N$.
\end{exa}

We now apply Proposition~\ref{prop:longnormalform} to $\mlt{n}$ for $n>1$:
\begin{prop}\label{prop:determineatermmodel}
Suppose that $n>1$. Suppose that $A$ is a state type and $B$ is a regular type of $\mlt{n}$, and that $\typd{P,Q}{A\rightarrow B}$ are terms of $\mlt{n}$, and that distinct variables $\typd{v_0, \ldots, v_{n-1}}{A}$ are such that for each $i<n$ one has $Pv_i\equals{\beta\eta\infty}{n} Qv_i$. Then $P\equals{\beta\eta\infty}{n} Q$.
\end{prop}
\begin{proof}
Since $A$ is a state type and $B$ is a regular type, the hypotheses on the types from Proposition~\ref{prop:longnormalform} is satisfied. By Corollary~\ref{prop:keyandrews}, there is a $\mathsf{BCDKW}$-combinatorial term $P^{\prime}$ of $\mlt{n}$ with the same free variables as $P$ such that $P^{\prime}\transreduces{\beta}{n} P$. Likewise, there is a $\mathsf{BCDKW}$-combinatorial term $Q^{\prime}$ of $\mlt{n}$ with the same free variables as $Q$ such that $Q^{\prime}\transreduces{\beta}{n} Q$. We abbreviate the distinct variables $\typd{v_0, \ldots, v_{n-1}}{A}$ as $\typd{\vec{v}}{A}$. Let $M$ be the term $\lbd{\vec{v}}{A}{P^{\prime}}$ of $\mlt{n}$ and let $N$ be the term $\lbd{\vec{v}}{A}{Q^{\prime}}$ of $\mlt{n}$. Then one has that  $M\vec{v}v_i\equals{\beta\eta\infty}{n} N\vec{v}v_i$ for each $i<n$. Then one has $\lt\vdash_{\beta\eta} M\vec{v}v_i=N\vec{v}v_i$ for each $i<n$. Then by Proposition~\ref{prop:longnormalform} one has $\lt\vdash_{\beta\eta} M=N$. Then by Theorem~\ref{thmconservationlt} one has that $\mltpure\vdash_{\beta\eta} M=N$. Then by Theorem~\ref{thmcompletemltup} one has that $\mlt{n}\vdash_{\beta\eta\infty} M=N$. Then $\mlt{n}\vdash_{\beta\eta\infty} M\vec{v}=N\vec{v}$. Then $\mlt{n}\vdash_{\beta\eta\infty} P^{\prime}=Q^{\prime}$. Then $\mlt{n}\vdash_{\beta\eta\infty} P=Q$. 
\end{proof}

We then modify the definition of the open term applicative structure $\mathcal{O}_{\param}$ from Definition~\ref{defn:openterm} so that it is defined in terms of $\beta\eta\infty$-equality rather than $\beta\eta$-equality, and let $\mathcal{O}_{\param}^{\infty}$ be this modification. Then the previous proposition implies, following the proof of Proposition~\ref{prop:opentermstructureprop}, that for $n>1$ one has that the open term applicative structure $\mathcal{O}_{\param}^{\infty}$ determines a frame. Then one can show:

\begin{prop}
For each $n>1$, the frame determined by the open applicative structure~$\mathcal{O}_{n}^{\infty}$ is a model of $\mlt{n}$.
\end{prop}

\begin{proof}
Fix $n>1$. We show that the frame associated to the open term applicative structure $\mathcal{O}_n^{\infty}$ is a model. By Theorem~\ref{corcreatemodel}, it suffices to show that the denotations of all the $\mathsf{BCDKW}$-combinatorial terms of $\mlt{n}$ are well-defined in $\mathcal{O}_n^{\infty}$. For this, in turn, it suffices to show that this holds for the combinators. We show this for Cardinal since the others are similar. For all $\typd{P}{A\rightarrow B\rightarrow C}, \typd{Q}{B}, \typd{R}{A}$ one has 
$\db{xzy}_{\mathcal{O}_n^{\infty}, [x:=[P], y:=[Q], z:=[R]]} = [PRQ] = [\ctC{A}{B}{C}{P}{Q}{R}]$ by Theorem~\ref{prop:combinatoryeffect}. Then one has the following, where the first identity follows from the semantics for lambda abstraction (cf. Definition~\ref{defn:model}(\ref{defn:model:4})) and the last follows from the definition of the function from $\mathcal{O}^{\infty}_n(A)\rightarrow \mathcal{O}^{\infty}_n(C)$ determined by $[\ctC{A}{B}{C}{P}{Q}{}]$ (here we use that the open term applicative structure $\mathcal{O}_n^{\infty}$ determines a frame), and where the middle identity follows from our previous work:
\begin{align*}
& \db{\lbd{z}{A}{xyz}}_{\mathcal{O}^{\infty}_n, [x:=[P], y:=[Q]]} = \Lbd{[R]}{\mathcal{O}_n^{\infty}(A)}{\db{xzy}_{\mathcal{O}_n^{\infty}, [x:=[P], y:=[Q]]\,[z:=[R]]}}  \\
& = \Lbd{[R]}{\mathcal{O}_n^{\infty}(A)} [\ctC{A}{B}{C}{P}{Q}{}] [R] = [\ctC{A}{B}{C}{P}{Q}{}]
\end{align*}
We use this work now for the middle identity of the next step:
\begin{align*}
& \db{\lbd{y}{B}{\lbd{z}{A}{xyz}}}_{\mathcal{O}^{\infty}_n, [x:=[P]]} = \Lbd{[Q]}{\mathcal{O}_n^{\infty}(B)}{\db{\lbd{z}{A}{xzy}}_{\mathcal{O}_n^{\infty}, [x:=[P]]\, [y:=[Q]]}}  \\
& = \Lbd{[Q]}{\mathcal{O}_n^{\infty}(B)} [\ctC{A}{B}{C}{P}{}{}] [Q] = [\ctC{A}{B}{C}{P}{}{}]
\end{align*}
And finally we use this for the middle identity of the last step, where we abbreviate $A\rightarrow B\rightarrow C$ as $D$:
\begin{align*}
& \db{\lbd{x}{D}{\lbd{y}{B}{\lbd{z}{A}{xyz}}}}_{\mathcal{O}^{\infty}_n}  = \Lbd{[P]}{\mathcal{O}_n^{\infty}(D)}{\db{\lbd{y}{B}{\lbd{z}{A}{xzy}}}_{\mathcal{O}_n^{\infty}, [x:=[P]]}}  \\
& = \Lbd{[P]}{\mathcal{O}_n^{\infty}(B)} [\ctC{A}{B}{C}{}{}{}] [P] = [\ctC{A}{B}{C}{}{}{}]
\end{align*}
Hence one has $\db{\ctC{A}{B}{C}{}{}{}}_{\mathcal{O}_n^{\infty}(D)} = [\ctC{A}{B}{C}{}{}{}]$.
\end{proof}


\section{Pure combinatory logic and the weak lambda calculus}\label{sec:purecombinatory}

In this section, we turn to proving Theorem~\ref{thm:conservemltpureovermltup}, which is finally established in \S\ref{sec:pureclcompleteconserve}. 

This theorem concerns \emph{weak} $\beta$-reduction in $\mltup$, designated as $\reduces{w\beta}{\param}$. Recall from the close of \S\ref{subsec:conventionsredu} that $\reduces{w\beta}{\param}$ is the smallest relation which includes $\beta$-reduction and is closed on application on both sides; and that $\equals{w\beta}{\param}$ is the smallest equivalence relation containing $\reduces{w\beta}{\param}$. Hence, these reductions and equalities are formed without assuming that they are preserved under taking of lambda abstracts.\footnote{That is, these equalities are formed without the $\xi$-rule (cf. footnote~\ref{fn:xirule}). As mentioned in \S\ref{sec:intro}, weak systems were studied in untyped settings 
by \c{C}a{\u{g}}man-Hindley \cite{Cagman1998-pi} and in typed settings by Sestini \cite{Sestini2019-cc}. But many evaluation strategies are weak, and many of them are surveyed in \cite{Sestoft2001-gf}, \cite{Biernacka2022-hr}} To see the effect of this restriction, it is useful to take note of the following elementary example:

\begin{exa}[Paradigmatic example of a $\beta$-reduction which is not a $w\beta$-reduction]
Suppose that $A$ is a regular type of $\mltup$ and $\typd{x,y}{A}$ are distinct variables. Then one has $\lbd{x}{A}{(\lbd{y}{A}{y})x} \reduces{\beta}{\param} \lbd{x}{A}{x}$, but one does \emph{not} have $\lbd{x}{A}{(\lbd{y}{A}{y})x} \reduces{w\beta}{\param} \lbd{x}{A}{x}$.
\end{exa}

\noindent For another natural example, see Example~\ref{ex:thelbockxi}. 

However, since our $\beta$-reduction notion is distanced, any instance of distance $\geq 1$ contains a simplification of a lambda abstract under another lambda abstract:
\begin{exa}[Distanced $\beta$-reduction and weak reduction, illustrated]
Consider the following, where $A$ is of state type and $B$  is regular type and $\typd{v}{A}$ and $\typd{y}{A\rightarrow B}$ and $\typd{U}{A\rightarrow B\rightarrow B}$ are variables:
\begin{equation}\label{eqn:distanceoneweakill}
(\lbd{x}{B}{\lbd{v}{A}{Uvx}}) (yv) v \reduces{w\beta}{\param} (\lbd{x}{B}{Uvx})(yv)\reduces{w\beta_0}{\param} Uv(yv)
\end{equation}
We have that the first is a weak $\beta$-reduction of distance one, and the second is a weak $\beta$-reduction of distance zero, i.e. a weak $\beta_0$-reduction. But if we lambda abstracted over $\lbd{U}{A\rightarrow B\rightarrow B}{\ldots}$ in (\ref{eqn:distanceoneweakill}) then these would not be weak $\beta$-reductions but rather just ordinary $\beta$-reductions. 
\end{exa}
\noindent This elementary example is important because it underscores that weak $\beta$-reduction means ``the closure of $\beta$-reduction (Definition~\ref{defn:betareduction}) under application on both sides'' and not ``no simplification of lambda abstracts under other lambda abstracts, in some more general sense of simplification.'' 

There is a similar moral for the equality relation:
\begin{exa}[Weak equality illustrated]\label{exa:weakequality}
Suppose that $\typd{M,N}{A}$ and $\typd{L}{B}$ are terms of $\mltup$ with $B$ is regular, and suppose that $M \reduces{w\beta}{\param} N$ and that $M,N$ are free for $\typd{v}{A}$ in $L$. Then one has $L[x:=M]\equals{w\beta}{\param} L[x:=N]$ since $L[x:=M] \reducesbw{w\beta}{\param} (\lbd{x}{A}{L})M \reduces{w\beta}{\param} (\lbd{x}{A}{L})N  \reduces{w\beta}{\param} L[x:=N]$. If $x$ occurs free in $L$ under lambda abstractions, then weak $\beta$-equality does allow some $\beta$-reduction under lambda abstracts.\footnote{Both \c{C}a{\u{g}}man-Hindley (\cite[Definition 4.5 p. 245]{Cagman1998-pi}) and Sestini (cf. \cite[$\sigma$-rule p. 1]{Sestini2019-cc}) assume that if $M \reduces{w\beta}{\param} N$ then $L[x:=M] \reduces{w\beta}{\param} L[x:=N]$. Example~\ref{exa:weakequality} shows that we do not need to build this in, if we are just interested in weak $\beta$-equality.}
\end{exa}

Again, the aim of this last section is to prove Theorem~\ref{thm:conservemltpureovermltup}. We prove it by developing a theory of pure intensional combinatory logic~$\cltup$ and by proving results which allow us to partially transfer information back and forth between weak~$\beta$-equalities in~$\mltup$ and the associated equalities in~$\cltup$. The outline of this section is as follows. In \S\ref{sec:cl:combosandweak} we set up the pure combinators and their reduction relation, which also bears the traditional name of weak reduction;\footnote{Historically, the name for weak reduction in combinatory logic came first; then the adjective ``weak'' was subsequently used for the arguably correspondent notion in the lambda calculus. See \c{C}a{\u{g}}man-Hindley \cite[p. 240]{Cagman1998-pi} for more details of the history.} in \S\ref{subsec:clcr} we prove the Church-Rosser property for~$\cltup$ using Takahashi's method of complete developments; in \S\ref{sec:simabs} we show how to simulate abstraction using~$\mathsf{BCDKW}$, which differs from the usual simulation using~$\mathsf{SKI}$; in \S\ref{sec:fromltocl} we show how to move from weak $\beta$-equality in~$\mltup$ to weak equality in~$\cltup$, culminating in Theorem~\ref{thm:frommltuptoclt}; in \S\ref{sec:fromcombotolambda} we show how to move from weak equality in~$\cltup$ to $\beta\eta$ equality in $\mltup$, culminating in Theorem~\ref{thm:bigone}. Finally, in \S\ref{sec:pureclcompleteconserve} we put this together to establish Theorem~\ref{thm:conservemltpureovermltup}.

\subsection{Pure typed combinators and weak reduction}\label{sec:cl:combosandweak}

The following definition simply postulates special typed constants corresponding to the typed combinatory terms from Definition~\ref{defn:typedcombo}. We omit Starling and Identity bird since we can take them as defined (cf. Propositions~\ref{prop:clrecoverystarling}, \ref{prop:clrecoveryidentity}).

\begin{defi}[Typed combinator terms of $\cltup$]\label{defn:typedcombo2}
Let $\param$ be a parameter, and let $A,B,C$ be types. Then the \emph{combinator terms of $\cltup$} are 
\begin{enumerate}[leftmargin=*]
    \item \emph{Kestrel} $\ctK{A}{B}{}{}$ which has type $A\rightarrow B\rightarrow A$. It is required that $A$ has regular type.
    \item\label{defn:typedcombo2:2} \emph{Cardinal} $\ctC{A}{B}{C}{}{}{}$ which has type $(A\rightarrow B\rightarrow C)\rightarrow B\rightarrow A\rightarrow C$. It is required that $C$ has regular type and that either $A,B$ are distinct types, or $A,B$ are identical types with $\param(A)=\param(B)>1$. 
    \item \emph{Dardinal} $\ctD{c}{A}{B}{C}{}{}$ which has type $(A\rightarrow B\rightarrow C)\rightarrow A\rightarrow C$. It is required that $C$ has regular type, that $B$ is a state type, and that $\typd{c}{B}$ is a constant.
    \item \emph{Warbler} $\ctW{A}{B}{}{}$ which has type $(A\rightarrow  A\rightarrow  B) \rightarrow A\rightarrow  B$. It is required that $B$ is of regular type.
    \item \emph{Bluebird} $\ctB{A}{B}{C}{}{}{}$, of type  $(B\rightarrow C) \rightarrow (A \rightarrow B) \rightarrow A\rightarrow  C$. It is required that $B,C$ are regular types.
\end{enumerate}
\end{defi}

Each of the terms depends on the regularity of certain of their constitutive types $A,B,C$ (or just $A,B$ in the case of Kestrel and Warbler). But only Cardinal depends on the parameter $\param$ and will not be available in e.g. $\clt{1}$ when $A,B$ are identical state types. The constraint on Cardinal in (\ref{defn:typedcombo2:2}) was prefigured in Remark~\ref{rmk:oncardinalandstate}. The reason for this constraint is that we want to develop a combinatory logic that corresponds as much as possible to $\mltup$.

In parallel to Definition~\ref{defn:term}, we define:
\begin{defi}[Terms of $\cltup$]
Let $\param$ be a parameter and let a signature be fixed. Then the terms $\typd{M}{A}$ of $\cltup$ are defined as follows:
\begin{enumerate}[leftmargin=*]
    \item \label{defn:clterm:1} \emph{Variables}: the variables $\typd{v_i}{A}$ for $i<\param(A)$ are terms of $\cltup$.
    \item \label{defn:clterm:2} \emph{Constants}: the constants $\typd{c}{A}$ from the signature are terms of $\cltup$.
    \item \label{defn:clterm:3} \emph{Combinator terms}: the combinator terms of  $\cltup$ are terms of $\cltup$.
    \item \label{defn:clterm:4} \emph{Application}: If $\typd{M}{A\rightarrow B}$ and $\typd{N}{A}$ are terms of $\cltup$ then the application $\typd{(MN)}{B}$ is a term of $\cltup$.
\end{enumerate}
\end{defi}

We associate application to the left, so that the term $MNL$ is $(MN)L$. And we drop outer parentheses.

As with Proposition~\ref{prop:state-free}, we have:
\begin{prop}\label{prop:state-free:cl}
The only terms of $\cltup$ of state type are the variables and the constants.
\end{prop}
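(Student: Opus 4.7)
The plan is to mimic the structure of the proof of Proposition~\ref{prop:state-free}, doing a case split on the four clauses in the definition of terms of $\cltup$ and eliminating the two non-atomic cases. Let $B$ be a state type and suppose $\typd{M}{B}$ is a term of $\cltup$; I will show that $M$ must be a variable or a constant.

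First, I would rule out the application case: if $M$ were an application $PQ$, then for some type $A$ we would have $\typd{P}{A\rightarrow B}$ and $\typd{Q}{A}$ as terms of $\cltup$. But by Definition~\ref{defn:type}, a functional type $A\rightarrow B$ is a type only when its codomain $B$ is a regular type, and state types are not regular. Hence $A\rightarrow B$ is not a type of $\cltup$ at all, so no such $P$ exists, contradicting the assumption.

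Second, I would rule out the combinator-term case by simple inspection of Definition~\ref{defn:typedcombo2}. Each of $\ctK{A}{B'}{}{}$, $\ctC{A}{B'}{C}{}{}{}$, $\ctD{c}{A}{B'}{C}{}{}$, $\ctW{A}{B'}{}{}$, $\ctB{A}{B'}{C}{}{}{}$ has a type whose outermost connective is an arrow, i.e. a functional type. Since state types are atomic, the combinator terms cannot have state type $B$.

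Having eliminated the application and combinator cases, only the variable and constant clauses of the definition remain, which gives the proposition. I do not expect any obstacle here: the argument is essentially a verbatim transfer of the proof of Proposition~\ref{prop:state-free}, with the extra combinator-term clause handled by the observation that all combinators are typed with a top-level arrow.
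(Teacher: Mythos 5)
Your proposal is correct and matches the paper's own proof essentially verbatim: both rule out applications because $A\rightarrow B$ fails to be a type when $B$ is a state type, and rule out combinator terms because they all have functional (hence non-atomic) type, leaving only variables and constants.
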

\begin{proofdetail}
Suppose $B$ is a state type. A term of type $B$ cannot be an application $MN$ since then we would have $\typd{M}{A\rightarrow B}$ and $\typd{N}{A}$, but $A\rightarrow B$ is not a type since $B$ is a state type. Also, a term of type $B$ cannot be a combinatory term since these always have functional type. Hence, the only remaining options for terms are constants and variables.
\end{proofdetail}

Parallel to Theorem~\ref{prop:combinatoryeffect} and Proposition~\ref{prop:thefirstcardinaltodardinal}, we define:
\begin{defi}[Weak reduction; redex and contractum]\label{defn:weakreduction}
We define \emph{weak reduction} $\reduces{w}{\param}$ to be the reduction relation on terms of $\cltup$ given by the following
\begin{align}
 \ctK{A}{B}{P}{Q}\reduces{w}{\param} P & \hspace{5mm}\ctC{A}{B}{C}{P}{Q}{R}\reduces{w}{\param} PRQ & \hspace{3mm} \ctD{c}{A}{B}{C}{P}{R}\reduces{w}{\param} PRc \notag \\
 \ctW{A}{B}{P}{Q}\reduces{w}{\param} PQQ  &\hspace{5mm}  \ctB{A}{B}{C}{P}{Q}{R}\reduces{w}{\param} P(QR) \label{eqn:weakreduction} & \ctC{A}{B}{C}{P}{c}\reduces{w}{\param} \ctD{c}{A}{B}{C}{P}{}
\end{align}
provided that the combinatory terms are combinatory terms of $\cltup$ and that the types are appropriate to make the applications well-defined (the typing will vary with the combinatory term).

We refer to terms on the left-hand side  of the $\reduces{w}{\param}$-arrows in (\ref{eqn:weakreduction}) as the \emph{redex} and we refer to the associated right-hand side as the \emph{contractum}.
\end{defi}

As in \S\ref{subsec:conventionsredu}, we let $\reduces{w}{\param}$ be the compatible closure of the relation defined by the above schemas, i.e. the smallest binary relation on terms of $\cltup$ containing the weak reductions which is closed under application on both sides. The latter means: if $\typd{P,Q}{A}$ and $P\reduces{w}{\param}Q$ then $MP\reduces{w}{\param} MQ$ for all terms $\typd{M}{A\rightarrow B}$ of $\cltup$; and likewise if $\typd{M,N}{A\rightarrow B}$ and $M\reduces{w}{\param} N$ then $MP\reduces{w}{\param} NP$ for all terms $\typd{P}{A}$ of $\cltup$.\footnote{As in \S\ref{subsec:conventionsredu}, we hasten to say that ``closed under application on both sides'' does \emph{not} mean: if $P\reduces{w}{\param}Q$ and $M\reduces{w}{\param}N$, then $MP\reduces{w}{\param}NQ$. This would be a parallel reduction notion (cf. \S\ref{subsec:clcr}), whereas the idea described in the body of the text is explicating the idea of a single weak reduction happening somewhere inside the term.} Note that there is \emph{no analogue} of closure under lambda abstraction in the weak reductions of combinatory logic. Finally, $\transreduces{w}{\param}$ is the reflexive transitive closure of $\reduces{w}{\param}$, while $\equals{w}{\param}$ is the smallest equivalence relation containing~$\reduces{w}{\param}$. 

For ease of future reference, we number the following remarks:
\begin{rem}[Omitting the typing in developing combinatory logic]\label{rmk:implicittyping}
In what follows, for ease of readability, we omit explicit descriptions of the typing of combinatory terms. This is because in combinatory logic, everything is done in terms of a large number of applications, and explicitly typing all of these would excessively complicate the description of even the simplest of inferences, like the weak reductions.
\end{rem}

\begin{rem}[Remark on the Cardinal-to-Dardinal weak reduction]\label{rmk:cardinal2dardingal}
The last weak reduction in (\ref{eqn:weakreduction}) has, as its parallel in $\mltup$, Proposition~\ref{prop:thefirstcardinaltodardinal} rather than Theorem~\ref{prop:combinatoryeffect}. We call this last weak reduction, namely the weak reduction $\ctC{A}{B}{C}{P}{c}\reduces{w}{\param} \ctD{c}{A}{B}{C}{P}{}$, \emph{the Cardinal-to-Dardinal weak reduction}. 
    
This is a weak reduction which is available in $\clt{n}$ for $n\geq 2$ but is not in general available in $\clt{1}$ since, when $A,B$ are identical state types, then $\ctC{A}{B}{C}{P}{c}$ is not a term of $\clt{1}$ since $\ctC{A}{B}{C}{}{}$ is not a term of $\clt{1}$. However, since Dardinal does not depend on the parameter $\param$ but does depend on the signature, we have that Dardinal is a term of $\cltup$ for all parameters $\param$ when the signature has constants of state type.
\end{rem}

\subsubsection{Appearance of variables}

In $\cltup$, like in all combinatory logics, there is no primitive binding of variables, although we can later introduce a simulation thereof (cf. \S\ref{sec:simabs}). Hence, we just speak of variables \emph{appearing in a term} or \emph{occurring in a term}. Terms with no variables appearing in them are called \emph{closed}.

We use $L[x:=N]$ for the result of substituting all occurrences of variable $\typd{x}{A}$ by term $\typd{N}{A}$ in term $L$. If $\typd{\vec{x}}{\vec{A}}$ is a pairwise distinct set of variables, then we use $L[\vec{x}:=\vec{N}]$ for the result of simultaneously substituting, in term $L$, all occurrences any variable in the vector $\typd{\vec{x}}{\vec{A}}$ by the corresponding term in $\typd{\vec{N}}{\vec{A}}$. The substitution lemma then reads as follows:

\begin{lem}[Substitution Lemma]\label{lem:substitution}
Suppose $P\transreduces{w}{\param} Q$. Then:
\begin{enumerate}
    \item\label{lem:substitution:1} The variables appearing in $Q$ are a subset of the variables appearing in $P$.
    \item\label{lem:substitution:2} $R[v:=P] \transreduces{w}{\param} R[v:=Q]$
    \item\label{lem:substitution:3} $P[\vec{x}:=\vec{N}] \transreduces{w}{\param} Q[\vec{x}:=\vec{N}] $
\end{enumerate}
\end{lem}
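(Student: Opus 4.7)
The plan is to prove the three parts in order, each by a short induction, exploiting that the combinator symbols $\mathsf{K}$, $\mathsf{C}$, $\mathsf{D}^c$, $\mathsf{W}$, $\mathsf{B}$ are themselves closed and that the constant $\typd{c}{B}$ appearing in the Dardinal and Cardinal-to-Dardinal schemas contains no variables and is fixed under any substitution.

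For part~(\ref{lem:substitution:1}), I would induct on the length of $P \transreduces{w}{\upsilon} Q$. For a single step of the compatible closure, I handle each of the six schemas in Definition~\ref{defn:weakreduction} directly: the Kestral schema discards a term and hence its variables; the Dardinal schema replaces a subterm by a constant; the Cardinal, Warbler, and Bluebird schemas rearrange or duplicate subterms without introducing anything new; and the Cardinal-to-Dardinal schema $\ctC{A}{B}{C}{P}{c} \reduces{w}{\upsilon} \ctD{c}{A}{B}{C}{P}{}$ has the same variables on both sides because it merely swaps combinator symbols and $\typd{c}{B}$ is variable-free. The compatibility clauses for application are immediate since the variables of an application are the union of the variables of its parts.

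For part~(\ref{lem:substitution:2}), I would induct on the structure of $R$. When $R$ is the variable $\typd{v}{}$, we have $R[v:=P] = P \transreduces{w}{\upsilon} Q = R[v:=Q]$ by hypothesis; when $R$ is a different variable, a constant, or a combinator symbol, both sides equal $R$; and for $R = R_1 R_2$ the induction hypothesis combined with the compatibility of $\reduces{w}{\upsilon}$ under application on both sides (as described in the paragraph following Definition~\ref{defn:weakreduction}) gives $R_1[v:=P] R_2[v:=P] \transreduces{w}{\upsilon} R_1[v:=Q] R_2[v:=P] \transreduces{w}{\upsilon} R_1[v:=Q] R_2[v:=Q]$.

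For part~(\ref{lem:substitution:3}), I would induct on the derivation of $P \transreduces{w}{\upsilon} Q$, with the only nontrivial case being a one-step reduction at the top given by one of the six schemas. In each such case, substitution commutes with the schema: the combinator symbol is unchanged by $[\vec{x}:=\vec{N}]$, substitution distributes over application, and in the Dardinal and Cardinal-to-Dardinal schemas the distinguished constant $\typd{c}{B}$ is fixed. So, for example, $\ctW{A}{B}{L_1}{L_2}$ has substituted form $\ctW{A}{B}{L_1[\vec{x}:=\vec{N}]}{L_2[\vec{x}:=\vec{N}]}$, and this $w$-reduces to the substituted form $L_1[\vec{x}:=\vec{N}] (L_2[\vec{x}:=\vec{N}]) (L_2[\vec{x}:=\vec{N}])$ of the original contractum $L_1 L_2 L_2$; the other schemas are analogous. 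The compatibility and transitivity clauses are immediate from the induction hypothesis. The one subtlety worth flagging as the main obstacle is that substitution must not destroy the well-formedness constraints on the combinator terms of $\cltup$ (the cardinality side-conditions on Cardinal, the state-type and signature constraints on Dardinal, and so on); but since these constraints are attached to the combinator symbol itself and not to its argument terms, and substitution leaves the combinator symbol untouched, the constraints are preserved automatically.
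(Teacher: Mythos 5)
Your proposal is correct and is essentially the standard argument: the paper's proof simply cites the corresponding lemma in Hindley--Seldin (Lemma 2.14), whose proof is exactly the case-by-case induction you describe. You have also correctly handled the only system-specific wrinkle, namely the Dardinal and Cardinal-to-Dardinal schemas, where the distinguished constant $\typd{c}{B}$ is variable-free and fixed under substitution.
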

\begin{proof}
The proof is identical to \cite[Lemma 2.14 p. 25]{Hindley2008-vw}.
\end{proof}

\begin{proofdetail}
\begin{proof}
For reference, here is the detailed proof. It suffices to prove that (\ref{lem:substitution:1})-(\ref{lem:substitution:3}) hold when $P \reduces{w}{\param} Q$. Suppose that  $P\reduces{w}{\param} Q$. 

For (\ref{lem:substitution:1}), one just looks through the weak reductions in Definition~\ref{defn:weakreduction} and notes that any of the displayed subterms $P,Q,R$ of the contractum in which variables appear are formed from applications applied to subterms of the redex.  

For (\ref{lem:substitution:2}), this just follows by induction on $R$. If $R$ is $v$, then it follows from the hypothesis that $P\reduces{w}{\param} Q$. If $R$ is another variable $u$, then $R[v:=P] \reduces{w}{\param} R[v:=Q]$ is just $u\reduces{w}{\param} u$, which follows from reflexivity of $\transreduces{w}{\param}$. Similarly if $R$ is a constant or combinator. Suppose that $R$ is $MN$. Then by induction hypothesis we have $M[v:=P] \transreduces{w}{\param} M[v:=Q]$ and $N[v:=P] \transreduces{w}{\param} N[v:=Q]$. Then since $\transreduces{w}{\param}$ is closed under application, we have $M[v:=P]N[v:=P] \transreduces{w}{\param} M[v:=Q] N[v:=Q]$.

For (\ref{lem:substitution:3}), one has e.g.  $(\ctK{A}{B}{P}{Q})[\vec{x}:=\vec{N}] \equiv  (\ctK{A}{B}{P[\vec{x}:=\vec{N}]}{Q[\vec{x}:=\vec{N}]})  \reduces{w}{\param} P[\vec{x}:=\vec{N}]$ and the other base cases are similar. The induction steps for the compatible closure are trivial.
\end{proof}
\end{proofdetail}

\subsubsection{Combinatory logic and the partial order on parameters}\label{subsec:partialorder}

Recall from \S\ref{subsec:typesandterms} the natural partial order on parameters: $\param \leq \param^{\prime}$ iff for all state types $A$ one has $\param(A)\leq \param^{\prime}(A)$. 

Obviously given the definition of terms of $\cltup$ in Definition~\ref{defn:typedcombo2}, we have that if $\param \leq \param^{\prime}$, then all terms of $\cltup$ are terms of $\clt{\param^{\prime}}$. Further often it is a proper subset. For example, if $A,B$ are identical state types with $\param(A)=\param(B)=1$, then $\ctC{A}{B}{C}{}{}{}$ is not a term of $\cltup$, but it would be a term of $\clt{\param^{\prime}}$ for any $\param^{\prime} > \param$ with $\param^{\prime}(A)=\param^{\prime}(B)>1$.

\begin{lem}[Weak reduction preserves $\cltup$]\label{lem:preservescltup}
Suppose that $\param\leq \param^{\prime}$.

Suppose that $\typd{P}{A}$ is a term of $\cltup$, and suppose $\typd{Q}{A}$ is a term of $\clt{\param^{\prime}}$.

If $P\transreduces{w}{\param} Q$ then $Q$ is also a term of $\cltup$.

\end{lem}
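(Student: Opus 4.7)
The plan is to proceed by induction on the length of the reduction sequence $P \transreduces{w}{\upsilon} Q$. The base case (length zero, so $Q \equiv P$) is immediate. For the inductive step, it suffices to prove the one-step claim: if $R$ is a term of $\cltup$ and $R \reduces{w}{\upsilon} R'$, then $R'$ is a term of $\cltup$.

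A single weak-reduction step replaces a subterm of $R$ (the redex) by its contractum, as specified in Definition~\ref{defn:weakreduction}. A routine induction on term formation shows that every subterm of a $\cltup$-term is itself a $\cltup$-term, so in particular the redex lies in $\cltup$. A parallel induction shows that replacing a subterm of a $\cltup$-term by another $\cltup$-term of the same type yields a $\cltup$-term. Thus the task reduces to checking, for each schema in~(\ref{eqn:weakreduction}), that the contractum is a term of $\cltup$ whenever the redex is.

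For the first five schemas (Kestral, Cardinal, Dardinal, Warbler, Bluebird), the contractum is built solely from subterms of the redex (some among $P, Q, R, c$) by application; since these subterms are in $\cltup$ by the observation above, and well-typed applications of $\cltup$-terms stay in $\cltup$ (the typing still matches, because the redex was well-typed), the contractum lies in $\cltup$. The only genuinely non-trivial case — and the main obstacle — is the Cardinal-to-Dardinal schema $\ctC{A}{B}{C}{P}{c}\reduces{w}{\upsilon} \ctD{c}{A}{B}{C}{P}{}$, since it is the only rule whose contractum contains a combinator absent from the redex. Here the resolution is that, by the proviso in Definition~\ref{defn:weakreduction}, the reduction is only defined when both Cardinal and Dardinal are combinator terms of $\cltup$; moreover, as emphasised in Remark~\ref{rmk:cardinal2dardingal}, Dardinal does not depend on the parameter $\upsilon$ at all, so no escape into the larger system $\clt{\upsilon'}$ can arise from its introduction. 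The remaining ingredient — that the subterm $P$ appearing in the contractum lies in $\cltup$ — is immediate.

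This completes the inductive step and hence the proof. The hypothesis that $Q$ is \emph{a priori} a term of $\clt{\upsilon'}$ plays no essential role in the argument; it appears in the statement only to frame the lemma as asserting that weak reduction in $\cltup$ cannot drift out into the strictly larger system $\clt{\upsilon'}$.
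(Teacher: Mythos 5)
Your proposal is correct and follows essentially the same route as the paper's proof: reduce to a single weak-reduction step, observe that each contractum is built by application from subterms of the redex, and isolate the Cardinal-to-Dardinal schema as the only case introducing a new combinator, resolved by the fact that Dardinal does not depend on the parameter $\upsilon$. Your version merely spells out the closure-under-subterms and subterm-replacement steps that the paper leaves to inspection.
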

\begin{proof}
It suffices to show it for $\reduces{w}{\param}$. But this follows by inspection of Definition~\ref{defn:weakreduction}: for we see that the contractum is formed by $\leq 3$ applications to subterms of the redex. The only exception to this is the Cardinal-to-Dardinal weak reduction (cf. Remark~\ref{rmk:cardinal2dardingal}), which additionally includes a new Dardinal term in the contractum. But since the Dardinal terms do not depend on the parameter it too is a term of $\cltup$.
\end{proof}

\subsubsection{Recovery of other combinators}

In parallel to Proposition~\ref{prop:wegotstarlingback} we have:
\begin{prop}[Recovery of Starling]\label{prop:clrecoverystarling}
Suppose $A,B,C$ are types and $A,B$ are regular. Then there is a closed term $\ctS{A}{B}{C}{}{}{}$ of $\cltup$ of type $(C\rightarrow   A\rightarrow  B) \rightarrow (C\rightarrow A)\rightarrow C\rightarrow B$ such that $ \ctS{A}{B}{C}{P}{Q}{R}\transreduces{w}{\param} PR(QR)$ for all terms $P,Q,R$ of $\cltup$ of the appropriate type to make the applications well-formed. 
\end{prop}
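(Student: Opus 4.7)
The plan is to reuse the very same combinator expression that was verified to work in $\mltup$ in the proof of Proposition~\ref{prop:wegotstarlingback}, now read as a term of $\cltup$. That is, I would define
\[
\ctS{A}{B}{C}{}{}{} \;:=\; \ctB{A_1}{B_1}{C_1}{}{}{} \bigl(\ctB{A_2}{B_2}{C_2}{}{}{} (\ctB{A_3}{B_3}{C_3}{}{}{} \ctW{A_4}{B_4}{}{}) \ctC{A_5}{B_5}{C_5}{}{}{}\bigr) \bigl(\ctB{A_6}{B_6}{C_6}{}{}{} \ctB{A_7}{B_7}{C_7}{}{}{}\bigr),
\]
where the constituent types $A_i,B_i,C_i$ are the functions of $A,B,C$ given in the table in the proof of Proposition~\ref{prop:wegotstarlingback}. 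The overall strategy is then threefold: check that every constituent combinator is a legitimate term of $\cltup$; read off the type of the whole expression from the table; and verify the desired weak-reduction identity by repeated application of the schemas in Definition~\ref{defn:weakreduction}.

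For the first step, each Bluebird $\ctB{A_i}{B_i}{C_i}{}{}{}$ for $i\in\{1,2,3,6,7\}$ has the corresponding $B_i,C_i$ regular by inspection of the table (they are either function types or are equal to $B$ or $C$, which are regular by hypothesis), so they are all terms of $\cltup$; the Warbler $\ctW{A_4}{B_4}{}{}$ has $B_4=B$ regular; and the Cardinal $\ctC{A_5}{B_5}{C_5}{}{}{}$ has $C_5=C\rightarrow B$, which is automatically regular. The only genuinely new condition to check, compared to $\mltup$, is the distinctness clause for Cardinal in Definition~\ref{defn:typedcombo2}: but $A_5=C$ and $B_5=C\rightarrow A$ are always distinct (the second is a function type whose codomain equals the first's arity class only degenerately), so the clause is satisfied without needing to invoke $\upsilon(A_5)=\upsilon(B_5)>1$. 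Hence no restrictions on $\upsilon$ beyond those hypothesised for $\cltup$ are required. The type computation proceeds verbatim as in Proposition~\ref{prop:wegotstarlingback}, yielding the Starling type $(C\rightarrow A\rightarrow B)\rightarrow(C\rightarrow A)\rightarrow C\rightarrow B$.

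For the reduction step, applying the defined $\ctS{A}{B}{C}{}{}{}$ to arbitrary $P,Q,R$ of the appropriate types and repeatedly firing the Bluebird, Cardinal, and Warbler rules of Definition~\ref{defn:weakreduction} contracts the expression to $PR(QR)$ in a short chain of weak reductions, schematically $\ctB{}{}{}{}XYZ\reduces{w}{\upsilon}X(YZ)$, $\ctC{}{}{}{}XYZ\reduces{w}{\upsilon}XZY$, and $\ctW{}{}{}{}XY\reduces{w}{\upsilon}XYY$ applied in the order dictated by the nesting of applications. I expect this to be the least obstructed part of the argument: weak reduction in $\cltup$ is pure pattern matching on constants, with none of the ``free for'' and $\alpha$-conversion side-conditions that made the analogous calculation in Proposition~\ref{prop:wegotstarlingback} delicate. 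The main (mild) obstacle is therefore purely bookkeeping in the verification of the Cardinal distinctness clause; once that is dispatched the rest is essentially a typed replay of the classical $\mathsf{S}=\mathsf{B}(\mathsf{B}(\mathsf{BW})\mathsf{C})(\mathsf{BB})$ identity from untyped combinatory logic.
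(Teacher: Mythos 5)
Your proposal is correct and matches the paper's proof exactly: the paper also takes $\ctS{A}{B}{C}{}{}{}$ to be $\ctB{A_1}{B_1}{C_1}{}{}{} (\ctB{A_2}{B_2}{C_2}{}{}{} (\ctB{A_3}{B_3}{C_3}{}{}{} \ctW{A_4}{B_4}{}{}) \ctC{A_5}{B_5}{C_5}{}{}{}) (\ctB{A_6}{B_6}{C_6}{}{}{} \ctB{A_7}{B_7}{C_7}{}{}{})$ with the same type table and then ``just uses weak reductions.'' One trivial slip: in your parenthetical the atomic constituents among the $B_i,C_i$ are $A$ and $B$ (namely $B_7$ and $C_7$), not ``$B$ or $C$''---which is fortunate, since $C$ is not assumed regular.
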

\begin{proof}
In $\cltup$, we may take $\ctS{A}{B}{C}{}{}{}$ to be the following term, where $A_i,B_i,C_i$ are defined in terms of $A,B,C$ as the proof of Proposition~\ref{prop:wegotstarlingback}:
\begin{equation*}
\ctB{A_1}{B_1}{C_1}{}{}{} (\ctB{A_2}{B_2}{C_2}{}{}{} (\ctB{A_3}{B_3}{C_3}{}{}{} \ctW{A_4}{B_4}{}{}) \ctC{A_5}{B_5}{C_5}{}{}{}) (\ctB{A_6}{B_6}{C_6}{}{}{} \ctB{A_7}{B_7}{C_7}{}{}{}) 
\end{equation*}
Then just use weak reductions.
\end{proof}

In parallel to Proposition~\ref{prop:wegotidentitybirdback} we have:
\begin{prop}[Recovery of Identity]\label{prop:clrecoveryidentity}
Suppose $B$ is a regular type. Then there is a closed term $\ctI{B}{}$ of $\cltup$ of type $B\rightarrow B$ such that $\ctI{B}{P}\transreduces{w}{\param} P$ for all terms $\typd{P}{B}$ of $\cltup$.
\end{prop}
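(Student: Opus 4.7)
The plan is to mimic the lambda-calculus proof of Proposition~\ref{prop:wegotidentitybirdback}, which defined the identity bird using $\ctS{B\rightarrow B}{B}{B}{}{}{} \ctK{B}{B\rightarrow B}{}{} \ctK{B}{B}{}{}$. The same skeleton should work in the combinatory setting once we verify that each ingredient is available in $\cltup$ and that the target weak reduction goes through.

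First I would check availability: since $B$ is regular by hypothesis, $B\rightarrow B$ is also regular (clause~(\ref{type:definition:2}) of Definition~\ref{defn:type}). Therefore $\ctK{B}{B}{}{}$ and $\ctK{B}{B\rightarrow B}{}{}$ both satisfy the regularity requirement on their first type index in Definition~\ref{defn:typedcombo2}, so they are terms of $\cltup$. Likewise, the Starling $\ctS{B\rightarrow B}{B}{B}{}{}{}$ demands that its first two type arguments be regular, which is exactly the content of Proposition~\ref{prop:clrecoverystarling} together with the regularity of $B\rightarrow B$ and $B$; thus we may invoke that proposition to obtain the relevant Starling as a closed term of $\cltup$.

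Next I would verify the types: $\ctS{B\rightarrow B}{B}{B}{}{}{}$ has type $(B\rightarrow(B\rightarrow B)\rightarrow B)\rightarrow(B\rightarrow B\rightarrow B)\rightarrow B\rightarrow B$; the first Kestrel $\ctK{B}{B\rightarrow B}{}{}$ has type $B\rightarrow(B\rightarrow B)\rightarrow B$ and so slots into the first argument; the second Kestrel $\ctK{B}{B}{}{}$ has type $B\rightarrow B\rightarrow B$ and slots into the second argument; the resulting closed term has type $B\rightarrow B$, which is the desired type for $\ctI{B}{}$.

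Finally, the weak-reduction behaviour is immediate: given $\typd{P}{B}$, apply the Starling reduction from Proposition~\ref{prop:clrecoverystarling} and then the Kestrel reduction from~(\ref{eqn:weakreduction}),
\begin{equation*}
\ctS{B\rightarrow B}{B}{B}{}{}{}\,\ctK{B}{B\rightarrow B}{}{}\,\ctK{B}{B}{}{}\,P \;\transreduces{w}{\upsilon}\; \ctK{B}{B\rightarrow B}{}{}\,P\,(\ctK{B}{B}{}{}\,P) \;\reduces{w}{\upsilon}\; P,
\end{equation*}
which gives exactly what is required. There is no real obstacle here, since the only subtlety — ensuring the Starling is in fact available — is already handled by the regularity assumption together with Proposition~\ref{prop:clrecoverystarling}; no appeal to the Cardinal-to-Dardinal machinery or to the parameter $\upsilon$ is needed.
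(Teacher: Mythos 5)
Your proposal is correct and uses exactly the term the paper uses, namely $\ctS{B \rightarrow B}{B}{B}{}{}{} \ctK{B}{B \rightarrow B}{}{} \ctK{B}{B}{}{}$; the paper's own proof is a one-liner and your type-checking and reduction steps are just the details it leaves implicit. No issues.
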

\begin{proof}
Again we use $\ctS{B \rightarrow B}{B}{B}{}{}{} \ctK{B}{B \rightarrow B}{}{} \ctK{B}{B}{}{}$.
\end{proof}

\subsection{Church-Rosser}\label{subsec:clcr}

In this subsection, we prove Church-Rosser for weak reduction in $\cltup$ (Theorem~\ref{thm:crforclt}). The proof follows closely the outline of Takahashi's proof of Church-Rosser for the untyped lambda calculus.\footnote{\cite{Takahashi1995-kp}. See \cite[\S{7.2}]{Cardone2006-qk} for discussion of the history of related proofs of Church-Rosser.}

It begins with a parallel reduction notion. As its name suggests, it is trying to isolate a notion where multiple weak reductions are happening simultaneously. 
\begin{defi}[Parallel reduction]\label{defn:takahashi}
The binary relation $\Reduces{w}{\param}$ is the least binary relation on terms of $\cltup$ of the same type
which satisfies:
\begin{enumerate}[leftmargin=*]
    \item \label{defn:takahashi1} $P\Reduces{w}{\param} P$ whenever $P$ is a variable, constant, or combinatory term.    
    \item \label{defn:takahashi2} If $P\Reduces{w}{\param} P^{\prime}$ and $Q\Reduces{w}{\param} Q^{\prime}$ and  $R\Reduces{w}{\param} R^{\prime}$, then
\begin{align*}
  \hspace{3mm} \ctK{A}{B}{P}{Q}\Reduces{w}{\param} P^{\prime} & \hspace{3mm}\ctC{A}{B}{C}{P}{Q}{R}\Reduces{w}{\param} P^{\prime}R^{\prime}Q^{\prime} &  \ctD{c}{A}{B}{C}{P}{R}\Reduces{w}{\param} P^{\prime}R^{\prime}c \\
\hspace{3mm} \ctW{A}{B}{P}{Q}\Reduces{w}{\param} P^{\prime}Q^{\prime}Q^{\prime}  &\hspace{3mm}  \ctB{A}{B}{C}{P}{Q}{R}\Reduces{w}{\param} P^{\prime}(Q^{\prime}R^{\prime}) & \hspace{3mm} \ctC{A}{B}{C}{P}{c}\Reduces{w}{\param} \ctD{c}{A}{B}{C}{P^{\prime}}{}
\end{align*}
provided that the combinatory terms are combinatory terms of $\cltup$ and that the types are appropriate to make the applications well-defined (the typing will vary with the combinatory term).
\item \label{defn:takahashi3} If  $P\Reduces{w}{\param} P^{\prime}$ and $Q\Reduces{w}{\param} Q^{\prime}$, then  $PQ\Reduces{w}{\param} P^{\prime}Q^{\prime}$, provided that  the types are appropriate to make the applications well-defined.
\end{enumerate}
\end{defi}

As with many inductive definitions, it can be built up from below:
\begin{prop}[Characterisation of parallel reduction ``from below'']
The definition of $\Reduces{w}{\param}$ in Definition~\ref{defn:takahashi} is equivalent to the union of $\Reduces{w,s}{\param}$ where we define this recursively in~$s\geq 0$:
\begin{enumerate}[leftmargin=*]
    \item \label{defn:takahashibelow1} For stage $s=0$, the relation $\Reduces{w,s}{\param}$ is the identity relation on variables, constants, and combinatory terms.
    \item \label{defn:takahashibelow2} For even stages $s\geq 0$, the relation $\Reduces{w,s+1}{\param}$ is the union of the previous stages plus $\ctW{A}{B}{P}{Q}\Reduces{w,s+1}{\param} P^{\prime}Q^{\prime}Q^{\prime}$ for all $P\Reduces{w,r}{\param} P^{\prime}$ and $Q\Reduces{w,t}{\param} Q^{\prime}$ with $r,t\leq s$; and similarly for the other weak reductions.
    \item \label{defn:takahashibelow3} For odd stages $s\geq 0$, the relation $\Reduces{w,s+1}{\param}$ is the union of the previous stages plus $PQ\Reduces{w,s+1}{\param} P^{\prime}Q^{\prime}$ for all $P\Reduces{w,r}{\param} P^{\prime}$ and $Q\Reduces{w,t}{\param} Q^{\prime}$ with $r,t\leq s$.
\end{enumerate}
\end{prop}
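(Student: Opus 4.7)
The strategy is to prove both inclusions between the inductively defined $\Reduces{w}{\upsilon}$ and the union $U := \bigcup_{s \geq 0} \Reduces{w,s}{\upsilon}$ of the stagewise approximations. A useful preliminary observation is that the stages are monotonically increasing, i.e.\ $\Reduces{w,r}{\upsilon} \subseteq \Reduces{w,s}{\upsilon}$ whenever $r \leq s$; this is immediate from the ``union of the previous stages plus\ldots{}'' phrasing in the stagewise clauses, and I would use it freely in what follows.

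For the inclusion $U \subseteq \Reduces{w}{\upsilon}$, I would induct on $s$ to show $\Reduces{w,s}{\upsilon} \subseteq \Reduces{w}{\upsilon}$. The base case $s = 0$ is exactly clause~(\ref{defn:takahashi1}) of Definition~\ref{defn:takahashi}, since the identity pairs on variables, constants, and combinatory terms lie in $\Reduces{w}{\upsilon}$. At the inductive step, any new pair added at an odd stage $s+1$ has one of the combinator-reduction shapes and is built from premises at stages $\leq s$; by the induction hypothesis those premises already lie in $\Reduces{w}{\upsilon}$, and clause~(\ref{defn:takahashi2}) then places the conclusion into $\Reduces{w}{\upsilon}$. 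A new pair at an even stage $s+1$ is of the form $PQ$, built by application, and is handled analogously via clause~(\ref{defn:takahashi3}).

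For the reverse inclusion $\Reduces{w}{\upsilon} \subseteq U$, since $\Reduces{w}{\upsilon}$ is the least binary relation satisfying clauses~(\ref{defn:takahashi1})--(\ref{defn:takahashi3}), it suffices to verify that $U$ itself satisfies these three clauses. Clause~(\ref{defn:takahashi1}) holds because the required pairs lie in $\Reduces{w,0}{\upsilon} \subseteq U$. For clauses~(\ref{defn:takahashi2}) and~(\ref{defn:takahashi3}), given premises $P \Reduces{w}{\upsilon} P'$, $Q \Reduces{w}{\upsilon} Q'$, and (in the combinator case) a third such pair, all lying in $U$ at respective witness stages $r_1, r_2, r_3$, I would choose any $s$ of the appropriate parity---even for the combinator clause, odd for the application clause---with $s \geq \max(r_1, r_2, r_3)$; by monotonicity all three premises hold at stage $s$, and then the matching stagewise rule at stage $s+1$ places the conclusion into $\Reduces{w,s+1}{\upsilon} \subseteq U$. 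The only real bookkeeping is the parity matching, but since above any $r$ there are infinitely many stages of each parity this is immediate, and the argument is otherwise a routine equivalence between an inductive definition and its stagewise approximation.
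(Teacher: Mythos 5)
Your double-inclusion argument is correct: the paper itself omits the proof as standard, and what you give is precisely the standard equivalence between a least-fixed-point definition and its stagewise approximation (induction on $s$ for one inclusion, closure of the union under the generating clauses plus the parity/monotonicity bookkeeping for the other). Nothing in the typed setting disturbs this, so your proposal matches what the paper intends.
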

The proof is standard and so we omit it.

The characterization ``from below'' can be used to show the following:
\begin{prop}[Successors of base cases under parallel reduction; successors of non-redexes under parallel reduction]\label{prop:helperfortaka}\hfill
\begin{enumerate}[leftmargin=*]
    \item \label{prop:helperfortaka:1} If $M$ is a variable, constant, or combinatory term and $M\Reduces{w}{\param} N$ then $N$ is $M$.
    \item \label{prop:helperfortaka:2} If $M$ is an application $PQ$ which is not a redex and  $M\Reduces{w}{\param} N$, then $N$ is $P^{\prime}Q^{\prime}$ where $P\Reduces{w}{\param} P^{\prime}$ and $Q\Reduces{w}{\param} Q^{\prime}$.
\end{enumerate}
\end{prop}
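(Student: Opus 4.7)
The plan is to proceed by a straightforward inversion on the clauses of Definition~\ref{defn:takahashi}, noting that these clauses describe the only ways a parallel reduction $M\Reduces{w}{\upsilon} N$ can be produced; equivalently, one can phrase this as induction on the stage $s$ in the preceding ``from below'' characterization. The key observation is that the left-hand sides of the three families of clauses in Definition~\ref{defn:takahashi} are syntactically very distinct: clause~(\ref{defn:takahashi1}) has a variable, constant, or combinator term on the left; the clauses in~(\ref{defn:takahashi2}) each have an application headed by a specific combinator on the left (i.e., a redex); and clause~(\ref{defn:takahashi3}) has a generic application on the left.

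For part~(\ref{prop:helperfortaka:1}), I would argue that since $M$ is a variable, constant, or combinator term, it is not an application, so neither the redex clauses in~(\ref{defn:takahashi2}) nor the congruence clause~(\ref{defn:takahashi3}) can produce $M\Reduces{w}{\upsilon} N$. Only the reflexive clause~(\ref{defn:takahashi1}) remains, which forces $N\equiv M$.

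For part~(\ref{prop:helperfortaka:2}), I would argue symmetrically: the reflexive clause~(\ref{defn:takahashi1}) cannot apply because $M\equiv PQ$ is an application and hence is not a variable, constant, or combinator term; the redex clauses in~(\ref{defn:takahashi2}) cannot apply because $M$ is not a redex by hypothesis. Hence the derivation of $M\Reduces{w}{\upsilon} N$ must come from clause~(\ref{defn:takahashi3}), which yields $N\equiv P^{\prime}Q^{\prime}$ with $P\Reduces{w}{\upsilon} P^{\prime}$ and $Q\Reduces{w}{\upsilon} Q^{\prime}$, where we use the unique readability of applications in $\cltup$ to identify the factors $P,Q$ uniquely.

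There is no real obstacle: the proposition is a routine inversion lemma on the inductive definition. The only minor point to verify is the unique decomposition of a term as function applied to argument, which follows from the unambiguous syntax of terms in $\cltup$ (variables, constants, and combinators are atomic, and application is the sole term-forming operation combining two subterms).
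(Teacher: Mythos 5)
Your proposal is correct and takes essentially the same route as the paper: the paper formalizes the inversion as an induction on the stage $s$ of the ``from below'' characterization, observing at each stage that no new parallel reductions are added whose left-hand side is atomic (for part~(\ref{prop:helperfortaka:1})) or a non-redex application (for part~(\ref{prop:helperfortaka:2})) except via the congruence clause, which is exactly your case analysis. Your remark on unique readability of application is the only point the paper leaves implicit, and it is harmless.
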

\begin{proof}
For (\ref{prop:helperfortaka:1}), we show by induction on $s\geq 0$ that if $M$ is a variable, constant, or combinatory term and $M\Reduces{w,s}{\param} N$ then $N$ is $M$:
\begin{itemize}[leftmargin=*]
    \item For $s=0$, if we add $M\Reduces{w,s}{\param} N$, then  $M$ is a variable, constant, or combinatory term and $N$ is $M$.
    \item At stage $s+1$, we do not add any parallel reductions $M\Reduces{w,s+1}{\param} N$ with $M$ a variable, constant, or combinatory term; hence we are done by induction hypothesis.
\end{itemize}
In this argument and subsequent inductive arguments, we use ``add at a stage'' to mean that it is in the stage but not in any of the previous stages. 

For (\ref{prop:helperfortaka:2}), we show by induction on $s\geq 0$ that if $M$ is an application $PQ$ which is not a redex and  $M\Reduces{w,s}{\param} N$, then $N$ is $P^{\prime}Q^{\prime}$ where $P\Reduces{w,s}{\param} P^{\prime}$ and $Q\Reduces{w,s}{\param} Q^{\prime}$:
\begin{itemize}[leftmargin=*]
\item For $s=0$, we do not add any parallel reductions $M\Reduces{w,s}{\param} N$ where $M$ is an application.
\item At stage $s+1$ with $s$ even, we do not add any parallel reductions $M\Reduces{w,s+1}{\param} N$ where $M$ is an application which is not a redex; and hence we are done by induction hypothesis.
\item At stage $s+1$ with $s$ odd, if we add a parallel reduction $M\Reduces{w,s+1}{\param} N$ where $M$ is an application $PQ$ which is not a redex, then $N$ is $P^{\prime} Q^{\prime}$ where $P\Reduces{w,s}{\param} P^{\prime}$ and $Q\Reduces{w,s}{\param} Q^{\prime}$.
\end{itemize}
\end{proof}

Using simple inductive proofs which we omit, one can also identify the successors of the other combinators under parallel reduction:
\begin{prop}[Successors of Warblers under parallel reduction]\label{prop:stunted}\hfill
\begin{enumerate}[leftmargin=*]
    \item\label{prop:stunted:1} If $\ctW{A}{B}{}{} \Reduces{w}{\param} N$, then $N$ is identical to $\ctW{A}{B}{}{}$.
    \item\label{prop:stunted:2} If $\ctW{A}{B}{P}{} \Reduces{w}{\param} N$, then $N$ is identical to $\ctW{A}{B}{P_1}{}$ for some term $P_1$ such that $P\Reduces{w}{\param} P_1$.
    \item \label{prop:stunted:3} If $\ctW{A}{B}{P}{Q}\Reduces{w}{\param} N$, then one of the following occurs:
\begin{enumerate}[leftmargin=*, label=(\alph*), ref=\alph*]
    \item \label{eqn:rmk:twosucc1} $N$ is $P_1Q_1Q_1$ for some  terms $P_1,Q_1$ such that $P\Reduces{w}{\param} P_1$ and $Q\Reduces{w}{\param} Q_1$. 
    \item \label{eqn:rmk:twosucc2} $N$ is $\ctW{A}{B}{P_1}{Q_1}$ for some terms $P_1,Q_1$ such that $P\Reduces{w}{\param} P_1$ and $Q\Reduces{w}{\param} Q_1$. 
\end{enumerate}   
\end{enumerate}
\end{prop}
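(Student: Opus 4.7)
The plan is to prove the three parts in order, each reducing to an inspection of how parallel reduction was derived via Definition~\ref{defn:takahashi}, with the help of Proposition~\ref{prop:helperfortaka}.

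For part (\ref{prop:stunted:1}), since $\ctW{A}{B}{}{}$ is a combinatory term, I would simply invoke Proposition~\ref{prop:helperfortaka}(\ref{prop:helperfortaka:1}) directly to conclude $N$ is identical to $\ctW{A}{B}{}{}$. For part (\ref{prop:stunted:2}), the term $\ctW{A}{B}{P}{}$ is an application, but it is \emph{not} a redex since the weak Warbler reduction requires two arguments. Hence Proposition~\ref{prop:helperfortaka}(\ref{prop:helperfortaka:2}) applies and forces $N$ to have the form $P^{\ast}Q^{\ast}$ with $\ctW{A}{B}{}{}\Reduces{w}{\upsilon} P^{\ast}$ and $P\Reduces{w}{\upsilon} Q^{\ast}$; part (\ref{prop:stunted:1}) then pins $P^{\ast}$ down as $\ctW{A}{B}{}{}$, so letting $P_1 := Q^{\ast}$ finishes the case.

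For part (\ref{prop:stunted:3}), the term $\ctW{A}{B}{P}{Q}$ \emph{is} a redex, so Proposition~\ref{prop:helperfortaka}(\ref{prop:helperfortaka:2}) is not available, and I would instead argue from the ``from below'' characterisation by induction on the stage $s$ at which $\ctW{A}{B}{P}{Q} \Reduces{w,s}{\upsilon} N$ was first added. There are exactly two ways a parallel reduction with this redex on the left can be introduced at some stage: either via clause (\ref{defn:takahashi2}) of Definition~\ref{defn:takahashi} applied to the Warbler schema, giving $N \equiv P_1 Q_1 Q_1$ with $P \Reduces{w}{\upsilon} P_1$ and $Q \Reduces{w}{\upsilon} Q_1$, which is case (\ref{eqn:rmk:twosucc1}); or via clause (\ref{defn:takahashi3}) applied to the decomposition $\ctW{A}{B}{P}{Q} \equiv (\ctW{A}{B}{P}{})\,Q$, giving $N \equiv R\,Q_1$ with $\ctW{A}{B}{P}{} \Reduces{w}{\upsilon} R$ and $Q \Reduces{w}{\upsilon} Q_1$. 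In the latter subcase, part (\ref{prop:stunted:2}) forces $R$ to have the form $\ctW{A}{B}{P_1}{}$ for some $P_1$ with $P\Reduces{w}{\upsilon} P_1$, yielding $N \equiv \ctW{A}{B}{P_1}{Q_1}$, which is case (\ref{eqn:rmk:twosucc2}).

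There is no real obstacle here: the proofs are short case analyses that ride entirely on the previously established Proposition~\ref{prop:helperfortaka} together with the fact that the only ``new'' derivation rules for $\Reduces{w}{\upsilon}$ with a Warbler-headed redex on the left are clauses (\ref{defn:takahashi2}) and (\ref{defn:takahashi3}) of Definition~\ref{defn:takahashi}. The only mild subtlety is to remember, in part (\ref{prop:stunted:2}), that the non-redex status of $\ctW{A}{B}{P}{}$ is what licenses the application of Proposition~\ref{prop:helperfortaka}(\ref{prop:helperfortaka:2}), and in part (\ref{prop:stunted:3}) to make sure both derivation clauses are considered so that the disjunction in the conclusion is exhaustive.
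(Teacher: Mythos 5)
Your proof is correct and is exactly the ``simple inductive proof'' the paper alludes to but omits: parts (\ref{prop:stunted:1}) and (\ref{prop:stunted:2}) follow from Proposition~\ref{prop:helperfortaka}, and part (\ref{prop:stunted:3}) from an exhaustive case split on whether the parallel reduction was introduced by clause (\ref{defn:takahashi2}) (the Warbler schema, giving case (\ref{eqn:rmk:twosucc1})) or clause (\ref{defn:takahashi3}) of Definition~\ref{defn:takahashi} (giving case (\ref{eqn:rmk:twosucc2}) via part (\ref{prop:stunted:2})). You correctly identify the one point needing care, namely that $\ctW{A}{B}{P}{}$ is not a redex, which licenses the appeal to Proposition~\ref{prop:helperfortaka}(\ref{prop:helperfortaka:2}).
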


\begin{proofdetail}
If one is interested in the proof detail, we include it here. 

For (\ref{prop:stunted:1}), we show by induction on $s\geq 0$ that if $\ctW{A}{B}{}{} \Reduces{w,s}{\param} N$, then $N$ is identical to $\ctW{A}{B}{}{}$:
\begin{itemize}[leftmargin=*]
    \item At stage $s=0$, if we add $\ctW{A}{B}{}{} \Reduces{w,s}{\param} N$, then  $N$ is identical to $\ctW{A}{B}{}{}$.
    \item At stage $s+1$, we do not add any parallel reductions  $\ctW{A}{B}{}{} \Reduces{w,s+1}{\param} N$. Hence, we are done by induction hypothesis.
\end{itemize}

For (\ref{prop:stunted:2}), we show by induction on $s\geq 0$ that if $\ctW{A}{B}{P}{} \Reduces{w,s}{\param} N$, then $N$ is identical to $\ctW{A}{B}{P_1}{}$ for some term $P_1$ such that $P\Reduces{w}{\param} P_1$:
\begin{itemize}[leftmargin=*]
    \item At stage $s=0$ we do not add any parallel reductions  $\ctW{A}{B}{P}{} \Reduces{w,s}{\param} N$.
    \item At stage $s+1$ with $s$ even, we do not add any parallel reductions  $\ctW{A}{B}{P}{} \Reduces{w,s+1}{\param} N$. Hence, we are done by induction hypothesis.
    \item At stage $s+1$ with $s$ odd, if we add a parallel reduction  $\ctW{A}{B}{P}{} \Reduces{w,s+1}{\param} N$ then $N$ is identical to $OP_1$ where $\ctW{A}{B}{}{} \Reduces{w,s}{\param} O$ and $P\Reduces{w,s}{\param} P_1$. By (\ref{prop:stunted:1}), we have that $O$ is $\ctW{A}{B}{}{}$, and so we are done.
\end{itemize}

For (\ref{prop:stunted:3}), we show by induction on $s\geq 0$ that if $\ctW{A}{B}{P}{Q}\Reduces{w,s}{\param} N$, then one of (\ref{prop:stunted:3}\ref{eqn:rmk:twosucc1})-(\ref{prop:stunted:3}\ref{eqn:rmk:twosucc2}) occurs.
\begin{itemize}[leftmargin=*]
    \item For stage $s=0$, we do not add any parallel reductions $\ctW{A}{B}{P}{Q}\Reduces{w,s}{\param} N$.
    \item At stage $s+1$ with $s$ even, if we add a parallel reduction $\ctW{A}{B}{P}{Q} \Reduces{w,s+1}{\param} N$ then $N$ is identical to $P_1Q_1Q_1$ where $P\Reduces{w,s}{\param} P_1$ and $Q\Reduces{w,s}{\param} Q_1$. Then we are in case~(\ref{eqn:rmk:twosucc1}).
    \item At stage $s+1$ with $s$ odd, if we add a parallel reduction  $\ctW{A}{B}{P}{Q} \Reduces{w,s+1}{\param} N$, then $N$ is indentical to $OQ_1$ where  $\ctW{A}{B}{P}{} \Reduces{w,s}{\param} O$ and $Q\Reduces{w,s}{\param} Q_1$. By (\ref{prop:stunted:2}), we have that $O$ is identical to $\ctW{A}{B}{P_1}{}$ where $P\Reduces{w}{\param} P_1$. Then we are in case~(\ref{eqn:rmk:twosucc2}).
\end{itemize}

\end{proofdetail}

\noindent There are analogous propositions for the other ``two input'' combinatory terms of Kestrel and Dardinal, with the only difference being that one modifies (\ref{prop:stunted:3}\ref{eqn:rmk:twosucc1}) appropriately.

For the ``three input'' combinatory term of Cardinal, one has:
\begin{prop}[Successors of Cardinals under parallel reduction]\label{prop:cardinalsucc}\hfill
\begin{enumerate}[leftmargin=*]
    \item\label{cardinalsucc1} If $\ctC{A}{B}{C}{}{} \Reduces{w}{\param} N$, then $N$ is identical to $\ctC{A}{B}{C}{}$.
    \item\label{cardinalsucc2} If $\ctC{A}{B}{C}{P}{} \Reduces{w}{\param} N$, then $N$ is identical to $\ctC{A}{B}{C}{P_1}{}{}$ for some term $P_1$ such that $P\Reduces{w}{\param} P_1$.
    \item \label{cardinalsucc3} If $\ctC{A}{B}{C}{P}{Q}{}\Reduces{w}{\param} N$, then one of the following occurs:
\begin{enumerate}[leftmargin=*, label=(\alph*), ref=\alph*]
    \item \label{cardinalsucc3a} $N$ is identical to $\ctD{c}{A}{B}{C}{P_1}{}$ for some constant $c$ and some term $P_1$ such that $P\Reduces{w}{\param} P_1$; further $Q$ is identical to $c$.
    \item \label{cardinalsucc3b} $N$ is identical to $\ctC{A}{B}{C}{P_1}{Q_1}{}$ for some term $P_1,Q_1$ such that $P\Reduces{w}{\param} P_1$ and $Q\Reduces{w}{\param} Q_1$.
\end{enumerate}   
    \item \label{cardinalsucc4} If $\ctC{A}{B}{C}{P}{Q}{R}\Reduces{w}{\param} N$, then one of the following occurs:
\begin{enumerate}[leftmargin=*, label=(\alph*), ref=\alph*]
    \item \label{cardinalsucc4a} $N$ is identical to $P_1R_1Q_1$ for some terms $P_1,Q_1, R_1$ such that $P\Reduces{w}{\param} P_1$ and $Q\Reduces{w}{\param} Q_1$ and $R\Reduces{w}{\param} R_1$.
    \item \label{cardinalsucc4b} $N$ is identical to $\ctD{c}{A}{B}{C}{P_1}{R_1}{}$ for some constant $c$ and some terms $P_1,R_1$ such that $P\Reduces{w}{\param} P_1$ and $R\Reduces{w}{\param} R_1$; further $Q$ is identical to $c$.
    \item \label{cardinalsucc4c} $N$ is identical to $\ctC{A}{B}{C}{P_1}{Q_1}{R_1}$ for some terms $P_1,Q_1, R_1$ such that $P\Reduces{w}{\param} P_1$ and $Q\Reduces{w}{\param} Q_1$ and $R\Reduces{w}{\param} R_1$.    
\end{enumerate}   
\end{enumerate}
\end{prop}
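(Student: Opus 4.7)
The plan is to follow the inductive strategy implicit in Proposition~\ref{prop:helperfortaka} and~\ref{prop:stunted}: work with the characterisation of $\Reduces{w}{\upsilon}$ ``from below'' as the union of the relations $\Reduces{w,s}{\upsilon}$, and for each of the four clauses prove the analogous statement for $\Reduces{w,s}{\upsilon}$ by induction on $s \geq 0$. The four clauses are handled in order, with each subsequent clause appealing to the previous ones at the application-rule step of its induction.

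For clause~(\ref{cardinalsucc1}), $\ctC{A}{B}{C}{}{}{}$ is a combinator term, and Proposition~\ref{prop:helperfortaka}(\ref{prop:helperfortaka:1}) applies directly. For clause~(\ref{cardinalsucc2}), the term $\ctC{A}{B}{C}{P}{}{}$ is an application that is not a weak redex (no schema in Definition~\ref{defn:weakreduction} has a partially applied Cardinal with a single argument as its left-hand side), so by Proposition~\ref{prop:helperfortaka}(\ref{prop:helperfortaka:2}) together with clause~(\ref{cardinalsucc1}), any successor has the required shape $\ctC{A}{B}{C}{P_1}{}{}$ with $P \Reduces{w}{\upsilon} P_1$.

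The interesting cases are clauses~(\ref{cardinalsucc3}) and~(\ref{cardinalsucc4}), where the term is (or can be) a weak redex. For clause~(\ref{cardinalsucc3}), the term $\ctC{A}{B}{C}{P}{Q}{}$ is a weak redex exactly when $Q$ is a constant $c$, in which case the Cardinal-to-Dardinal schema is applicable and produces case~(\ref{cardinalsucc3a}). Otherwise the parallel reduction was added at an application-rule stage, so $N$ has the form $M' Q'$ with $\ctC{A}{B}{C}{P}{}{} \Reduces{w}{\upsilon} M'$ and $Q \Reduces{w}{\upsilon} Q'$, and clause~(\ref{cardinalsucc2}) forces $M' \equiv \ctC{A}{B}{C}{P_1}{}{}$, giving case~(\ref{cardinalsucc3b}). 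For clause~(\ref{cardinalsucc4}), the term $\ctC{A}{B}{C}{P}{Q}{R}$ is always a weak redex via the main Cardinal schema. The induction step then has three subcases depending on how $N$ was added: directly via the Cardinal schema, giving case~(\ref{cardinalsucc4a}); or via the application rule, in which case clause~(\ref{cardinalsucc3}) applied to $\ctC{A}{B}{C}{P}{Q}{}$ yields either the Cardinal-to-Dardinal alternative (forcing $Q \equiv c$ and giving case~(\ref{cardinalsucc4b})) or the Cardinal alternative (giving case~(\ref{cardinalsucc4c})).

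The main obstacle, though largely bookkeeping, is verifying that in the application-rule steps the successors extracted from the inductive hypothesis are themselves in $\Reduces{w}{\upsilon}$ at appropriate stages so that the combined parallel-reduction witnesses line up. Two subtleties deserve attention. First, clause~(\ref{cardinalsucc3}) must be phrased disjunctively rather than exclusively, so that when $Q$ happens to be a constant both alternatives may formally apply; the statement as given accommodates this. Second, one must keep track of which combinator terms are well-formed in $\cltup$, using Definition~\ref{defn:typedcombo2} and Lemma~\ref{lem:preservescltup} to see that the Dardinal witness in case~(\ref{cardinalsucc4b}) is itself a term of $\cltup$. Beyond these routine checks, the argument is a direct extension of the case analysis already sketched for Proposition~\ref{prop:stunted}.
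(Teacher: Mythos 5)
Your proposal is correct and follows exactly the strategy the paper intends: the paper omits this proof as one of the ``simple inductive proofs,'' and your stage-by-stage induction on the from-below characterisation of $\Reduces{w}{\upsilon}$, with each clause feeding the application-rule step of the next, is precisely the method demonstrated for Proposition~\ref{prop:helperfortaka}. The case analysis at the redex stages (only the Cardinal-to-Dardinal schema can fire on $\ctC{A}{B}{C}{P}{Q}{}$, and only the main Cardinal schema on $\ctC{A}{B}{C}{P}{Q}{R}$) is the right one, so no gap remains.
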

\noindent There is an analogous proposition for Bluebird, but is simpler in that clauses (\ref{cardinalsucc3}\ref{cardinalsucc3a}) and (\ref{cardinalsucc4}{\ref{cardinalsucc4b}}) can be omitted. And of course to obtain the analogous proposition for Bluebird, one modifies  (\ref{cardinalsucc4}\ref{cardinalsucc4a}) appropriately.

\begin{proofdetail}
If one is interested in the proof of Proposition~\ref{prop:cardinalsucc}, we include it here. 

For (\ref{cardinalsucc1}) we show by induction on $s\geq 0$ that if  $\ctC{A}{B}{C}{}{} \Reduces{w,s}{\param} N$, then $N$ is identical to $\ctC{A}{B}{C}{}$:
\begin{itemize}[leftmargin=*]
    \item At stage $s=0$, if we add $\ctC{A}{B}{C}{}{} \Reduces{w,s}{\param} N$, then $N$ is identical to $\ctC{A}{B}{C}{}$.
    \item At stage $s+1$, we do not add any parallel reductions $\ctC{A}{B}{C}{}{}{} \Reduces{w,s+1}{\param} N$. Hence we are done by induction hypothesis.
\end{itemize}
For (\ref{cardinalsucc2}) we show by induction on $s\geq 0$ that if $\ctC{A}{B}{C}{P}{} \Reduces{w,s}{\param} N$, then $N$ is identical to $\ctC{A}{B}{C}{P_1}{}{}$ for some term $P_1$ such that $P\Reduces{w}{\param} P_1$:
\begin{itemize}[leftmargin=*]
    \item At stage $s=0$, we do not add any parallel reductions $\ctC{A}{B}{C}{P}{} \Reduces{w,s}{\param} N$.
    \item At stage $s+1$ with $s$ even, we do not add any parallel reductions $\ctC{A}{B}{C}{P}{} \Reduces{w,s+1}{\param} N$; hence we are done by induction hypothesis.
    \item At stage $s+1$ with $s$ odd, if we add a parallel reduction  $\ctC{A}{B}{C}{P}{} \Reduces{w,s+1}{\param} N$ then $N$ is identical to $OP_1$ where $\ctC{A}{B}{C}{}{} \Reduces{w,s}{\param} O$ and $P\Reduces{w,s}{\param} P_1$. By (\ref{cardinalsucc1}), we have that $O$ is $\ctC{A}{B}{C}{}{}$ and so we are done.
\end{itemize}
For (\ref{cardinalsucc3})  we show by induction on $s\geq 0$ that if 
$\ctC{A}{B}{C}{P}{Q}{}\Reduces{w,s}{\param} N$, then one of  (\ref{cardinalsucc3}\ref{cardinalsucc3a})-(\ref{cardinalsucc3}\ref{cardinalsucc3b}) happens.
\begin{itemize}[leftmargin=*]
    \item At stage $s=0$, we do not add any parallel reductions $\ctC{A}{B}{C}{P}{Q}{}\Reduces{w,s}{\param} N$.
    \item At stage $s+1$ with $s$ even, if we add a parallel reduction  $\ctC{A}{B}{C}{P}{Q} \Reduces{w,s+1}{\param} N$ then $N$ is identical to $\ctD{c}{A}{B}{C}{P_1}{}$ for some constant $c$ and some term $P_1$ such that $P\Reduces{w,s}{\param} P_1$; further $Q$ is identical to $c$. Then we are in case~(\ref{cardinalsucc3}\ref{cardinalsucc3a}).
    \item At stage $s+1$ with $s$ odd, if we add a parallel reduction $\ctC{A}{B}{C}{P}{Q} \Reduces{w,s+1}{\param} N$, then $N$ is identical to $OQ_1$ where $\ctC{A}{B}{C}{P}{} \Reduces{w,s}{\param} O$ and $Q\Reduces{w,s}{\param} Q_1$. By (\ref{cardinalsucc3}), $O$ is identical to $\ctC{A}{B}{C}{P_1}{}$ for some term $P_1$ such that  $P\Reduces{w}{\param} P_1$. Then we are in~(\ref{cardinalsucc3}\ref{cardinalsucc3b}).
\end{itemize}
For (\ref{cardinalsucc4}), we show by induction on   $s\geq 0$ that if $\ctC{A}{B}{C}{P}{Q}{R}\Reduces{w,s}{\param} N$, then one of (\ref{cardinalsucc4}\ref{cardinalsucc4a})-(\ref{cardinalsucc4}\ref{cardinalsucc4c}) happens. 
\begin{itemize}[leftmargin=*]
    \item At stage $s=0$, we do not add any parallel reductions $\ctC{A}{B}{C}{P}{Q}{R}\Reduces{w,s}{\param} N$. 
    \item At stage $s+1$ with $s$ even, if we add a parallel reduction $\ctC{A}{B}{C}{P}{Q}{R}\Reduces{w,s}{\param} N$, then $N$ is identical to $P_1 R_1 Q_1$ for some terms $P_1,Q_1, R_1$ such that $P\Reduces{w,s}{\param} P_1$ and $Q\Reduces{w,s}{\param} Q_1$ and $R\Reduces{w,s}{\param} R_1$. Then we are in case (\ref{cardinalsucc4}\ref{cardinalsucc4a}).
    \item At stage $s+1$ with $s$ odd, if we add a parallel reduction $\ctC{A}{B}{C}{P}{Q}{R}\Reduces{w,s}{\param} N$, then $N$ is identical to $OR_1$ where  $\ctC{A}{B}{C}{P}{Q}{}\Reduces{w,s}{\param} O$ and  $R\Reduces{w,s}{\param} R_1$. By (\ref{cardinalsucc3}), there are then two subcases:
\begin{itemize}[leftmargin=*]
\item If $O$ is identical to $\ctD{c}{A}{B}{C}{P_1}{}$ for some constant $c$ and some term $P_1$ such that $P\Reduces{w}{\param} P_1$, and $Q$ is identical to $c$, then we are in case~(\ref{cardinalsucc4}\ref{cardinalsucc4b}).
\item If $O$ is identical to $\ctC{A}{B}{C}{P_1}{Q_1}{}$ for some term $P_1,Q_1$ such that $P\Reduces{w}{\param} P_1$ and $Q\Reduces{w}{\param} Q_1$, then we are in case~(\ref{cardinalsucc4}\ref{cardinalsucc4c}).
\end{itemize}
\end{itemize}

\end{proofdetail}

The following lemma is important because (\ref{lem:parallel:2})-(\ref{lem:parallel:3}) imply that weak reduction and parallel reduction have the same transitive closure (cf. \cite[p. 120 equations (1)-(3)]{Takahashi1995-kp}):
\begin{lem}\label{lem:parallel}\hfill
\begin{enumerate}[leftmargin=*]
    \item\label{lem:parallel:1} $P\Reduces{w}{\param} P$ 
    \item\label{lem:parallel:2} If $P\reduces{w}{\param} Q$ then $P\Reduces{w}{\param} Q$
    \item\label{lem:parallel:3} If $P\Reduces{w}{\param} Q$ then $P\transreduces{w}{\param} Q$ 
    \item\label{lem:parallel:4} If $P\Reduces{w}{\param} P^{\prime}$ and $Q\Reduces{w}{\param} Q^{\prime}$ then $P[x:=Q]\Reduces{w}{\param} P^{\prime}[x:=Q^{\prime}]$
\end{enumerate}
\end{lem}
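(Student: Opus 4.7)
The plan is to prove the four clauses in order, making free use of Proposition~\ref{prop:helperfortaka} and the ``from below'' characterisation of $\Reduces{w}{\upsilon}$ when doing induction on the derivation of a parallel reduction.

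For (\ref{lem:parallel:1}), I would proceed by induction on the complexity of $P$. Variables, constants, and combinator terms are reflexive by Definition~\ref{defn:takahashi}(\ref{defn:takahashi1}); if $P$ is an application $MN$, then by induction hypothesis $M \Reduces{w}{\upsilon} M$ and $N \Reduces{w}{\upsilon} N$, so $MN \Reduces{w}{\upsilon} MN$ by Definition~\ref{defn:takahashi}(\ref{defn:takahashi3}). For (\ref{lem:parallel:2}), I would induct on the definition of $\reduces{w}{\upsilon}$ as the compatible closure. The base cases are the six weak reduction schemas of Definition~\ref{defn:weakreduction}: in each one, reflexivity from (\ref{lem:parallel:1}) applied to the subterms plus the matching clause of Definition~\ref{defn:takahashi}(\ref{defn:takahashi2}) yields the desired parallel reduction. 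For the inductive step closing under application, say $MP \reduces{w}{\upsilon} MQ$ from $P \reduces{w}{\upsilon} Q$, one has $P \Reduces{w}{\upsilon} Q$ by induction hypothesis and $M \Reduces{w}{\upsilon} M$ by (\ref{lem:parallel:1}), and then Definition~\ref{defn:takahashi}(\ref{defn:takahashi3}) finishes it; the other side is symmetric.

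For (\ref{lem:parallel:3}), I would induct on the derivation of $P \Reduces{w}{\upsilon} Q$ (equivalently, on the stage in the ``from below'' characterisation). Clause (\ref{defn:takahashi1}) of Definition~\ref{defn:takahashi} gives reflexivity of $\transreduces{w}{\upsilon}$. For each schema in clause~(\ref{defn:takahashi2}), e.g.\ $\ctW{A}{B}{P}{Q} \Reduces{w}{\upsilon} P'Q'Q'$ from $P \Reduces{w}{\upsilon} P'$ and $Q \Reduces{w}{\upsilon} Q'$, the induction hypothesis gives $P \transreduces{w}{\upsilon} P'$ and $Q \transreduces{w}{\upsilon} Q'$, so by compatible closure $\ctW{A}{B}{P}{Q} \transreduces{w}{\upsilon} \ctW{A}{B}{P'}{Q'}$, and one final weak reduction yields $P'Q'Q'$; the Cardinal-to-Dardinal schema is handled identically. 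Clause (\ref{defn:takahashi3}) follows from compatibility of $\transreduces{w}{\upsilon}$ under application.

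Clause (\ref{lem:parallel:4}) is the substantive step, proved by induction on the derivation of $P \Reduces{w}{\upsilon} P'$. The base cases are immediate: if $P$ is a constant or combinator term then $P' = P$ by Proposition~\ref{prop:helperfortaka}(\ref{prop:helperfortaka:1}) and $P$ contains no variables, so $P[x:=Q] = P = P'[x:=Q']$, and we invoke (\ref{lem:parallel:1}); if $P$ is a variable, then either $P = x$, in which case $P[x:=Q] = Q \Reduces{w}{\upsilon} Q' = P'[x:=Q']$ by hypothesis on $Q$, or $P \neq x$, in which case $P[x:=Q] = P = P'[x:=Q']$ and we invoke (\ref{lem:parallel:1}) again. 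For each inductive case from Definition~\ref{defn:takahashi}(\ref{defn:takahashi2}), the substitution distributes over the redex (since the head of each redex is a combinator, which is closed, and $c$ in the Dardinal/Cardinal-to-Dardinal schemas is a constant which is also untouched), allowing us to apply the induction hypothesis to the subterms and then the same schema of Definition~\ref{defn:takahashi}(\ref{defn:takahashi2}) on the substituted pieces. For the application case from Definition~\ref{defn:takahashi}(\ref{defn:takahashi3}), substitution distributes by definition and we apply the induction hypothesis componentwise together with Definition~\ref{defn:takahashi}(\ref{defn:takahashi3}). I expect the only mild obstacle to be bookkeeping the Cardinal-to-Dardinal case, since the shape of the contractum differs from that of the redex, but the closedness of the constant $c$ makes substitution commute with this schema cleanly.
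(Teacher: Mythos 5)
Your proposal is correct and follows essentially the same route as the paper: induction on complexity for (\ref{lem:parallel:1}), compatible-closure induction for (\ref{lem:parallel:2}), and stage/derivation induction for (\ref{lem:parallel:3}) and (\ref{lem:parallel:4}), with substitution distributing over each redex because the combinator heads and the constant $c$ are closed. The only cosmetic difference is in (\ref{lem:parallel:3}), where you reduce the arguments first and fire the weak redex last, while the paper fires the redex first and then reduces the arguments; both orderings are valid.
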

\begin{proof}
For (\ref{lem:parallel:1}), this follows from an easy induction on complexity of $P$ from Definition~\ref{defn:takahashi}(\ref{defn:takahashi1}),~(\ref{defn:takahashi3}).

For (\ref{lem:parallel:2}), simply use (\ref{lem:parallel:1}) and Definition~\ref{defn:takahashi}(\ref{defn:takahashi2}) to handle the case when the reduction happens at the top level, and then use (\ref{lem:parallel:1}) and Definition~\ref{defn:takahashi}(\ref{defn:takahashi3}) to handle when the reduction happens embedded inside applications. 

For (\ref{lem:parallel:3}) use induction on $s\geq 0$ to show that $P\Reduces{w,s}{\param} Q$ implies $P\transreduces{w}{\param} Q$. 

\begin{proofdetail}
In more detail:
\begin{itemize}[leftmargin=*]
    \item The stage $s=0$ case follows since $\transreduces{w}{\param}$ is reflexive.
    \item At stage $s+1$ with $s$ even, if we add a parallel reduction $\ctW{A}{B}{P}{Q}\Reduces{w,s+1}{\param} P^{\prime}Q^{\prime}Q^{\prime}$ where $P\Reduces{w,s}{\param} P^{\prime}$ and $Q\Reduces{w,s}{\param} Q^{\prime}$, then by induction hypothesis $P\transreduces{w}{\param} P^{\prime}$ and $Q\transreduces{w}{\param} Q^{\prime}$, and so $\ctW{A}{B}{P}{Q}\reduces{w}{\param} PQQ \transreduces{w}{\param} P^{\prime}Q^{\prime}Q^{\prime}$; and the other weak reductions are similar.
    \item At stage $s+1$ with $s$ odd, if we add a parallel reduction $PQ\Reduces{w,s+1}{\param} P^{\prime}Q^{\prime}$ where $P\Reduces{w,s}{\param} P^{\prime}$ and $Q\Reduces{w,s}{\param} Q^{\prime}$, then by induction hypothesis $P\transreduces{w}{\param} P^{\prime}$ and $Q\transreduces{w}{\param} Q^{\prime}$, and so $PQ \transreduces{w}{\param} P^{\prime} Q^{\prime}$.     
\end{itemize}
\end{proofdetail}

For (\ref{lem:parallel:4}) we use an induction on $s\geq 0$ to show that $P\Reduces{w,s}{\param} P^{\prime}$ implies that for all  $Q\Reduces{w}{\param} Q^{\prime}$ we have $P[x:=Q]\Reduces{w}{\param} P^{\prime}[x:=Q^{\prime}]$.
\begin{proofdetail}
\begin{itemize}[leftmargin=*]
    \item The stage $s=0$ case follows since at this stage the only parallel reduction we add in which variables appear on either side is the parallel reduction $x\Reduces{w,s}{\param} x$ for a variable $x$; and then $P[x:=Q]\Reduces{w}{\param} P^{\prime}[x:=Q^{\prime}]$ is just identical to $Q\Reduces{w}{\param} Q^{\prime}$.
    \item At stage $s+1$ with $s$ even, if we add a parallel reduction $\ctW{A}{B}{M}{N}\Reduces{w,s+1}{\param} M^{\prime}N^{\prime}N^{\prime}$ where $M\Reduces{w,s}{\param} M^{\prime}$ and $N\Reduces{w,s}{\param} N^{\prime}$, then by induction hypothesis, if $Q\Reduces{w}{\param} Q^{\prime}$ then both $M[x:=Q]\Reduces{w}{\param} M^{\prime}[x:=Q^{\prime}]$ and $N[x:=Q]\Reduces{w}{\param} N^{\prime}[x:=Q^{\prime}]$; and then $\ctW{A}{B}{M[x:=Q]}{N[x:=Q]} \Reduces{w}{\param} M^{\prime}[x:=Q] N^{\prime}[x:=Q] N^{\prime}[x:=Q]$; and similarly for the other weak reductions.
    \item At stage $s+1$ with $s$ odd, if we add a parallel reduction $MN\Reduces{w,s+1}{\param} M^{\prime}N^{\prime}$ where $M\Reduces{w,s}{\param} M^{\prime}$ and $N\Reduces{w,s}{\param} N^{\prime}$, then by induction hypothesis if $Q\Reduces{w}{\param} Q^{\prime}$ then both $M[x:=Q]\Reduces{w}{\param} M^{\prime}[x:=Q^{\prime}]$ and $N[x:=Q]\Reduces{w}{\param} N^{\prime}[x:=Q^{\prime}]$; and then $M[x:=Q] N[x:=Q] \Reduces{w}{\param}M^{\prime}[x:=Q] N^{\prime}[x:=Q]$.
\end{itemize}
\end{proofdetail}
\end{proof}

Takahashi's concept complement development is, in our $\cltup$, the following (cf. 
\cite[p. 121]{Takahashi1995-kp}):
\begin{defi}[The complete development]\label{defn:comp}
The \emph{complete development} $\typd{M^{\ast}}{B}$ of a term $\typd{M}{B}$ of $\cltup$ is defined by induction on complexity of term as follows:
\begin{enumerate}[leftmargin=*]
    \item \label{defn:compdev:1} If $\typd{M}{B}$ is a variable, constant, or combinatory term, then $\typd{M^{\ast}}{B}$ is $\typd{M}{B}$.
    \item \label{defn:compdev:2} If $\typd{M}{B}$ is an application $\typd{P Q}{B}$ which is not a redex, then we define  $\typd{M^{\ast}}{B}$ to be $\typd{P^{\ast} Q^{\ast}}{B}$.
    \item \label{defn:compdev:3} If $\typd{M}{B}$ is a redex, then we define $\typd{M^{\ast}}{B}$ as follows:
\begin{equation*}
M^{\ast}=\begin{cases}
P^{\ast}      & \text{if $M$ is $\ctK{A}{B}{P}{Q}$}, \\
P^{\ast}R^{\ast}Q^{\ast}    & \text{if $M$ is $\ctC{A}{B}{C}{P}{Q}{R}$}, \\
P^{\ast}R^{\ast}c    & \text{if $M$ is $\ctD{c}{A}{B}{C}{P}{R}$}, \\
P^{\ast}Q^{\ast}Q^{\ast} & \text{if $M$ is $\ctW{A}{B}{P}{Q}$}, \\
P^{\ast}(Q^{\ast} R^{\ast}) & \text{if $M$ is $\ctB{A}{B}{C}{P}{Q}{R}$} \\
\ctD{c}{A}{B}{C}{P^{\ast}}{}{}  & \text{if $M$ is $\ctC{A}{B}{C}{P}{c}{}$}, \\
\end{cases}
\end{equation*}
\end{enumerate}
\end{defi}

The Takahashi proof of Church-Rosser then goes through the following proposition (cf. \cite[p. 121]{Takahashi1995-kp}):
\begin{prop}\label{prop:Takahashiprop}

If $M\Reduces{w}{\param} N$ then $N\Reduces{w}{\param} M^{\ast}$.

\end{prop}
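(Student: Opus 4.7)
The plan is to prove this by induction on the complexity of $M$, mirroring Takahashi's original argument but tracking the new combinators and especially the Cardinal-to-Dardinal weak reduction, which is the one genuinely new phenomenon relative to the standard untyped proof.

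First, I would handle the base cases. If $M$ is a variable, constant, or combinator term, then by Proposition~\ref{prop:helperfortaka}(\ref{prop:helperfortaka:1}) the only successor of $M$ under $\Reduces{w}{\upsilon}$ is $M$ itself, and by Definition~\ref{defn:comp}(\ref{defn:compdev:1}) we have $M^{\ast}=M$, so reflexivity (Lemma~\ref{lem:parallel}(\ref{lem:parallel:1})) finishes this case. If $M$ is an application $PQ$ that is not a redex, then Proposition~\ref{prop:helperfortaka}(\ref{prop:helperfortaka:2}) forces $N$ to be $P^{\prime}Q^{\prime}$ with $P\Reduces{w}{\upsilon}P^{\prime}$ and $Q\Reduces{w}{\upsilon}Q^{\prime}$; by induction hypothesis $P^{\prime}\Reduces{w}{\upsilon}P^{\ast}$ and $Q^{\prime}\Reduces{w}{\upsilon}Q^{\ast}$, and Definition~\ref{defn:takahashi}(\ref{defn:takahashi3}) combines these to give $N\Reduces{w}{\upsilon}P^{\ast}Q^{\ast}=M^{\ast}$.

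The substantive case is when $M$ is a redex. Here I would split on which combinator heads $M$ and use the ``successor under parallel reduction'' propositions (Proposition~\ref{prop:stunted} for Warbler and its analogues for Kestral, Dardinal, Bluebird, and Proposition~\ref{prop:cardinalsucc} for Cardinal). In every subcase, $N$ either arises by a top-level contraction (with each argument having itself parallel-reduced to some $P_i$), or $N$ retains the combinator head with arguments parallel-reduced to some $P_i$. In the first situation the induction hypothesis gives $P_i\Reduces{w}{\upsilon}P_i^{\ast}$, and iterated applications of Definition~\ref{defn:takahashi}(\ref{defn:takahashi3}) yield the desired target $M^{\ast}$ built from the $P_i^{\ast}$'s via Definition~\ref{defn:comp}(\ref{defn:compdev:3}). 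In the second situation, the induction hypothesis again yields $P_i\Reduces{w}{\upsilon}P_i^{\ast}$, and we invoke the corresponding clause of Definition~\ref{defn:takahashi}(\ref{defn:takahashi2}) to fire a top-level parallel contraction of the combinator, landing again on $M^{\ast}$.

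The case that needs the most care is $M\equiv\ctC{A}{B}{C}{P}{Q}{R}$ together with the partially applied redex $\ctC{A}{B}{C}{P}{c}$, because of the interaction between the Cardinal weak reduction and the Cardinal-to-Dardinal weak reduction. For $M\equiv\ctC{A}{B}{C}{P}{Q}{R}$, Proposition~\ref{prop:cardinalsucc}(\ref{cardinalsucc4}) gives three subcases; the delicate one is (\ref{cardinalsucc4}\ref{cardinalsucc4b}) where $Q\equiv c$ and $N\equiv\ctD{c}{A}{B}{C}{P_1}{R_1}$, which I handle by observing that $Q^{\ast}=c$ and invoking the Dardinal clause of Definition~\ref{defn:takahashi}(\ref{defn:takahashi2}). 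For the partially applied $M\equiv\ctC{A}{B}{C}{P}{c}$ (whose complete development is $\ctD{c}{A}{B}{C}{P^{\ast}}$ by Definition~\ref{defn:comp}(\ref{defn:compdev:3})), Proposition~\ref{prop:cardinalsucc}(\ref{cardinalsucc3}) yields either $N\equiv\ctD{c}{A}{B}{C}{P_1}$ (handled by Definition~\ref{defn:takahashi}(\ref{defn:takahashi3}) applied to the combinator and $P_1$) or $N\equiv\ctC{A}{B}{C}{P_1}{Q_1}$; in the latter case Proposition~\ref{prop:helperfortaka}(\ref{prop:helperfortaka:1}) forces $Q_1\equiv c$, and the Cardinal-to-Dardinal clause of Definition~\ref{defn:takahashi}(\ref{defn:takahashi2}) delivers $N\Reduces{w}{\upsilon}\ctD{c}{A}{B}{C}{P^{\ast}}=M^{\ast}$. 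The main obstacle is simply bookkeeping: making sure each combinator-successor proposition is strong enough that every subcase matches up with a clause of Definition~\ref{defn:takahashi} capable of re-firing the contraction to $M^{\ast}$, with the Cardinal/Dardinal overlap being the only place where cleverness (rather than routine induction) is required.
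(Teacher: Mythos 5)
Your proposal is correct and follows essentially the same route as the paper's proof: induction on the complexity of $M$, with the base and non-redex application cases dispatched via Proposition~\ref{prop:helperfortaka}, the redex cases split via the successor propositions (Proposition~\ref{prop:stunted}, Proposition~\ref{prop:cardinalsucc}, and their analogues), and the Cardinal/Dardinal overlap handled exactly as in the paper by forcing $Q_1\equiv c$ and firing the Cardinal-to-Dardinal clause of Definition~\ref{defn:takahashi}(\ref{defn:takahashi2}). No gaps.
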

\begin{proof}

This is by induction on complexity of $M$.

First suppose that $M$ is a variable, constant, or combinatory term. Suppose $M\Reduces{w}{\param} N$. Then Proposition~\ref{prop:helperfortaka}(\ref{prop:helperfortaka:1}) we have that $N$ is $M$. We are then done since by Definition~\ref{defn:comp}(\ref{defn:compdev:1}), we have that $M^{\ast}$ is also $M$.

Second suppose that $M$ is an application $PQ$ which is not a redex. Suppose $M\Reduces{w}{\param} N$. Then Proposition~\ref{prop:helperfortaka}(\ref{prop:helperfortaka:2}), we have that $N$ is $P_1 Q_1$ where $P\Reduces{w}{\param} P_1$ and $Q\Reduces{w}{\param} Q_1$. By induction hypothesis
$P_1\Reduces{w}{\param} P^{\ast}$ and $Q_1\Reduces{w}{\param} Q^{\ast}$. Then by Definition~\ref{defn:takahashi}(\ref{defn:takahashi3}), we have $P_1Q_1\Reduces{w}{\param} P^{\ast} Q^{\ast}$, which by Definition~\ref{defn:comp}(\ref{defn:compdev:2}) is equal to $(PQ)^{\ast}$.

 Third suppose that $M$ is a redex. 
 
 First consider the Warbler case where $M$ is $\ctW{A}{B}{P}{Q}$.  Suppose $M\Reduces{w}{\param} N$. By Proposition~\ref{prop:stunted}(\ref{prop:stunted:3}), there are two cases to consider:
 \begin{itemize}[leftmargin=*]
 \item First suppose $N$ is $P_1 Q_1 Q_1$ where $P\Reduces{w}{\param} P_1$ and $Q\Reduces{w}{\param} Q_1$. By induction hypothesis  
$P_1\Reduces{w}{\param} P^{\ast}$ and $Q_1\Reduces{w}{\param} Q^{\ast}$. Then by two applications of Definition~\ref{defn:takahashi}(\ref{defn:takahashi3}), we have $P_1Q_1Q_1\Reduces{w}{\param} P^{\ast} Q^{\ast} Q^{\ast}$, which by Definition~\ref{defn:comp}(\ref{defn:compdev:3}) is equal to $(\ctW{A}{B}{P}{Q})^{\ast}$.
\item Second suppose $N$ is $\ctW{A}{B}{P_1}{Q_1}$ where $P\Reduces{w}{\param} P_1$ and $Q\Reduces{w}{\param} Q_1$. By induction hypothesis 
$P_1\Reduces{w}{\param} P^{\ast}$ and $Q_1\Reduces{w}{\param} Q^{\ast}$. By an application of Definition~\ref{defn:takahashi}(\ref{defn:takahashi2}), we have $\ctW{A}{B}{P_1}{Q_1}\Reduces{w}{\param}  P^{\ast} Q^{\ast} Q^{\ast}$ which by Definition~\ref{defn:comp}(\ref{defn:compdev:3}) is equal to $(\ctW{A}{B}{P}{Q})^{\ast}$.
\end{itemize}
The proofs for the other ``two input'' combinatory terms of Kestrel and Dardinal are entirely identical.

Second consider the Cardinal case. There are two Cardinal redexes and so two subcases. 

First consider subcase where $M$ is the redex $\ctC{A}{B}{C}{P}{c}{}$. Suppose $M\Reduces{w}{\param} N$. By Proposition~\ref{prop:cardinalsucc}(\ref{cardinalsucc3}) (with $Q$ in that proposition set identical to $c$), there are two subcases to consider:
\begin{itemize}[leftmargin=*]
    \item  $N$ is identical to $\ctD{c}{A}{B}{C}{P_1}{}$ for some term $P_1$ such that $P\Reduces{w}{\param} P_1$. Then by induction hypothesis, $P_1\Reduces{w}{\param} P^{\ast}$. Then by an application of Definition~\ref{defn:takahashi}(\ref{defn:takahashi1}),(\ref{defn:takahashi3}), we have $\ctD{c}{A}{B}{C}{P_1}{} \Reduces{w}{\param} \ctD{c}{A}{B}{C}{P^{\ast}}{}$, and the latter is is equal to $(\ctC{A}{B}{C}{P}{c}{})^{\ast}$ by Definition~\ref{defn:comp}(\ref{defn:compdev:3}).
    \item $N$ is identical to $\ctC{A}{B}{C}{P_1}{Q_1}{}$ for some $P_1,Q_1$ such that $P\Reduces{w}{\param} P_1$ and $c\Reduces{w}{\param} Q_1$. By Proposition~\ref{prop:helperfortaka}(\ref{prop:helperfortaka:1}) we have that $Q_1$ is $c$. Further, by induction hypothesis, $P_1\Reduces{w}{\param} P^{\ast}$. Then by Definition~\ref{defn:takahashi}(\ref{defn:takahashi2}) we have $\ctC{A}{B}{C}{P_1}{c}{}\Reduces{w}{\param}  \ctD{c}{A}{B}{C}{P^{\ast}}{}$, and the latter is is equal to $(\ctC{A}{B}{C}{P}{c}{})^{\ast}$ by Definition~\ref{defn:comp}(\ref{defn:compdev:3}).
\end{itemize}

Second consider the subcase where $M$ is the redex $\ctC{A}{B}{C}{P}{Q}{R}$. Suppose $M\Reduces{w}{\param} N$. By Proposition~\ref{prop:cardinalsucc}(\ref{cardinalsucc4}), there are three subcases to consider:
\begin{itemize}[leftmargin=*]
    \item $N$ is identical to $P_1R_1Q_1$ for some $P_1,Q_1, R_1$ such that $P\Reduces{w}{\param} P_1$ and $Q\Reduces{w}{\param} Q_1$ and $R\Reduces{w}{\param} R_1$.  Then by induction hypothesis, $P_1\Reduces{w}{\param} P^{\ast}$ and  $Q_1\Reduces{w}{\param} Q^{\ast}$ and  $R_1\Reduces{w}{\param} R^{\ast}$. Then by two applications of Definition~\ref{defn:takahashi}(\ref{defn:takahashi3}), we have $P_1R_1Q_1 \Reduces{w}{\param} P^{\ast} R^{\ast} Q^{\ast}$, and the latter is is equal to $(\ctC{A}{B}{C}{P}{Q}{R})^{\ast}$ by Definition~\ref{defn:comp}(\ref{defn:compdev:3}).
    \item  $N$ is identical to $\ctD{c}{A}{B}{C}{P_1}{R_1}{}$ for constant $c$ and some terms $P_1,R_1$ such that $P\Reduces{w}{\param} P_1$ and  $R\Reduces{w}{\param} R_1$; further $Q$ is identical to $c$.  Then by induction hypothesis, $P_1\Reduces{w}{\param} P^{\ast}$ and $R_1\Reduces{w}{\param} R^{\ast}$. By Definition~\ref{defn:comp}(\ref{defn:compdev:1}), $Q^{\ast}$ is also identical to $c$.
    By Definition~\ref{defn:takahashi}(\ref{defn:takahashi2}) we have $\ctD{c}{A}{B}{C}{P_1}{R_1}{}\Reduces{w}{\param} P^{\ast}R^{\ast} c$, which is the same term as $P^{\ast}R^{\ast} Q^{\ast}$, and the latter is is equal to $(\ctC{A}{B}{C}{P}{Q}{R})^{\ast}$ by Definition~\ref{defn:comp}(\ref{defn:compdev:3}).   
    \item  $N$ is identical to $\ctC{A}{B}{C}{P_1}{Q_1}{R_1}$ for some $P_1,Q_1, R_1$ such that $P\Reduces{w}{\param} P_1$ and $Q\Reduces{w}{\param} Q_1$ and $R\Reduces{w}{\param} R_1$.  Then by induction hypothesis, $P_1\Reduces{w}{\param} P^{\ast}$ and  $Q_1\Reduces{w}{\param} Q^{\ast}$ and  $R_1\Reduces{w}{\param} R^{\ast}$.  Then by Definition~\ref{defn:takahashi}(\ref{defn:takahashi2}), we have $\ctC{A}{B}{C}{P_1}{Q_1}{R_1}\Reduces{w}{\param} P^{\ast}R^{\ast}Q^{\ast}$, and the latter is is equal to $(\ctC{A}{B}{C}{P}{Q}{R})^{\ast}$ by Definition~\ref{defn:comp}(\ref{defn:compdev:3}).
\end{itemize}

The other ``three input'' case of Bluebird is similar, but simpler.
\end{proof}

\begin{thm}[Church-Rosser for weak reduction in $\cltup$]\label{thm:crforclt}
Suppose that $M_1, M_2, M_3$ are terms of $\cltup$ such that  $M_1\transreduces{w}{\param} M_2$ and $M_1\transreduces{w}{\param} M_3$. Then there is a term $M_4$ of $\cltup$ such that $M_2 \transreduces{w}{\param} M_4$ and $M_3 \transreduces{w}{\param} M_4$. 
\end{thm}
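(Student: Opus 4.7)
The plan is to deduce Church--Rosser from Proposition~\ref{prop:Takahashiprop} by the standard Takahashi strategy, which factors through the diamond property of parallel reduction. The key observation is that Proposition~\ref{prop:Takahashiprop} immediately yields the diamond property for $\Reduces{w}{\upsilon}$ with a \emph{uniform} witness: given $M_1\Reduces{w}{\upsilon}M_2$ and $M_1\Reduces{w}{\upsilon}M_3$, we can take $M_4:=M_1^{\ast}$ and apply Proposition~\ref{prop:Takahashiprop} twice to conclude that $M_2\Reduces{w}{\upsilon}M_1^{\ast}$ and $M_3\Reduces{w}{\upsilon}M_1^{\ast}$.

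Next, I would lift the diamond property from $\Reduces{w}{\upsilon}$ to its reflexive transitive closure. The cleanest way is the usual ``strip lemma'' plus one more induction. First, by induction on the number of parallel-reduction steps in $M_1\Reduces{w}{\upsilon}\cdots\Reduces{w}{\upsilon}M_3$, show that if $M_1\Reduces{w}{\upsilon}M_2$ and $M_1$ reaches $M_3$ in finitely many parallel-reduction steps, then there is a common $M_4$ reached by parallel-reduction steps from both $M_2$ and $M_3$; the base case uses the diamond property above, and the inductive step just chases the diamonds along the sequence. Then a symmetric induction, now on the length of the sequence from $M_1$ to $M_2$, yields the diamond property for the reflexive transitive closure of $\Reduces{w}{\upsilon}$.

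Finally, I convert back to $\reduces{w}{\upsilon}$. By Lemma~\ref{lem:parallel}(\ref{lem:parallel:2})--(\ref{lem:parallel:3}), the reflexive transitive closure of $\Reduces{w}{\upsilon}$ coincides with $\transreduces{w}{\upsilon}$ (one direction uses that a single weak reduction is a single parallel reduction; the other uses that a single parallel reduction is a finite chain of weak reductions, together with Lemma~\ref{lem:substitution} to reduce inside contexts, though here this is automatic from the compatible closure definition). The diamond property of the reflexive transitive closure of $\Reduces{w}{\upsilon}$ then transfers verbatim to $\transreduces{w}{\upsilon}$, which is exactly the statement of Theorem~\ref{thm:crforclt}. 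I should also note that the common term $M_4$ is a term of $\cltup$ by Lemma~\ref{lem:preservescltup}, since all the intermediate steps are weak reductions in $\cltup$. The whole argument is therefore essentially bookkeeping: the only step that really required work is Proposition~\ref{prop:Takahashiprop}, which has already been established, so the main obstacle here is only notational, namely keeping the parallel-reduction and weak-reduction relations straight when switching between them in the two inductions that prove the diamond property for $\transreduces{w}{\upsilon}$.
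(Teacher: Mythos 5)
Your proposal is correct and follows essentially the same route as the paper: Proposition~\ref{prop:Takahashiprop} gives the diamond property for $\Reduces{w}{\upsilon}$ via the uniform witness $M_1^{\ast}$, the standard strip-lemma/diagram-chase lifts it to the transitive closure, and Lemma~\ref{lem:parallel}(\ref{lem:parallel:2})--(\ref{lem:parallel:3}) identifies that closure with $\transreduces{w}{\upsilon}$. The only differences are cosmetic (you spell out the two inductions that the paper delegates to Barendregt, and the appeal to Lemma~\ref{lem:preservescltup} is unnecessary since every relation involved is already defined on terms of $\cltup$).
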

\begin{proof}
The previous theorem implies that if $M\Reduces{w}{\param} N_0$ and $M\Reduces{w}{\param}  N_1$, then $N_0\Reduces{w}{\param} M^{\ast}$ and $N_1\Reduces{w}{\param} M^{\ast}$. By a classic diagram chase argument, this implies that the transitive closure of~$\Reduces{w}{\param}$ has the Church-Rosser property (cf. \cite[Lemma 3.2.2 p. 59]{Barendregt1981-ai}). Then we are done by Proposition~\ref{lem:parallel}(\ref{lem:parallel:2})-(\ref{lem:parallel:3}), which implies that the transitive closure of $\Reduces{w}{\param}$ is the same as~$\transreduces{w}{\param}$.
\end{proof}

A traditional proof then also gives (cf. \cite[Theorem 3.1.12 p. 54]{Barendregt1981-ai}):
\begin{cor}[Church-Rosser and weak equality in $\cltup$]\label{cor:crforequals}
Suppose that $M_1, M_2$ are terms of $\cltup$ such that  $M_1\equals{w}{\param} M_2$.

Then there is a term $M_3$ of $\cltup$ such that $M_1\transreduces{w}{\param} M_3$ and $M_2\transreduces{w}{\param} M_3$. 
\end{cor}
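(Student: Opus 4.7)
The plan is to derive this corollary from the Church-Rosser property for $\transreduces{w}{\upsilon}$ established in Theorem~\ref{thm:crforclt}, via the standard ``zig-zag'' induction on the generation of the equivalence relation $\equals{w}{\upsilon}$.

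First I would unfold the definition of $\equals{w}{\upsilon}$ as the smallest equivalence relation containing $\reduces{w}{\upsilon}$. Concretely, if $M_1 \equals{w}{\upsilon} M_2$, then there is a finite sequence of terms $N_0, N_1, \ldots, N_k$ of $\cltup$ with $N_0 \equiv M_1$ and $N_k \equiv M_2$ such that for each $0 \leq i < k$, either $N_i \transreduces{w}{\upsilon} N_{i+1}$ or $N_{i+1} \transreduces{w}{\upsilon} N_i$. (Reflexive and transitive closure are absorbed by passing from $\reduces{w}{\upsilon}$ to $\transreduces{w}{\upsilon}$ and by allowing chains of arbitrary length.)

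The argument then proceeds by induction on $k$. The base case $k = 0$ is trivial: take $M_3 \equiv M_1 \equiv M_2$. For the inductive step, by induction hypothesis applied to the chain $N_0, \ldots, N_{k-1}$ there is a term $P$ of $\cltup$ with $N_0 \transreduces{w}{\upsilon} P$ and $N_{k-1} \transreduces{w}{\upsilon} P$. Now consider the last link between $N_{k-1}$ and $N_k$. If $N_{k-1} \transreduces{w}{\upsilon} N_k$, then combined with the induction hypothesis we have $N_{k-1} \transreduces{w}{\upsilon} P$ and $N_{k-1} \transreduces{w}{\upsilon} N_k$; by Theorem~\ref{thm:crforclt} (Church-Rosser) there is a common reduct $M_3$ with $P \transreduces{w}{\upsilon} M_3$ and $N_k \transreduces{w}{\upsilon} M_3$, and transitivity of $\transreduces{w}{\upsilon}$ gives $M_1 \transreduces{w}{\upsilon} M_3$ and $M_2 \equiv N_k \transreduces{w}{\upsilon} M_3$. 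If instead $N_k \transreduces{w}{\upsilon} N_{k-1}$, then chaining with $N_{k-1} \transreduces{w}{\upsilon} P$ directly gives $N_k \transreduces{w}{\upsilon} P$, and we may take $M_3 \equiv P$ since $M_1 \transreduces{w}{\upsilon} P$ already.

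There is no real obstacle here: the content of the result lies entirely in Theorem~\ref{thm:crforclt}, and the rest is the routine diagram chase for the confluence-implies-equivalence-confluence implication. The only point requiring minor care is that each application of Church-Rosser produces a term of $\cltup$ (not merely a formal expression), but this is immediate because Church-Rosser as stated in Theorem~\ref{thm:crforclt} already delivers a term of $\cltup$, and $\transreduces{w}{\upsilon}$ preserves being a term of $\cltup$ by Lemma~\ref{lem:preservescltup}.
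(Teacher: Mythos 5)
Your proof is correct and is exactly the ``traditional proof'' the paper invokes: the paper's own proof of this corollary is a one-line citation of the standard zig-zag induction (cf.\ Barendregt, Theorem 3.1.12), which you have simply written out in full, with Theorem~\ref{thm:crforclt} doing all the work. No issues.
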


From this we can derive:

\begin{cor}[Conservation of $\clt{\omega}$ over $\cltup$]\label{cor:cltcon}
Suppose that $\typd{M,N}{A}$ are terms of $\cltup$. 

Then $\cltup\vdash_w M=N$ iff $\clt{\omega}\vdash_w M=N$.
\end{cor}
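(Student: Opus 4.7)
The forward direction will be immediate: since $\upsilon \leq \omega$, every term of $\cltup$ is also a term of $\clt{\omega}$, and inspection of Definition~\ref{defn:weakreduction} shows that $\reduces{w}{\upsilon}$ is contained in $\reduces{w}{\omega}$; hence $\equals{w}{\upsilon}$ is contained in $\equals{w}{\omega}$.

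For the backward direction I will apply Corollary~\ref{cor:crforequals} in $\clt{\omega}$. Assuming $M \equals{w}{\omega} N$, this produces a term $L$ of $\clt{\omega}$ with $M \transreduces{w}{\omega} L$ and $N \transreduces{w}{\omega} L$. The plan is then to argue that $L$ is in fact a term of $\cltup$ and that both reductions live entirely inside $\cltup$, which by symmetry and transitivity of $\equals{w}{\upsilon}$ will deliver $\cltup \vdash_w M=N$.

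The key observation, read off immediately from the definition of terms of $\cltup$ (Definition~\ref{defn:typedcombo2}), is that any combinator appearing as a subterm of a term of $\cltup$ is itself a combinator of $\cltup$: the parameter-dependent typing side-conditions on combinator terms are automatically met by any combinator actually witnessed inside a $\cltup$-term. From this I can conclude that whenever $P$ is a term of $\cltup$ and $P \reduces{w}{\omega} Q$, the one-step reduction is already a $\cltup$-reduction $P \reduces{w}{\upsilon} Q$, since the combinator at the head of the contracted redex lies in $\cltup$. Invoking Lemma~\ref{lem:preservescltup} then gives that $Q$ is again a term of $\cltup$. A straightforward induction on the length of a $\transreduces{w}{\omega}$-sequence witnessing $M \transreduces{w}{\omega} L$ will show that every intermediate term lies in $\cltup$ and that the entire sequence is a $\transreduces{w}{\upsilon}$-sequence; the same argument applies to $N \transreduces{w}{\omega} L$, and combining the two yields $M \equals{w}{\upsilon} N$.

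The only place I expect to pause, rather than a genuine obstacle, is the Cardinal-to-Dardinal weak reduction flagged in Remark~\ref{rmk:cardinal2dardingal}. Here I will simply note that if the redex $\ctC{A}{B}{C}{P}{c}{}$ sits inside a $\cltup$-term, its Cardinal is a combinator of $\cltup$ (so the pair $A,B$ satisfies the parameter-dependent side-condition required by Definition~\ref{defn:typedcombo2}), while the Dardinal appearing in the contractum is in $\cltup$ for every parameter, so nothing is lost in passing from the $\clt{\omega}$-reduction to the corresponding $\cltup$-reduction.
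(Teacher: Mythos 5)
Your proof is correct and follows essentially the same route as the paper's: the trivial forward direction, then Church--Rosser (Corollary~\ref{cor:crforequals}) in $\clt{\omega}$ to obtain a common reduct, and Lemma~\ref{lem:preservescltup} to see that both reduction chains stay inside $\cltup$. Your explicit observation that each $\reduces{w}{\omega}$-step out of a $\cltup$-term is already a $\reduces{w}{\upsilon}$-step (because every combinator subterm of a $\cltup$-term is a combinator of $\cltup$, and the Dardinal introduced by the Cardinal-to-Dardinal contraction is parameter-independent) is a detail the paper leaves largely implicit in its appeal to Lemma~\ref{lem:preservescltup}, but it is the same argument.
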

\begin{proof}
The forward direction is trivial since any $\reduces{w}{\param}$ reduction is a $\reduces{w}{\omega}$ reduction. 

Conversely, suppose that $\clt{\omega}\vdash_w M=N$. Then by definition $M\equals{w}{\omega} N$. By Corollary~\ref{cor:crforequals},  there is a term $L$ of $\clt{\omega}$ such that $M\transreduces{w}{\omega} L$ and $N\transreduces{w}{\omega} L$. But since $\typd{M,N}{A}$ are terms of $\cltup$, by Lemma~\ref{lem:preservescltup} so also we have that $L$ and everything else in the two $\transreduces{w}{\omega}$-chains are terms of $\cltup$. Hence we also have $M\equals{w}{\param} N$, which by definition means $\cltup\vdash_w M=N$.
\end{proof}

\subsection{Simulating abstraction}\label{sec:simabs}

Despite its lack of a primitive binding apparatus, combinatory logic famously allows one to simulate core aspects of lambda abstraction. The usual definition of this goes through the combinatory terms $\mathsf{SKI}$, but the proof of Theorem~\ref{thm:bigcombo} suggests the following definition which deploys $\mathsf{BCDKW}$. While there is some suggestion that Curry did something like this in his early work, this definition is to my knowledge new.\footnote{See \cite[\S{5}]{Cardone2006-qk}. In Curry \cite[p. 238]{Curry1958-yi} one finds a description of why  $\mathsf{BCKW}$ should suffice, but the more formal discussion  in Curry \cite[p. 190]{Curry1958-yi} (list at the bottom) does not include Warbler, and does not contain the case breaks I have used.}

\begin{defi}[Combinatory abstraction]\label{defn:cbd}
Suppose $A,B$ are types and $B$ is a regular type and $\typd{v}{A}$ is a variable of $\cltup$ and $\typd{M}{B}$ is a term of $\cltup$. Then, by induction on complexity of $\typd{M}{B}$, we define term $\cbd{v}{A}{M}$ of type $A\rightarrow B$ of $\cltup$ as follows:
\begin{enumerate}[leftmargin=*]
\item\label{defn:cbd:1} If $\typd{v}{A}$ does not appear in $\typd{M}{B}$, then we define  $\cbd{v}{A}{M}$ to be $\ctK{B}{A}{M}{}$.
\item\label{defn:cbd:2} If $\typd{v}{A}$ does appear in $\typd{M}{B}$ and $\typd{M}{B}$ is $\typd{v}{A}$, then we define $\cbd{v}{A}{M}$ to be $\ctI{A}{}$.
\item\label{defn:cbd:3} If $\typd{v}{A}$ does appear in $\typd{M}{B}$ and $\typd{M}{B}$ is $\typd{M_0 M_1}{B}$ where $\typd{M_0}{C\rightarrow B}$ and $\typd{M_1}{C}$, then we define
\begin{enumerate}[leftmargin=*, label=(\alph*), ref=\alph*]
\item\label{defn:cbd:3.1} If $C$ is a state type then
\begin{equation*}
\cbd{v}{A}{M_0 M_1}=\begin{cases}
\ctW{A}{B}{\big(\cbd{v}{A}{M_0}\big)}      & \text{if $\typd{M_1}{C}$ is $\typd{v}{A}$}, \\
\ctC{A}{C}{B}{\big(\cbd{v}{A}{M_0} \big) }{M_1}{}{}      & \text{if $\typd{M_1}{C}$ is a variable but not $\typd{v}{A}$}, \\
\ctD{c}{A}{C}{B}{\big(\cbd{v}{A}{M_0} \big) }{}{}{}      & \text{if $\typd{M_1}{C}$ is the constant $\typd{c}{C}$}, \\
\end{cases}
\end{equation*}
    \item\label{defn:cbd:3.2} If $C$ is a regular type, then
\begin{equation*}    
\cbd{v}{A}{M_0 M_1} \; = \; \ctS{C}{B}{A}{\big(\cbd{v}{A}{M_0}\big)}{\big(\cbd{v}{A}{M_1}\big)}{}
\end{equation*}
\end{enumerate}
\end{enumerate}
\end{defi}

Regarding Definition~\ref{defn:cbd}(\ref{defn:cbd:3}\ref{defn:cbd:3.1}), note that the hypothesis that $C$ is a state type and the second case break hypothesis that $\typd{M_1}{C}$ is a variable but not $\typd{v}{A}$ has the consequence that either $A,C$ are distinct types, or $A,C$ are identical types and $\param(A)=\param(C)>1$. This, in conjunction with $B$ being regular implies that Cardinal $\ctC{A}{C}{B}{}{}{}$ is a term of $\cltup$ by Definition~\ref{defn:typedcombo2}.

Definition~\ref{defn:cbd} is a definition by induction on complexity of $\typd{M}{B}$. In particular, Definition~\ref{defn:cbd}(\ref{defn:cbd:1})-(\ref{defn:cbd:2}) cover the cases of variables, constants, and combinatory terms; and supposing it has been defined for  $\typd{M_0}{C\rightarrow B}$ and $\typd{M_1}{C}$, we define it for $\typd{M_0 M_1}{B}$ by breaking into cases according to Definition~\ref{defn:cbd}(\ref{defn:cbd:1}), (\ref{defn:cbd:3}), where we appeal to induction hypothesis only in the case of Definition~\ref{defn:cbd}(\ref{defn:cbd:3}).

The following proposition is a technical one. One can see it as simultaneously accomplishing two things: showing an elementary instance of the analogue of $\beta$-reduction is available, and then identifying the variables appearing in a combinatory abstract (cf. \cite[p. 27]{Hindley2008-vw}).  
\begin{prop}[Elementary instance of analogue of $\beta$-reduction; variables appearing in an combinatory abstract]\label{prop:clbetapre}
Suppose $A,B$ are types and $B$ is a regular type and $\typd{v}{A}$ is a variable of $\cltup$ and $\typd{M}{B}$ is a term of $\cltup$. Then
\begin{enumerate}[leftmargin=*]
    \item\label{prop:clbetapre:1}  $(\cbd{v}{A}{M})v \transreduces{w}{\param} M$.
    \item\label{prop:clbetapre:2} The variables appearing in  $\cbd{v}{A}{M}$ are precisely those appearing in $\typd{M}{B}$ minus $\typd{v}{A}$.
    In particular, $v$ does not appear in $\cbd{v}{A}{M}$.
\end{enumerate}
\end{prop}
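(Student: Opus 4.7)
The plan is to prove (\ref{prop:clbetapre:1}) and (\ref{prop:clbetapre:2}) simultaneously by induction on the complexity of $\typd{M}{B}$, following the clauses of Definition~\ref{defn:cbd}. The base cases are essentially immediate. If $\typd{v}{A}$ does not appear in $M$, then $\cbd{v}{A}{M}$ is $\ctK{B}{A}{M}{}$, so $(\cbd{v}{A}{M})v \reduces{w}{\upsilon} M$ by the Kestral weak reduction, and the variables of $\ctK{B}{A}{M}{}$ are exactly those of $M$, matching $\mathrm{vars}(M)\setminus\{v\}$ since $v$ does not appear. If $M$ is $\typd{v}{A}$ itself, then $A=B$ is regular, Proposition~\ref{prop:clrecoveryidentity} gives $\ctI{A}{}v \transreduces{w}{\upsilon} v$, and $\ctI{A}{}$ is closed, matching $\{v\}\setminus\{v\}=\emptyset$.

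For the inductive step, $M$ is an application $\typd{M_0 M_1}{B}$ in which $\typd{v}{A}$ appears, with $\typd{M_0}{C\rightarrow B}$ and $\typd{M_1}{C}$. Since $B$ is regular, so is $C\rightarrow B$, so the induction hypothesis applies to $\typd{M_0}{C\rightarrow B}$. When $C$ is a state type, Proposition~\ref{prop:state-free:cl} forces $M_1$ to be a variable or a constant. In each of the three sub-subcases of Definition~\ref{defn:cbd}(\ref{defn:cbd:3.1}), a single weak reduction of Warbler, Cardinal, or Dardinal reduces $(\cbd{v}{A}{M})v$ to $(\cbd{v}{A}{M_0})vv$, $(\cbd{v}{A}{M_0})vM_1$, or $(\cbd{v}{A}{M_0})vc$ respectively; invoking the inductive hypothesis $(\cbd{v}{A}{M_0})v \transreduces{w}{\upsilon} M_0$ and the compatibility of $\transreduces{w}{\upsilon}$ with application then delivers $M_0 M_1 = M$ in each instance. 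When $C$ is a regular type the induction hypothesis applies to $\typd{M_1}{C}$ as well, and $\cbd{v}{A}{M}$ is the derived Starling of Proposition~\ref{prop:clrecoverystarling}; applying it to $v$ reduces it to $(\cbd{v}{A}{M_0})v\bigl((\cbd{v}{A}{M_1})v\bigr)$ by that proposition, and two uses of the induction hypothesis finish the reduction to $M_0 M_1$.

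Part (\ref{prop:clbetapre:2}) is handled in parallel: each combinator ($\mathsf{K}$, $\mathsf{I}$, $\mathsf{W}$, $\mathsf{C}$, $\mathsf{D}$, $\mathsf{S}$) is closed, so the variables of $\cbd{v}{A}{M}$ are precisely those of its combinatory subterms $\cbd{v}{A}{M_0}$, $\cbd{v}{A}{M_1}$ (in the Starling case), or $M_1$ (in the Cardinal case), and the induction hypothesis together with a routine set-theoretic identity discharges each case. A minor wrinkle is that in the Warbler subcase $\typd{v}{A}$ need not appear in $M_0$; but then $\cbd{v}{A}{M_0}$ is simply $\ctK{C\rightarrow B}{A}{M_0}{}$, whose variable set still equals $\mathrm{vars}(M_0)\setminus\{v\}$, so the computation goes through uniformly. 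I expect the only real obstacle to be administrative: Definition~\ref{defn:cbd} was engineered so that each clause matches a single weak-reduction schema, so the substance of the proof reduces to checking that at every step all side conditions for the constructed combinators to be terms of $\cltup$ are satisfied---most notably the distinctness-or-$\upsilon(A){=}\upsilon(C){>}1$ condition on Cardinal in the sub-subcase where $\typd{M_1}{C}$ is a state-typed variable distinct from $\typd{v}{A}$.
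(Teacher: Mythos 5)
Your proof is correct and follows essentially the same route as the paper's: a simultaneous induction on the complexity of $M$ tracking the clauses of Definition~\ref{defn:cbd}, with Kestral/Identity handling the base cases, the Warbler/Cardinal/Dardinal weak reductions handling the state-type application subcases, and Proposition~\ref{prop:clrecoverystarling} handling the regular-type case. Your explicit attention to the Cardinal side conditions and to the possibility that $v$ does not appear in $M_0$ is a slight elaboration of details the paper leaves implicit, but the argument is the same.
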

\begin{proof}
We argue by induction on complexity of $\typd{M}{B}$. 

The most interesting case is the induction step where $v$ appears in $M$ and $\typd{M}{B}$ is the application $\typd{M_0 M_1}{B}$ with $\typd{M_0}{C\rightarrow B}$ and $\typd{M_1}{C}$ and $C$ being a state type. There are three subcases, corresponding to the three subcases of Definition~\ref{defn:cbd}(\ref{defn:cbd:3}\ref{defn:cbd:3.1}). Since they are similar we only do one of them.

Suppose that $\typd{M_1}{C}$ is $\typd{v}{A}$. Then $\cbd{v}{A}{M_0 M_1}$ is $\ctW{A}{B}{\big(\cbd{v}{A}{M_0}\big)}{}$. By induction hypothesis for $M_0$, the variables appearing in $\ctW{A}{B}{\big(\cbd{v}{A}{M_0}\big)}{}$ are precisely those appearing in $M_0$ minus $\typd{v}{A}$; and due to $\typd{M_1}{C}$ being $\typd{v}{A}$, this is equal to the those appearing in $M_0M_1$ minus $\typd{v}{A}$. Further, by Definition~\ref{defn:weakreduction} one has $\ctW{A}{B}{\big(\cbd{v}{A}{M_0}\big)}{v} \reduces{w}{\param} \big(\cbd{v}{A}{M_0}\big) vv \transreduces{w}{\param} M_0 v$, where the second weak reduction is by induction hypothesis for $M_0$.
\end{proof}

\begin{proofdetail}

\begin{proof}

If one wants more detail, we give the full proof by induction on complexity of $\typd{M}{B}$ here.
    
First suppose that $\typd{M}{B}$ is a variable. 

Suppose $\typd{M}{B}$ is $\typd{v}{A}$ itself, so that $B$ is $A$. Then by Definition~\ref{defn:cbd}(\ref{defn:cbd:2}) we have that  $\cbd{v}{A}{M}$ is $\ctI{A}{}$ which does not have any variables since it is a closed term of $\cltup$; likewise, the set of variables appearing in $\typd{M}{B}$ minus $\typd{v}{A}$ is empty. Further $\ctI{A}{} v \transreduces{w}{\param} v$ by Proposition~\ref{prop:clrecoveryidentity}.

Suppose $\typd{M}{B}$ is $\typd{u}{B}$, which is distinct from  $\typd{v}{A}$. Then by Definition~\ref{defn:cbd}(\ref{defn:cbd:1}), we have that $\cbd{v}{A}{M}$ is $\ctK{B}{A}{M}{}$, in which only variable $\typd{u}{B}$ appears; likewise the set of variables appearing in $\typd{M}{B}$ minus $\typd{v}{A}$ is precisely the variable $\typd{u}{B}$. Further, by Definition~\ref{defn:weakreduction}, we have $\ctK{B}{A}{M}{v} \reduces{w}{\param} M$.

The argument for constants and combinatory terms is exactly parallel to the previous paragraph, except no variables appear in $\typd{M}{B}$ in these cases, and likewise no variables appear in  $\cbd{v}{A}{M}$, which is $\ctK{B}{A}{M}{}$.

Suppose that $\typd{M}{B}$ is $\typd{M_0 M_1}{B}$ where $\typd{M_0}{C\rightarrow B}$ and $\typd{M_1}{C}$. The case where $v$ does not appear in $\typd{M}{B}$  is exactly like the previous paragraph. Henceforth assume that $v$ does appear in $\typd{M}{B}$.

First suppose that $C$ is a state type. There are three subcases, corresponding to the three subcases of Definition~\ref{defn:cbd}(\ref{defn:cbd:3}\ref{defn:cbd:3.1}). Since they are similar we only do one of them.

Suppose that $\typd{M_1}{C}$ is $\typd{v}{A}$. Then $\cbd{v}{A}{M_0 M_1}$ is $\ctW{A}{B}{\big(\cbd{v}{A}{M_0}\big)}{}$. By induction hypothesis for $M_0$, the variables appearing in $\ctW{A}{B}{\big(\cbd{v}{A}{M_0}\big)}{}$ are precisely those appearing in $M_0$ minus $\typd{v}{A}$; and due to $\typd{M_1}{C}$ being $\typd{v}{A}$, this is equal to the those appearing in $M_0M_1$ minus $\typd{v}{A}$. Further, by Definition~\ref{defn:weakreduction} one has $\ctW{A}{B}{\big(\cbd{v}{A}{M_0}\big)}{v} \reduces{w}{\param} \big(\cbd{v}{A}{M_0}\big) vv \transreduces{w}{\param} M_0 v$, where the second weak reduction is by induction hypothesis for $M_0$.

The case where $C$ is a regular type is similar to the previous paragraph but goes through Starling.
\end{proof}

\end{proofdetail}

The following technical proposition is the combinatory analogue of the lambda calculus identity $\big(\lbd{v}{A}{M}\big)[u:=N]\equiv \lbd{v}{A}{\big(M[u:=N]\big)}$, when variables $\typd{v}{A}, \typd{u}{B}$ are distinct and $v$ does not occur free in $N$. It is more complicated in combinatory logic simply because the combinatory abstracts are defined via a large number of case breaks. Further, unlike the usual $\mathsf{SKI}$-combinatory logic, we get a weak equality in our $\mathsf{BCDKW}$-combinatory logic rather than a literal identity of terms, unless we restrict to a signature with no constants of state type (cf. \cite[Lemma 2.28 (c) p. 29]{Hindley2008-vw}). We use this proposition in the proof of Proposition~\ref{prop:missinglink}, which in turn is used in the proof of Theorem~\ref{thm:frommltuptoclt}.

\begin{prop}[Substituting in a combinatory abstract]\label{prop:verydiff}
Suppose $A,B$ are types and $B$ is a regular type and $\typd{v}{A}$ is a variable of $\cltup$ and $\typd{M}{B}$ is a term of $\cltup$. 

Suppose $D$ is a type and $\typd{u}{D}$ is a variable of $\cltup$ distinct from $\typd{v}{A}$, and suppose that $\typd{N}{D}$ is a term of $\cltup$ in which $\typd{v}{A}$ does not appear.

Then:
\begin{enumerate}[leftmargin=*]
    \item\label{prop:verydiff:1} In a signature with constants of state type, $\big(\cbd{v}{A}{M}\big) [u:=N] \equals{w}{\param} \cbd{v}{A}{\big(M[u:=N]\big)}$.
    \item\label{prop:verydiff:2} In a signature with no constants of state type, $\big(\cbd{v}{A}{M}\big) [u:=N] \equiv \cbd{v}{A}{\big(M[u:=N]\big)}$.
\end{enumerate}
\end{prop}
\begin{proof}

For (\ref{prop:verydiff:1}), this proof is by induction on complexity of $\typd{M}{B}$, with a universal quantifier over $\typd{u}{D}$, $\typd{N}{D}$.

Suppose that $\typd{v}{A}$ does not appear in $\typd{M}{B}$. Since $\typd{v}{A}$ does not appear in  $\typd{N}{D}$, we have that $\typd{v}{A}$ does not appear in $M[u:=N]$. Then we have $(\cbd{v}{A}{M})[u:=N] \equiv \big(\ctK{B}{A}{M}{}\big) [u:=N] \equiv \ctK{B}{A}{\big(M[u:=N]\big)}{} \equiv  \cbd{v}{A}{\big(M[u:=N]\big)}$, wherein the first and the last are by two applications of Definition~\ref{defn:cbd}(\ref{defn:cbd:1}).

Suppose that $\typd{v}{A}$ does appear in $\typd{M}{B}$ and that $\typd{M}{B}$ is $\typd{v}{A}$. Then by Proposition~\ref{prop:clbetapre}(\ref{prop:clbetapre:2}), no variables appear in $\cbd{v}{A}{M}$, and so $\big(\cbd{v}{A}{M}\big) [u:=N]$ is $\cbd{v}{A}{M}$. Further, since $\typd{u}{D}$ is distinct from $\typd{v}{A}$, we have that $\cbd{v}{A}{\big(M[u:=N]\big)}$ is also $\cbd{v}{A}{M}$. By Definition~\ref{defn:cbd}(\ref{defn:cbd:2}), both terms are identical to $\ctI{A}{}$.

Suppose for the remainder of the proof that $\typd{v}{A}$ does appear in $\typd{M}{B}$ and $\typd{M}{B}$ is $\typd{M_0 M_1}{B}$ where $\typd{M_0}{C\rightarrow B}$ and $\typd{M_1}{C}$. Since $\typd{v}{A},\typd{u}{D}$ are distinct variables, $\typd{v}{A}$ appears in $M[u:=N]$; and the term $M[u:=N]$ is $M_0[u:=N] M_1[u:=N]$. By induction hypothesis, we have that the result holds for $M_0$ and that it holds for $M_1$ when $C$ is of regular type.

As a first case, suppose that $C$ is a state type. There are then four subcases. 

First suppose that $\typd{M_1}{C}$ is $\typd{v}{A}$. Since $\typd{v}{A},\typd{u}{D}$ are distinct variables, we also have that $M_1[u:=N]$ is $\typd{v}{A}$. Then we have the following:
\begin{align*}
&\big(\cbd{v}{A}{M_0 M_1}\big) [u:=N] \equiv \big(\ctW{A}{B}{\big(\cbd{v}{A}{M_0}\big)} \big)[u:=N]\\
& \equiv \ctW{A}{B}{\big(\big(\cbd{v}{A}{M_0}\big)[u:=N]\big)}  \equals{w}{\param} \ctW{A}{B}{\big(\cbd{v}{A}{\big(M_0[u:=N]\big)}\big)} \\
& \equiv \cbd{v}{A}{\big(M_0[u:=N]v\big)} \equiv \cbd{v}{A}{\big(M_0[u:=N]M_1[u:=N]\big)}\\
&\equiv \cbd{v}{A}{\big((M_0M_1)[u:=N]\big)}
\end{align*}
In this, the first and third-to-last are by two applications of the Warbler case of Definition~\ref{defn:cbd}(\ref{defn:cbd:3}\ref{defn:cbd:3.1}), and the one with $\equals{w}{\param}$ follows from induction hypothesis for $\typd{M_0}{C\rightarrow B}$.

Second suppose that $\typd{M_1}{C}$ is $\typd{u}{D}$. Then $D$ is $C$, and hence $D$ is a state type, and hence by Proposition~\ref{prop:state-free:cl} one has that $\typd{N}{D}$ is a variable or a constant. Since $\typd{v}{A}$ does not appear in $\typd{N}{D}$, we have that $\typd{N}{D}$ is a variable distinct from $\typd{v}{A}$ or $\typd{N}{D}$ is a constant $\typd{c}{C}$. 
\begin{itemize}[leftmargin=*]
\item First consider the subsubcase where $\typd{N}{D}$ is a variable but not $\typd{v}{A}$. Then:
\begin{align*}
& \big(\cbd{v}{A}{M_0 M_1}\big) [u:=N] \equiv \big(\ctC{A}{C}{B}{\big(\cbd{v}{A}{M_0} \big) }{M_1}{}{}  \big)[u:=N]\\
& \equiv  \ctC{A}{C}{B}{\big(\big(\cbd{v}{A}{M_0}\big)[u:=N]\big)}{M_1[u:=N]}{}{}   \equals{w}{\param}  \ctC{A}{C}{B}{\big(\cbd{v}{A}{M_0[u:=N]} \big) }{M_1[u:=N]}{}{}  \\
& \equiv  \ctC{A}{C}{B}{\big(\cbd{v}{A}{M_0[u:=N]} \big) }{N}{}{}   \equiv \cbd{v}{A}{\big(M_0[u:=N] N\big)} \\
& \equiv \cbd{v}{A}{\big(M_0[u:=N]M_1[u:=N]\big)}\equiv \cbd{v}{A}{\big((M_0M_1)[u:=N]\big)}
\end{align*}
In this, the first and third-to-last are by two applications of the Cardinal case of Definition~\ref{defn:cbd}(\ref{defn:cbd:3}\ref{defn:cbd:3.1}), and the one with $\equals{w}{\param}$ follows from induction hypothesis for $\typd{M_0}{C\rightarrow B}$.
\item Second consider the subsubcase where $\typd{N}{D}$ is a constant $\typd{c}{C}$. Then one has:
\begin{align}
& \big(\cbd{v}{A}{M_0 M_1}\big) [u:=N] \equiv \big(\ctC{A}{C}{B}{\big(\cbd{v}{A}{M_0} \big) }{M_1}{}{}  \big)[u:=N] \notag \\
& \equiv  \ctC{A}{C}{B}{\big(\big(\cbd{v}{A}{M_0}\big)[u:=N]\big)}{M_1[u:=N]}{}{}   \equals{w}{\param}  \ctC{A}{C}{B}{\big(\cbd{v}{A}{M_0[u:=N]} \big) }{M_1[u:=N]}{}{} \notag  \\
& \equiv  \ctC{A}{C}{B}{\big(\cbd{v}{A}{M_0[u:=N]} \big) }{N}{}{} \equals{w}{\param}  \ctD{c}{A}{C}{B}{\big(\cbd{v}{A}{M_0[u:=N]} \big) } \label{eqn:verydifferent} \\
& \equiv \cbd{v}{A}{\big(M_0[u:=N] N\big)} \equiv \cbd{v}{A}{\big(M_0[u:=N]M_1[u:=N]\big)}\notag \\
&\equiv \cbd{v}{A}{\big((M_0M_1)[u:=N]\big)}\notag 
\end{align}
In this, the first and third-to-last are respectively by applications of the Cardinal and Dardinal case Definition~\ref{defn:cbd}(\ref{defn:cbd:3}\ref{defn:cbd:3.1}). The first line with $\equals{w}{\param}$ follows from induction hypothesis for $\typd{M_0}{C\rightarrow B}$. The second line with $\equals{w}{\param}$, namely in (\ref{eqn:verydifferent}), follows from the Cardinal-to-Dardinal weak reduction (cf. Definition~\ref{defn:weakreduction} and Remark~\ref{rmk:cardinal2dardingal}).
\end{itemize}

Since their proofs are similar, we omit the last two subcases (namely: when $\typd{M_1}{C}$ is a variable but not $\typd{v}{A}$ or $\typd{u}{D}$; and when $\typd{M_1}{C}$ is a constant).

A second case is when $C$ is a regular type. But then one uses Starling similar to how one used Warbler above.

For (\ref{prop:verydiff:2}), the only place in the proof of (\ref{prop:verydiff:1}) where a weak equality $\equals{w}{\param}$ rather than a syntactic identity $\equiv$ occurred, outside of an induction hypothesis, was in (\ref{eqn:verydifferent}), namely the case of where $\typd{N}{D}$ is a constant $\typd{c}{C}$ of state type $C$. Hence in a signature with no constants of state type, we will have syntactic identity $\equiv$ throughout.
\end{proof}

The following proposition is the combinatory logic analogue of distanced $\beta$-equality from Definition~\ref{defn:betareduction} (cf. \cite[Theorem 2.21 p. 27]{Hindley2008-vw}). It one of the key components of the proof of the correspondence between $\cltup$ and $\mltup$ established in Theorem~\ref{thm:frommltuptoclt}.

\begin{prop}[The combinatory logic analogue of distanced $\beta$-equality]\label{prop:keybackandforth}
Suppose that $\typd{L}{C}, \typd{\vec{M}}{\vec{B}}, \typd{N}{A}$ are terms of $\cltup$, and suppose that $\typd{\vec{x}}{\vec{B}}$ and $\typd{v}{A}$ are variables of $\cltup$ with $\typd{\vec{x}}{\vec{B}}, \typd{\vec{M}}{\vec{B}}$ having the same length. Suppose that
\begin{enumerate}[leftmargin=*]
    \item\label{defn:betareduction:3cl} the variables in $\typd{\vec{x}}{\vec{B}}$ do not appear in $\typd{N}{A}$
    \item\label{defn:betareduction:2cl} the variables in $\typd{\vec{x}}{\vec{B}}, \typd{v}{A}$ are pairwise distinct
\end{enumerate}
Then $\big( \cbd{\vec{x}}{\vec{B}}{\cbd{v}{A}{L}}\big) \vec{M} N \equals{w}{\param} \big( \cbd{\vec{x}}{B}{L[v:=N]} \big) \vec{M}$.
\end{prop}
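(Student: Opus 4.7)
The plan is to show, without any induction on~$L$, that both sides of the claimed weak equality $\transreduces{w}{\upsilon}$-reduce to a common term, namely the result $L\sigma$ of the simultaneous substitution $\sigma = [v:=N,\,\vec{x}:=\vec{M}]$ applied to~$L$. Weak equality then follows immediately from the definition of $\equals{w}{\upsilon}$ as the least equivalence relation containing~$\reduces{w}{\upsilon}$.

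The first step is to observe that, by applying Proposition~\ref{prop:clbetapre}(\ref{prop:clbetapre:1}) $n{+}1$ times --- peeling the outer abstract off at each stage and using the closure of weak reduction under application on the right --- one has
\[
\big(\cbd{\vec{x}}{\vec{B}}{\cbd{v}{A}{L}}\big)\,\vec{x}\,v \;\transreduces{w}{\upsilon}\; L.
\]
An entirely analogous peeling yields $\big(\cbd{\vec{x}}{\vec{B}}{L[v:=N]}\big)\,\vec{x} \transreduces{w}{\upsilon} L[v:=N]$. Next, let $\sigma$ denote the simultaneous substitution $[v:=N,\,\vec{x}:=\vec{M}]$, which is well-defined thanks to the pairwise-distinctness condition~(\ref{defn:betareduction:2cl}). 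Applying Lemma~\ref{lem:substitution}(\ref{lem:substitution:3}) with $\sigma$ to the first peeling, the left-hand side can be computed as follows: by iterated application of Proposition~\ref{prop:clbetapre}(\ref{prop:clbetapre:2}), neither $v$ nor any $x_i$ appears in the term $\cbd{\vec{x}}{\vec{B}}{\cbd{v}{A}{L}}$, so $\sigma$ acts as the identity on that subterm; meanwhile $\sigma$ replaces $\vec{x}$ by $\vec{M}$ and $v$ by $N$ in the trailing arguments. Thus
\[
\big(\cbd{\vec{x}}{\vec{B}}{\cbd{v}{A}{L}}\big)\,\vec{M}\,N \;\transreduces{w}{\upsilon}\; L\sigma.
\]
An analogous application of Lemma~\ref{lem:substitution}(\ref{lem:substitution:3}) with $[\vec{x}:=\vec{M}]$ to the second peeling, using again Proposition~\ref{prop:clbetapre}(\ref{prop:clbetapre:2}) to see that $\vec{x}$ does not appear in $\cbd{\vec{x}}{\vec{B}}{L[v:=N]}$, gives
\[
\big(\cbd{\vec{x}}{\vec{B}}{L[v:=N]}\big)\,\vec{M} \;\transreduces{w}{\upsilon}\; L[v:=N][\vec{x}:=\vec{M}].
\]

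The final step is to match the two right-hand sides. The sequential substitution $L[v:=N][\vec{x}:=\vec{M}]$ coincides with the simultaneous substitution $L\sigma$ precisely because condition~(\ref{defn:betareduction:3cl}) forbids any $x_i$ from occurring in~$N$; hence the outer $[\vec{x}:=\vec{M}]$ cannot disturb the $N$-occurrences introduced by the inner~$[v:=N]$, and a routine variable-by-variable check on $L$ shows the two modes of substitution produce the same term. I do not foresee a genuine obstacle here: the entire argument is bookkeeping about which variables do or do not appear in various combinatory abstracts, and Proposition~\ref{prop:clbetapre} packages exactly what is needed. In particular, the more delicate Proposition~\ref{prop:verydiff} --- which pushes substitutions through $[\cdot]$-abstracts --- is not required, because every substitution applied via Lemma~\ref{lem:substitution}(\ref{lem:substitution:3}) here acts trivially on the relevant combinatory abstracts.
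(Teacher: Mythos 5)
Your proposal is correct and follows essentially the same route as the paper's proof: both sides are $\transreduces{w}{\upsilon}$-reduced to $L[\vec{x}:=\vec{M}, v:=N]$ via iterated use of Proposition~\ref{prop:clbetapre}(\ref{prop:clbetapre:1}), Lemma~\ref{lem:substitution}(\ref{lem:substitution:3}), and the variable-appearance facts from Proposition~\ref{prop:clbetapre}(\ref{prop:clbetapre:2}). The only cosmetic difference is that you apply the simultaneous substitution $\sigma$ in one pass on the left-hand side where the paper substitutes $[v:=N]$ and then $[\vec{x}:=\vec{M}]$ sequentially; both then rely on the same observation that condition~(\ref{defn:betareduction:3cl}) makes sequential and simultaneous substitution agree.
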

\begin{proof}
It suffices to show that both sides $\transreduces{w}{\param}$-reduce to $L[\vec{x}:=\vec{M}, v:=N]$. Note that by condition~(\ref{defn:betareduction:2cl}) the simultaneous substitution in $L[\vec{x}:=\vec{M}, v:=N]$ is well-defined.

First we work on the left-hand side. By iterated applications of Proposition~\ref{prop:clbetapre}(\ref{prop:clbetapre:1}), we have $(\cbd{\vec{x}}{\vec{B}}{\cbd{v}{A}{L}})\vec{x}v \transreduces{w}{\param} L$. By Lemma~\ref{lem:substitution}(\ref{lem:substitution:3}) we have $\big(\,(\cbd{\vec{x}}{\vec{B}}{\cbd{v}{A}{L}})\vec{x}v \,\big)[v:=N] \transreduces{w}{\param} L[v:=N]$. By Proposition~\ref{prop:clbetapre}(\ref{prop:clbetapre:2}), $\typd{v}{A}$ does not appear in $\cbd{\vec{x}}{\vec{B}}{\cbd{v}{A}{L}}$; and by (\ref{defn:betareduction:2cl}) each variable in $\typd{\vec{x}}{\vec{B}}$ is distinct from $\typd{v}{A}$. Hence the left-hand side of the previous~$\transreduces{w}{\param}$ can be simplified to $(\cbd{\vec{x}}{\vec{B}}{\cbd{v}{A}{L}})\vec{x}N \transreduces{w}{\param} L[v:=N]$. By Lemma~\ref{lem:substitution}(\ref{lem:substitution:3}) again, we have $\big(\,(\cbd{\vec{x}}{\vec{B}}{\cbd{v}{A}{L}})\vec{x}N\,\big)[\vec{x}:=\vec{M}] \transreduces{w}{\param} \big(L[v:=N]\big)[\vec{x}:=\vec{M}]$. By Proposition~\ref{prop:clbetapre}(\ref{prop:clbetapre:2}), the variables $\typd{\vec{x}}{\vec{B}}$ do not appear in $\cbd{\vec{x}}{\vec{B}}{\cbd{v}{A}{L}}$; and by (\ref{defn:betareduction:3cl}) the variables $\typd{\vec{x}}{\vec{B}}$ do not appear in $N$. Hence we can simplify on both the left and right sides of the previous~$\transreduces{w}{\param}$ as follows: $\big(\cbd{\vec{x}}{\vec{B}}{\cbd{v}{A}{L}}\big)\vec{M}N \transreduces{w}{\param} L[\vec{x}:=\vec{M}, v:=N]$.

Second we work on the right-hand side. By iterated applications of  Proposition~\ref{prop:clbetapre}(\ref{prop:clbetapre:1}), we have $ \big( \cbd{\vec{x}}{\vec{B}}{L[v:=N]} \big) \vec{x} \transreduces{w}{\param} L[v:=N]$.  By Lemma~\ref{lem:substitution}(\ref{lem:substitution:3}) $\big(\, \big( \cbd{\vec{x}}{\vec{B}}{L[v:=N]} \big) \vec{x} \,\big)[\vec{x}:=\vec{M}] \transreduces{w}{\param} \big( L[v:=N]\big)[\vec{x}:=\vec{M}]$. By Proposition~\ref{prop:clbetapre}(\ref{prop:clbetapre:2}), the variables in  $\typd{\vec{x}}{\vec{B}}$ do not appear in $\cbd{\vec{x}}{B}{L[v:=N]}$; by (\ref{defn:betareduction:3cl}) we have that the variables $\typd{\vec{x}}{\vec{B}}$ do not appear in $N$. Hence we can simplify on both the left and right of the previous~$\transreduces{w}{\param}$ to obtain: $\big( \cbd{\vec{x}}{B}{L[v:=N]} \big) \vec{M} \transreduces{w}{\param}  L[\vec{x}:=\vec{M}, v:=N]$.
\end{proof}

The following is an elementary observation but one which is important for later understanding the interaction of $\alpha$-conversion and combinatory logic.
\begin{prop}[Action of type-preserving permutations of variables on terms of combinatory logic]\label{prop:clperm}
Given a type-preserving permutation $\pi$ of the variables of $\cltup$, we extend it to a type-preserving permutation of all terms of $\cltup$ by letting it be the identity on the constants and combinatory terms, and by setting $(MN)^{\pi}$ to be $M^{\pi} N^{\pi}$. Then
\begin{enumerate}[leftmargin=*]
    \item\label{prop:clperm:1} If $\pi$ is the identity on all variables appearing in $M$, then $M^{\pi}$ is $M$.
    \item\label{prop:clperm:2} If $\pi(\typd{v}{A})$ is $\typd{u}{A}$, then $(\cbd{v}{A}{M})^{\pi}$ is $\cbd{u}{A}{M^{\pi}}$.
\end{enumerate}
\end{prop}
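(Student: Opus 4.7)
The plan is to prove both parts by straightforward structural inductions on the complexity of $M$, with the content of each lying in bookkeeping about case splits rather than any difficult calculation.

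For part~(\ref{prop:clperm:1}), the base cases are immediate: constants and combinator terms are fixed by $\pi$ by the very definition of its extension in the statement; and if $M$ is a variable $\typd{v}{A}$, then by hypothesis $\pi$ fixes $\typd{v}{A}$, so $v^{\pi}$ is $v$. For the inductive step at an application $M\equiv M_0 M_1$, I restrict the hypothesis to the variables appearing in $M_0$ and in $M_1$ respectively, apply the inductive hypothesis to each, and conclude by $(M_0 M_1)^{\pi} \equiv M_0^{\pi} M_1^{\pi} \equiv M_0 M_1$.

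For part~(\ref{prop:clperm:2}), the crucial auxiliary lemma, itself proved by a trivial induction on $M$, is that a variable $\typd{w}{B}$ appears in $M^{\pi}$ iff $\pi^{-1}(\typd{w}{B})$ appears in $M$. In particular, since $\pi(\typd{v}{A})$ is $\typd{u}{A}$, the variable $\typd{v}{A}$ appears in $M$ iff $\typd{u}{A}$ appears in $M^{\pi}$. This ensures that the top-level case split between Definition~\ref{defn:cbd}(\ref{defn:cbd:1}) and Definition~\ref{defn:cbd}(\ref{defn:cbd:2})-(\ref{defn:cbd:3}) matches on the two sides of the claimed identity. When $\typd{v}{A}$ does not appear in $M$, both sides evaluate to the Kestral of the appropriate term, which match since $\pi$ is the identity on combinators and on the subscripts; when $M$ is $\typd{v}{A}$, both sides evaluate to $\ctI{A}{}$.

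For the application case $M \equiv M_0 M_1$ with $\typd{v}{A}$ appearing, the sub-case analysis in Definition~\ref{defn:cbd}(\ref{defn:cbd:3}) branches on the type of $M_1$ and, in the state-type branch, on whether $M_1$ is $\typd{v}{A}$, some other variable, or a constant. Because $\pi$ is a type-preserving permutation that fixes constants, the type of $M_1^{\pi}$ equals that of $M_1$, a constant subterm $\typd{c}{C}$ is preserved verbatim, and by injectivity of $\pi$ we have $M_1 \equiv \typd{v}{A}$ iff $M_1^{\pi} \equiv \typd{u}{A}$, and $M_1$ is a variable distinct from $\typd{v}{A}$ iff $M_1^{\pi}$ is a variable distinct from $\typd{u}{A}$. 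So the sub-case for $\cbd{v}{A}{M_0 M_1}$ coincides exactly with the sub-case for $\cbd{u}{A}{M_0^{\pi} M_1^{\pi}}$, and applying the inductive hypothesis to $M_0$ (and, in the regular-type branch, to $M_1$) transforms one expression into the other; the Warbler, Cardinal, Dardinal, and Starling headings carry over since $\pi$ fixes them.

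The only mildly delicate point, and the place where one must be careful, is the Dardinal sub-case: its combinator term $\ctD{c}{A}{C}{B}{}{}{}$ carries the constant $\typd{c}{C}$ as a decoration, and the identity of the two sides requires that the same $\typd{c}{C}$ appears on both. This is automatic since $\pi$ is the identity on constants and so the Dardinal selected for $M_0^{\pi} M_1^{\pi}$, where $M_1^{\pi} \equiv M_1 \equiv c$, uses the same $c$. Nothing in the argument is computationally difficult; the main obstacle is just tracking the case breaks in Definition~\ref{defn:cbd} faithfully.
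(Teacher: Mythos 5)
Your proof is correct and follows exactly the route the paper indicates: part~(\ref{prop:clperm:1}) from the construction of the extension of $\pi$, and part~(\ref{prop:clperm:2}) by induction on the complexity of $M$ tracking the case breaks of Definition~\ref{defn:cbd} (the paper omits the details as routine). Your auxiliary observation that $\typd{w}{B}$ appears in $M^{\pi}$ iff $\pi^{-1}(\typd{w}{B})$ appears in $M$, together with injectivity and type-preservation of $\pi$, is precisely what is needed to see that the case splits align on the two sides, and your handling of the Dardinal decoration is the right point to flag.
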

\begin{proof}
For (\ref{prop:clperm:1}) this follows from construction, since $M^{\pi}$ is already the identity on the constants and combinatory terms.

For (\ref{prop:clperm:2}) this is by induction on complexity of $\typd{M}{B}$, using Definition~\ref{defn:cbd}. We omit the proof since it is routine.
\end{proof}

\begin{proofdetail}
\begin{proof}
But if one wanted to see the detail, we record it here. Again, the proof of (\ref{prop:clperm:2}) is by induction on complexity of $\typd{M}{B}$, using Definition~\ref{defn:cbd}.

Suppose $\typd{v}{A}$ does not appear in $\typd{M}{B}$. Then $\cbd{v}{A}{M}$ is $\ctK{B}{A}{M}{}$ and so $(\cbd{v}{A}{M})^{\pi}$ is $\ctK{B}{A}{M^{\pi}}{}$. Further since $\pi$ is a permutation, $\typd{u}{A}$ does not appear in $\typd{M^{\pi}}{B}$, and so  $\cbd{u}{A}{M^{\pi}}$ is also $\ctK{B}{A}{M^{\pi}}{}$.

Suppose $\typd{v}{A}$ does appear in $\typd{M}{B}$ and $\typd{M}{B}$ is $\typd{v}{A}$. Then $\cbd{v}{A}{M}$ is $\ctI{A}{}$. Since $\ctI{A}{}$ is a closed term of $\cltup$ (by Proposition~\ref{prop:clrecoveryidentity}), by (\ref{prop:clperm:1}) one has that $(\cbd{v}{A}{M})^{\pi}$ is also $\ctI{A}{}$. Further $\typd{u}{A}$ does appear in $\typd{M^{\pi}}{B}$ and $\typd{M^{\pi}}{B}$ is $\typd{u}{A}$, and hence $\cbd{u}{A}{M^{\pi}}$ is also~$\ctI{A}{}$.

Suppose $\typd{v}{A}$ does appear in $\typd{M}{B}$ and $\typd{M}{B}$ is $\typd{M_0 M_1}{B}$ where $\typd{M_0}{C\rightarrow B}$ and $\typd{M_1}{C}$. Then $\typd{u}{A}$ does appear in $\typd{M^{\pi}}{B}$ and $\typd{M^{\pi}}{B}$ is $\typd{M_0^{\pi} M_1^{\pi}}{B}$. There are two subcases.

Suppose that $C$ is a state type. Further first suppose that $\typd{M_1}{C}$ is $\typd{v}{A}$. Then $\cbd{v}{A}{M_0 M_1}$ is $\ctW{A}{B}{\big(\cbd{v}{A}{M_0}\big)}$, and so  $(\cbd{v}{A}{M_0 M_1})^{\pi}$ is $\ctW{A}{B}{\big(\cbd{v}{A}{M_0}\big)^{\pi}}$. Further, $\typd{M_1^{\pi}}{C}$ is $\typd{u}{A}$. Then $\cbd{u}{A}{M_0^{\pi} M_1^{\pi}}$ is $\ctW{A}{B}{\big(\cbd{u}{A}{M_0^{\pi}}\big)}$. Then we are done by induction hypothesis on $M_0$. The other subcases, depending on the other possibilities for $\typd{M_1}{C}$, is similar and so we omit it.

The case when $C$ is a regular type is similar and so we omit it.
\end{proof}
\end{proofdetail}

\subsection{From lambda calculus to combinatory logic}\label{sec:fromltocl}

We then define a translation from terms of $\mltup$ to terms of $\cltup$ in the natural way (cf. \cite[Definition 9.10 p. 95]{Hindley2008-vw}).

\begin{defi}[Translation from lambda calculus to combinatory logic]\label{defn:cltranslation}
If $\typd{M}{B}$ is a term of $\mltup$, then we define a term $\typd{M^{cl}}{B}$ of $\cltup$ inductively as follows:
\begin{enumerate}[leftmargin=*]
    \item\label{defn:cltranslation:1} If $\typd{M}{B}$ is a variable or constant, then $\typd{M^{cl}}{B}$ is $\typd{M}{B}$.
    \item\label{defn:cltranslation:2} If $\typd{M}{B}$ is $\typd{M_0M_1}{B}$ where $\typd{M_0}{C\rightarrow B}$ and $\typd{M_1}{C}$, then  $\typd{M^{cl}}{B}$ is $\typd{M_0^{cl} M_1^{cl}}{B}$.
    \item\label{defn:cltranslation:3} If $\typd{M}{B}$ is $\lbd{v}{A}{L}$ where $\typd{L}{C}$, then   $\typd{M^{cl}}{B}$ is $\cbd{v}{A}{L^{cl}}$
\end{enumerate}
\end{defi}

The following is obvious from the construction and Proposition~\ref{prop:clbetapre}(\ref{prop:clbetapre:2}) but we include it for ease of future reference:
\begin{prop}[Free variables and appearance of variables under the translation]\label{prop:variablebackandforth}
If $\typd{M}{B}$ is a term of $\mltup$ and $\typd{v}{A}$ is a variable of $\mltup$, then $\typd{v}{A}$ appears free in $\typd{M}{B}$ iff $\typd{v}{A}$ appears in $\typd{M^{cl}}{B}$.
\end{prop}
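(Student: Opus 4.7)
The plan is to argue by straightforward induction on the complexity of the term $\typd{M}{B}$, following the three clauses of Definition~\ref{defn:cltranslation}, with the only nontrivial work happening at the lambda-abstraction step where we invoke Proposition~\ref{prop:clbetapre}(\ref{prop:clbetapre:2}).

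For the base cases, suppose $\typd{M}{B}$ is a variable or constant. Then by Definition~\ref{defn:cltranslation}(\ref{defn:cltranslation:1}), $\typd{M^{cl}}{B}$ is just $\typd{M}{B}$ itself. The free variables of a variable $\typd{u}{B}$ are exactly $\{\typd{u}{B}\}$, and the variables appearing in $\typd{u}{B}$ (as a term of $\cltup$) are also $\{\typd{u}{B}\}$; constants have no free variables in $\mltup$ and no appearing variables in $\cltup$. So the biconditional holds at the base.

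For the inductive step on application, suppose $\typd{M}{B}$ is $\typd{M_0 M_1}{B}$. By Definition~\ref{defn:cltranslation}(\ref{defn:cltranslation:2}), $\typd{M^{cl}}{B}$ is $\typd{M_0^{cl} M_1^{cl}}{B}$. The free variables of $M_0 M_1$ in $\mltup$ are the union of those of $M_0$ and $M_1$, and likewise for appearance in $\cltup$ for the application $M_0^{cl} M_1^{cl}$. So the claim for $M$ follows directly from the induction hypotheses applied to $M_0$ and $M_1$.

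For the inductive step on lambda abstraction, suppose $\typd{M}{B}$ is $\lbd{v}{A}{L}$ with $\typd{L}{C}$, so by Definition~\ref{defn:cltranslation}(\ref{defn:cltranslation:3}) we have $\typd{M^{cl}}{B}$ equal to $\cbd{v}{A}{L^{cl}}$. The free variables of $\lbd{v}{A}{L}$ in $\mltup$ are by definition the free variables of $L$ minus $\typd{v}{A}$. By induction hypothesis applied to $\typd{L}{C}$, the free variables of $L$ coincide with the variables appearing in $L^{cl}$. By Proposition~\ref{prop:clbetapre}(\ref{prop:clbetapre:2}), the variables appearing in $\cbd{v}{A}{L^{cl}}$ are precisely those appearing in $L^{cl}$ minus $\typd{v}{A}$. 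Combining these three facts yields that the free variables of $\lbd{v}{A}{L}$ agree with the variables appearing in $\cbd{v}{A}{L^{cl}}$, which completes the induction.

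There is no real obstacle here: the entire content of the lambda-abstraction step is carried by Proposition~\ref{prop:clbetapre}(\ref{prop:clbetapre:2}), which has already been proved by induction through the many case-breaks of Definition~\ref{defn:cbd}. The present proposition merely packages that prior bookkeeping into a single clean statement about the translation $M\mapsto M^{cl}$.
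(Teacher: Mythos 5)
Your proof is correct and is exactly the routine induction the paper has in mind: the paper states this proposition without proof, flagging it as ``perhaps obvious from construction,'' and your argument supplies the omitted details, correctly locating all the real content in Proposition~\ref{prop:clbetapre}(\ref{prop:clbetapre:2}) at the abstraction step. The only point worth noting explicitly is that the hypothesis of Proposition~\ref{prop:clbetapre} (that the body's type be regular) is automatically satisfied, since lambda abstracts in $\mltup$ require bodies of regular type by Definition~\ref{defn:term}(\ref{defn:term:4}).
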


\begin{proofdetail}

\begin{proof}

This is by induction on complexity of $\typd{M}{B}$. The base case for variables and constants is trivial since in this case $\typd{M^{cl}}{B}$ is $\typd{M}{B}$. 

The inductive step for application holds because the free variables of $M_0M_1$ are the union of those of $M_0$ and $M_1$; while the variables appearing in $M_0^{cl} M_1^{cl}$ are the union of the those appearing in $M_0^{cl}$ and $M_1^{cl}$. 

For the induction step for lambda abstraction, first suppose that $\typd{v}{A}$ is free in $\lbd{u}{B}{L}$. Then $\typd{v}{A}$ is distinct from $\typd{u}{B}$, and $\typd{v}{A}$ appears free in $L$. Then by induction hypothesis for $L$, we have that $\typd{v}{A}$ appears in $L^{cl}$. Then by Proposition~\ref{prop:clbetapre}(\ref{prop:clbetapre:2}) and the distinctness of $\typd{v}{A}$ and $\typd{u}{B}$, we have that $\typd{v}{A}$ appears in $\cbd{u}{A}{L^{cl}}$. 

Conversely, suppose that $\typd{v}{A}$ appears in  $\cbd{u}{A}{L^{cl}}$. Then by Proposition~\ref{prop:clbetapre}(\ref{prop:clbetapre:2}), we have that $\typd{v}{A}$ is distinct from $\typd{u}{B}$ and that $\typd{v}{A}$ appears in $L^{cl}$. Then by induction hypothesis for $L$ we have that $\typd{v}{A}$ appears free in $L$. By the distinctness of $\typd{v}{A}$ and $\typd{u}{B}$, we have that $\typd{v}{A}$ appears free in $\lbd{u}{B}{L}$.
\end{proof}

\end{proofdetail}

In this next proposition, we work with type-preserving permutations of the variables of $\mltup$, which recall are extended to type-preserving permutations of the terms of $\mltup$, as part of our official definition of $\alpha$-conversion (cf. Definition~\ref{defn:alphaeq}).
\begin{prop}[Commutativity of type-preserving permutations and translation]\label{prop:commpermtran}
For all terms $\typd{M}{A}$ of $\mltup$, for all type-preserving permutations $\pi$ of the variables of $\mltup$, one has that $(M^{\pi})^{cl}$ is $(M^{cl})^{\pi}$.
\end{prop}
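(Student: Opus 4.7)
The plan is to prove this by induction on the complexity of $\typd{M}{A}$, following the three clauses of Definition~\ref{defn:cltranslation}. The key observation is that the two permutation actions (one on $\mltup$ terms from Definition~\ref{defn:alphaeq}, one on $\cltup$ terms from Proposition~\ref{prop:clperm}) are both defined to fix constants, fix combinators, commute with application, and permute variables in the same way; and the translation $(-)^{cl}$ is also defined clause-by-clause in parallel with these. So the proof is essentially a matter of lining up the definitions in the correct order.

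For the base case, $\typd{M}{A}$ is either a constant or a variable. If $M$ is a constant $c$, then $c^\pi \equiv c$ (by the clause of Definition~\ref{defn:alphaeq} which fixes constants), and $c^{cl} \equiv c$ (by Definition~\ref{defn:cltranslation}(\ref{defn:cltranslation:1})), and $c^\pi \equiv c$ in $\cltup$ (by the corresponding clause of Proposition~\ref{prop:clperm}); so both $(M^\pi)^{cl}$ and $(M^{cl})^\pi$ reduce to $c$. If $M$ is a variable $\typd{v}{A}$ with $\pi(\typd{v}{A}) = \typd{u}{A}$, then $(v^\pi)^{cl} \equiv u^{cl} \equiv u$, while $(v^{cl})^\pi \equiv v^\pi \equiv u$.

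For the application case, if $\typd{M}{B}$ is $\typd{M_0 M_1}{B}$, I would chain together: $((M_0 M_1)^\pi)^{cl} \equiv (M_0^\pi M_1^\pi)^{cl} \equiv (M_0^\pi)^{cl} (M_1^\pi)^{cl}$ using Definitions~\ref{defn:alphaeq} and~\ref{defn:cltranslation}(\ref{defn:cltranslation:2}), then apply the induction hypothesis twice to get $(M_0^{cl})^\pi (M_1^{cl})^\pi$, which is $(M_0^{cl} M_1^{cl})^\pi \equiv ((M_0 M_1)^{cl})^\pi$ by Proposition~\ref{prop:clperm}'s clause on applications and Definition~\ref{defn:cltranslation}(\ref{defn:cltranslation:2}).

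The lambda abstraction case is where the real work happens, and it is precisely the case that makes Proposition~\ref{prop:clperm}(\ref{prop:clperm:2}) indispensable. Suppose $\typd{M}{B}$ is $\lbd{v}{A}{L}$ with $\pi(\typd{v}{A}) = \typd{u}{A}$. Then Definition~\ref{defn:alphaeq} gives $(\lbd{v}{A}{L})^\pi \equiv \lbd{u}{A}{L^\pi}$, whence by Definition~\ref{defn:cltranslation}(\ref{defn:cltranslation:3}) one has $((\lbd{v}{A}{L})^\pi)^{cl} \equiv \cbd{u}{A}{(L^\pi)^{cl}}$, which by induction hypothesis equals $\cbd{u}{A}{(L^{cl})^\pi}$. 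On the other side, Definition~\ref{defn:cltranslation}(\ref{defn:cltranslation:3}) gives $((\lbd{v}{A}{L})^{cl})^\pi \equiv (\cbd{v}{A}{L^{cl}})^\pi$, and then Proposition~\ref{prop:clperm}(\ref{prop:clperm:2}), applied with $\pi(\typd{v}{A}) = \typd{u}{A}$, rewrites this as $\cbd{u}{A}{(L^{cl})^\pi}$. The two sides agree.

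The main obstacle I anticipate is really just bookkeeping: being careful that the two notions of permutation action (one in $\mltup$, one in $\cltup$), though defined in distinct places, are being applied consistently. No deep issue arises because permutations act on variables only, and variables are shared by both calculi; the combinatory terms and constants are fixed in both settings. Proposition~\ref{prop:clperm}(\ref{prop:clperm:2}) is the pivotal input precisely because it tells us that combinatory abstraction commutes with permutation in exactly the way lambda abstraction does by construction.
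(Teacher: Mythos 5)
Your proof is correct and follows essentially the same route as the paper's: induction on term complexity, with the base and application cases immediate from the parallel definitions, and the lambda-abstraction case closed by Proposition~\ref{prop:clperm}(\ref{prop:clperm:2}) via exactly the chain $((\lbd{v}{A}{L})^{\pi})^{cl}\equiv\cbd{u}{A}{(L^{\pi})^{cl}}\equiv\cbd{u}{A}{(L^{cl})^{\pi}}\equiv(\cbd{v}{A}{L^{cl}})^{\pi}$. Your write-up is just a more explicit version of the paper's argument.
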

\begin{proof}
If $M$ is a variable or constant, then both of these are equal to $M^{\pi}$.

The induction step for application is trivial.

Suppose that $\pi(\typd{v}{A})$ is $\typd{u}{A}$. Then $((\lbd{v}{A}{M})^{\pi})^{cl}$ is by definition $\cbd{u}{A}{(M^{\pi})^{cl}}$, which by induction hypothesis is $\cbd{u}{A}{(M^{cl})^{\pi}}$, which by Proposition~\ref{prop:clperm}(\ref{prop:clperm:2}) is  $(\cbd{v}{A}{(M^{cl})})^{\pi}$, which by definition is $((\lbd{v}{A}{M})^{cl})^{\pi}$.
\end{proof}

Since, formally, terms of $\mltup$ are $\alpha$-equivalence classes of terms, and since the translation in Definition~\ref{defn:cltranslation} is defined on members of these equivalence classes, we need to check that the translation respects the equivalence. This is so in a very strong form: ``the analogue in $CL$ of the $\lambda$-calculus relation of $=_{\alpha}$ is simply identity'' (\cite[p. 29]{Hindley2008-vw}):
\begin{prop}[Under the translation, $\alpha$-equivalence is identity]\label{prop:alphaisidentity}\hfill
\begin{enumerate}[leftmargin=*]
    \item\label{prop:alphaisidentity:1} Suppose $\typd{M}{B}$ is a term of $\mltup$. Suppose $\pi$ is a type-preserving permutation of the variables of $\mltup$ which is the identity on the free variables of $\typd{M}{B}$. Then $M^{cl}$ is $(M^{\pi})^{cl}$.
    \item\label{prop:alphaisidentity:2} Suppose $\typd{M,N}{B}$ are terms of $\mltup$ which are $\alpha$-equivalent. Then $\typd{M^{cl},N^{cl}}{B}$ are terms of $\cltup$ which are identical.
\end{enumerate}
\end{prop}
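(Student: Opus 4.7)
The plan is to derive part (\ref{prop:alphaisidentity:1}) by chaining together the three already-established facts about how the translation and permutations interact, and then to derive part (\ref{prop:alphaisidentity:2}) as an immediate corollary using the definition of $\alpha$-equivalence.

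For part (\ref{prop:alphaisidentity:1}), suppose $\typd{M}{B}$ is a term of $\mltup$ and $\pi$ is a type-preserving permutation of the variables of $\mltup$ which is the identity on the free variables of $\typd{M}{B}$. First I would apply Proposition~\ref{prop:commpermtran} to rewrite $(M^{\pi})^{cl}$ as $(M^{cl})^{\pi}$, so it suffices to show that $(M^{cl})^{\pi}$ is $M^{cl}$. Next I would invoke Proposition~\ref{prop:variablebackandforth}, which tells us that the variables appearing in the combinatory term $M^{cl}$ are exactly the free variables of $M$. Since by hypothesis $\pi$ is the identity on those free variables, $\pi$ is the identity on every variable appearing in $M^{cl}$. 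Finally, by Proposition~\ref{prop:clperm}(\ref{prop:clperm:1}), applied to $M^{cl}$, we conclude that $(M^{cl})^{\pi}$ is $M^{cl}$, as desired.

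For part (\ref{prop:alphaisidentity:2}), suppose $\typd{M,N}{B}$ are $\alpha$-equivalent terms of $\mltup$. By Definition~\ref{defn:alphaeq}, there is a type-preserving permutation $\pi$ of the variables of $\mltup$ such that $N$ is $M^{\pi}$ and $\pi$ is the identity on the free variables of $M$. Then part (\ref{prop:alphaisidentity:1}) gives $M^{cl} \equiv (M^{\pi})^{cl} \equiv N^{cl}$, which is what we wanted.

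Since each of the three ingredients is already in place and each is applied in a direct and bookkeeping-style way, there is no real obstacle here; the only small point to be attentive to is distinguishing the two senses of ``variables appearing in''—free in the $\mltup$-term versus simply appearing in the $\cltup$-term—which is precisely what Proposition~\ref{prop:variablebackandforth} bridges.
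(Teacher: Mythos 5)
Your proof is correct and uses exactly the same three ingredients as the paper (Propositions~\ref{prop:commpermtran}, \ref{prop:variablebackandforth}, and \ref{prop:clperm}), with part~(\ref{prop:alphaisidentity:2}) handled identically. The only difference is organizational: the paper proves part~(\ref{prop:alphaisidentity:1}) by induction on the complexity of $M$, but since Proposition~\ref{prop:commpermtran} is already stated for arbitrary terms, that induction is essentially redundant, and your direct chaining $(M^{\pi})^{cl}\equiv(M^{cl})^{\pi}\equiv M^{cl}$ is a cleaner rendering of the same argument.
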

\begin{proof}
For (\ref{prop:alphaisidentity:1}), this is by an induction on complexity of $\typd{M}{B}$, with a universal quantifier over type-preserving permutations in the statement of the induction hypothesis.

If $\typd{M}{B}$ is a variable, then $M^{\pi}$ is $M$ since by hypothesis $\pi$ is the identity on the free variables of $M$; and then both $M^{cl}, (M^{\pi})^{cl}$ are $M$ as well.

If $\typd{M}{B}$ is a constant, then both $M^{cl}, (M^{\pi})^{cl}$ are $M$.

The induction step for application is trivial. 

For the induction step for lambda abstraction $\lbd{v}{A}{M}$, suppose that $\pi(\typd{v}{A})$ is $\typd{u}{A}$. Then by definition, $((\lbd{v}{A}{M})^{\pi})^{cl}$ is $\cbd{u}{A}{(M^{\pi})^{cl}}$. By Proposition~\ref{prop:commpermtran} the latter is $\cbd{u}{A}{(M^{cl})^{\pi}}$. By Proposition~\ref{prop:clperm}(\ref{prop:clperm:2}), this is identical to $(\cbd{v}{A}{M^{cl}})^{\pi}$. By definition this is identical to $((\lbd{v}{A}{M})^{cl})^{\pi}$. Since $\pi$ is the identity on the free variables of $\lbd{v}{A}{M}$, by Proposition~\ref{prop:variablebackandforth} we have that $\pi$ is the identity on all the variables appearing in $(\lbd{v}{A}{M})^{cl}$. Then by Proposition~\ref{prop:clperm}(\ref{prop:clperm:1}), we have that $((\lbd{v}{A}{M})^{cl})^{\pi}$ is identical to $(\lbd{v}{A}{M})^{cl}$.

For (\ref{prop:alphaisidentity:2}), this follows from (\ref{prop:alphaisidentity:1}) and the definition of  $\alpha$-equivalence, in Definition~\ref{defn:alphaeq}, as the smallest equivalence relation containing the compatible closure of the relation defined in terms of permutations. The base case of the induction is handled by (\ref{prop:alphaisidentity:1}), and the inductive steps are trivial and so we omit them.
\begin{proofdetail}
    
Here are the details of (\ref{prop:alphaisidentity:2}).

For (\ref{prop:alphaisidentity:2}), recall that $\alpha$-equivalence $\equals{\alpha_A}{\param}$ is defined, in Definition~\ref{defn:alphaeq}, as the smallest equivalence relation on terms of $\mltup$ of type $A$ containing the compatible closure $\reduces{\alpha_A}{\param}$ of the binary relation $\alpha_A$ given by: the ordered pair $(M,N)$ of terms of $\mltup$ of type $A$ stands in the $\alpha_A$-relation iff $N$ is $M^{\pi}$ for some type-preserving permutation $\pi$ of the variables of $\mltup$ which is the identity on the free variables of $M$. 

First, we show by induction that if $\typd{M,N}{A}$ are terms of $\mltup$ in the compatible closure $M\reduces{\alpha_A}{\param} N$, then $M^{cl}$ and $N^{cl}$ are identical:
\begin{itemize}[leftmargin=*]
    \item For the base case when $(M,N)$ are terms of $\mltup$ of type $A$ which stand in the $\alpha_A$ relation, we are done by (\ref{prop:alphaisidentity:1}). 
    \item Suppose that the result holds for $\typd{M_0}{B\rightarrow C}$ and that $M_0\reduces{\alpha_{B\rightarrow C}}{\param} N_0$. Then by induction hypothesis, we have $M_0^{cl}, N_0^{cl}$ are terms of $\cltup$ which are identical. Then for any term $\typd{P}{B}$ of $\mltup$ we have that $(M_0 P)^{cl}$ and $(N_0 P)^{cl}$ are identical since they are respectively equal to $M_0^{cl} P^{cl}$ and $N_0^{cl}, P^{cl}$.
    \item Suppose that the result holds for $\typd{M_1}{B}$ and that $M_1\reduces{\alpha_{B}}{\param} N_1$. Then by induction hypothesis, we have $M_1^{cl}, N_1^{cl}$ are terms of $\cltup$ which are identical. Then for any term $\typd{P}{B\rightarrow C}$ of $\mltup$ we have that $(P M_0)^{cl}$ and $(P N_0)^{cl}$ are identical since they are respectively equal to $P^{cl} M_0^{cl}$ and $P^{cl} N_0^{cl}$.
    \item Suppose that the result holds for $\typd{M}{C}$ and that  $M\reduces{\alpha_{C}}{\param} N$. Then by induction hypothesis, we have $M^{cl}$ and $N^{cl}$ are terms of $\cltup$ which are identical. Since these are identical, we have that  $\cbd{v}{B}{M^{cl}}$ and $\cbd{v}{B}{N^{cl}}$ are identical. Then we are done since by definition we have $(\lbd{v}{B}{M})^{cl}$ is $\cbd{v}{B}{M^{cl}}$, while $(\lbd{v}{B}{N})^{cl}$ is $\cbd{v}{B}{N^{cl}}$.
\end{itemize}

Second, we show by induction that if $\typd{M,N}{A}$ are terms of $\mltup$ in the compatible closure $M\equals{\alpha_A}{\param} N$, then $M^{cl}$ and $N^{cl}$ are identical:
\begin{itemize}[leftmargin=*]
    \item One base case is when $M\reduces{\alpha_A}{\param} N$. But then we are done by the previous argument.
    \item A second base case, for reflexivity, is when $M$ and $N$ are already equal. But then trivially $M^{cl}$ and $N^{cl}$ are equal.
    \item One induction step, for symmetry, supposes that $M\equals{\alpha_A}{\param} N$ because $N\equals{\alpha_A}{\param} M$. By induction hypothesis,  $N^{cl}$ and $M^{cl}$ are identical, and so trivially  $M^{cl}$ and $N^{cl}$ are identical.
    \item A second induction step, for transitivity, supposes that $M\equals{\alpha_A}{\param} P$ because $M\equals{\alpha_A}{\param} N$ and $N\equals{\alpha_A}{\param} P$. Then by induction hypothesis one has that   $M^{cl}$ and $N^{cl}$ are identical, and that $N^{cl}$ and $P^{cl}$ are identical. Then $M^{cl}$ and $P^{cl}$ are identical.
\end{itemize}
\end{proofdetail}
\end{proof}

The following is another technical proposition (cf. \cite[Lemma 9.13(d) p. 97]{Hindley2008-vw}). We need it to establish our correspondence between $\mltup$ and $\cltup$ later in Theorem~\ref{thm:frommltuptoclt}:
\begin{prop}[Substitution in lambda calculus mirrored by substitution in combinatory logic, in a signature with no constants of state type]\label{prop:missinglink}
Suppose that the signature has no constants of state type.

Suppose that $\typd{L}{C}$ is a term of $\mltup$. Suppose that $\typd{v}{A}$ is a variable of $\mltup$ and suppose that $\typd{N}{A}$ is a term of $\mltup$. 

If $\typd{N}{A}$ is free for $\typd{v}{A}$ in $\typd{L}{C}$, then $(L[v:=N])^{cl} \equiv L^{cl}[v:=N^{cl}]$.
\end{prop}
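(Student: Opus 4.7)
The plan is to proceed by structural induction on $\typd{L}{C}$. The variable and constant cases are immediate: one checks by inspection in each subcase (whether $L \equiv v$, $L$ is a different variable, or $L$ is a constant) that both sides of the desired equation are syntactically identical. The application case $L \equiv L_0 L_1$ follows by applying the inductive hypothesis to each $L_i$ (both subterms inherit the ``free for'' hypothesis from $L$) and then using that $\equals{w}{\upsilon}$ is closed under application on both sides.

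The substantive case is lambda-abstraction $L \equiv \lbd{u}{D}{M}$. I would split into two subcases. If $u \equiv v$, then $L[v:=N] \equiv L$ and $(L[v:=N])^{cl} \equiv L^{cl} \equiv \cbd{v}{A}{M^{cl}}$; moreover, by Proposition~\ref{prop:clbetapre}(\ref{prop:clbetapre:2}) the variable $v$ does not appear in $\cbd{v}{A}{M^{cl}}$, so $L^{cl}[v:=N^{cl}]$ is syntactically identical to $\cbd{v}{A}{M^{cl}}$ as well. If $u \not\equiv v$, then the hypothesis that $N$ is free for $v$ in $L$ implies $u \notin FV(N)$, and so by Proposition~\ref{prop:variablebackandforth} the variable $u$ does not appear in $N^{cl}$. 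These are exactly the conditions for invoking Proposition~\ref{prop:verydiff} (with combinatory-abstract bound variable $u$, substitution variable $v$, and substituted term $N^{cl}$), yielding $(\cbd{u}{D}{M^{cl}})[v:=N^{cl}] \equals{w}{\upsilon} \cbd{u}{D}{M^{cl}[v:=N^{cl}]}$. The task thus reduces to showing $\cbd{u}{D}{(M[v:=N])^{cl}} \equals{w}{\upsilon} \cbd{u}{D}{M^{cl}[v:=N^{cl}]}$.

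The main obstacle will be precisely this last step, namely propagating the inductive hypothesis $(M[v:=N])^{cl} \equals{w}{\upsilon} M^{cl}[v:=N^{cl}]$ through the outer combinatory abstract $\cbd{u}{D}{\cdot}$, which is subtle because weak equality is not in general preserved under combinatory abstraction. My plan is to handle it by a nested structural induction on $M$ that mirrors the case analysis of Definition~\ref{defn:cbd}. For each form of $M$ (variable, constant, application further split by the type and form of the right-hand argument, or nested abstraction), both $\cbd{u}{D}{(M[v:=N])^{cl}}$ and $\cbd{u}{D}{M^{cl}[v:=N^{cl}]}$ decompose into the same top-level combinator (one of $\mathsf{I}$, $\mathsf{K}$, $\mathsf{S}$, $\mathsf{W}$, $\mathsf{C}$, or $\mathsf{D}_c$) applied to smaller combinatory abstracts of translated subterms; the two case analyses agree because Proposition~\ref{prop:variablebackandforth} reconciles the $u$-appearance tests once $u \notin FV(N)$ has been invoked, and because substitution commutes with translation on variables and constants. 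One then applies the nested inductive hypothesis to each proper sublambda-term of $M$ and concludes by compatibility of $\equals{w}{\upsilon}$ under application. The most delicate nested subcase is where $M$ is itself an abstraction $\lbd{u'}{D'}{M'}$: here one uses Proposition~\ref{prop:verydiff} a second time to push the outer substitution inside the inner combinatory abstract before the comparison, so that the situation reduces to the nested inductive hypothesis applied to $M'$.
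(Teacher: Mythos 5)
Your skeleton is the paper's proof: the main induction on $\typd{L}{C}$, the trivial base and application cases, the split on whether the bound variable is $\typd{v}{A}$, the use of Proposition~\ref{prop:variablebackandforth} to transfer $u\notin FV(N)$ to $N^{cl}$, and the appeal to Proposition~\ref{prop:verydiff} are all exactly as in the paper. You have also correctly isolated the one step that does not come for free, namely passing from the inductive hypothesis $(M[v:=N])^{cl}\equals{w}{\upsilon}M^{cl}[v:=N^{cl}]$ to $\cbd{u}{D}{(M[v:=N])^{cl}}\equals{w}{\upsilon}\cbd{u}{D}{(M^{cl}[v:=N^{cl}])}$. (For what it is worth, the paper's own proof dispatches this line with the words ``by induction hypothesis on $M$,'' i.e.\ it silently treats $\equals{w}{\upsilon}$ as a congruence for $\cbd{u}{D}{\cdot}$, which, as you observe, it is not.)

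The gap is in your repair. The nested induction does work for the variable, constant and application cases, where $(M[v:=N])^{cl}$ and $M^{cl}[v:=N^{cl}]$ genuinely fall under the same clause of Definition~\ref{defn:cbd} and share a head combinator. But when $M$ is itself an abstract $\lbd{w}{E}{M'}$, the term $M^{cl}[v:=N^{cl}]$ is $(\cbd{w}{E}{M'^{cl}})[v:=N^{cl}]$, which is \emph{not} of the form $\cbd{w}{E}{(\cdot)}$; reshaping it by a second use of Proposition~\ref{prop:verydiff} yields only a weak equality, and you are then again required to push a weak equality under $\cbd{u}{D}{\cdot}$ --- precisely the inference the nested induction was introduced to avoid (moreover your inner hypothesis concerns a single abstract over $u$, while this subcase produces a double abstract over $u$ and $w$). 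The obstruction is not merely presentational: the Cardinal-to-Dardinal conversions introduced by Proposition~\ref{prop:verydiff} are not stable under a further combinatory abstraction. Concretely (suppressing type subscripts on combinators), let $A$ be a state type, $\typd{c}{A}$ a constant, $\typd{g}{D\rightarrow E\rightarrow A\rightarrow F}$ a constant with $D,E,F$ regular, $L\equiv\lbd{u}{D}{\lbd{w}{E}{guwv}}$ and $N\equiv c$. Writing $R$ for $\cbd{w}{E}{guw}$ and $T$ for $\cbd{u}{D}{R}$, Definition~\ref{defn:cbd} gives $(L[v:=c])^{cl}\equiv\mathsf{S}(\mathsf{K}\mathsf{D}^{c})T$, whereas $L^{cl}[v:=c]\equiv\mathsf{C}(\mathsf{S}(\mathsf{K}\mathsf{C})T)c\reduces{w}{\upsilon}\mathsf{D}^{c}(\mathsf{S}(\mathsf{K}\mathsf{C})T)$. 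Since the defined Starling satisfies $\mathsf{S}XY\transreduces{w}{\upsilon}\mathsf{W}(\mathsf{C}(\mathsf{B}\mathsf{B}X)Y)$ and the latter is weakly normal when $Y$ is not a signature constant, the two sides have distinct weak normal forms (one headed by $\mathsf{W}$, the other by a Dardinal), hence are not weakly equal by Theorem~\ref{thm:crforclt}; they are only extensionally equal. So your diagnosis of where the difficulty lies is exactly right, but the nested induction cannot close it, and no induction can: the congruence step would have to be replaced by an argument at the level of an extensional equality (or deferred until after the translation back to $\mltup$, where $\eta$ is available).
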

\begin{proof}
This is by induction on complexity of term $\typd{L}{C}$ of $\mltup$. The base cases and induction step for application are trivial. The interesting induction step for lambda abstraction is when $\typd{L}{C}$ is $\lbd{u}{D}{M}$, where $\typd{u}{D}$ is distinct from $\typd{v}{A}$, and $\typd{M}{B}$ is a term of $\mltup$ with $B$ regular. Suppose that $\typd{N}{A}$ is free for $\typd{v}{A}$ in $\lbd{u}{D}{M}$. Then we have that $\typd{N}{A}$ is free for $\typd{v}{A}$ in $M$. The non-trivial case to consider is when $\typd{v}{A}$ occurs free in $\typd{M}{B}$. Then since $\typd{N}{A}$ is free for $\typd{v}{A}$ in $\lbd{u}{D}{M}$, we have that $\typd{u}{D}$ does not occur free in term $\typd{N}{A}$ of $\mltup$. Then by Proposition~\ref{prop:variablebackandforth} we have that $\typd{u}{D}$ does not appear in term $\typd{N^{cl}}{A}$ of $\cltup$. Then:
\begin{align*}
& \big((\lbd{u}{D}{M})[v:=N]\big)^{cl}  \equiv \big(\lbd{u}{D}{M[v:=N]}\big)^{cl}  \equiv \cbd{u}{D}{\big( M[v:=N]\big)^{cl}}\\
& \equiv \cbd{u}{D}{\big( M^{cl}[v:=N^{cl}]\big)} \equiv \big(\cbd{u}{D}{M^{cl}}\big)[v:=N^{cl}] \equiv \big(\lbd{u}{D}{M}\big)^{cl}[v:=N^{cl}]
\end{align*} 
On the second line, the first $\equiv$ follows by induction hypothesis on $M$; and the second $\equiv$ follows by Proposition~\ref{prop:verydiff}(\ref{prop:verydiff:2}) and the hypothesis that we are in a signature with no constants of state type.
\begin{proofdetail}
Here is some more detail on the base case and induction steps. Again, this is by induction on complexity of term $\typd{L}{C}$ of $\mltup$.

Suppose $\typd{L}{C}$ is the variable $\typd{v}{A}$. Then $L[v:=N]$ is $N$ and so $(L[v:=N])^{cl}$ is $N^{cl}$. Further, $L^{cl}$ is $v$ and so $L^{cl}[v:=N^{cl}]$ is $N^{cl}$ as well.

Suppose $\typd{L}{C}$ is a distinct variable $\typd{u}{C}$. Then $L[v:=N]$ is $u$, and so $(L[v:=N])^{cl}$ is $u$. Further, $L^{cl}$ is $u$ and so $L^{cl}[v:=N^{cl}]$ is $u$ as well.

Suppose $\typd{L}{C}$ is a constant $\typd{c}{C}$. Then $L[v:=N]$ is $c$, and so $(L[v:=N])^{cl}$ is $c$. Further, $L^{cl}$ is $c$ and so $L^{cl}[v:=N^{cl}]$ is $c$ as well.

Suppose $\typd{L}{C}$ is $\typd{L_0L_1}{C}$ where $\typd{L_0}{B\rightarrow C}$ and $\typd{L_1}{B}$. Then we have that $\typd{N}{A}$ is free for $\typd{v}{A}$ in both $\typd{L_0}{B\rightarrow C}$ and $\typd{L_1}{B}$. Then one has
\begin{align*}
& ((L_0 L_1)[v:=N])^{cl} \equiv (L_0[v:=N] L_1[v:=N])^{cl} \equiv (L_0[v:=N])^{cl}  (L_1[v:=N])^{cl} \\
& \equiv L_0^{cl}[v:=N^{cl}] L_1^{cl}[v:=N^{cl}]\equiv  (L_0^{cl}L_1^{cl})[v:=N^{cl}] \equiv (L_0L_1)^{cl}[v:=N^{cl}] 
\end{align*}
The first step in the second line follows from induction hypothesis for both $\typd{L_0}{B\rightarrow C}$ and $\typd{L_1}{B}$.

Suppose $\typd{L}{C}$ is $\lbd{v}{A}{M}$,  where $\typd{M}{B}$ is a term of $\mltup$ with $B$ regular. Then $\typd{v}{A}$ does not occur free in $\lbd{v}{A}{M}$. Hence by Proposition~\ref{prop:variablebackandforth} one has that $\typd{v}{A}$ does not appear in $(\lbd{v}{A}{M})^{cl}$. Hence we have
\begin{equation*}
\big((\lbd{v}{A}{M})[v:=N]\big)^{cl} \equiv \big(\lbd{v}{A}{M}\big)^{cl} \equiv \big(\lbd{v}{A}{M}\big)^{cl}[v:=N^{cl}]
\end{equation*}
\end{proofdetail}
\end{proof}

\begin{thm}[Weak beta equalities translate to combinatory weak equalities, in a signature with no constants of state type]\label{thm:frommltuptoclt}
Suppose that the signature has no constants of state type.

Suppose that $\typd{M,N}{A}$ are terms of $\mltup$.  If $\mltup\vdash_{w\beta} M=N$ then $\cltup\vdash_w M^{cl}=N^{cl}$.
\end{thm}
\begin{proof}
Since the ${\cdot}^{cl}$-translation commutes with application, it suffices to show that an instance of $\beta$-reduction at the top level in $\mltup$ results in a weak equality in $\cltup$ under the ${\cdot}^{cl}$-translation. Suppose that we are given an instance of $\beta$-reduction in $\mltup$:
\begin{equation}
\big( \lbd{\vec{x}}{\vec{B}}{\lbd{v}{A}{L}}\big) \vec{M} N \reduces{\beta}{\param} \big( \lbd{\vec{x}}{B}{L[v:=N]} \big) \vec{M}
\end{equation}
so that as in Definition~\ref{defn:betareduction} we have:
\begin{enumerate}[leftmargin=*]
    \item\label{defn:betareduction:1apply} $\typd{N}{A}$ is free for $\typd{v}{A}$ in $\typd{L}{C}$
    \item\label{defn:betareduction:3apply} the variables in $\typd{\vec{x}}{\vec{B}}$ are not free in $\typd{N}{A}$
    \item\label{defn:betareduction:2apply} the variables in $\typd{\vec{x}}{\vec{B}},\typd{v}{A}$ are pairwise distinct.
\end{enumerate}
By (\ref{defn:betareduction:1apply}) and  Proposition~\ref{prop:missinglink} and the definition of the $^{cl}$-translation (cf. Definition~\ref{defn:cltranslation}), it suffices to show: $\big( \cbd{\vec{x}}{\vec{B}}{\cbd{v}{A}{L^{cl}}}\big) \vec{M}^{cl} N^{cl} \equals{w}{\param} \big( \cbd{\vec{x}}{B}{L^{cl}[v:=N^{cl}]} \big) \vec{M}^{cl}$. By Proposition~\ref{prop:keybackandforth} it suffices to show
\begin{enumerate}[leftmargin=*, label=(\alph*), ref=\alph*]
    \item\label{defn:betareduction:3clapply} the variables in $\typd{\vec{x}}{\vec{B}}$ do not appear in $\typd{N^{cl}}{A}$
    \item\label{defn:betareduction:2clapply}the variables in $\typd{\vec{x}}{\vec{B}},\typd{v}{A}$ are pairwise distinct.
\end{enumerate}
But (\ref{defn:betareduction:3clapply}) follows from (\ref{defn:betareduction:3apply}) and Proposition~\ref{prop:variablebackandforth}, while (\ref{defn:betareduction:2clapply}) follows directly from (\ref{defn:betareduction:2apply}).
\end{proof}

\subsection{From combinatory logic to lambda calculus}\label{sec:fromcombotolambda}

We define a second translation (cf. \cite[Definition 9.2 p. 93]{Hindley2008-vw}):
\begin{defi}[Translation from combinatory logic to lambda calculus]\label{defn:lambdatranslation}
If $\typd{M}{B}$ is a term of $\cltup$, then we define a term $\typd{M^{\lambda}}{B}$ of $\mltup$ inductively as follows:
\begin{enumerate}[leftmargin=*]
    \item\label{defn:lambdatranslation:1} If $\typd{M}{B}$ is a variable or constant, then $\typd{M^{\lambda}}{B}$ is $\typd{M}{B}$.
    \item\label{defn:lambdatranslation:2} If $\typd{M}{B}$ is a combinatory term from Definition~\ref{defn:typedcombo2}, then $\typd{M^{\lambda}}{B}$ is the corresponding combinatory term $\mltup$ from Definition~\ref{defn:typedcombo}.
    \item\label{defn:lambdatranslation:3} If $\typd{M}{B}$ is $\typd{M_0M_1}{B}$ where $\typd{M_0}{C\rightarrow B}$ and $\typd{M_1}{C}$, then  $\typd{M^{\lambda}}{B}$ is $\typd{M_0^{\lambda} M_1^{\lambda}}{B}$.
\end{enumerate}
\end{defi}

In (\ref{defn:lambdatranslation:2}), we do not need to choose a specific version since they are all $\alpha$-equivalent.

We have the following analogue of Proposition~\ref{prop:variablebackandforth}, which is a routine induction:
\begin{prop}[Appearance of variables and free variables under the translation]\label{prop:variablebackandforth2}
If $\typd{M}{B}$ is a term of $\cltup$ and $\typd{v}{A}$ is a variable of $\cltup$, then $\typd{v}{A}$ appears in $\typd{M}{B}$ iff $\typd{v}{A}$ appears free in $\typd{M^{\lambda}}{B}$.
\end{prop}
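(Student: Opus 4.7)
The plan is to mirror the structure of the proof of Proposition~\ref{prop:variablebackandforth} and proceed by a straightforward induction on the complexity of $\typd{M}{B}$ as a term of $\cltup$, following the three clauses of Definition~\ref{defn:lambdatranslation}. Unlike the $^{cl}$-translation, the $^{\lambda}$-translation never introduces a new lambda abstraction that binds any of the variables of the original term, so no variable-capture or alpha-conversion subtleties arise; the whole proof is formal manipulation.

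For the base cases, if $\typd{M}{B}$ is a variable or constant, then $\typd{M^{\lambda}}{B}$ is literally $\typd{M}{B}$, so the biconditional is immediate. If $\typd{M}{B}$ is one of the combinator terms from Definition~\ref{defn:typedcombo2}, then no variable appears in $\typd{M}{B}$, and $\typd{M^{\lambda}}{B}$ is the corresponding combinator term of $\mltup$ from Definition~\ref{defn:typedcombo}, which (as remarked just after that definition) is closed; hence no variable appears free in $\typd{M^{\lambda}}{B}$ either, and the biconditional holds vacuously.

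The inductive step handles the application case $\typd{M}{B} \equiv \typd{M_0 M_1}{B}$, where $\typd{M_0}{C\rightarrow B}$ and $\typd{M_1}{C}$. A variable $\typd{v}{A}$ appears in $\typd{M_0 M_1}{B}$ iff it appears in $\typd{M_0}{C\rightarrow B}$ or in $\typd{M_1}{C}$. By two applications of the induction hypothesis, this occurs iff $\typd{v}{A}$ appears free in $\typd{M_0^{\lambda}}{C\rightarrow B}$ or free in $\typd{M_1^{\lambda}}{C}$. Since application in $\mltup$ binds no variables, this is equivalent to $\typd{v}{A}$ appearing free in $\typd{M_0^{\lambda} M_1^{\lambda}}{B}$, which by Definition~\ref{defn:lambdatranslation}(\ref{defn:lambdatranslation:3}) is $\typd{(M_0 M_1)^{\lambda}}{B} \equiv \typd{M^{\lambda}}{B}$.

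There is no genuine obstacle: the only point that requires care is the combinator case, where one must invoke the closedness of the combinator terms of $\mltup$ to conclude that $\typd{M^{\lambda}}{B}$ has no free variables. This is already explicit in the discussion following Definition~\ref{defn:typedcombo}, so the proof is short.
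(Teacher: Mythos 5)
Your proof is correct and follows exactly the routine structural induction the paper has in mind (the paper explicitly omits the proof as "a routine induction"): identity on variables and constants, vacuous truth in the combinator case because combinator terms of $\cltup$ contain no variables and their $\mltup$-translates are closed, and a trivial inductive step for application. Nothing further is needed.
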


\begin{proofdetail}
\begin{proof}
In more detail, this is by induction on complexity of $\typd{M}{B}$. 

The base case for variables and constants is trivial since in this case $\typd{M^{\lambda}}{B}$ is $\typd{M}{B}$. 

The base case for combinatory terms is trivial since in this case $\typd{M}{B}$ has no variables appearing in it, while $\typd{M^{\lambda}}{B}$ is closed and has no free variables appearing in it.

The inductive step for application holds because the variables appearing in $M_0M_1$ are the union of those of $M_0$ and $M_1$; while the free variables of $M_0^{\lambda} M_1^{\lambda}$ are the union of the those appearing in $M_0^{\lambda}$ and $M_1^{\lambda}$. 
\end{proof}
\end{proofdetail}

In the next results, we further pay attention to regular $\beta$-reductions (cf. Definition~\ref{defn:betareduction}(\ref{defn:betareduction:5})).

As in Theorem~\ref{thm:frommltuptoclt}, we have a correspondence theorem for the translations (cf. \cite[Lemma 9.5(b) pp. 93-94]{Hindley2008-vw}):
\begin{thm}[Combinatory weak equalities translate to $\beta$-equalities]\label{thm:fromclttolambda}
Suppose that $\typd{M,N}{A}$ are terms of $\cltup$. 

If $\cltup\vdash_w M=N$ then $\mltup\vdash_{\gamma} M^{\lambda}=N^{\lambda}$ for each of $\gamma=\beta_0, \beta_r, \beta$.
\end{thm}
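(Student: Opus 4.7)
The plan is to reduce the theorem to checking each of the six basic weak reduction schemas from Definition~\ref{defn:weakreduction}, and then invoke results already established. First, I would observe that it suffices to prove the case $\gamma = \beta_r$. Indeed, every regular $\beta$-reduction is a $\beta$-reduction, so $\beta_r$-equality is contained in $\beta$-equality, which yields the $\gamma = \beta$ case. For the $\gamma = \beta_0$ case, Proposition~\ref{prop:oneway} says that $M\equals{\beta_r}{\upsilon}N$ iff $M\equals{\beta_0}{\upsilon}N$, so the $\beta_r$ case implies the $\beta_0$ case directly.

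Second, to prove the $\beta_r$ case, I would reduce to verifying that a single compatible-closure weak reduction step $M\reduces{w}{\upsilon} N$ translates to $M^{\lambda}\equals{\beta_r}{\upsilon}N^{\lambda}$. The propagation through the equivalence-relation closure $\equals{w}{\upsilon}$ is automatic because $\equals{\beta_r}{\upsilon}$ is itself an equivalence relation. The propagation through the compatible closure reduces to the observation that the $^{\lambda}$-translation commutes with application, by Definition~\ref{defn:lambdatranslation}(\ref{defn:lambdatranslation:3}): if $M\reduces{w}{\upsilon} N$ arises by a reduction $P\reduces{w}{\upsilon} Q$ inside, say, $M\equiv LP$ and $N\equiv LQ$, then $M^{\lambda}\equiv L^{\lambda}P^{\lambda}$ and $N^{\lambda}\equiv L^{\lambda}Q^{\lambda}$, so the claim follows by closure of $\equals{\beta_r}{\upsilon}$ under application on both sides.

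Third, it remains to check the six basic weak reduction schemas of Definition~\ref{defn:weakreduction}. For the weak reductions associated with Kestral, Cardinal, Dardinal, Warbler, and Bluebird, the $^{\lambda}$-translation sends each combinator constant of $\cltup$ to the corresponding combinator term of $\mltup$ (Definition~\ref{defn:lambdatranslation}(\ref{defn:lambdatranslation:2})) and commutes with application, so the required equality $M^{\lambda}\equals{\beta_r}{\upsilon} N^{\lambda}$ is exactly the content of Theorem~\ref{prop:combinatoryeffect}, whose proof already verifies case-by-case (according to which of $A,B,C$ are state types) that each reduction is indeed regular. For the sixth schema, the Cardinal-to-Dardinal weak reduction $\ctC{A}{B}{C}{P}{c}\reduces{w}{\upsilon}\ctD{c}{A}{B}{C}{P}{}$, the corresponding regular $\beta$-reduction (of distance one) in $\mltup$ is supplied by Proposition~\ref{prop:thefirstcardinaltodardinal}.

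The hard part, if any, is not in this theorem itself but has already been absorbed into Theorem~\ref{prop:combinatoryeffect} and Proposition~\ref{prop:thefirstcardinaltodardinal}: namely, verifying that the combinatorial behaviour of the $\mltup$-combinator terms is achieved via \emph{regular} $\beta$-reductions under all allowable typings (including state-type arguments), which required careful $\alpha$-conversions within the fixed variable budget of $\upsilon$. The remaining work here is essentially bookkeeping: matching combinator constants across the two translations and assembling the three $\gamma$-cases from the common $\beta_r$-core.
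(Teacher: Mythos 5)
Your proposal is correct and follows essentially the same route as the paper: reduce all three cases of $\gamma$ to the $\beta_r$ case via Proposition~\ref{prop:oneway}, handle the compatible closure by the fact that the $^{\lambda}$-translation commutes with application, and discharge the base cases of the six weak-reduction schemas by appeal to Theorem~\ref{prop:combinatoryeffect} and Proposition~\ref{prop:thefirstcardinaltodardinal}. The only cosmetic difference is that the paper proves the slightly stronger statement that each single weak-reduction step yields a $\transreduces{\beta_r}{\upsilon}$-reduction (not merely a $\beta_r$-equality), but this does not change the argument.
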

\begin{proof}
By Proposition~\ref{prop:oneway}, it suffices to prove that if $M\reduces{w}{\param} N$ then $M^{\lambda}\transreduces{\beta_r}{\param} N^{\lambda}$. 

The base case of a direct weak reduction follows from Definition~\ref{defn:weakreduction} and Theorem~\ref{prop:combinatoryeffect} and Proposition~\ref{prop:thefirstcardinaltodardinal} and Definition~\ref{defn:lambdatranslation}(\ref{defn:lambdatranslation:2})-(\ref{defn:lambdatranslation:3}).

For the inductive step for application on the left suppose that $M P\reduces{w}{\param} N P$ because $M\reduces{w}{\param} N$. Then by induction hypothesis $M^{\lambda}\transreduces{\beta_r}{\param} N^{\lambda}$. Then $M^{\lambda}P^{\lambda}\transreduces{\beta_r}{\param} N^{\lambda} P^{\lambda}$, and we are done by Definition~\ref{defn:lambdatranslation}(\ref{defn:lambdatranslation:3}). The case of application on the right is similar.
\end{proof}

The following is a proposition that we need in order to round out, in the subsequent results, the treatment of the correspondence. It is in this proposition that we use $\eta$-reductions (cf. \cite[Lemma 9.16 p. 98]{Hindley2008-vw}). We further note, in the proof, where we lambda abstract over both sides of an equality, so that this proposition is not obviously available in the weak version of the lambda calculus.
\begin{prop}[Result on abstracts of translations of combinatory terms]\label{prop:littlehelper}
Suppose that $A,B$ are types and $B$ is regular. For all terms $\typd{M}{B}$ of $\cltup$, one has $\mltup\vdash_{\gamma\eta} (\cbd{v}{A}{M})^{\lambda}  = \lbd{v}{A}{M^{\lambda}}$ for each of $\gamma=\beta_0, \beta_r, \beta$.
\end{prop}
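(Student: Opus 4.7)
The plan is to proceed by structural induction on the $\cltup$-term $\typd{M}{B}$, following the case structure of Definition~\ref{defn:cbd}. Since $\equals{\beta_0\eta}{\upsilon}\subseteq \equals{\beta_r\eta}{\upsilon}\subseteq \equals{\beta\eta}{\upsilon}$, it suffices to establish the proposition for $\gamma=\beta_0$.

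The base cases are handled directly. If $\typd{v}{A}$ does not appear in $M$, then $\cbd{v}{A}{M}$ is $\ctK{B}{A}{M}{}$ by Definition~\ref{defn:cbd}(\ref{defn:cbd:1}); choosing a Kestral representative of the form $\lbd{x}{B}{\lbd{v}{A}{x}}$ (with $x$ distinct from $v$, available since $B$ is regular), a single $\beta_0$-reduction delivers $\lbd{v}{A}{M^{\lambda}}$, where the ``free for'' side-condition that $\typd{v}{A}$ is not free in $M^{\lambda}$ is supplied by Proposition~\ref{prop:variablebackandforth2}. If $M$ is the variable $\typd{v}{A}$, then $A\equiv B$ is regular and $\cbd{v}{A}{M}$ is the $\cltup$-term $\ctI{A}{}$ from Proposition~\ref{prop:clrecoveryidentity}, whose $^{\lambda}$-translation $\transreduces{\beta}{\upsilon}$-reduces to the lambda form of $\ctI{A}{}$ by Proposition~\ref{prop:wegotidentitybirdback}; and the latter is $\alpha$-equivalent to $\lbd{v}{A}{v}$.

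For the inductive step, $\typd{M}{B}$ is $\typd{M_0 M_1}{B}$ with $v$ appearing. In each subcase of Definition~\ref{defn:cbd}(\ref{defn:cbd:3}), $\cbd{v}{A}{M}$ is a partial application of one of $\mathsf{W}$, $\mathsf{C}$, $\mathsf{D}$, or $\mathsf{S}$ to $\cbd{v}{A}{M_0}$ together with $M_1$ or $\cbd{v}{A}{M_1}$. The key move is to $\eta$-expand $(\cbd{v}{A}{M})^{\lambda}$ as $\lbd{v}{A}{(\cbd{v}{A}{M})^{\lambda}\,v}$, which is legal because the combinator $^{\lambda}$-translates are closed and Propositions~\ref{prop:clbetapre}(\ref{prop:clbetapre:2}) and~\ref{prop:variablebackandforth2} together imply that $\typd{v}{A}$ is not free in $(\cbd{v}{A}{M_i})^{\lambda}$. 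Once $v$ is applied, the combinator is fully saturated and Theorem~\ref{prop:combinatoryeffect} supplies a regular $\beta$-reduction (hence, by Proposition~\ref{prop:oneway}, a $\beta_0$-equality) that simplifies it: in the Warbler subcase $(\ctW{A}{B}{}{})^{\lambda}\, P\, v\transreduces{\beta}{\upsilon}Pvv$, in the Cardinal subcase $(\ctC{A}{C}{B}{}{}{})^{\lambda}\, P\, M_1\, v\transreduces{\beta}{\upsilon} Pv M_1$, and similarly for the Dardinal and Starling subcases. Applying the induction hypothesis to $M_0$ (and, in the Starling subcase, also to $M_1$) rewrites each $(\cbd{v}{A}{M_i})^{\lambda}$ as $\lbd{v}{A}{M_i^{\lambda}}$; a final $\beta_0$-reduction $(\lbd{v}{A}{M_i^{\lambda}})v\reduces{\beta_0}{\upsilon}M_i^{\lambda}$ under the outer abstraction then yields $\lbd{v}{A}{M^{\lambda}}$.

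The main obstacle is the careful bookkeeping for variable capture, particularly in the state-type subcases where the limited variable pool of $\mltup$ restricts $\alpha$-conversion. Propositions~\ref{prop:clbetapre}(\ref{prop:clbetapre:2}) and~\ref{prop:variablebackandforth2} are the essential technical inputs guaranteeing that the $\eta$-expansion is available, and Proposition~\ref{prop:state-free:cl} ensures that the $M_1$ handled in the state-type subcases is a variable or constant so that the case analysis in Definition~\ref{defn:cbd}(\ref{defn:cbd:3}\ref{defn:cbd:3.1}) is exhaustive. Crucially, $\eta$ cannot be dispensed with: replacing the $\eta$-expansion step with a direct appeal to Proposition~\ref{prop:combinatoryeffecteasy} would require every free variable of $(\cbd{v}{A}{M_i})^{\lambda}$ to be of regular type, a condition that can fail in the modal setting.
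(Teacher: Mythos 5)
Your proof is correct, but it takes a genuinely different route from the paper's. The paper does not induct on $\typd{M}{B}$ at this point: it takes the single combinatory fact $\big(\cbd{v}{A}{M}\big)v \transreduces{w}{\upsilon} M$ from Proposition~\ref{prop:clbetapre}(\ref{prop:clbetapre:1}), pushes it through the already-established translation theorem (Theorem~\ref{thm:fromclttolambda}) to get $\big(\cbd{v}{A}{M}\big)^{\lambda} v \equals{\beta_r}{\upsilon} M^{\lambda}$, abstracts over $\typd{v}{A}$, downgrades to $\beta_0$ via Proposition~\ref{prop:oneway}, and finishes with one application of $\eta$ at the very end, justified exactly as you justify your $\eta$-expansions (Propositions~\ref{prop:clbetapre}(\ref{prop:clbetapre:2}) and~\ref{prop:variablebackandforth2}). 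You instead re-run the structural induction underlying Definition~\ref{defn:cbd} directly inside $\mltup$, with an $\eta$-expansion and an appeal to Theorem~\ref{prop:combinatoryeffect} at every level. Both work: the paper's version is shorter because the induction you are redoing has in effect already been carried out once, in the binding-free setting of $\cltup$, inside the proof of Proposition~\ref{prop:clbetapre}(\ref{prop:clbetapre:1}), where none of the capture bookkeeping arises; what your version buys is an explicit picture of where the $\eta$-steps and the regular distanced $\beta$-reductions actually occur on the $\lambda$-side. Two bookkeeping points if you keep your route: in the identity and Starling subcases the $^{\lambda}$-translate is the $\mathsf{BCDKW}$ compound of Propositions~\ref{prop:clrecoveryidentity} and~\ref{prop:clrecoverystarling}, so you need the additional reductions of Propositions~\ref{prop:wegotidentitybirdback} and~\ref{prop:wegotstarlingback} before Theorem~\ref{prop:combinatoryeffect} applies; and you should say explicitly that all the reductions you import are regular (which those results do assert) before invoking Proposition~\ref{prop:oneway} to land in $\equals{\beta_0\eta}{\upsilon}$.
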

\begin{proof}
By Proposition~\ref{prop:clbetapre}(\ref{prop:clbetapre:1}), we have $\big(\cbd{v}{A}{M}\big)v \equals{w}{\param} M$. By Theorem~\ref{thm:fromclttolambda} we have $\big(\big(\cbd{v}{A}{M}\big)v\big)^{\lambda} \equals{\beta_r}{\param} M^{\lambda}$. By abstracting over $\typd{v}{A}$, we have $\lbd{v}{A}{\big(\big(\cbd{v}{A}{M}\big)v\big)^{\lambda}} \equals{\beta_r}{\param} \lbd{v}{A}{M^{\lambda}}$. By Definition~\ref{defn:lambdatranslation}(\ref{defn:cltranslation:1}),(\ref{defn:cltranslation:3}) we have that the left-hand side can be simplified to $\lbd{v}{A}{\big(\big(\cbd{v}{A}{M}\big)^{\lambda}v\big)} \equals{\beta_r}{\param} \lbd{v}{A}{M^{\lambda}}$. By Proposition~\ref{prop:oneway}, we have $\lbd{v}{A}{\big(\big(\cbd{v}{A}{M}\big)^{\lambda}v\big)} \equals{\beta_0}{\param} \lbd{v}{A}{M^{\lambda}}$. Then by $\eta$, we can further simplify the left-hand side to obtain  $(\cbd{v}{A}{M})^{\lambda} \equals{\beta_0\eta}{\param} \lbd{v}{A}{M^{\lambda}}$. The application of $\eta$ is legal due to Proposition~\ref{prop:clbetapre}(\ref{prop:clbetapre:2}) and Proposition~\ref{prop:variablebackandforth2}, which imply that $\typd{v}{A}$ is not free in $(\cbd{v}{A}{M})^{\lambda}$.
\end{proof}

The second part of the following forms a converse to Theorem~\ref{thm:frommltuptoclt} (cf. \cite[Theorem 9.17(c)-(d) p. 98]{Hindley2008-vw}):
\begin{thm}[Translates of combinatory weak equalities are beta-eta equalities]\label{thm:bigone}\hfill
\begin{enumerate}[leftmargin=*]
    \item \label{thm:bigone:1} For all types $B$ and all terms $\typd{M}{B}$ of $\mltup$, one has $\mltup \vdash_{\gamma\eta} (M^{cl})^{\lambda} = M$ for each of $\gamma=\beta_0, \beta_r, \beta$.
    \item \label{thm:bigone:2} For all terms $\typd{M,N}{A}$ $\mltup$, if $\cltup\vdash_w M^{cl}=N^{cl}$ then $\mltup\vdash_{\gamma\eta} M=N$ for each of $\gamma=\beta_0, \beta_r, \beta$.
\end{enumerate}
\end{thm}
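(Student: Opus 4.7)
The plan is to prove Theorem \ref{thm:bigone} in the standard order, with (\ref{thm:bigone:1}) by induction on complexity and (\ref{thm:bigone:2}) as a direct corollary of (\ref{thm:bigone:1}) together with Theorem \ref{thm:fromclttolambda}. Since Proposition \ref{prop:littlehelper} has already handled the only delicate step, both parts should go through smoothly, and both should indeed work uniformly for $\gamma=\beta_0, \beta_r, \beta$, since Proposition \ref{prop:littlehelper} already gives us the equality for $\beta_0\eta$ (which implies it for $\beta_r\eta$ and $\beta\eta$ since $\reduces{\beta_0}{\upsilon}\subseteq \reduces{\beta_r}{\upsilon}\subseteq \reduces{\beta}{\upsilon}$).

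For (\ref{thm:bigone:1}), I would induct on the complexity of $\typd{M}{B}$. If $M$ is a variable or constant, then by Definition \ref{defn:cltranslation}(\ref{defn:cltranslation:1}) we have $M^{cl}\equiv M$, and by Definition \ref{defn:lambdatranslation}(\ref{defn:lambdatranslation:1}) we have $M^{\lambda}\equiv M$, so $(M^{cl})^{\lambda}\equiv M$ and there is nothing to derive. If $M$ is an application $M_0 M_1$, then by Definitions \ref{defn:cltranslation}(\ref{defn:cltranslation:2}) and \ref{defn:lambdatranslation}(\ref{defn:lambdatranslation:3}) one has $(M^{cl})^{\lambda}\equiv (M_0^{cl})^{\lambda}(M_1^{cl})^{\lambda}$, and the induction hypothesis applied to $M_0, M_1$ gives $\mltup\vdash_{\gamma\eta} (M_0^{cl})^{\lambda}(M_1^{cl})^{\lambda}=M_0M_1$. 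The only interesting case is when $M$ is a lambda abstract $\lbd{v}{A}{L}$ for some term $\typd{L}{C}$ with $C$ regular. Here Definition \ref{defn:cltranslation}(\ref{defn:cltranslation:3}) gives $M^{cl}\equiv \cbd{v}{A}{L^{cl}}$, so that $(M^{cl})^{\lambda}\equiv (\cbd{v}{A}{L^{cl}})^{\lambda}$. By Proposition \ref{prop:littlehelper} we then have $\mltup\vdash_{\gamma\eta} (\cbd{v}{A}{L^{cl}})^{\lambda}=\lbd{v}{A}{(L^{cl})^{\lambda}}$, and the induction hypothesis applied to $L$ yields $\mltup\vdash_{\gamma\eta} (L^{cl})^{\lambda}=L$, which upon abstracting over $\typd{v}{A}$ gives $\mltup\vdash_{\gamma\eta} \lbd{v}{A}{(L^{cl})^{\lambda}}=\lbd{v}{A}{L}$; chaining these two yields the desired equality.

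For (\ref{thm:bigone:2}), suppose $\cltup\vdash_w M^{cl}=N^{cl}$. Then Theorem \ref{thm:fromclttolambda} (applied to $M^{cl}, N^{cl}$) gives $\mltup\vdash_{\gamma}(M^{cl})^{\lambda}=(N^{cl})^{\lambda}$ for each of $\gamma=\beta_0,\beta_r,\beta$; a fortiori this holds under $\gamma\eta$. By part (\ref{thm:bigone:1}) one has $\mltup\vdash_{\gamma\eta}M=(M^{cl})^{\lambda}$ and $\mltup\vdash_{\gamma\eta}N=(N^{cl})^{\lambda}$, so by transitivity $\mltup\vdash_{\gamma\eta}M=N$.

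The only place real work could have hidden was the lambda-abstraction case of (\ref{thm:bigone:1}), but Proposition \ref{prop:littlehelper} already dispatches it, since it is precisely there that the use of $\eta$ enters (to contract $\lbd{v}{A}{(\cbd{v}{A}{M})^{\lambda}v}$ down to $(\cbd{v}{A}{M})^{\lambda}$). So the proof is essentially a transcription of the Hindley-Seldin argument (cf. \cite[Theorem 9.17(c)-(d) p. 98]{Hindley2008-vw}) into our setting, with the observation that every propositional ingredient works uniformly for $\beta_0, \beta_r,\beta$.
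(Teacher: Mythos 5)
Your proposal is correct and follows essentially the same route as the paper's own proof: induction on complexity for part (\ref{thm:bigone:1}) with Proposition~\ref{prop:littlehelper} handling the lambda-abstraction case, and part (\ref{thm:bigone:2}) obtained by combining Theorem~\ref{thm:fromclttolambda} with part (\ref{thm:bigone:1}). Your version is merely a bit more explicit about chaining the equalities under lambda abstraction and about the nesting $\reduces{\beta_0}{\upsilon}\subseteq\reduces{\beta_r}{\upsilon}\subseteq\reduces{\beta}{\upsilon}$, both of which are fine.
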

\begin{proof}
For (\ref{thm:bigone:1}), this is by an induction on complexity of $\typd{M}{B}$.

For variables and constants, from Definition~\ref{defn:cltranslation}(\ref{defn:cltranslation:1}) followed by Definition~\ref{defn:lambdatranslation}(\ref{defn:lambdatranslation:1}), one has that $ (M^{cl})^{\lambda}$ is $M$ itself. 

For application, from Definition~\ref{defn:cltranslation}(\ref{defn:cltranslation:2}) followed by Definition~\ref{defn:lambdatranslation}(\ref{defn:lambdatranslation:3}), one has that  $ ((M_0 M_1)^{cl})^{\lambda}$ is $ (M_0^{cl})^{\lambda} (M_1^{cl})^{\lambda}$, and then we are done by induction hypothesis.

For lambda abstraction,  from Definition~\ref{defn:cltranslation}(\ref{defn:cltranslation:3}), we have that $(\lbd{v}{A}{M})^{cl}$ is $\cbd{v}{A}{M^{cl}}$. By Proposition~\ref{prop:littlehelper}, since $M^{cl}$ is a term of $\cltup$, we have that $(\cbd{v}{A}{M^{cl}})^{\lambda}$ is $\gamma\eta$-equivalent to $\lbd{v}{A}{(M^{cl})^{\lambda}}$ in $\mltup$, and we are done by induction hypothesis. 

For (\ref{thm:bigone:2}), suppose $\cltup \vdash_w M^{cl}=N^{cl}$. By Theorem~\ref{thm:fromclttolambda} we have that $\mltup \vdash_{\gamma} (M^{cl})^{\lambda}=(N^{cl})^{\lambda}$. Then by (\ref{thm:bigone:1}) we have that $\mltup \vdash_{\gamma\eta} M=N$. 
\end{proof}

\subsection{Conservation and the weak lambda calculus}\label{sec:pureclcompleteconserve}

We are finally in a position to prove our partial conservation result:

\thmconservemltpureovermltup*

\begin{proof}
Suppose $\mltpure\vdash_{w\beta} M=N$. By Theorem~\ref{thm:frommltuptoclt} we have $\cltpure \vdash_w M^{cl}=N^{cl}$. Since $M^{cl}, N^{cl}$ are terms of $\cltup$, we have by Corollary~\ref{cor:cltcon} that $\cltup \vdash_w M^{cl}=N^{cl}$. By Theorem~\ref{thm:bigone}(\ref{thm:bigone:2}) we have that $\mltup \vdash_{\beta\eta} M=N$. 
\end{proof}


\bibliographystyle{alphaurl}
\bibliography{bibliography.bib}
\appendix

\end{document}